\def\centerarc[#1](#2)(#3:#4:#5)
\newtheorem{definition}{Definition}
\newtheorem{example}[definition]{Example}
\newtheorem{theorem}[definition]{Theorem}
\newtheorem{lemma}[definition]{Lemma}
\newtheorem{remark}[definition]{Remark}
\newtheorem{corollary}[definition]{Corollary}
\newtheorem{proposition}[definition]{Proposition}
\newtheorem{assumption}[definition]{Assumption}
\DeclareMathOperator{\Res}{Res}
\DeclareMathOperator{\lab}{label}
\let\@wraptoccontribs\wraptoccontribs
\begin{document}

\title[Blobbed topological recursion of the quartic Kontsevich 
model II]{Blobbed topological recursion of 
the quartic Kontsevich model II: Genus=0}

\author{Alexander Hock}
\author{Raimar Wulkenhaar}
\contrib[with an appendix by]{Maciej Do\L{}\k{e}ga}

\address{Mathematisches Institut der WWU, 
Einsteinstr.\ 62, 48149 M\"unster, Germany\\
{\itshape E-mail addresses:} \normalfont 
\texttt{a\_hock03@uni-muenster.de}, 
\texttt{raimar@math.uni-muenster.de}}

\address{
Institute of Mathematics, 
Polish Academy of Sciences, 
ul.\ \'{S}niadeckich 8, 
00-956 Warszawa, Poland}
\email{mdolega@impan.pl}

\begin{abstract}
  We prove that the genus-0 sector of the quartic analogue of the
  Kontsevich model is completely governed by an involution identity
  which expresses the meromorphic differential $\omega_{0,n}$ at a
  reflected point $\iota z$ in terms of all $\omega_{0,m}$ with
  $m\leq n$ at the original point $z$.  We prove that the solution of
  the involution identity obeys blobbed topological recursion, which
  confirms a previous conjecture about the quartic Kontsevich model.
\end{abstract}

\subjclass[2010]{05A18, 14H70, 30D05, 32A20}
\keywords{(Blobbed) Topological recursion, 
Meromorphic forms on Riemann surfaces,  
Residue calculus, Involution, Partitions of sets}

\maketitle
\markboth{\hfill\textsc\shortauthors}{\textsc{{Blobbed topological
      recursion of the quartic Kontsevich model II: Genus=0}\hfill}}

\section{Introduction and main result}

\subsection{Overview}

This paper completes the solution of the genus-0 sector of the quartic
analogue of the Kontsevich model. This is a model for $N\times N$
Hermitian matrices with the same covariance as the Kontsevich model
\cite{Kontsevich:1992ti} but with quartic instead of cubic potential.
The non-linear Dyson-Schwinger equation \cite{Grosse:2009pa} for the
planar $2$-point function of the quartic Kontsevich model was solved
in a special case in \cite{Panzer:2018tvy} and then in full generality
in \cite{Grosse:2019jnv, Schurmann:2019mzu-v3}. Building on this
foundation we identified in \cite{Branahl:2020yru} three families of
correlation functions and established interwoven loop equations
between them. One family consists of meromorphic differential forms
$\omega_{g,n}$ labelled by genus $g$ and number $n$ of marked points
of a complex curve. By a lengthy evaluation of residues the solution
was found for $\omega_{0,2},\omega_{0,3},\omega_{0,4}$ and
$\omega_{1,1}$.  It strongly suggested that the family $\omega_{g,n}$
obeys blobbed topological recursion \cite{Borot:2015hna}, a systematic
extension of topological recursion \cite{Eynard:2007kz} by additional
terms holomorphic at ramification points of a covering
$x:\hat{\mathbb{C}}\to\hat{\mathbb{C}}$.

Recall that (blobbed) topological recursion (see e.g.\
\cite{Eynard:2007kz, Borot:2015hna} and references therein) starts
from a spectral curve
$(x:\Sigma\to \Sigma_0,\omega_{0,1}=ydx,\omega_{0,2})$. Here
$y:\Sigma\to \hat{\mathbb{C}}$ is regular at the ramification points
of $x$ and $\omega_{0,2}$ is a symmetric bidifferential which extends
(or is equal to) the Bergman kernel. We noticed in
\cite{Branahl:2020yru} that the two coverings $x,y$ in the quartic
Kontsevich model are related by $y(z)=-x(-z)$ (already visible in
\cite{Grosse:2019jnv, Schurmann:2019mzu-v3}) and that
$\omega_{0,2}(u,z)=-\omega_{0,2}(u,-z)$. We show in this paper that
this observation is far more than a coincidence: the properties of
$x,y,\omega_{0,2}$ under reflection $z\mapsto \iota z:=-z$ completely
characterise the genus-0 sector of the quartic Kontsevich model.
There is a single global equation (\ref{eq:flip-om}) which describes
the behaviour of the $\omega_{0,n}$ under reflection. This equation
can be solved without connecting it to the matrix model, the solution
is identical to the solution of the complicated system of
loop equations in \cite{Branahl:2020yru}, and it obeys blobbed
topological recursion \cite{Borot:2015hna} (restricted to genus
$g=0$).  We observe that the reflection formula (\ref{eq:flip-om}) is
of similar form to the relations studied in the context of $x$-$y$
symmetry in topological recursion (see for instance \cite[Proposition
3.1]{Eynard:2013csa}).

It is currently not known to us how to extend these results to genus
$g>0$. The loop equations of \cite{Branahl:2020yru} bring in another
structure which leads to poles of $\omega_{g>0,n}$ at the fixed point
of the involution $\iota$. Nevertheless we speculate that the global
involution $z\to \iota z$ characterises the quartic Kontsevich model
completely and that it describes a geometric structure of the moduli
space of complex curves.

\subsection{Statement of the result}

Let $x:\hat{\mathbb{C}}\to\hat{\mathbb{C}}$ be a ramified covering of
the Riemann sphere $\hat{\mathbb{C}}=\mathbb{C}\cup\{\infty\}$ with
simple ramification points $\beta_1,...,\beta_r$ (which solve
$dx(\beta_i)=0)$.  Let $\iota:\hat{\mathbb{C}}\to \hat{\mathbb{C}}$ be
a global involution, $\iota^2(z)=z$, which does not fix or permute any
ramification point(s).  Another ramified covering of the Riemann
sphere is introduced by\footnote{This could be generalised to
  $y(z)=c-x(\iota z)$ for any $c\in \mathbb{C}$.}
\begin{align}
y=-x\circ \iota :\hat{\mathbb{C}}\to \hat{\mathbb{C}}\;.
\label{def:y}
\end{align}
The data are completed by a unique (up to a global constant factor)
bidifferential $\omega_{0,2}$ on
$\hat{\mathbb{C}}\times \hat{\mathbb{C}}$ which is symmetric, odd
under the involution of one variable and has a double pole on the
diagonal without residue. These conditions give 
\begin{align}
  \omega_{0,2}(w,z)&= \frac{1}{2} \frac{dw\,dz}{(w-z)^2}+
  \frac{1}{2} \frac{d(\iota w) \,d(\iota z)}{(\iota w-\iota z)^2}
-\frac{1}{2} \frac{dw\,d(\iota z)}{(w-\iota z)^2}
-  \frac{1}{2} \frac{d(\iota w) \,dz}{(\iota w- z)^2}
\label{om02}
\\
&\equiv
-d_w\Big(
\frac{1}{2} \frac{dz}{(w-z)}+
  \frac{1}{2} \frac{\iota'(z)dz}{(\iota w-\iota z)}
-\frac{1}{2} \frac{\iota'(z)dz}{(w-\iota z)}
-  \frac{1}{2} \frac{dz}{(\iota w- z)}\Big)\;.
\nonumber
\end{align}
From these data we build (see the first paragraph of Section~\ref{sec:tools}
for the notation):
\begin{definition}[involution identity]
A family $\{\omega_{0,m+1}\}_{m\geq 1}$ of meromorphic
differentials on $\hat{\mathbb{C}}^{m+1}$ is introduced by
\eqref{om02} for $m=1$ and for $m\geq 2$ by
\begin{align}
&  \omega_{0,|I|+1}(I,q)
  +\omega_{0,|I|+1}(I,\iota q)
\label{eq:flip-om}
\\
&=\sum_{s=2}^{|I|} \sum_{I_1\uplus ...\uplus I_s=I}
\frac{1}{s} \Res\displaylimits_{z\to q}  \Big(
\frac{dy(q) dx(z)}{(y(q)-y(z))^{s}}  \prod_{j=1}^s
\frac{\omega_{0,|I_j|+1}(I_j,z)}{dx(z)} 
\Big)\;,
\nonumber
\end{align}
where $I=\{u_1,...,u_m\}$.
\end{definition}
We show that, under
mild assumptions, the identity (\ref{eq:flip-om}) completely determines 
$\omega_{0,|I|+1}(I,q)$, and that the required symmetry of the 
rhs of (\ref{eq:flip-om}) under $q\mapsto \iota q$ is automatic:
\begin{theorem}
\label{thm:flip}
For $\omega_{0,|I|+1}(I,z)$ with $I=\{u_1,...,u_m\}$ of length $|I|:=m$
the following conventions are given: 
  \begin{enumerate}
\item $\omega_{0,2}$ is given by \eqref{om02};
\item the meromorphic form $z\mapsto \omega_{0,|I|+1}(I,z)$ has
  for $m\geq 2$ poles at most in points $z$ where the
rhs of  \eqref{eq:flip-om} has poles;
\item $z\mapsto \omega_{0,|I|+1}(I,z)$ is for $m\geq 2$ holomorphic
  at any $z=u_k$;

\item $z\mapsto \omega_{0,|I|+1}(I,\iota z)$ is holomorphic at any
  ramification point $\beta_i$ of $x$.
  
\end{enumerate}
Then \eqref{eq:flip-om} is  for $I=\{u_1,...,u_m\}$ with $m\geq 2$
uniquely solved by
\begin{align}
  \omega_{0,|I|+1}(I,z)
  &= \sum_{i=1}^r
\Res\displaylimits_{q\to \beta_i}K_i(z,q)
  \sum_{I_1\uplus I_2=I} \omega_{0,|I_1|+1}(I_1,q)\omega_{0,|I_2|+1}(I_2,\sigma_i(q))
  \label{sol:omega}
  \\
  &-\sum_{k=1}^m d_{u_k}
 \Big[\Res\displaylimits_{q\to \iota u_k}
\sum_{I_1\uplus I_2=I}
\tilde{K}(z,q,u_k)
d_{u_k}^{-1}\big( \omega_{0,|I_1|+1}(I_1,q)
\omega_{0,|I_2|+1}(I_2,q)\big)\Big]\,. 
\nonumber
\end{align}
Here $\beta_1,...,\beta_r$ are the ramification points of $x$ and
$\sigma_i\neq \mathrm{id}$ denotes the local Galois involution
in the vicinity of $\beta_i$, i.e.\ $x(\sigma_i(z))=x(z)$,
$\lim_{z\to \beta_i}\sigma_i(z)=\beta_i$. 
By 
$d_{u_k}$ we denote the exterior differential in $u_k$, which on 1-forms
has a right inverse 
given by the primitive $d^{-1}_u \omega(u)=\int_{u'=\infty}^{u'=u}\omega(u')$.
The recursion kernels are given by
\begin{align}
K_i(z,q)&:=   \frac{\frac{1}{2} (\frac{dz}{z-q}-\frac{dz}{z-\sigma_i(q)})
}{dx(\sigma_i(q))(y(q)-y(\sigma_i(q)))}\;,\qquad
\nonumber
\\
\tilde{K}(z,q,u)&:=
\frac{\frac{1}{2}\big(\frac{d(\iota z)}{\iota z-\iota q}-\frac{d(\iota z)}{\iota z- u}\big)}{
  dx(q)(y(q)-y(\iota u))}\;.
\label{eq:kernel}
\end{align}
The solution \eqref{sol:omega}$+$\eqref{eq:kernel} implies symmetry of
\eqref{eq:flip-om} under $q\mapsto \iota q$.

\end{theorem}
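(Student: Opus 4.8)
The plan is to induct on $m=|I|$, exploiting the decomposition of a form into its $\iota$-even and $\iota$-odd parts and the fact that $\hat{\mathbb{C}}$ carries no nonzero holomorphic $1$-form. Every partition in \eqref{eq:flip-om} has $s\geq 2$ blocks, so each $|I_j|<m$ and the right-hand side $R(q)$ of \eqref{eq:flip-om} is a meromorphic $1$-form in $q$ already fixed by the induction hypothesis (the base case $m=1$ being convention~(a)). Writing $\omega^{\pm}(q):=\tfrac12\big(\omega_{0,m+1}(I,q)\pm\omega_{0,m+1}(I,\iota q)\big)$ for the $\iota$-even and $\iota$-odd parts, the left-hand side of \eqref{eq:flip-om} is exactly $2\omega^{+}(q)$. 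Thus \eqref{eq:flip-om} pins down the even part, $\omega^{+}=\tfrac12 R$, and leaves the odd part $\omega^{-}$ to be determined by the analytic conventions (b)--(d). The content of the theorem is therefore twofold: (i) (b)--(d) fix $\omega^{-}$ uniquely, and (ii) the closed formula \eqref{sol:omega} realises both the prescribed even part and this odd part.

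For uniqueness, let two families satisfy \eqref{eq:flip-om} and (b)--(d); their difference $\delta$ is $\iota$-odd, $\delta(q)+\delta(\iota q)=0$, and by~(b) has poles only among those of $R$. An inspection of $R$ shows these lie in $\{\beta_i,\iota\beta_i,\iota u_k\}$: the factor $(y(q)-y(z))^{-s}$ with $dy(q)$ creates poles at the ramification points of $y=-x\circ\iota$, i.e.\ at $q=\iota\beta_i$; the negative powers of $dx(z)$ create poles at $q=\beta_i$; and the induction poles of the $\omega_{0,|I_j|+1}(I_j,z)$ sit at $\iota u_k$, never at $u_k$ by~(c). Now~(c) makes $\delta$ holomorphic at each $u_k$ and~(d) makes $\delta$ holomorphic at each $\iota\beta_i$; since $\iota$ is a biholomorphism and $\delta$ is $\iota$-odd, holomorphy then propagates from $u_k$ to $\iota u_k$ and from $\iota\beta_i$ to $\beta_i$. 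Hence $\delta$ is holomorphic on all of $\hat{\mathbb{C}}$ (the point $\infty$ has to be checked separately from the behaviour of $x,y$ there), whence $\delta=0$.

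For existence I would verify that \eqref{sol:omega} satisfies (b)--(d) and that its $\iota$-even part equals $\tfrac12 R$; by the uniqueness just proved this suffices, and it simultaneously forces $R$ to be $\iota$-even, which is the closing assertion of the theorem. Checking (b)--(d) is bookkeeping on the two terms: the kernel $K_i$ from \eqref{eq:kernel} confines the first sum to poles at $z=\beta_i$, while $\tilde K$ together with the outer $d_{u_k}$ and the primitive $d_{u_k}^{-1}$ confines the second sum to poles at $z=\iota u_k$ and renders it holomorphic in each $u_k$. The substance is the even-part identity. Summing the $z=q$ and $z=\iota q$ evaluations of the first term and observing that $K_i(q,q')+K_i(\iota q,q')$ reassembles the four Cauchy kernels of $\omega_{0,2}$ from \eqref{om02}, I would apply the global residue theorem in the internal variable on $\hat{\mathbb{C}}$ to trade the residues at the $\beta_i$ for a residue on the diagonal: this produces exactly the $\Res_{z\to q}$ appearing in \eqref{eq:flip-om}, up to spurious residues at $\iota u_k$, and the second term of \eqref{sol:omega} is built precisely to cancel the latter.

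The main obstacle is this even-part identity, and it is delicate for two reasons. First, \eqref{sol:omega} is \emph{bilinear} in the lower differentials, whereas the right-hand side of \eqref{eq:flip-om} carries an $s$-fold product for every $2\leq s\leq m$; the higher-$s$ contributions must be generated by the higher-order poles that the lower $\omega_{0,|I_j|+1}$ themselves develop at the ramification points, so the residue expansion has to be organised so that iterating the bilinear kernel reproduces precisely the weights $\tfrac1s$ and the set partitions $I_1\uplus\dots\uplus I_s$. Second, the recursion is run by the \emph{local} Galois involutions $\sigma_i$ at the $\beta_i$, while \eqref{eq:flip-om} is controlled by the \emph{global} involution $\iota$; the bridge is the relation $y=-x\circ\iota$ of \eqref{def:y}, which must be used to convert the $\sigma_i$-symmetric data produced by $K_i$ into the $\iota$-even/odd structure demanded by the left-hand side. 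Managing the interplay of these two involutions, and the behaviour at the fixed point $\infty$ of $\iota$, is where the proof is genuinely hard; once the even part is identified, the $\iota$-symmetry of the right-hand side of \eqref{eq:flip-om} follows for free because its left-hand side $\omega_{0,m+1}(I,q)+\omega_{0,m+1}(I,\iota q)$ is manifestly $\iota$-even.
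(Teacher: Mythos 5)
Your uniqueness half is sound and in places slicker than the paper's: the observation that the difference $\delta$ of two solutions is $\iota$-odd, so that holomorphy at $u_k$ and at $\iota\beta_i$ (conventions (c),(d)) propagates via $\delta(q)=-\delta(\iota q)$ to $\iota u_k$ and $\beta_i$, followed by Liouville on $\hat{\mathbb{C}}$, is a correct shortcut. The paper instead proves a \emph{constructive} uniqueness -- it computes the principal parts of $\omega_{0,|I|+1}(I,z)$ at $z=\iota u_k$ (Proposition~\ref{prop:om-hol2}) and at $z=\beta_i$ (Proposition~\ref{prop:tr-kernel}) directly from \eqref{eq:flip-om} and assembles them by the same Liouville argument in \eqref{liouville}. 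But your abstract route shifts the \emph{entire} burden onto the verification that \eqref{sol:omega}$+$\eqref{eq:kernel} actually satisfies \eqref{eq:flip-om}, and that half of your proposal is not a proof: you name the two difficulties (bilinear kernel versus $s$-fold products with weights $\tfrac1s$; local $\sigma_i$ versus global $\iota$) and then defer them as ``where the proof is genuinely hard.'' That deferred content is the bulk of the theorem.

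Moreover, the one concrete tool you propose for the even-part identity fails as stated. You cannot ``apply the global residue theorem in the internal variable'' to the first term of \eqref{sol:omega}: $\sigma_i$ is only a \emph{local} Galois involution near $\beta_i$, so $K_i(z,q)$ and the integrand are not globally defined meromorphic forms in $q$ on $\hat{\mathbb{C}}$, and the contour around $\beta_i$ cannot be traded for residues on the diagonal. This is precisely why the paper works in the opposite direction, using the modified residue-commutation rule \eqref{commute-res-sig} (which accounts for the Galois conjugate $\sigma_i(q)$ inside the contour) and the lengthy $\mathfrak{W},\mathfrak{B}$ resummation of Lemmas~\ref{lem:frakW-rec} and~\ref{lem:WBnew}. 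Even granting the contour move, a diagonal residue against a bilinear kernel produces only the $s=2$ term of \eqref{eq:flip-om}; generating all $s$-fold partitions with the correct $\tfrac1s$ weights is exactly the content of Lemmas~\ref{lem:omega-nabla} and~\ref{lem:om-onenabla} and the binomial/Catalan-type cancellations in Section~\ref{sec:tr-kernel}. Finally, your closing claim that the $\iota$-symmetry of the rhs ``follows for free'' misjudges the logical burden: it would follow from the even-part identity, but establishing that identity (equivalently, that the rhs minus its reflection is holomorphic, hence zero) occupies Sections~\ref{sec:iota-symm-I} and~\ref{sec:iota-symm-II}, where the $q\to\beta_i$ case (Proposition~\ref{prop-new}) requires the linear loop equation (Proposition~\ref{prop-linloop}), the quadratic loop equation (Proposition~\ref{prop:tr-kernel}), and the combinatorial identities of the appendix -- none of which your sketch supplies or replaces.
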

\noindent
The proof is lengthy and will be divided into many steps. We rely on
combinatorial identities proved in an appendix by Maciej Do{\l}\k{e}ga.
We start to
prove uniqueness: \emph{if} a consistent solution of (\ref{eq:flip-om})
exists, it must be of the form
(\ref{sol:omega})+(\ref{eq:kernel}). Then we prove that
(\ref{sol:omega})+(\ref{eq:kernel}) implies consistency
of (\ref{eq:flip-om}).

In a second part we show that the loop equations
\cite{Branahl:2020yru} of the quartic analogue of the Kontsevich model
lead for the choice $\iota z=-z$ and
$x(z)=R(z)=z-\lambda \sum_{k=1}^d \frac{\varrho_k}{\varepsilon_k+z}$
found in \cite{Schurmann:2019mzu-v3} to exactly the same solution
(\ref{sol:omega})+(\ref{eq:kernel}).  Thereby we prove for genus $0$
the main conjecture of \cite{Branahl:2020yru} that the quartic
Kontsevich model obeys blobbed topological recursion
\cite{Borot:2015hna}.

\subsection*{Acknowledgements}

We thank Johannes Branahl for his essential contributions to
the conjectures about the loop equations and Ga\"etan Borot
and J\"org Sch\"urmann for helpful discussions. We are grateful to
Maciej Do{\l}\k{e}ga for providing proofs for combinatorial
conjectures left open in the first version. We also acknowledge
assistence by Abdelmalek Abdesselam with the proof of an earlier
conjecture that was central in Sec.~3.6. Our work was
supported\footnote{``Funded by
  the Deutsche Forschungsgemeinschaft (DFG, German Research
  Foundation) -- Project-ID 427320536 -- SFB 1442, as well as under
  Germany's Excellence Strategy EXC 2044 390685587, Mathematics
  M\"unster: Dynamics -- Geometry -- Structure."} by the Cluster of
Excellence \emph{Mathematics M\"unster} and the CRC 1442 \emph{Geometry:
  Deformations and Rigidity}.

\section{Proof of Theorem~\ref{thm:flip}}

\subsection{Tools and conventions}

\label{sec:tools}

Throughout this paper we denote by $q,u,u_k,w,z,z_k \in \hat{\mathbb{C}}$
complex numbers and by $a,a',i,j,k,l,m,n,n_0,...,n_s,p,$ $r,s,s'$
non-negative integers.  By $I=\{u_1,...,u_m\}$ we understand a (multi-)set of
length $|I|=m$ of complex numbers, which are allowed to coincide.  By
$\sum_{I_1\uplus ...\uplus I_s=I}$ we denote the sum over all
partitions of the multiset $I$ into \emph{disjoint non-empty} subsets
$I_1,...,I_s$ of any order. If we insist on a sum over ordered subsets 
we write $\sum_{\substack{I_1\uplus ...\uplus I_s=I
  \\ I_1<...<I_s}}$. We define
$\{u_{k_1},...,u_{k_n}\}<\{u_{l_1},...,u_{l_m}\}$ iff $\min(k_i)<\min(l_j)$.
This order is well-defined because the subsets  
$\{u_{k_1},...,u_{k_n}\},\{u_{l_1},...,u_{l_m}\}$ are disjoint.
We will often write $I\uplus u_k$ or $(I,u_k)$ for
$I\uplus \{u_k\}$ and $I\setminus u_k$ for
$I\setminus \{u_k\}$.
In the second part, $\hat{q}^j$ for $j\geq 1$
denotes another preimage of $q$ under $x$, i.e.\
$x(\hat{q}^j)=x(q)$.

\begin{example}\normalfont
  The set $I=\{u_1,u_2,u_3\}$ has 13 (Fubini number
  \href{https://oeis.org/A000670}{OEIS A000670})
  different partitions
  \begin{align*}
 \{u_1,u_2,u_3\},\quad
&  \{u_1\}\uplus  \{u_2,u_3\},\quad
  \{u_2\}\uplus  \{u_3,u_1\},\quad
  \{u_3\}\uplus  \{u_1,u_2\},
  \\
  &\{u_1,u_2\}\uplus  \{u_3\},\quad
  \{u_2,u_3\}\uplus  \{u_1\},\quad
  \{u_3,u_1\}\uplus  \{u_2\},
  \\
  &
  \{u_1\}\uplus  \{u_2\} \uplus \{u_3\},\quad
  \{u_2\}\uplus  \{u_3\} \uplus \{u_1\},\quad
  \{u_3\}\uplus  \{u_1\} \uplus \{u_2\},
  \\
  &\{u_1\}\uplus  \{u_3\} \uplus \{u_2\},\quad
  \{u_2\}\uplus  \{u_1\} \uplus \{u_3\},\quad
  \{u_3\}\uplus  \{u_2\} \uplus \{u_1\}
\end{align*}
and $B_3=5$ (Bell number
  \href{https://oeis.org/A000110}{OEIS A000110})
ordered partitions
\begin{align*}
& \{u_1,u_2,u_3\},\quad  \{u_1\}\uplus  \{u_2,u_3\},\quad
\{u_1,u_2\}\uplus  \{u_3\}, \quad
\{u_1,u_3\}\uplus  \{u_2\}, \quad
\\ & \{u_1\}\uplus  \{u_2\} \uplus \{u_3\}\;.
\end{align*}
\end{example}

We will often need the projection of a meromorphic 1-form $\omega$ to
the principal part $P^w\omega$ of its Laurent series about
$w\in \mathbb{C}$. This projection is obtained by the residue
\begin{align}
P^w\omega(z)=\Res\displaylimits_{q\to w} \frac{\omega(q)dz}{z-q}\;.  
\end{align}
In case of $w=\beta_i$ (a ramification point of $x$) we abbreviate
$\mathcal{P}^i_z \omega(z):=P^{\beta_i}\omega(z)$.

An important tool will be the commutation rule of two iterated residues
of a $1$-form $\omega(q,z)$ in both complex variables $q,z$:
\begin{align}
  \Res\displaylimits_{q\to w}\Res\displaylimits_{z\to q} \omega(q,z)
+  \Res\displaylimits_{q\to w}\Res\displaylimits_{z\to w} \omega(q,z)
=\Res\displaylimits_{z\to w}\Res\displaylimits_{q\to w} \omega(q,z)\;.
\label{commute-res}
\end{align}
It is an immediate consequence of contour integrations and holds under
the assumption that $q=z,q=w,z=w$ are the only poles in a sufficiently
small neighbourhood of $w$. We will encounter a situation where this
assumption does not hold. In the
vicinity of a ramification point $\beta_i$
the contour integral must also enclose the local Galois conjugate
$\sigma_i(q)$:
\begin{align}
  \Res\displaylimits_{q\to \beta_i}\Res\displaylimits_{z\to q} \omega(q,z)
+  \Res\displaylimits_{q\to \beta_i}\Res\displaylimits_{z\to \beta_i}
\omega(q,z)
+  \Res\displaylimits_{q\to \beta_i}
\Res\displaylimits_{z\to \sigma_i(q)} \omega(q,z)
 &=\Res\displaylimits_{z\to \beta_i}\Res\displaylimits_{q\to \beta_i}
  \omega(q,z)\;.
  \label{commute-res-sig}
\end{align}

The residue commutes with partial or exterior derivatives:
\begin{align}
  \Res\displaylimits_{z\to u} \partial_u f(u,z)dz
  =\partial_u\big(\Res\displaylimits_{z\to u} f(u,z)dz\big)\;,\qquad
  \Res\displaylimits_{z\to u} d_u f(u,z)dz
  =d_u\big(\Res\displaylimits_{z\to u} f(u,z)dz\big)\;.
  \label{commute-res-d}
\end{align}
To see this, let $\gamma_\epsilon(w)$ be the loop with centre $w$ and
radius $\epsilon$. Then for $\epsilon>2\delta>0$
\[
  \frac{1}{2\pi \mathrm{i} \delta}
  \Big(\int_{\gamma_{\epsilon}(u+\delta)} \!\!\! f(u+\delta, z)dz
  -\int_{\gamma_{\epsilon}(u)} \!\!\! f(u, z)dz\Big)
  =  \frac{1}{2\pi \mathrm{i}} \int_{\gamma_{\epsilon}(u)} \!\!\! \frac{
  f(u+\delta, z)-f(u,z)}{\delta} dz\;.
\]
The limit $\delta\to 0$ together with independence of all integrals
from $ \epsilon$ gives (\ref{commute-res-d}).

The residue does not change under the local Galois involution, that is
\begin{align}\label{Id}
  \Res\displaylimits_{q\to \beta_i} \omega (q)
  = \Res\displaylimits_{q\to \beta_i} \omega (\sigma_i(q))\;.
\end{align}
Invariance of the term $\frac{c_{-1}dq}{q-\beta_i}$ of the Laurent
expansion follows from
$\sigma_i(q)-\beta_i=-(q-\beta_i)+\mathcal{O}((q-\beta_i)^2)$.  For
poles of order $n>1$ the term
$\frac{c_{-n}d\sigma_i(q)}{(\sigma_i(q)-\beta_i)^n} =-\frac{1}{n}
d\frac{c_{-n}}{(\sigma_i(q)-\beta_i)^{n-1}}$ does not have a residue.

Of particular importance is the following residue:
\begin{align}
  \nabla^n\omega_{0,|I|+1}(I,q)&:=
  \Res\displaylimits_{z\to q}
  \frac{\omega_{0,|I|+1}(I,z)}{(y(z)-y(q))(x(q)-x(z))^n}
  \label{def:nabla-om}
  \\
  &\equiv \frac{(-1)^n}{n!} \lim_{z\to q} \frac{\partial^n}{\partial (x(z))^n}
  \Big(\frac{x(z)-x(q)}{y(z)-y(q)} \frac{\omega_{0,|I|+1}(I,z)}{dx(z)}\Big)\;,
  \nonumber
\end{align}
which is a function of $q$ and a 1-form in every variable in $I$.
In particular,  $\nabla^0\omega_{0,|I|+1}(I,q)=\frac{\omega_{0,|I|+1}(I,q)}{dy(q)}$.
These functions arise in the Taylor expansion
\begin{align}
\frac{x(z)-x(q)}{y(z)-y(q)} \frac{\omega_{0,|I|+1}(I,z)}{dx(z)}
=\sum_{n=0}^\infty (x(q)-x(z))^n    \nabla^n\omega_{0,|I|+1}(I,q) \;.
  \label{nabla-om-taylor}
\end{align}

\begin{lemma}
\label{lem:omega-nabla}
  The involution identity \eqref{eq:flip-om} can be expressed as
\begin{align*}
&  \omega_{0,|I|+1}(I, q)+  \omega_{0,|I|+1}(I,\iota q)
\\
&=-dy(q) \sum_{s=2}^{|I|} \sum_{I_1\uplus ...\uplus I_s=I}
\frac{1}{s} \sum_{n_1+...+n_s=s-1}
\prod_{j=1}^s \nabla^{n_j}\omega_{0,|I_j|+1}(I_j,q)\;,
\end{align*}
where $\sum_{n_1+...+n_s=s-1}$ is the sum over all partitions of $s-1$
into integers $n_i\geq 0$.
\begin{proof}
We evaluate the residue (\ref{eq:flip-om}) as limit of a partial derivative:
\begin{align*}
&\omega_{0,|I|+1}(I, q)+  \omega_{0,|I|+1}(I,\iota q)
\\
&=\sum_{s=2}^{|I|} \sum_{I_1\uplus ...\uplus I_s=I}
\frac{(-1)^s}{s} \Res\displaylimits_{z\to q}  \Big(
\frac{dy(q) dx(z)}{(x(z)-x(q))^{s}}  \prod_{j=1}^s
\frac{x(z)-x(q)}{y(z)-y(q)}
\frac{\omega_{0,|I_j|+1}(I_j,z)}{dx(z)} 
\Big)
\\
&=\sum_{s=2}^{|I|} \sum_{I_1\uplus ...\uplus I_s=I}
\frac{(-1)^s}{s!} dy(q) 
\lim_{z\to q} \frac{\partial^{s-1}}{\partial (x(z))^{s-1}}
\Big(\prod_{j=1}^s
\frac{x(z)-x(q)}{y(z)-y(q)}
\frac{\omega_{0,|I_j|+1}(I_j,z)}{dx(z)} \Big)\;.
\end{align*}
Leibniz's rule for a higher derivative of a product together with
(\ref{def:nabla-om}) give the assertion.
\end{proof}
\end{lemma}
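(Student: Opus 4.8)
The plan is to prove Lemma \ref{lem:omega-nabla} by directly transforming the residue on the right-hand side of the involution identity \eqref{eq:flip-om} into a limit of a partial derivative, and then expanding the resulting product via the Taylor expansion \eqref{nabla-om-taylor} together with Leibniz's rule. The key observation is that the integrand in \eqref{eq:flip-om} has a pole only at $z=q$ of order determined by the factor $(y(q)-y(z))^{-s}$, so the residue can be computed by the standard formula for a pole of order $s$.

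First I would rewrite the denominator $(y(q)-y(z))^s$ in terms of $(x(z)-x(q))^s$. The trick is to insert a factor of $1$ written as $\prod_{j=1}^s \frac{x(z)-x(q)}{x(z)-x(q)}$; distributing one factor of $(x(z)-x(q))$ into each of the $s$ differentials $\frac{\omega_{0,|I_j|+1}(I_j,z)}{dx(z)}$ converts $\frac{dx(z)}{(y(q)-y(z))^s}$ into $\frac{dx(z)}{(x(z)-x(q))^s}\prod_{j=1}^s \frac{x(z)-x(q)}{y(z)-y(q)}$, where I have used $(y(q)-y(z))^s = (-1)^s(y(z)-y(q))^s$ to account for the sign $(-1)^s$. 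Since $\frac{x(z)-x(q)}{y(z)-y(q)}\frac{\omega_{0,|I_j|+1}(I_j,z)}{dx(z)}$ is regular at $z=q$ (the simple zeros of numerator and denominator cancel because $y=-x\circ\iota$ shares the ramification structure only away from the relevant points), the only singularity comes from $(x(z)-x(q))^{-s}$. This is exactly the first displayed equality in the proof.

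Next I would evaluate the residue of a pole of order $s$ at $z=q$. Using $x$ as a local coordinate near $z=q$ — which is legitimate precisely because $q$ is not a ramification point, so $dx(z)\neq 0$ there — the residue of $\frac{f(z)\,dx(z)}{(x(z)-x(q))^s}$ equals $\frac{1}{(s-1)!}\lim_{z\to q}\frac{\partial^{s-1}}{\partial(x(z))^{s-1}}f(z)$ for any function $f$ holomorphic at $q$. Applying this with $f(z)=\prod_{j=1}^s \frac{x(z)-x(q)}{y(z)-y(q)}\frac{\omega_{0,|I_j|+1}(I_j,z)}{dx(z)}$ and absorbing the sign gives the factor $\frac{(-1)^s}{s!}$ (the extra $\frac{1}{s}$ already present combines with $\frac{1}{(s-1)!}$), which is the second displayed equality.

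Finally I would apply the Leibniz rule to the $(s-1)$-st derivative of the $s$-fold product. Each factor $\frac{x(z)-x(q)}{y(z)-y(q)}\frac{\omega_{0,|I_j|+1}(I_j,z)}{dx(z)}$ has the Taylor expansion \eqref{nabla-om-taylor} in the variable $x(q)-x(z)$, so its $n_j$-th derivative with respect to $x(z)$, evaluated at $z=q$, produces $n_j!\,(-1)^{n_j}\nabla^{n_j}\omega_{0,|I_j|+1}(I_j,q)$. Summing over all ways to distribute the $s-1$ derivatives among the $s$ factors, the Leibniz multinomial coefficient $\frac{(s-1)!}{n_1!\cdots n_s!}$ cancels the factorials from the Taylor coefficients, and the signs combine to give an overall $(-1)^{s-1}$ which together with the prefactor $(-1)^s$ yields the single minus sign in front of $dy(q)$. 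This produces exactly the claimed formula with the sum $\sum_{n_1+\dots+n_s=s-1}$. I expect the only delicate point to be the bookkeeping of signs and factorials and the verification that $\frac{x(z)-x(q)}{y(z)-y(q)}\frac{\omega_{0,|I_j|+1}(I_j,z)}{dx(z)}$ is genuinely regular at $z=q$ so that the pole order is exactly $s$; once regularity is granted, the computation is a mechanical application of the order-$s$ residue formula and Leibniz's rule.
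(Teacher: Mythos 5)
Your proposal is correct and follows essentially the same route as the paper's proof: you rewrite $(y(q)-y(z))^{-s}$ by distributing one factor $\frac{x(z)-x(q)}{y(z)-y(q)}$ into each $\frac{\omega_{0,|I_j|+1}(I_j,z)}{dx(z)}$, evaluate the order-$s$ residue as an $(s-1)$-st derivative in the local coordinate $x(z)$ (legitimate since $q$ is away from the ramification points), and apply Leibniz's rule together with the definition \eqref{def:nabla-om} of $\nabla^{n}$. Your sign and factorial bookkeeping, $\frac{(-1)^s}{s!}\cdot(s-1)!\cdot(-1)^{s-1}=-\frac{1}{s}$, reproduces the paper's prefactor exactly.
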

Iterating Lemma~\ref{lem:omega-nabla} we obtain a variant of with only
a single term $\nabla\omega$:
\begin{lemma}
 \label{lem:om-onenabla} 
  The involution identity \eqref{eq:flip-om} can also be expressed as
\begin{align}
&\omega_{0,|I|+1}(I,q)+\omega_{0,|I|+1}(I,\iota q)
  \label{om-hol-qq}
\\
&= -dy(q)\sum_{s=1}^{|I|-1} \sum_{I_0\uplus I_1\uplus ...\uplus I_s=I}
\nabla^{s}\omega_{0,|I_0|+1}(I_0,q)
\prod_{j=1}^s \frac{\omega_{0,|I_j|+1}(I_j,\iota q)}{-dy(q)}\;.
\nonumber
\end{align}
\begin{proof} By induction on $|I|$, starting from the true statement
  for $|I|=1$.  For a partition $I=I_1\uplus ...\uplus I_s$ of
  $I=\{u_1,...,u_m\}$ into $s\geq 2$ subsets together with a given
  partition $n_1+...+n_s=s-1$ we let
  $u_\mu=\min(\biguplus_{\substack{k=1\\ n_k>0}}^sI_k)$ be the
  smallest element within those $I_k$ with $n_k>0$. Moving the subset
  which contains $u_\mu$ to the first place allows us to get rid of
  the $\frac{1}{s}$-factor in Lemma~\ref{lem:omega-nabla}:
\begin{align}
&\omega_{0,|I|+1}(I,q)+\omega_{0,|I|+1}(I,\iota q)
\label{om-to-nabla}
\\
&=-dy(q) \sum_{s=1}^{|I|-1} \sum_{n_0+...+n_s=s}
\sum_{\substack{I_0\uplus I_1\uplus ...\uplus I_s=I\\ u_\mu \in I_0}}
 \!\!\!\!\! \nabla^{n_0}\omega_{0,|I_0|+1}(I_0,q)
\prod_{j=1}^s \nabla^{n_j}\omega_{0,|I_j|+1}(I_j,q)\;.
\nonumber
\end{align}
By construction we have $n_0>0$.  Take in the second line of
(\ref{om-to-nabla}) a term of the form
$X\nabla^{0}\omega_{0,|I_p|+1}(I_p,q)$, for any product $X$ which
contains $I_0$.  Observe that (\ref{om-to-nabla}) then contains for
$|I_p|\geq 2$ also every term of the sum
\[
  X\sum_{k=2}^{|I_p|} \sum_{I_1'\uplus ...\uplus I_k'=I_p}
  \sum_{m_1+...+m_k=k-1} \prod_{\ell=1}^{k}
\nabla^{m_\ell}\omega_{0,|I_\ell'|+1}(I_\ell',q)\;.
\]
By induction hypothesis and with
$\nabla^{0}\omega_{0,|I_p|+1}(I_p,q)=\frac{\omega_{0,|I_p|+1}(I_p,q)}{dy(q)}$
we have
\begin{align*}
&  X\nabla^{0}\omega_{0,|I_p|{+}1}(I_p,q)
  +X\sum_{k=2}^{|I_p|}\sum_{I_1'\uplus ...\uplus I_k'=I_p}
  \sum_{m_1+...+m_k=k{-}1} \prod_{\ell=1}^{k}
  \nabla^{m_\ell}\omega_{0,|I_\ell'|+1}(I'_\ell,q)
  \\[-1ex]
  &= X \frac{\omega_{0,|I_p|+1}(I_p,\iota q)}{-dy(q)}\;,
\end{align*}
which is also true in case $|I_p|=1$.  Repeat this procedure for the
next $X\nabla^{0}\omega_{0,|I_p|+1}(I_p;q)$ for which the product $X$
does not yet contain a factor
$\frac{\omega_{0,|I_j|+1}(I_j,\iota q)}{-dy(q)}$.  At the end of this
procedure, the second line of (\ref{om-to-nabla}) is reduced to a sum
of terms each containing a factor
$\frac{\omega_{0,|I_j|+1}(I_j,\iota q)}{-dy(q)}$.

Now iterate the procedure for every
$X\nabla^{0}\omega_{|I_p|+1}(I_p,q)$ where the product $X$ contains
$\nabla^{n_0}\omega_{|I_0|+1}(I_0;q)$ and precisely one factor
$\frac{\omega_{0,|I_j|+1}(I_j,\iota q)}{-dy(q)}$.  At the end of this
step we have reduced the second line of (\ref{om-to-nabla}) to terms
of the form
$\nabla^{1}\omega_{|I_0|+1}(I_0;q)\frac{\omega_{0,|I_1|+1}(I_1,\iota
  q)}{-dy(q)}$ or with a double factor
$\prod_{j=1}^2\frac{\omega_{0,|I_j'|+1}(I_j',\iota q)}{-dy(q)}$.
Iterate again until all $\nabla^0$ in the second line of
(\ref{om-to-nabla}) are converted. This is the assertion.  Since $I_0$
is anyway distinguished we can omit the condition
$u_\mu \in I_0$.
\end{proof}
\end{lemma}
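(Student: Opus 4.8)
The plan is to prove the identity by induction on $|I|$, using Lemma~\ref{lem:omega-nabla} as the starting expression and then repeatedly applying it to the sub-blocks in order to trade each factor evaluated at $q$ for a factor evaluated at $\iota q$. The base case $|I|=1$ is immediate: the sum on the right-hand side of~\eqref{om-hol-qq} runs over $s=1,\dots,|I|-1$ and is therefore empty, so the claim reduces to $\omega_{0,2}(u_1,q)+\omega_{0,2}(u_1,\iota q)=0$, which is exactly the oddness of $\omega_{0,2}$ in its second argument built into~\eqref{om02}.

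For the inductive step I would begin from Lemma~\ref{lem:omega-nabla}, where each block $I_j$ carries a factor $\nabla^{n_j}\omega_{0,|I_j|+1}(I_j,q)$ with $\sum_j n_j=s-1$ and an overall weight $\tfrac1s$. The first manoeuvre is to remove this weight by singling out a canonical block. Since $s\geq 2$ at least one $n_k$ is positive; among all blocks with $n_k>0$ let $u_\mu$ be the smallest marked point in their union, and promote the block containing $u_\mu$ to a distinguished root $I_0$ with $n_0>0$. Fixing this root kills precisely the cyclic symmetry that produced the factor $\tfrac1s$, so the identity is rewritten with unit coefficients and with $I_0$ set apart from the remaining blocks.

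The heart of the argument is then a conversion step. Consider any term of the rearranged sum containing a factor $\nabla^{0}\omega_{0,|I_p|+1}(I_p,q)=\frac{\omega_{0,|I_p|+1}(I_p,q)}{dy(q)}$ for some non-root block $I_p$, say of the shape $X\cdot\nabla^{0}\omega_{0,|I_p|+1}(I_p,q)$ with $X$ collecting the root and the other factors. Dividing Lemma~\ref{lem:omega-nabla} for the smaller set $I_p$ by $-dy(q)$ and using $\nabla^{0}\omega=\frac{\omega(\cdot,q)}{dy(q)}$ gives
\[
\nabla^{0}\omega_{0,|I_p|+1}(I_p,q)+\sum_{k=2}^{|I_p|}\frac{1}{k}\sum_{I_1'\uplus\cdots\uplus I_k'=I_p}\sum_{m_1+\cdots+m_k=k-1}\prod_{\ell=1}^{k}\nabla^{m_\ell}\omega_{0,|I_\ell'|+1}(I_\ell',q)=\frac{\omega_{0,|I_p|+1}(I_p,\iota q)}{-dy(q)}.
\]
I would check that the rearranged sum already contains, with matching coefficients, all the companion terms $X\cdot\prod_{\ell}\nabla^{m_\ell}\omega_{0,|I_\ell'|+1}(I_\ell',q)$ coming from the finer partitions of $I_p$, so that grouping them with $X\cdot\nabla^{0}\omega_{0,|I_p|+1}(I_p,q)$ and invoking the displayed identity collapses the whole group to $X\cdot\frac{\omega_{0,|I_p|+1}(I_p,\iota q)}{-dy(q)}$. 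Repeating this on each block that is neither the root nor already evaluated at $\iota q$ forces the root $I_0$ to absorb the entire derivative budget, so that it appears as $\nabla^{s}\omega_{0,|I_0|+1}(I_0,q)$ with $s$ the number of blocks now sitting at $\iota q$; this is precisely the form~\eqref{om-hol-qq}.

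The main obstacle I expect is the bookkeeping of multiplicities in this conversion step: one must verify that, once the root and a partial product $X$ are fixed, the finer sub-partitions of a given block $I_p$ occur in the rearranged sum with exactly the multiplicities needed to reconstitute the full right-hand side of Lemma~\ref{lem:omega-nabla} for $I_p$, weights $\tfrac1k$ included, so that the grouping above is an honest identity rather than an approximate one. Equally, one must confirm that the choice of the $u_\mu$-block is consistent with all later conversions and that no term is counted twice, so that the procedure terminates with each block either being the single distinguished $\nabla^{s}$ factor at $q$ or appearing once as $\omega_{0,\cdot}(\cdot,\iota q)$. Everything outside this combinatorial matching is direct substitution, so this is where the real work lies.
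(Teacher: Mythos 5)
Your proposal is correct and follows essentially the same route as the paper's own proof: induction on $|I|$ starting from the oddness of $\omega_{0,2}$, rooting the partition at the block containing $u_\mu$ to absorb the $\tfrac1s$-weight from Lemma~\ref{lem:omega-nabla}, and then iteratively regrouping each $\nabla^0$-factor with its finer sub-partitions to convert it into $\frac{\omega_{0,|I_p|+1}(I_p,\iota q)}{-dy(q)}$ via the induction hypothesis. The multiplicity bookkeeping you single out as the main obstacle is indeed the delicate point, and it is exactly the step the paper's proof also treats by inspection rather than by a separate counting argument.
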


\subsection{Poles of
  \texorpdfstring{$\omega_{0,m+1}(u_1,...,u_m,z)$
  }{\omega(u1...,um,z)} at
  \texorpdfstring{$z=\iota u_k$}{z=\iota(uk)}}

\begin{lemma}
\label{lem:om-hol1}
The involution identity \eqref{eq:flip-om} together with 
convention \textup{(c)} in Theorem~\ref{thm:flip}
that $ \omega_{0,m+1}(u_1,...,u_m,z)$ is for $m{\geq} 2$
holomorphic at $z{=}u_k$ imply that
$z\mapsto \omega_{0,m+1}(u_1,...,u_m,\iota z)$ has a pole at every
$z= u_k$. The principal part of the corresponding Laurent series is
given by
\begin{align}
&\Res\displaylimits_{q\to u_k}
\frac{\omega_{0,|I|+1}(I,\iota q)dz}{z-q}
\label{om-hol-u}
\\
&=
-d_{u_k} \bigg[\sum_{s=1}^{|I|-1} \sum_{I_1\uplus ...\uplus I_s=I{\setminus} u_k}
\frac{1}{s!}\frac{\partial^s \big(\frac{1}{z-u_k}\big)}{\partial (y(u_k))^s}
\prod_{j=1}^s \frac{\omega_{0,|I_j|+1}(I_j,u_k)}{dx(u_k)} \bigg]dz\;.
\nonumber
\end{align}
Equivalently, 
\begin{align*}
\omega_{0,|I|+1}(I,z)
&=
-d_{u_k} \bigg[\sum_{s=1}^{|I|-1} \sum_{I_1\uplus ...\uplus I_s=I{\setminus} u_k}
\frac{d(\iota z)}{s!}\frac{\partial^s
  \big(\frac{1}{\iota z-u_k}\big)}{\partial (y(u_k))^s}
\prod_{j=1}^s \frac{\omega_{0,|I_j|+1}(I_j,u_k)}{dx(u_k)} \bigg]
\\
&+\text{terms which are holomorphic at $z=\iota u_k$}\;.
\end{align*}
In particular, the poles of $\omega_{0,|I|+1}(I,z)$ at $z=\iota u_k$
do not have a residue.
\begin{proof}
  We divide the involution identity (\ref{eq:flip-om}) by $w-q$ and
  take the residue at $q=u_k$. By convention (c) in
  Theorem~\ref{thm:flip} the term $ \omega_{0,|I|+1}(I,q)$ in the
  first line of (\ref{eq:flip-om}) does not contribute to the residue.
  In the second line we commute the two residues via
  (\ref{commute-res}).  Since the inner integrand is holomorphic
  at $q=u_k$, we  have
\begin{align*}
&\Res\displaylimits_{q\to u_k}
\frac{\omega_{0,|I|+1}(I,\iota q)}{w-q}
\\
&=- \Res\displaylimits_{q\to u_k} \sum_{s=2}^{|I|} \sum_{I_1\uplus ...\uplus I_s=I}
\frac{1}{s} \Res\displaylimits_{z\to u_k}  \Big(
\frac{dy(q) dx(z)}{(w-q)(y(q)-y(z))^{s}}
\prod_{j=1}^s \frac{\omega_{0,|I_j|+1}(I_j,z)}{dx(z)} 
\Big)
\\
&=- \Res\displaylimits_{q\to u_k} \sum_{s=1}^{|I|-1}
\sum_{I_1\uplus ...\uplus I_s=I{\setminus} u_k}
\Res\displaylimits_{z\to u_k}  \Big(
\frac{dy(q) \omega_{0,2}(u_k,z)}{(w-q)(y(q)-y(z))^{s+1}}
\prod_{j=1}^s \frac{\omega_{0,|I_j|+1}(I_j,z)}{dx(z)} 
\Big)\;.
\end{align*}
We implemented the convention that there is only a pole at $z=u_k$ if
a unique factor $\omega_{0,2}(u_k,z)$ is present. It can occur at all
$s$ places of the partition of $I$ into $s$ subsets, so that
$\frac{1}{s}$ cancels. We shifted $s-1\mapsto s$.  We write
$\omega_{0,2}(u_k,z)$ according to the second line of (\ref{om02}) and
commute the differential $d_{u_k}$ according to (\ref{commute-res-d})
in front of the residues:
\begin{align}
&  \Res\displaylimits_{q\to u_k}
\frac{\omega_{0,|I|+1}(I,\iota q)dw}{w-q}
\label{omIjq-hol}
\\
&=-d_{u_k} \bigg[\Res\displaylimits_{q\to u_k} \sum_{s=1}^{|I|-1}
\sum_{I_1\uplus ...\uplus I_s=I{\setminus} u_k}
\Big(
\frac{dy(q)\prod_{j=1}^s \frac{\omega_{0,|I_j|+1}(I_j,u_k)}{dx(u_k)} 
}{(w-q)(y(q)-y(u_k))^{s+1}}
\Big)\bigg]dw
\nonumber
\\
&=-
d_{u_k} \bigg[\sum_{s=1}^{|I|-1} \sum_{I_1\uplus ...\uplus I_s=I{\setminus} u_k}
\frac{1}{s!}\frac{\partial^s \big(\frac{1}{w-q}\big)}{\partial (y(q))^s}
\Big|_{q=u_k}
\prod_{j=1}^s \frac{\omega_{0,|I_j|+1}(I_j,u_k)}{dx(u_k)} \bigg]dw\,.
\nonumber
\end{align}
This is the assertion (when renaming $w\mapsto z$).
\end{proof}
\end{lemma}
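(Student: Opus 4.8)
The plan is to read off the principal part of $z\mapsto\omega_{0,|I|+1}(I,\iota z)$ at $z=u_k$ by applying the residue projection $P^{u_k}$ directly to the involution identity \eqref{eq:flip-om}. Concretely, I would divide \eqref{eq:flip-om} (with distinguished variable $q$) by $w-q$ and take $\Res_{q\to u_k}$, so that the left-hand side becomes $\Res_{q\to u_k}\frac{\omega_{0,|I|+1}(I,q)+\omega_{0,|I|+1}(I,\iota q)}{w-q}$; the first summand is holomorphic at $q=u_k$ by convention (c) and is therefore killed by $\Res_{q\to u_k}$, leaving exactly the wanted projection $\Res_{q\to u_k}\frac{\omega_{0,|I|+1}(I,\iota q)\,dw}{w-q}$ (I would rename $w\mapsto z$ at the very end).

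The heart of the argument is the double residue on the right-hand side, $\Res_{q\to u_k}\frac{1}{w-q}\Res_{z\to q}(\cdots)$, which I would untangle with the commutation rule \eqref{commute-res} based at $u_k$. The decisive observation is that, for $z$ held fixed near $u_k$ (and $z\neq u_k$), the inner integrand is holomorphic in $q$ at $q=u_k$: its only $q$-poles sit at $q=w$ and on the diagonal $q=z$, where $y(q)=y(z)$. Hence the auxiliary iterated residue $\Res_{z\to u_k}\Res_{q\to u_k}$ vanishes identically and \eqref{commute-res} collapses to the replacement $\Res_{z\to q}\mapsto -\Res_{z\to u_k}$. It remains to see which factors generate a genuine pole at the inner point $z=u_k$: the $y$-denominator $(y(q)-y(z))^{s}$ is regular there (its $z$-pole lies on the diagonal $z=q$, which \eqref{commute-res} has already separated off), and each lower differential $\omega_{0,|I_j|+1}(I_j,z)$ with $u_k\notin I_j$ is regular at $z=u_k$ by the assumed pole structure of conventions (b)--(c). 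Therefore a pole at $z=u_k$ can come only from a factor $\omega_{0,2}(u_k,z)$, i.e.\ from the block $\{u_k\}$ of the partition.

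Next I would perform the combinatorial reduction and extract the total derivative. Since the block $\{u_k\}$ may sit in any of the $s$ slots and each placement yields the same residue, the weight $\frac{1}{s}$ is cancelled; relabelling the remaining $s-1$ blocks as a partition of $I\setminus u_k$ and shifting $s-1\mapsto s$ brings the sum to $\sum_{s=1}^{|I|-1}\sum_{I_1\uplus\cdots\uplus I_s=I\setminus u_k}$ with denominator $(y(q)-y(z))^{s+1}$. I would then substitute the $d_{u_k}$-exact form of $\omega_{0,2}(u_k,z)$ from the second line of \eqref{om02}, commute $d_{u_k}$ to the front of both residues using \eqref{commute-res-d}, evaluate the inner $\Res_{z\to u_k}$ (which replaces the regular factors by their values at $z=u_k$), and finally compute $\Res_{q\to u_k}\frac{dy(q)}{(w-q)(y(q)-y(u_k))^{s+1}}$ by the change of variables $Y=y(q)$; this produces the Taylor coefficient $\frac{1}{s!}\frac{\partial^s}{\partial(y(u_k))^s}$ acting on $\frac{1}{w-q}$ evaluated at $q=u_k$, i.e.\ on $\frac{1}{z-u_k}$ after renaming $w\mapsto z$. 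This is exactly the claimed formula. Passing to $z=\iota q$ rewrites the same data as the polar part of $\omega_{0,|I|+1}(I,z)$ at $z=\iota u_k$, and the no-residue statement is then immediate, because every surviving term carries a pole of order $\geq 2$ in $\iota z-u_k$ (the $y$-derivatives with $s\geq 1$ raise the order) and is hence exact in $z$.

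I expect the main obstacle to be the careful justification of the residue commutation and of the pole count at $z=u_k$: one must verify that the diagonal pole $z=q$ and the spectator pole $q=w$ really are the only competitors near $u_k$, so that the hypotheses of \eqref{commute-res} hold and the spurious double residue $\Res_{z\to u_k}\Res_{q\to u_k}$ genuinely drops out, and one must invoke the pole structure of the lower $\omega_{0,|I_j|+1}$ to rule out any pole other than the one coming from $\omega_{0,2}(u_k,z)$. The surrounding bookkeeping --- the cancellation of $\frac{1}{s}$, the index shift $s-1\mapsto s$, the matching power $s+1$, and the cancelling $dx$ factor when one block is singled out --- is routine but is where sign or index slips are most likely.
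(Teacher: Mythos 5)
Your proposal is correct and follows essentially the same route as the paper's proof: projecting \eqref{eq:flip-om} with $\Res_{q\to u_k}\frac{1}{w-q}$, using \eqref{commute-res} to trade $\Res_{z\to q}$ for $-\Res_{z\to u_k}$, isolating the unique $\omega_{0,2}(u_k,z)$ factor to cancel the $\frac{1}{s}$ and shift $s-1\mapsto s$, and pulling out $d_{u_k}$ via \eqref{commute-res-d} before evaluating the remaining residue as the $s$-th $y$-derivative. The only addition beyond the paper is your explicit remark that the vanishing of $\Res_{z\to u_k}\Res_{q\to u_k}$ is what collapses the commutation rule, which the paper states more tersely as holomorphy of the inner integrand at $q=u_k$.
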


We will derive an alternative formula:
\begin{proposition}
\label{prop:om-hol2}
The poles of $z\mapsto \omega_{0,|I|+1}(I,\iota z)$ at $z=u_k$ can
also be evaluated by
\begin{align*}
&\Res\displaylimits_{q\to u_k}
\frac{\omega_{0,|I|+1}(I,\iota q)dz}{z-q}
\\[-1ex]
&=-d_{u_k} \bigg[\Res\displaylimits_{q\to u_k}
\sum_{I_0\uplus I_1=I}
\frac{\frac{1}{2}\big(\frac{dz}{z-q}-\frac{dz}{z-u_k}\big)}{
  dx(\iota q)(y(\iota q)-y(\iota u_k))}
d_{u_k}^{-1}\big( \omega_{0,|I_0|+1}(I_0,\iota q)
\omega_{0,|I_1|+1}(I_1,\iota q)\big)\bigg]\;.
\end{align*}
Equivalently,
\begin{align*}
&\omega_{0,|I|+1}(I,z)
\\[-1ex]
&=-d_{u_k} \bigg[\Res\displaylimits_{q\to \iota u_k}
\sum_{I_0\uplus I_1=I}
\frac{\frac{1}{2}\big(\frac{d(\iota z)}{\iota z-\iota q}
  -\frac{d(\iota z)}{\iota z-u_k}\big)}{
  dx(q)(y(q)-y(\iota u_k))}
d_{u_k}^{-1}\big( \omega_{0,|I_0|+1}(I_0,q)
\omega_{0,|I_1|+1}(I_1,q)\big)\bigg] \\
&+ \text{ terms which are holomorphic at } z=\iota u_k\;.
\end{align*}
\begin{proof}
  We shift $s\mapsto s-1$ in (\ref{om-hol-u})
  and represent the term with $j=0$ via Lemma~\ref{lem:om-onenabla} 
for $q\mapsto \iota u_k$:
\begin{align*}
&\Res\displaylimits_{q\to u_k}
\frac{\omega_{0,|I|+1}(I,\iota q)dw}{w-q}
\\
&=-d_{u_k} \bigg[\sum_{s=0}^{|I|-2}
\sum_{I_0\uplus I_1\uplus ...\uplus I_s=I{\setminus} u_k}
\frac{1}{(s+1)!}\frac{\partial^{s+1} \big(\frac{1}{w-q}\big)}{
  \partial (y(q))^{s+1}}
\Big|_{q=u_k}
\prod_{j=1}^s \frac{\omega_{0,|I_j|+1}(I_j,u_k)}{dx(u_k)}
\\
&\times \frac{(-dy(\iota u_k))}{dx(u_k)}
\sum_{n=0}^{|I_0|-1}\sum_{I'_0\uplus ...\uplus I'_n=I_0}
\nabla^n\omega_{0,|I'_0|+1}(I'_0,\iota u_k)
\prod_{\ell=1}^n
\frac{\omega_{0,|I_\ell'|+1}(I_\ell',u_k)}{-dy(\iota u_k)}\bigg]dw\;.
\end{align*}
We have included the term
$\omega_{|I_0|+1}(I_0,\iota u_k)=dy(\iota u_k)
\nabla^0\omega_{|I_0|+1}(I_0,\iota u_k)$
as $n=0$. Implementing (\ref{def:y}), i.e.\ $-dy(\iota u_k)=dx(u_k)$
suggests to change summation variables to
$s+n\mapsto s\in [0..|I|-2]$. Then we express the derivative with
respect to $y(q)$ as a residue:
\begin{align*}
&\Res\displaylimits_{q\to u_k}
\frac{\omega_{0,|I|+1}(I,\iota q)dw}{w-q}
\\
&=-d_{u_k} \bigg[\sum_{s=0}^{|I|-2} \sum_{n=0}^{s}
\sum_{I_0\uplus I_1\uplus ...\uplus I_s=I{\setminus} u_k}
\frac{1}{(s+1-n)!}\frac{\partial^{s+1-n}
  \big(\frac{1}{w-q}\big)}{\partial (y(q))^{s+1-n}}
\Big|_{q=u_k}
\\*[-1ex]
&\hspace*{4cm} \times \nabla^n\omega_{0,|I_0|+1}(I_0,\iota u_k)
\prod_{j=1}^s\frac{\omega_{0,|I_i|+1}(I_j,u_k)}{dx(u_k)}\bigg] dw
\\
&=-d_{u_k} \bigg[\Res\displaylimits_{q\to u_k}
\sum_{s=0}^{|I|-2} \sum_{n=0}^{s}
\sum_{I_0\uplus I_1\uplus ...\uplus I_s=I{\setminus} u_k}
\frac{dy(q)}{(w-q) (y(q)-y(u_k))^{s-n+2}}
\\*[-1ex]
&\hspace*{4cm} \times \nabla^n\omega_{0,|I_0|+1}(I_0,\iota u_k)
\prod_{j=1}^s\frac{\omega_{0,|I_j|+1}(I_j,u_k)}{dx(u_k)}\bigg] dw
\\
&=-d_{u_k} \bigg[\Res\displaylimits_{q\to u_k}
\sum_{s=0}^{|I|-2}
\sum_{I_0\uplus I_1\uplus ...\uplus I_s=I{\setminus} u_k}
\Big(\frac{dy(q)}{w-q}-\frac{dy(q)}{w-u_k}\Big)
\frac{1}{(y(q)-y(u_k))^{s+2}}
\\
&\times\prod_{j=1}^s\frac{\omega_{0,|I_i|+1}(I_j,u_k)}{dx(u_k)}
\sum_{n=0}^{s}
(x(\iota u_k)-x(\iota q))^n
\nabla^n\omega_{0,|I_0|+1}(I_0,\iota u_k)
\bigg]dw\;.
\end{align*}
The term $\frac{dy(q)}{w-u_k}$ added in the last step has vanishing
residue (obvious before setting $y(q)=-x(\iota q)$).  It is added in
order to extend the $n$-summation to any $n\geq 0$, giving with
(\ref{nabla-om-taylor}) for $q\mapsto \iota u_k$, $z\mapsto \iota q$
and again (\ref{def:y})
\begin{align}
\Res\displaylimits_{q\to u_k}
\frac{\omega_{0,|I|+1}(I,\iota q)dw}{w-q}
&=-d_{u_k} \bigg[\Res\displaylimits_{q\to u_k}
\sum_{s=0}^{|I|-2}
\sum_{I_0\uplus ...\uplus I_s=I{\setminus} u_k}\!\!\!
\frac{\frac{dy(q)}{w-q}-\frac{dy(q)}{w-u_k}
}{(x(q)-x(u_k))(y(q)-y(u_k))^{s+1}}
\nonumber
\\
&\times
\frac{\omega_{0,|I_0|+1}(I_0,\iota q)}{(-dy(q))}
\prod_{j=1}^s\frac{\omega_{0,|I_j|+1}(I_j,u_k)}{dx(u_k)}\bigg]dw\;.
\label{om-holqu}
\end{align}

The first two lines of (\ref{omIjq-hol}) tell us that 
\begin{align}
&  \sum_{s=1}^{|I'|}
\sum_{I_1\uplus...\uplus I_s=I'}
\frac{dy(q)}{(y(q)-y(u_k))^{s+1}}
\prod_{i=1}^s\frac{\omega_{0,|I_i|+1}(I_i,u_k)}{dx(u_k)}
\nonumber
\\
&=-d_{u_k}^{-1} (\omega_{0,|I'|+2}(I',u_k,\iota q))
+ \text{ terms which are regular at $q=u_k$}\;.\label{duinv-om}
\end{align}
Here the inverse $d_u^{-1}$ of the exterior differential of a $1$-form
$\omega$ is its primitive,
$d_u^{-1} \omega(u) =\int_{u'=\infty}^{u'=u} \omega(u')$.  Inserted
into (\ref{om-holqu})  we confirm with $I'\cup u_k=I_1$
and symmetrisation in $I_1,I_0$ the assertion.
\end{proof}
\end{proposition}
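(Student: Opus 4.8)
The plan is to start from the principal-part formula of Lemma~\ref{lem:om-hol1}, equation~(\ref{om-hol-u}), and to convert its sum over arbitrarily many blocks $I_1\uplus\dots\uplus I_s=I\setminus u_k$ into the two-block form $I_0\uplus I_1=I$ demanded by the proposition. The device for collapsing the number of blocks is the single-$\nabla$ reformulation of the involution identity, Lemma~\ref{lem:om-onenabla}. Concretely, I would first relabel by shifting $s\mapsto s-1$ so as to single out one distinguished block $I_0$, and then substitute for the factor attached to $I_0$ the expansion~(\ref{om-hol-qq}) evaluated at the reflected point $q\mapsto\iota u_k$. This replaces $\omega_{0,|I_0|+1}(I_0,\iota u_k)$ by a sum $\sum_n\nabla^n\omega_{0,|I_0'|+1}(I_0',\iota u_k)$ times further factors of the type $\omega(\cdot,u_k)/(-dy(\iota u_k))$; the geometric input $-dy(\iota u_k)=dx(u_k)$ from~(\ref{def:y}) is exactly what lets these new factors merge with the existing product and the summation indices recombine as $s+n\mapsto s$.

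Next I would trade the iterated $y$-derivatives $\frac{1}{(s+1-n)!}\partial_{y(q)}^{s+1-n}(\tfrac{1}{w-q})|_{q=u_k}$ for residues in $q$, writing them as $\Res_{q\to u_k}\frac{dy(q)}{(w-q)(y(q)-y(u_k))^{s-n+2}}$. Here the crucial trick is to add and subtract $\frac{dy(q)}{w-u_k}$: this term has vanishing residue (visibly so before the substitution $y(q)=-x(\iota q)$ is made), so it is free, but it liberates the $n$-summation, which may then be extended to all $n\geq0$. The extended $n$-sum is precisely the generating series~(\ref{nabla-om-taylor}) with $q\mapsto\iota u_k$, $z\mapsto\iota q$, which resums $\sum_{n\geq0}(x(\iota u_k)-x(\iota q))^n\nabla^n\omega_{0,|I_0|+1}(I_0,\iota u_k)$ into a single closed factor $\frac{\omega_{0,|I_0|+1}(I_0,\iota q)}{-dy(q)}$ accompanied by the denominator $(x(q)-x(u_k))$. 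This produces the intermediate formula~(\ref{om-holqu}).

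The final step is a second resummation, now over the blocks $I_1\uplus\dots\uplus I_s$ still attached to $u_k$. By comparison with the first two lines of~(\ref{omIjq-hol}) I would recognise that the partition sum $\sum_s\sum_{I_1\uplus\dots\uplus I_s=I'}\frac{dy(q)}{(y(q)-y(u_k))^{s+1}}\prod_i\frac{\omega_{0,|I_i|+1}(I_i,u_k)}{dx(u_k)}$ equals, up to terms regular at $q=u_k$, the primitive $-d_{u_k}^{-1}(\omega_{0,|I'|+2}(I',u_k,\iota q))$; this is the content of~(\ref{duinv-om}). Substituting it into~(\ref{om-holqu}), identifying $I'\cup u_k$ with a single block $I_1$, and symmetrising in the two blocks $I_0$ and $I_1$ then yields the claimed kernel with its factor $\tfrac12$ and the two-block sum $\sum_{I_0\uplus I_1=I}$. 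The equivalent second display in the proposition follows by reflecting the residue variable with $\iota$ and invoking~(\ref{def:y}) once more.

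I expect the main obstacle to be the consistent bookkeeping of the two nested resummations: in particular, tracking which contributions are genuine principal parts and which are the ``terms regular at $q=u_k$'' that may be dropped, and checking that the add-and-subtract manipulation of $\frac{dy(q)}{w-u_k}$ really leaves every residue unchanged once $y(q)=-x(\iota q)$ is inserted. The symmetrisation in $I_0$ and $I_1$ at the very end is the delicate point where the combinatorial prefactors must line up to reproduce the $\tfrac12$, and that is where I would concentrate the care.
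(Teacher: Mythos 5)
Your proposal reproduces the paper's own argument step for step: the shift $s\mapsto s-1$ in (\ref{om-hol-u}), the substitution of Lemma~\ref{lem:om-onenabla} at $q\mapsto\iota u_k$ for the distinguished block, the recombination $s+n\mapsto s$ via $-dy(\iota u_k)=dx(u_k)$, the add-and-subtract of the zero-residue term $\frac{dy(q)}{w-u_k}$ to unlock the resummation (\ref{nabla-om-taylor}), and the final collapse of the remaining partition sum through (\ref{duinv-om}) followed by symmetrisation in $I_0,I_1$. The approach is correct and essentially identical to the paper's proof.
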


\subsection{Symmetry of the involution identity
  I: \texorpdfstring{$q\to \iota u_k$}{q->\iota uk}
  and \texorpdfstring{$q\to u_k$}{q->uk}}
\label{sec:iota-symm-I}

We consider the $\iota$-reflection of (\ref{eq:flip-om}), 
\begin{align}
&  \omega_{0,|I|+1}(I,q)
  +\omega_{0,|I|+1}(I,\iota q)
\label{eq:flip-om-refl}
\\
&=\sum_{s=2}^{|I|} \sum_{I_1\uplus ...\uplus I_s=I}
\frac{1}{s} \Res\displaylimits_{z\to q}  \Big(
\frac{dx(q) dy(z)}{(x(q)-x(z))^{s}}
\prod_{j=1}^s \frac{\omega_{0,|I_j|+1}(I_j,\iota z)}{dy(z)} 
\Big)\;,
\nonumber
\end{align}
where (\ref{def:y}) is used. We show that the rhs has the same pole
at $q=u_k$ as the original equation (\ref{eq:flip-om}), i.e.\ that the
solution in Proposition~\ref{prop:om-hol2} satisfies
\begin{align}
& 
\Res\displaylimits_{q\to u_k}  \frac{\omega_{0,|I|+1}(I,\iota q)dw}{w-q}
\label{eq:om-hol3}
\\
&=\Res\displaylimits_{q\to u_k}  \frac{dx(q)dw}{w-q}
\sum_{s=2}^{|I|} \sum_{I_1\uplus ...\uplus I_s=I}
\frac{1}{s} \Res\displaylimits_{z\to q}  \Big(
\frac{dy(z)}{(x(q)-x(z))^{s}}
\prod_{j=1}^s \frac{\omega_{0,|I_j|+1}(I_j,\iota z)}{dy(z)} 
\Big)\;.
\nonumber
\end{align}
This is the same as 
\begin{align*}
0&=-\Res\displaylimits_{q\to u_k}  \frac{dx(q)dw}{w-q}
\sum_{s=1}^{|I|} \sum_{I_1\uplus ...\uplus I_s=I}
\frac{1}{s} \Res\displaylimits_{z\to q}  \Big(
\frac{dy(z)}{(x(q)-x(z))^{s}}
\prod_{j=1}^s \frac{\omega_{0,|I_j|+1}(I_j,\iota z)}{dy(z)} 
\Big)
\\
&=\Res\displaylimits_{q\to u_k}  \frac{dx(q)dw}{w-q}
\sum_{s=1}^{|I|} \sum_{\substack{I_1\uplus ...\uplus I_s=I\\u_k\in I_1}}
\Res\displaylimits_{z\to u_k}  \Big(
\frac{dy(z)}{(x(q)-x(z))^{s}}
\prod_{j=1}^s \frac{\omega_{0,|I_j|+1}(I_j,\iota z)}{dy(z)} 
\Big)\;,
\end{align*}
where (\ref{commute-res}) has been used. Fixing $u_k\in I_1$ gives a
factor $s$. We write
$\omega_{0,|I_1|+1}(I_1,\iota z) =d_{u_k}
(d_{u_k}^{-1}\omega_{0,|I_1|+1}(I_1,\iota z))$, move $d_{u_k}$ in
front of the residues and ignore it below. Then we expand the
denominator about $x(z)=x(u_k)$:
\begin{align}
  0  &=\sum_{p=1}^\infty \Res\displaylimits_{q\to u_k}
  \frac{dx(q)dw}{(w-q)(x(q)-x(u_k))^{p}}
\label{kern-0}
\\
&\times \Res\displaylimits_{z\to u_k}  \Big(
\sum_{s=1}^{\min(|I|,p)} \!\!\! 
\binom{p{-}1}{p{-}s} \!\!
\sum_{\substack{I_1\uplus ...\uplus I_s=I\\u_k\in I_1}}\!\!\!\!\!\!
dy(z)(x(z)-x(u_k))^{p-s}  d_{u_k}^{-1}
\Big[\prod_{j=1}^s \frac{\omega_{0,|I_j|+1}(I_j,\iota z)}{dy(z)} 
\Big]\Big)\,.
\nonumber
\end{align}
We will show that already the second line vanishes for every
$p\geq 1$.  For $p=1$ the equation to prove reduces to
$\Res\displaylimits_{z\to u_k}d_{u_k}^{-1}\omega_{0,|I|+1}(I,\iota
z)=0$, which is true by Lemma~\ref{lem:om-hol1} (only higher order
poles at $z=\iota u_k$). Next for $p=2$ we need to show
\begin{align}
0&= \!\Res\displaylimits_{z\to u_k}  (x(z){-}x(u_k))
\Big\{d_{u_k}^{-1}
\omega_{0,|I|+1}(I,\iota z)
+ \!\!\!\!\! \sum_{\substack{I_1\uplus I_2=I\\ u_k\in I_1}} \!\!\!\!
\frac{d_{u_k}^{-1}\omega_{0,|I_1|+1}(I_1,\iota z)\omega_{0,|I_2|+1}(I_2,\iota z)}{
dx(\iota z)(y(\iota z)-y(\iota u))}
\Big\}  .
\label{kern-2}
\end{align}
Indeed by Proposition \ref{prop:om-hol2} the term in braces $\{~\}$ has at
most a first-order pole at $z=u_k$, which is removed by a prefactor
$(x(z)-x(u_k))^n$ for any $n\geq 1$.  Hence (\ref{kern-2}) is true.
Next for $p=3$ we have to show
\begin{align*}
0&= \!\Res\displaylimits_{z\to u_k}  \Big((x(z)-x(u_k))^2
d_{u_k}^{-1}
\omega_{0,|I|+1}(I,\iota z)
\\
&+ 2\sum_{\substack{I_1\uplus I_2=I,~ u_1\in I_1}} 
\frac{(x(z)-x(u_k))}{dx(\iota z)}
d_{u_k}^{-1}\omega_{0,|I_1|+1}(I_1,\iota z)\omega_{0,|I_2|+1}(I_2,\iota z)
\\
&+
\sum_{\substack{I_1\uplus I_2\uplus I_3=I,~ u_1\in I_1}} 
\frac{1}{(dx(\iota z))^2}
d_{u_k}^{-1}\omega_{0,|I_1|+1}(I_1,\iota z)\omega_{0,|I_2|+1}(I_2,\iota z)
\omega_{0,|I_3|+1}(I_3,\iota z)
\Big)  .
\end{align*}
By the argument employed to prove (\ref{kern-2}) this reduces to 
\begin{align}
  0&= \!\Res\displaylimits_{z\to u_k} \frac{(x(z)-x(u_k))}{dx(\iota z)}
  \Big(
\sum_{\substack{I'_1\uplus I'_2=I,~ u_1\in I'_1}} 
d_{u_k}^{-1}\omega_{0,|I'_1|+1}(I'_1,\iota z)\omega_{0,|I'_2|+1}(I'_2,\iota z)
\nonumber
\\
&+
\sum_{\substack{I_1\uplus I_2\uplus I_3=I,~ u_1\in I_1}} 
\frac{
d_{u_k}^{-1}\omega_{0,|I_1|+1}(I_1,\iota z)\omega_{0,|I_2|+1}(I_2,\iota z)
\omega_{0,|I_3|+1}(I_3,\iota z)}{dx(\iota z)(y(\iota z)-y(\iota u))}
\Big)  .
\label{kern-3}
\end{align}
The sum in the first line will include the cases $I_2'=I_3$ and
$I_2'=I_2$. Using again the argument based on Proposition~\ref{prop:om-hol2},
in the first case
$d^{-1}_{u_k}\omega_{0,|I_1'|+1}(I_1',\iota z) +\sum_{I_1\uplus I_2=
  I_1', u_k\in I_1} \frac{d^{-1}_{u_k}\omega_{0,|I_1|+1}(I_1,\iota z)
  \omega_{0,|I_2|+1}(I_2,\iota z)}{dx(\iota z) (y(\iota z)-y(\iota
  u_k))}$ has at most a first-order pole at $z=u_k$. Multiplying
this sum by
$((x(z)-x(u_k)) \frac{\omega_{0,|I_3|+1}(I_3,\iota z)}{dx(\iota z)}$
gives a regular term without residue.  The same is true for
$I_2\leftrightarrow I_3$. This proves (\ref{kern-3}).  The same
argument together with Pascal's triangle structure eventually shows
that the second line of (\ref{kern-0}) vanishes identically for any
$p\geq 1$. In conclusion, (\ref{eq:om-hol3}) is proved, which means that
the rhs of (\ref{eq:flip-om}), minus its
reflection $q\mapsto \iota q$, is holomorphic at every $q=\iota u_k$ (and
then also at $q=u_k$).

\subsection{Linear loop equation}

\label{sec:linloopeq}

Let $\sigma_i$ be the local Galois involution defined
in a neighbourhood of the ramification point 
$\beta_i$. It satisfies
$x(z)=x(\sigma_i(z))$, $\sigma_i(z)\neq z$ for $z\neq \beta_i$ and
$\lim_{z\to \beta_i} \sigma_i(z)=\beta_i$.
\begin{proposition}
 \label{prop-linloop} 
 The meromorphic differentials $\omega_{0,m+1}$ satisfy the linear
 loop equation \cite{Borot:2013lpa}, i.e.\
 $q\mapsto \omega_{0,|I|+1}(I,q)+\omega_{0,|I|+1}(I,\sigma_i(q))$ is
 holomorphic at $q=\beta_i$.
\begin{proof}
  We start from the involution identity (\ref{eq:flip-om-refl}), which
  arises by $q\mapsto \iota q$ from the original equation
  (\ref{eq:flip-om}), and consider
\begin{align*}
&\Res\displaylimits_{q\to \beta_i}  \frac{\omega_{0,|I|+1}(I,q)dw}{w-q}
\\*
&=\sum_{s=1}^{|I|} \frac{dw}{s}   \sum_{I_1\uplus ...\uplus I_s=I}
\Res\displaylimits_{q\to \beta_i}\Res\displaylimits_{z\to q}
\frac{dx(q)dy(z)}{(w-q)(x(q)-x(z))^s}
\prod_{j=1}^s \frac{\omega_{0,|I_j|+1}(I_j,\iota z)}{dy(z)}\;,
\end{align*}
where $\omega_{0,|I|+1}(I,\iota q)$ is included as $s=1$ on the rhs.
Condition (d) in Theorem~\ref{thm:flip}, i.e.\ holomorphicity of
$\omega_{0,|I|+1}(I,\iota q)$ at $q=\beta_i$, implies that the
integrand is regular at $z=\beta_i$, but has a pole at
$z=\sigma_i(q)$. We thus have with commutation rule
(\ref{commute-res-sig})
\begin{align*}
&\Res\displaylimits_{q\to \beta_i}  \frac{\omega_{0,|I|+1}(I,q)dw}{w-q}
\\[-1ex]
&= -\sum_{s=1}^{|I|} \frac{dw}{s}   \sum_{I_1\uplus ...\uplus I_s=I}
\Res\displaylimits_{q\to \beta_i}\Res\displaylimits_{z\to \sigma_i(q)}
\frac{dx(q)dy(z)}{(w-q)(x(q)-x(z))^s}
\prod_{j=1}^s \frac{\omega_{0,|I_j|+1}(I_j,\iota z)}{dy(z)}\;.
\end{align*}
With $x(q)=x(\sigma_i(q))$ and $dx(q)=dx(\sigma_i(q))$ the inner
integral evaluates to $\omega_{0,|I|+1}(I,\sigma_i(q))$, and we end up
in
$\Res\displaylimits_{q\to \beta}
\frac{\omega_{0,|I|+1}(I,q)+\omega_{0,|I|+1}(I,\sigma_i(q))}{w-q}dw=0$.
\end{proof}
\end{proposition}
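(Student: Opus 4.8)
The plan is to prove that $\omega_{0,|I|+1}(I,q)$ and $\omega_{0,|I|+1}(I,\sigma_i(q))$ have equal and opposite principal parts at $q=\beta_i$, so that their sum has no pole there. Since the principal part of a $1$-form at $\beta_i$ is extracted by the operator $\Res_{q\to\beta_i}\frac{(\cdot)\,dw}{w-q}$, the goal reduces to establishing the identity $\Res_{q\to\beta_i}\frac{[\omega_{0,|I|+1}(I,q)+\omega_{0,|I|+1}(I,\sigma_i(q))]\,dw}{w-q}=0$.

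First I would apply $\Res_{q\to\beta_i}\frac{(\cdot)\,dw}{w-q}$ to the reflected involution identity (\ref{eq:flip-om-refl}). On the left-hand side, convention (d) makes $\omega_{0,|I|+1}(I,\iota q)$ holomorphic at $q=\beta_i$, so only the principal part of $\omega_{0,|I|+1}(I,q)$ survives; on the right I obtain a sum over partitions $I=I_1\uplus\dots\uplus I_s$ of the nested residues $\Res_{q\to\beta_i}\Res_{z\to q}$ of $\frac{dx(q)\,dy(z)}{(w-q)(x(q)-x(z))^s}\prod_j\frac{\omega_{0,|I_j|+1}(I_j,\iota z)}{dy(z)}$, with $s=1$ reproducing the $\omega_{0,|I|+1}(I,\iota q)$ contribution.

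The structural crux is the singularity pattern of the $z$-integrand near $\beta_i$. Because $\iota$ neither fixes nor permutes ramification points, $\iota\beta_i$ is not a ramification point of $x$, so $dy(z)$ does not vanish at $z=\beta_i$, and each factor $\omega_{0,|I_j|+1}(I_j,\iota z)/dy(z)$ is regular there by convention (d). The only $z$-poles near $\beta_i$ thus come from $(x(q)-x(z))^{-s}$, located at $z=q$ and at the Galois conjugate $z=\sigma_i(q)$. This is exactly the setting of the modified commutation rule (\ref{commute-res-sig}). In it, the term $\Res_{q\to\beta_i}\Res_{z\to\beta_i}$ vanishes because the $z$-integrand is regular at $\beta_i$, and the right-hand term $\Res_{z\to\beta_i}\Res_{q\to\beta_i}$ vanishes as well, since in that nesting the inner $q$-contour encloses none of the poles $q=z,\sigma_i(q)$. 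Hence the rule collapses to $\Res_{q\to\beta_i}\Res_{z\to q}=-\Res_{q\to\beta_i}\Res_{z\to\sigma_i(q)}$.

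Finally I would evaluate the surviving residue at $z=\sigma_i(q)$. Using the Galois invariances $x(q)=x(\sigma_i(q))$ and $dx(q)=dx(\sigma_i(q))$, the inner sum over $s$ and over partitions is precisely the right-hand side of (\ref{eq:flip-om-refl}) read at the point $\sigma_i(q)$, hence reproduces $\omega_{0,|I|+1}(I,\sigma_i(q))+\omega_{0,|I|+1}(I,\iota\sigma_i(q))$. The second summand carries no principal part at $q=\beta_i$: since $\sigma_i(\beta_i)=\beta_i$ and $z\mapsto\omega_{0,|I|+1}(I,\iota z)$ is holomorphic at $\beta_i$ by (d), its composition with $\sigma_i$ is holomorphic there too. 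What remains is exactly $-\Res_{q\to\beta_i}\frac{\omega_{0,|I|+1}(I,\sigma_i(q))\,dw}{w-q}$, which gives the desired cancellation. The step I expect to be most delicate is the bookkeeping around the modified commutation rule: one must verify that the $z$-integrand is genuinely regular at $z=\beta_i$ so that the two extra residue terms drop, and that re-reading the inner residue at $z=\sigma_i(q)$ really reassembles the full reflected identity at the conjugate point rather than some truncation of it.
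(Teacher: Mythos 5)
Your proposal is correct and follows essentially the same route as the paper: extract the principal part of the reflected involution identity \eqref{eq:flip-om-refl} at $q=\beta_i$, use condition (d) to kill the residues at $z=\beta_i$, invoke the modified commutation rule \eqref{commute-res-sig} to trade $\Res_{z\to q}$ for $-\Res_{z\to\sigma_i(q)}$, and reassemble the inner sum at the Galois-conjugate point into $\omega_{0,|I|+1}(I,\sigma_i(q))$. The only (welcome) additions are that you spell out why the two extra terms in \eqref{commute-res-sig} vanish and why $\omega_{0,|I|+1}(I,\iota\sigma_i(q))$ contributes no principal part, points the paper leaves implicit; note only that once the $s=1$ term is kept in the sum, the reassembly at $\sigma_i(q)$ already yields $\omega_{0,|I|+1}(I,\sigma_i(q))$ alone, so your final holomorphy argument is a harmless redundancy.
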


\begin{remark} \label{rem:id}
  From \eqref{Id}
  and the expansion $\sigma_i(q)-\beta_i=-(\beta_i-q)
  +\mathcal{O}((q-\beta_i)^2)$ 
  we conclude
 \begin{align*}
   \Res\displaylimits_{q\to \beta_i}
   \frac{\omega_{0,|I|+1}(I,q)+\omega_{0,|I|+1}(I,\sigma_i(q))}{q-\beta_i}
   &=   \Res\displaylimits_{q\to \beta_i}
   \frac{\omega_{0,|I|+1}(I,q)
     +\omega_{0,|I|+1}(I,\sigma_i(q))}{\sigma_i(q)-\beta_i}
   \\
   &=  -\Res\displaylimits_{q\to \beta_i}
   \frac{\omega_{0,|I|+1}(I,q)
     +\omega_{0,|I|+1}(I,\sigma_i(q))}{q-\beta_i}\;.
 \end{align*}
Hence, 
 $\frac{\omega_{0,|I|+1}(I,q)+\omega_{0,|I|+1}(I,\sigma_i(q))}{q-\beta_i}$
 is regular at $q=\beta_i$, which means that
  $\omega_{0,|I|+1}(I,q)+\omega_{0,|I|+1}(I,\sigma_i(q))$ has at
  least a first-order zero at $q=\beta_i$.
\end{remark}

\subsection{The recursion kernel}

\label{sec:tr-kernel}

We start from Lemma~\ref{lem:om-onenabla} for $q\mapsto \iota q$ where
(\ref{def:y}) is taken into account:
\begin{align}
  &\omega_{0,|I|+1}(I,q)+\omega_{0,|I|+1}(I,\iota q)
  \label{omega-q-iotaq}
  \\
&= 
dx(q) \sum_{s=1}^{|I|-1} \sum_{I_0\uplus I_1\uplus ...\uplus I_s=I}
\nabla^{s}\omega_{0,|I_0|+1}(I_0,\iota q)
\prod_{j=1}^s \frac{\omega_{0,|I_j|+1}(I_j,q)}{dx(q)}
\nonumber
\\
&=dx(q) \sum_{s=1}^{|I|-1} \sum_{I_0\uplus I_1\uplus ...\uplus I_s=I}
\Res\displaylimits_{z\to q}
\frac{\omega_{0,|I_0|+1}(I_0,\iota z)}{(x(q)-x(z))(y(z)-y(q))^s}
\prod_{j=1}^s \frac{\omega_{0,|I_j|+1}(I_j,q)}{dx(q)}\;.
\nonumber
\end{align}
We introduce \enlargethispage{0.5mm}
\begin{align}
\mathfrak{W}_{a,s,s'}(I;q)
&:=\sum_{\substack{I_0\uplus I_1\uplus...\uplus I_s
    \uplus I'_1\uplus...\uplus I'_{s'} \\ I_0=\emptyset \text{ for } a=0}}
\!\!\!\!\!\!
\big(\delta_{a,0}+(1-\delta_{a,0})
\nabla^a \omega_{0,|I_0|+1}(I_0;\iota q)\big)
\nonumber
\\*[-3.2ex]
&\hspace*{3.7cm} \times
\prod_{k=1}^{s} \frac{\omega_{0,|I_k|+1}(I_k,q)}{dx(q)}
\prod_{j=1}^{s'}\frac{\omega_{0,|I_j'|+1}(I'_j,\sigma_i(q))}{dx(\sigma_i(q))}\;,
\nonumber
 \\
\mathfrak{B}_{a,a',s,s'}(I;q,z)&
:=\sum_{I_0\uplus I_1\uplus...\uplus I_s\uplus I'_1\uplus...\uplus I'_{s'}=I }
\frac{\omega_{0,|I_0|+1}(I_0,\iota z)}{(x(q)-x(z))}
\nonumber
\\*[-1ex]
&\hspace*{0.5cm}
\times   \frac{  \prod_{k=1}^{s} \omega_{0,|I_k|+1}(I_k,q)
    \prod_{j=1}^{s'} \omega_{0,|I'_j|+1}(I'_j,\sigma_i(q))
  }{(dx(q))^s(dx(\sigma_i(q)))^{s'} (y(z)-y(q))^a(y(z)-y(\sigma_i(q)))^{a'}}\;.
\label{frakW-frakB}
\end{align}  
These are functions of $q$ and $1$-forms in every variable in $I$, 
and $\mathfrak{B}_{a,a',s,s'}(I;q,z)$ is also a $1$-form in $z$.
A lengthy calculation gives the following important tool:
\begin{lemma}
\label{lem:frakW-rec}
Residues of $\mathfrak{W}_{a,s,s'}$ satisfy for $0<a\leq s$ 
\begin{align}
&\Res\displaylimits_{q\to \beta_i}  
\frac{f_{a,s,s'}(q) dx(q)}{w-q}  
\mathfrak{W}_{a,s,s'}(I;q)
\label{eq:res-frakW}
\\
&=
\Res\displaylimits_{z\to \beta_i}
\Res\displaylimits_{q\to \beta_i}  
\frac{f_{a,s,s'}(q) dx(q)}{w-q}\Big(\mathfrak{B}_{a,0,s,s'}(I;q,z)
+ \sum_{a'=1}^{|I|-s-s'-1} 
\frac{\mathfrak{B}_{0,a',s,s'+a'}(I;q,z)}{ (y(\sigma_i(q))-y(q))^a}
\Big)
\nonumber
\\
&+\Res\displaylimits_{q\to \beta_i}  
\frac{f_{a,s,s'}(q) dx(q)}{w-q}
\Big(-\frac{\mathfrak{W}_{0,s,s'+1}(I;q)}{(y(\sigma_i(q))-y(q))^a}
-\sum_{a'=1}^{|I|-s-s'-1} 
\frac{(-1)^{a'}\mathfrak{W}_{0,s+1,s'+a'}(I;q)}{(y(\sigma_i(q))-y(q))^{a+a'}}
\nonumber
\\
&\qquad +\sum_{a'=1}^{|I|-s-s'-2} \;
\sum_{a''=1}^{|I|-s-s'-a'-1} 
\frac{(-1)^{a'}\mathfrak{W}_{a'',s+a'',s'+a'}(I;q)}{
  (y(\sigma_i(q))-y(q))^{a+a'}}\Big)\;,
\nonumber
\end{align}
for any function $f_{a,s,s'}$ meromorphic in a neighbourhood of $\beta_i$.
\begin{proof}
 We consider
for a fixed partition $I_0\uplus I_1\uplus...\uplus {I_s}
\uplus I'_1\uplus...\uplus {I'_{s'}}=I$ and some $0< a\leq s$ the residue
\begin{align}
&\Res\displaylimits_{q\to \beta_i}  
\frac{f_{a,s,s'}(q) dx(q)}{w-q}  
\nabla^a\omega_{0,|I_0|+1}(I_0,\iota q)
\prod_{k=1}^{s} \frac{\omega_{0,|I_k|+1}(I_k,q)}{dx(q)}
\prod_{j=1}^{s'}\frac{\omega_{0,|I_j'|+1}(I'_j,\sigma_i(q))}{dx(\sigma_i(q))}
\nonumber
\\
&\equiv
\Res\displaylimits_{q\to \beta_i}  
\frac{f_{a,s,s'}(q) dx(q)}{w-q}
\Res\displaylimits_{z\to q} \frac{\omega_{0,|I_0|+1}(I_0,\iota z)
  \mbox{\footnotesize$\displaystyle\prod_{k=1}^{s}$}  \omega_{0,|I_k|+1}(I_k,q)
  \mbox{\footnotesize$\displaystyle\prod_{j=1}^{s'}$}
  \omega_{0,|I_j'|+1}(I'_j,\sigma_i(q))}{
(x(q)-x(z))(y(z)-y(q))^a(dx(q))^s(dx(\sigma_i(q)))^{s'}}   
\nonumber
\\
&=
\Res\displaylimits_{z\to \beta_i}
\Res\displaylimits_{q\to \beta_i}  
\frac{f_{a,s,s'}(q) dx(q)}{w-q}
\frac{\omega_{0,|I_0|+1}(I_0,\iota z)
  \mbox{\footnotesize$\displaystyle\prod_{k=1}^{s}$}  \omega_{0,|I_k|+1}(I_k,q)
  \mbox{\footnotesize$\displaystyle\prod_{j=1}^{s'}$}
  \omega_{0,|I_j'|+1}(I'_j,\sigma_i(q))}{
(x(q)-x(z))(y(z)-y(q))^a(dx(q))^s(dx(\sigma_i(q)))^{s'}}   
\nonumber
\\
&-\Res\displaylimits_{q\to \beta_i}  
\frac{f_{a,s,s'}(q) dx(q)}{w-q}
\Res\displaylimits_{z\to \sigma_i(q)} \frac{\omega_{0,|I_0|+1}(I_0,\iota z)
  \mbox{\footnotesize$\displaystyle\prod_{k=1}^{s}$}  \omega_{0,|I_k|+1}(I_k,q)
  \mbox{\footnotesize$\displaystyle\prod_{j=1}^{s'}$}  
\omega_{0,|I_j'|+1}(I'_j,\sigma_i(q))}{
(x(q)-x(z))(y(z)-y(q))^a(dx(q))^s(dx(\sigma_i(q)))^{s'}}   \;.
\nonumber
\end{align}
We have used (\ref{def:nabla-om}) and (\ref{commute-res-sig}) and the
fact that the integrand is regular at $z=\beta_i$. The residue at
$z=\sigma_i(q)$ in the last line can be evaluated immediately and
gives rise to the function
$-\frac{\omega_{0,|I_0|+1}(I_0,\iota \sigma_i(q))}{dx(\sigma_i(q))}$
for which we insert (\ref{omega-q-iotaq}) at $q\mapsto \sigma_i(q)$:
\begin{align}
&\Res\displaylimits_{q\to \beta_i}  
\frac{f_{a,s,s'}(q) dx(q)}{w-q}  
\nabla^a\omega_{0,|I_0|+1}(I_0,\iota q)
\prod_{k=1}^{s} \frac{\omega_{0,|I_k|+1}(I_k,q)}{dx(q)}
\prod_{j=1}^{s'}\frac{\omega_{0,|I_j'|+1}(I'_j,\sigma_i(q))}{dx(\sigma_i(q))}
\nonumber
\\
&=
\Res\displaylimits_{z\to \beta_i}
\Res\displaylimits_{q\to \beta_i}  
\frac{f_{a,s,s'}(q) dx(q)}{w-q}
\frac{\omega_{0,|I_0|+1}(I_0,\iota z)
  \mbox{\footnotesize$\displaystyle\prod_{k=1}^{s}$}  \omega_{0,|I_k|+1}(I_k,q)
  \mbox{\footnotesize$\displaystyle\prod_{j=1}^{s'}$}
  \omega_{0,|I_j'|+1}(I'_j,\sigma_i(q))
}{
(x(q)-x(z))(y(z)-y(q))^a(dx(q))^s(dx(\sigma_i(q)))^{s'}}   
\nonumber
\\
&-\Res\displaylimits_{q\to \beta_i}  
\frac{f_{a,s,s'}(q) dx(q)}{w-q}
\frac{\omega_{0,|I_0|+1}(I_0,\sigma_i(q))
  \mbox{\footnotesize$\displaystyle\prod_{k=1}^{s}$}  \omega_{0,|I_k|+1}(I_k,q)
  \mbox{\footnotesize$\displaystyle\prod_{j=1}^{s'}$}  \omega_{0,|I_j'|+1}(I'_j,\sigma_i(q))
}{
dx(\sigma_i(q)) (y(\sigma_i(q))-y(q))^a(dx(q))^s(dx(\sigma_i(q)))^{s'}}   
\nonumber
\\
&+
\sum_{a'=1}^{|I_0|-1} \sum_{I_0''\uplus I_1''\uplus ...\uplus I_{a'}''=I_0}
\Res\displaylimits_{q\to \beta_i}  
\frac{f_{a,s,s'}(q) dx(q)}{w-q} \Res\displaylimits_{z\to \sigma_i(q)}
\bigg(\frac{\omega_{0,|I_0|+1}(I_0'',\iota z)
}{(x(\sigma_i(q))-x(z))}
\tag{*}
\\*[-1.5ex]
&\qquad \times
\frac{  \mbox{\footnotesize$\displaystyle\prod_{k=1}^{s}$}
  \omega_{0,|I_k|+1}(I_k,q)
  \mbox{\footnotesize$\displaystyle\prod_{j=1}^{s'}$}
  \omega_{0,|I_j'|+1}(I'_j,\sigma_i(q))
  \mbox{\footnotesize$\displaystyle\prod_{j=1}^{a'}$}
  \omega_{0,|I_j''|+1}(I''_j,\sigma_i(q))
}{  (y(\sigma_i(q))-y(q))^a(dx(q))^s(dx(\sigma_i(q)))^{s'}
  (y(z)-y(\sigma_i(q)))^{a'}}\bigg)   \;.
\tag{*}
\end{align}
We process the last two lines (*) in the same manner, i.e.\ commute
the two residues according to (\ref{commute-res-sig}). There is again
no contribution of a residue at $z=\beta_i$, but now an additional
residue at $z=q$ arises.  The resulting term
$\frac{\omega_{0,|I_0|+1}(I_0'',\iota q)}{dx(q)}$ is expressed via
(\ref{omega-q-iotaq}):
\begin{align*}
  (*)&
= \sum_{a'=1}^{|I_0|-1} \sum_{I_0''\uplus I_1''\uplus ...\uplus I_{a'}''=I_0}
\Res\displaylimits_{z\to \beta_i}  \Res\displaylimits_{q\to \beta_i}  
\frac{f_{a,s,s'}(q) dx(q)}{w-q} 
\bigg(\frac{\omega_{0,|I_0|+1}(I_0'',\iota z)
}{(x(\sigma_i(q))-x(z))}
\\*[-1.5ex]
&\quad\quad\times
\frac{  \mbox{\footnotesize$\displaystyle\prod_{k=1}^{s}$}
  \omega_{0,|I_k|+1}(I_k,q)
  \mbox{\footnotesize$\displaystyle\prod_{j=1}^{s'}$}
  \omega_{0,|I_j'|+1}(I'_j,\sigma_i(q))
  \mbox{\footnotesize$\displaystyle\prod_{j=1}^{a'}$}
  \omega_{0,|I_j''|+1}(I''_j,\sigma_i(q))
}{  (y(\sigma_i(q))-y(q))^a(dx(q))^s(dx(\sigma_i(q)))^{s'}
  (y(z)-y(\sigma_i(q)))^{a'}}\bigg)   
\\
&-\sum_{a'=1}^{|I_0|-1} \sum_{I_0''\uplus I_1''\uplus ...\uplus I_{a'}''=I_0}
\Res\displaylimits_{q\to \beta_i}  
\frac{f_{a,s,s'}(q) dx(q)}{w-q} 
\bigg(\frac{\omega_{0,|I_0|+1}(I_0'',q)
}{dx(q)}
\\*[-1ex]
&\quad\quad\times
\frac{(-1)^{a'}  \mbox{\footnotesize$\displaystyle\prod_{k=1}^{s}$}
  \omega_{0,|I_k|+1}(I_k,q)
  \mbox{\footnotesize$\displaystyle\prod_{j=1}^{s'}$}
  \omega_{0,|I_j'|+1}(I'_j,\sigma_i(q))
  \mbox{\footnotesize$\displaystyle\prod_{j=1}^{a'}$}
  \omega_{0,|I_j''|+1}(I''_j,\sigma_i(q))
}{  (y(\sigma_i(q))-y(q))^{a+a'}(dx(q))^s(dx(\sigma_i(q)))^{s'}}\bigg)   
\\
&+\sum_{a'=1}^{|I_0|-1} \sum_{I_0''\uplus I_1''\uplus ...\uplus I_{a'}''=I_0}
\sum_{a''=1}^{|I_0''|-1} \sum_{I_0'''\uplus I_1'''\uplus ...\uplus I_{a''}'''=I_0''}
\Res\displaylimits_{q\to \beta_i}  
\frac{f_{a,s,s'}(q) dx(q)}{w-q}
\\*[-0.5ex]
&\times
\bigg(
\nabla^{a''}\omega_{0,|I'''_0|+1}(I_0''',\iota q)
\frac{(-1)^{a'} 
  \mbox{\footnotesize$\displaystyle\prod_{k=1}^{a''}$}
  \omega_{0,|I_k'''|+1}(I_k''',q)
  \mbox{\footnotesize$\displaystyle\prod_{j=1}^{a'}$}
  \omega_{0,|I_j''|+1}(I''_j,\sigma_i(q))
}{  (y(\sigma_i(q))-y(q))^{a+a'}}
\\*[-1.5ex]
&\quad\quad \times \frac{ \mbox{\footnotesize$\displaystyle\prod_{k=1}^{s}$}
  \omega_{0,|I_k|+1}(I_k,q)
  \mbox{\footnotesize$\displaystyle\prod_{j=1}^{s'}$}
  \omega_{0,|I_j'|+1}(I'_j,\sigma_i(q))
}{(dx(q))^s(dx(\sigma_i(q)))^{s'}}\bigg)\;.
\end{align*}
This is inserted back into the equation we started with. We sum over
all partitions
$I_0\uplus I_1\uplus...\uplus {I_s} \uplus I'_1\uplus...\uplus
{I'_{s'}}=I$ for fixed $s,s',a$ and express the result in terms of
$\mathfrak{W},\mathfrak{B}$ introduced in (\ref{frakW-frakB}). The
result is (\ref{eq:res-frakW}).
\end{proof}
\end{lemma}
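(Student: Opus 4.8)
The plan is to turn the single $q$-residue on the left-hand side into an iterated residue, exploit the ramification-point commutation rule, and then feed the outcome repeatedly through the one-nabla form of the involution identity so as to generate the recursive structure. I would begin from a single summand of $\mathfrak{W}_{a,s,s'}(I;q)$, for a fixed partition $I_0\uplus I_1\uplus\dots\uplus I_s\uplus I'_1\uplus\dots\uplus I'_{s'}=I$ and some $0<a\le s$, and write the factor $\nabla^a\omega_{0,|I_0|+1}(I_0,\iota q)$ as an inner residue in $z$ at $z=q$ by means of its defining formula (\ref{def:nabla-om}). This recasts the summand as an iterated residue $\Res_{q\to\beta_i}\Res_{z\to q}$ of an explicit $1$-form in the two variables $q,z$.

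Next I would apply the commutation rule (\ref{commute-res-sig}). Since condition (d) of Theorem~\ref{thm:flip} guarantees that $\omega_{0,|I_0|+1}(I_0,\iota z)$ is holomorphic at the ramification point, the integrand is regular at $z=\beta_i$ and that residue drops out; hence $\Res_{q\to\beta_i}\Res_{z\to q}$ equals the double residue $\Res_{z\to\beta_i}\Res_{q\to\beta_i}$ — which assembles into the $\mathfrak{B}_{a,0,s,s'}$ term — minus $\Res_{q\to\beta_i}\Res_{z\to\sigma_i(q)}$. The residue at $z=\sigma_i(q)$ is evaluated immediately and produces the function $-\tfrac{\omega_{0,|I_0|+1}(I_0,\iota\sigma_i(q))}{dx(\sigma_i(q))}$, divided by $(y(\sigma_i(q))-y(q))^a$ together with the surviving $q$- and $\sigma_i(q)$-factors.

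The heart of the argument is the iteration. I would insert the one-nabla form (\ref{omega-q-iotaq}) of the involution identity, specialised at $q\mapsto\sigma_i(q)$, to rewrite $\omega_{0,|I_0|+1}(I_0,\iota\sigma_i(q))$. This splits into the bare term $\omega_{0,|I_0|+1}(I_0,\sigma_i(q))$, which feeds the $\mathfrak{W}_{0,s,s'+1}$ contribution, and a sum over finer partitions of $I_0$ carrying one new factor $\nabla^{a'}\omega_{0,\cdot}(\cdot,\iota\sigma_i(q))$ together with $a'$ additional $\sigma_i(q)$-blocks. Each such $\nabla$-factor is again of the shape treated in the first two steps, so I would repeat the residue-commutation followed by the identity-substitution, every pass strictly shrinking the distinguished nabla-block until it is exhausted. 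Collecting terms according to whether the nabla-block has been completely converted (yielding the $\mathfrak{W}_{0,s+1,s'+a'}$ sum) or only partially converted (yielding the $\mathfrak{W}_{a'',s+a'',s'+a'}$ double sum), and finally summing over all partitions for fixed $s,s',a$, I would match the output against the definitions (\ref{frakW-frakB}) to read off (\ref{eq:res-frakW}).

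The main obstacle will be the combinatorial bookkeeping rather than any isolated analytic step: one must track the cascade of signs $(-1)^{a'}$, the denominators $(y(\sigma_i(q))-y(q))^{a+a'}$ that accumulate with each iteration, and the exact upper limits of the index sums over $a'$ and $a''$, and then verify that the nested partition sums reorganise precisely into the three families of $\mathfrak{B}$- and $\mathfrak{W}$-terms on the right-hand side. Ensuring that the $z=\beta_i$ residue continues to vanish at every stage (again by holomorphicity at ramification points) and confirming that the recursion terminates and closes with no leftover terms is where the difficulty is concentrated.
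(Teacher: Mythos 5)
Your proposal follows essentially the same route as the paper: represent $\nabla^a\omega_{0,|I_0|+1}(I_0,\iota q)$ as an inner residue via \eqref{def:nabla-om}, commute residues with the ramification-point rule \eqref{commute-res-sig} (the $z=\beta_i$ contribution vanishing by condition (d)), evaluate the residue at the Galois conjugate, substitute the one-nabla involution identity \eqref{omega-q-iotaq}, and then run the same procedure once more on the newly created $\nabla^{a'}$ factors. The one point to be precise about is that the procedure terminates after exactly two rounds (in the second round the extra residue arises at $z=q$ rather than $z=\sigma_i(q)$), with the remaining partially converted $\nabla^{a''}$ factors assembling directly into the $\mathfrak{W}_{a'',s+a''  ,s'+a'}$ terms of \eqref{eq:res-frakW} rather than being iterated until exhaustion -- your final paragraph already collects terms this way, so the argument closes as in the paper.
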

Lemma \ref{lem:frakW-rec} is our main tool to evaluate the polar part
of (\ref{omega-q-iotaq}) at $q=\beta_i$. Taking condition (d) of
Theorem~\ref{thm:flip} into account, we need to evaluate
\[
  \Res\displaylimits_{q\to \beta_i} \frac{\omega_{0,|I|+1}(I,q)dw}{w-q}
=\sum_{s=1}^{|I|-1}
\Res\displaylimits_{q\to \beta_i} \frac{\mathfrak{W}_{s,s,0}(I;q) dx(q)dw}{w-q}\;.
\]
In a first (also very lengthy) step we show:
\begin{lemma}
\label{lem:WBnew}  
  \begin{align}
    0&=
    \Res\displaylimits_{q\to \beta_i}  
\frac{dx(q)dw}{w-q}
\Big(\sum_{s=1}^{|I|-1}\mathfrak{W}_{s,s,0}(I;q)
+
\frac{\mathfrak{W}_{0,1,1}(I;q)}{y(\sigma_i(q))-y(q)}\Big)
\label{eq:Ws0-to11}
\\
&-\Res\displaylimits_{z\to \beta_i}\Res\displaylimits_{q\to \beta_i}  
\frac{dx(q)dw}{w-q}\Big(\sum_{s=1}^{|I|-1}\sum_{s'=0}^{|I|-s-1} \!\!
\mathfrak{B}_{s,s',s,s'}(I;q,z)
+\!\!\sum_{s=1}^{|I|-2}\sum_{s'=1}^{|I|-s-1}
\frac{\mathfrak{B}_{s,s'-1,s,s'}(I;q,z)}{y(\sigma_i(q))-y(q)}\Big).
\nonumber
\end{align}
\begin{proof}
  We express
  $\sum_{s=1}^{|I|-1} \Res\displaylimits_{q\to \beta_i}
  \frac{dx(q)dw}{w-q} \mathfrak{W}_{s,s,0}(I;q)$ via
  (\ref{eq:res-frakW}) at $s'=0$, $a=s$ and $f_{a,s,s'}\equiv 1$.  In
  the third line of (\ref{eq:res-frakW}), the case $a'=1$ of the
  second term cancels, when summing over $s$, every first term except
  for the single term with $s=1$ and $s'=0$. This surviving term with
  $s=a=1$ is the last term in the first line of (\ref{eq:Ws0-to11}).
  When subtracting the second line of (\ref{eq:Ws0-to11}), the term
  with $s'=0$ in (\ref{eq:Ws0-to11}) cancels directly, and then the
  term with $s=1$ (and any $s'\geq 1$) cancels after reordering
  partial fractions.  After renaming the parameters we arrive at
\begin{align}
\eqref{eq:Ws0-to11}_{\text{rhs}}
&= \Res\displaylimits_{q\to \beta_i}  
\frac{dx(q)dw}{w-q}
\Big(
-\sum_{s=2}^{|I|-2} \sum_{s'=2}^{|I|-s}(-1)^{s'}
\frac{\mathfrak{W}_{0,s,s'}(I;q)}{(y(\sigma_i(q))-y(q))^{s+s'-1}}
\label{eq:Ws0-to11-a}
\\
&\qquad\qquad
+\sum_{s=2}^{|I|-2} \sum_{s'=1}^{|I|-s-1} \sum_{a=1}^{s-1} (-1)^{s'}
\frac{\mathfrak{W}_{a,s,s'}(I;q)}{(y(\sigma_i(q))-y(q))^{s+s'-a}}\Big)
\nonumber
\\
&+ \Res\displaylimits_{z\to \beta_i} \Res\displaylimits_{q\to \beta_i}  
\frac{dx(q)dw}{w-q}
\sum_{s=2}^{|I|-2}\sum_{s'=1}^{|I|-s-1} \!\!\!
\Big(\frac{\mathfrak{B}_{0,s',s,s'}(I;q,z)}{(y(\sigma_i(q))-y(q))^s}
-\frac{\mathfrak{B}_{s-1,s',s,s'}(I;q,z)}{y(\sigma_i(q))-y(q)} \Big) .
\nonumber
\end{align}
In the second term of the last line we apply repeatedly the identity
\begin{align*}
  \frac{\mathfrak{B}_{a,a',s,s'}(I;z,q)}{(y(\sigma_i(q))-y(q))^{s+s'-a-a'}}
  &=  \frac{\mathfrak{B}_{a-1,a',s,s'}(I;q,z)
    -\mathfrak{B}_{a,a'-1,s,s'}(I;q,z)}{(y(\sigma_i(q))-y(q))^{s+s'-a-a'+1}}
\end{align*}
to express
$\frac{\mathfrak{B}_{s-1,s',s,s'}(I;q)}{(y(\sigma_i(q))-y(q))}$ as
linear combination of $\mathfrak{B}_{a,0,s,s'}(I;q,z)$ and
$\mathfrak{B}_{0,a',s,s'}(I;q,z)$.  The coefficient of
$\mathfrak{B}_{a,0,s,s'}(I;q,z)$ in this expansion is the number of
paths made of steps up or right from $(a,0)$ to $(s-1,s')$ with a
first step right.  This is the same as the number
$\binom{s+s'-a-2}{s'-1}$ of words of $s'-1$ letters $R$ and $s-1-a$
letters $U$.  Similarly, the coefficient of
$\mathfrak{B}_{0,a',s,s'}(I;q,z)$ in this expansion is the number of
up-right paths from $(0,a')$ to $(s-1,s')$ with a first step up. This
is the same as the number $\binom{s+s'-a'-2}{s-2}$ of words of $s-2$
letters $U$ and $s'-a'$ letters $R$.  A right step comes with a factor
$(-1)$. We thus get
\begin{align*}
\frac{\mathfrak{B}_{s-1,s',s,s'}(I;q,z)}{(y(\sigma_i(q))-y(q))} 
&=
\sum_{a=1}^{s-1} \binom{s+s'-a-2}{s'-1}
\frac{(-1)^{s'}\mathfrak{B}_{a,0,s,s'}(I;q,z)}{(y(\sigma_i(q))-y(q))^{s+s'-a}}
\\
&+\sum_{a'=1}^{s'} \binom{s+s'-a'-2}{s-2}
\frac{(-1)^{s'-a'}\mathfrak{B}_{0,a',s,s'}(I;q,z)}{(y(\sigma_i(q))-y(q))^{s+s'-a'}}\;.
\end{align*}
The term with $a'=s'$ cancels the first term of the last line of (\ref{eq:Ws0-to11-a})
so that we end up in the following equation in which $k\equiv 0$:
\begin{align}
\eqref{eq:Ws0-to11}_{\text{rhs}}
  &= \Res\displaylimits_{q\to \beta_i}  
\frac{dx(q)}{w-q}
\bigg\{
-\!\!\!\!\sum_{s=2+k}^{|I|-2-k} \sum_{s'=2+k}^{|I|-s} \!\!
\binom{s{-}2}{k} \binom{s'{-}2}{k} 
\frac{(-1)^{s'}\mathfrak{W}_{0,s,s'}(I;q)}{(y(\sigma_i(q))-y(q))^{s+s'-1}}
\tag{\dag}
\\
&+\sum_{s=2+k}^{|I|-2-k} \sum_{s'=1+k}^{|I|-s-1}\sum_{a=1}^{s-1-k}
\binom{{s{-}a{-}1}}{k}\! \binom{{s'{-}1}}{k}
\frac{(-1)^{s'}\mathfrak{W}_{a,s,s'}(I;q)}{(y(\sigma_i(q))-y(q))^{s+s'-a}}\bigg\}dw
\tag{\ddag}
\\
&- \Res\displaylimits_{z\to \beta_i} \Res\displaylimits_{q\to \beta_i}  
\frac{dx(q)}{w-q}
\sum_{s=2}^{|I|-2}\sum_{s'=1}^{|I|-s-1} \!\!\! 
\bigg\{
\sum_{a=1}^{s-1} \binom{s{+}s'{-}a{-}2}{s'{-}1}
\frac{(-1)^{s'}\mathfrak{B}_{a,0,s,s'}(I;q,z)}{(y(\sigma_i(q))-y(q))^{s+s'-a}}
\nonumber
\\
&\qquad +\sum_{a'=1}^{s'-1} \binom{s{+}s'{-}a'{-}2}{s{-}2}
\frac{(-1)^{s'-a'}\mathfrak{B}_{0,a',0,s,s'}(I;q,z)}{
  (y(\sigma_i(q))-y(q))^{s+s'-a'}}\bigg\}dw\;.
\label{frakWB-b}
\end{align}
Next we process the line ($\ddag$) of (\ref{frakWB-b}) via
(\ref{eq:res-frakW}).  With the exception of one term the
`hockey-stick identity'
$\sum_{a=1}^{s-k-1} \binom{s{-}a{-}1}{k} =\binom{s-1}{k+1}$ occurs:
\begin{align}
\eqref{frakWB-b}_\ddag
&= \Res\displaylimits_{q\to \beta_i}  
\frac{dx(q)dw}{w-q}
\sum_{s=2+k}^{|I|-2-k} \sum_{s'=1+k}^{|I|-s-1}
\bigg\{
-\binom{s{-}1}{k{+}1} \binom{s'{-}1}{k}
\frac{(-1)^{s'}\mathfrak{W}_{0,s,s'+1}(I;q)}{(y(\sigma_i(q))-y(q))^{s+s'}}
\nonumber
\\
&\qquad -\binom{s{-}1}{k{+}1} \binom{s'{-}1}{k}
\sum_{a'=1}^{|I|-s-s'-1} 
\frac{(-1)^{s'+a'}\mathfrak{W}_{0,s+1,s'+a'}(I;q)}{(y(\sigma_i(q))-y(q))^{s+s'+a'}}
\nonumber
\\
&\qquad +
\binom{s{-}1}{k{+}1} \binom{s'{-}1}{k}
\sum_{a'=1}^{|I|-s-s'-2}\sum_{a''=1}^{|I|-s-s'-a'-1}
\frac{(-1)^{s'+a'}
\mathfrak{W}_{a'',s+a'',s'+a'}(I;q)}{(y(\sigma_i(q))-y(q))^{s+s'+a'}}\bigg\}
\nonumber
\\
&+\Res\displaylimits_{z\to \beta_i}\Res\displaylimits_{q\to \beta_i}  
\frac{dx(q)dw}{w-q}
\sum_{s=2+k}^{|I|-2-k} \sum_{s'=1+k}^{|I|-s-1}\bigg\{
\nonumber
\\
& 
\qquad \sum_{a=1}^{s-1-k}
\binom{s{-}a{-}1}{k} \binom{s'{-}1}{k}
\frac{(-1)^{s'}\mathfrak{B}_{a,0,s,s'}(I;q,z)}{(y(\sigma_i(q))-y(q))^{s+s'-a}}
\nonumber
\\
&\qquad +
\binom{s{-}1}{k{+}1} \binom{s'{-}1}{k}
\sum_{a'=1}^{|I|-s-s'-1}
\frac{(-1)^{s'}\mathfrak{B}_{0,a',s,s'+a'}(I;q,z)}{(y(\sigma_i(q))-y(q))^{s+s'}}\bigg\}\;.
\label{frakWB-c}
\end{align}  
The following steps are performed:
\begin{itemize}
\item In the first line we shift $s'+1\mapsto s' \in [2{+}k..|I|{-}s]$.
\item In the second line we shift $s+1\mapsto s\in [3{+}k..|I|{-}k{-}1]$.
  Then we rename $s'{+}a'\mapsto s' \in [2{+}k..|I|{-}s]$  
  and sum over $a'\in [1..s'{-}k{-}1]$. Recall
  $\sum_{a'=1}^{s'-k-1} \binom{s'{-}a'{-}1}{k}=\binom{s'-1}{k+1}$.
  The new ranges restrict 
  $s\in [3{+}k..|I|{-}k{-}2]$.
  
\item In the third line we rename
  $s'{+}a'\mapsto s' \in [2{+}k..|I|{-}s]$ and sum over
  $a'\in [1..s'{-}k{-}1]$. This gives
  $\sum_{a'=1}^{s'-k-1} \binom{s'{-}a'{-}1}{k}=\binom{s'-1}{k+1}$. We
  also rename $s{+}a''\mapsto s \in [3{+}k..|I|{-}k{-}2]$ and keep the
  sum over $a''\mapsto a \in [1..s-2-k]$.

\item In the final line we rename $s'+a'\mapsto s'\in [2+k..I-s-1]$ and sum over $a'\in [1..s'{-}k{-}1]$. 
\end{itemize}
With the Pascal triangle identity
$\binom{s{-}1}{k{+}1}-\binom{s{-}2}{k} =\binom{s{-}2}{k{+}1}$ and the
corresponding adjustments of the ranges for $s,s'$ we find that the
first two lines $(\dag,\ddag)$ of (\ref{frakWB-b}), where $k=0$, equal
the same two lines $\eqref{frakWB-b}_{\dag+\ddag}$ with $k=1$, plus
the iterated residue in the last three lines of (\ref{frakWB-c}), first
for $k=0$. Iterating this procedure until $s\geq 2+k$ and $s'\geq 1+k$
becomes incompatible with the size $|I|$ gives for the first two lines
of (\ref{frakWB-b}) the identity
\begin{align}
\eqref{frakWB-b}_{\dag,\ddag}
&=
\Res\displaylimits_{z\to \beta_i}\Res\displaylimits_{q\to \beta_i}  
\frac{dx(q)dw}{w-q}
\sum_{k=0}^{[|I|/2]-2} \sum_{s=2+k}^{|I|-2-k} \sum_{s'=1+k}^{|I|-s-1} \bigg\{
\nonumber
\\
& 
\sum_{a=1}^{s-1-k}
\binom{s{-}a{-}1}{k} \binom{s'{-}1}{k}
\frac{(-1)^{s'}\mathfrak{B}_{a,0,s,s'}(I;q,z)}{(y(\sigma_i(q))-y(q))^{s+s'-a}}
\nonumber
\\
&+
\sum_{a'=1}^{s'-1-k}
\binom{s{-}1}{k{+}1} \binom{s'{-}{a}'{-}1}{k}
\frac{(-1)^{s'-a'}\mathfrak{B}_{0,a',s,s'}(I;q,z)}{(y(\sigma_i(q))-y(q))^{s+s'-a'}}\bigg\}\;.
\nonumber
\end{align}
Now we change the summation order and sum first over $k$. With
\[
  \sum_{k=0}^{\min(n,s-1)} \binom{n}{k}\binom{s{-}1}{k}=\binom{n{+}s{-}1}{s{-}1},~~
  \sum_{k=0}^{\min(n-1,s-1)} \binom{n}{k{+}1}\binom{s{-}1}{k}=\binom{n{+}s{-}1}{n{-}1}
\]
(e.g.\ \cite{Gould}[Vol.~4, eq.~(6.69)+(6.70)]) we conclude
\begin{align*}
&\eqref{frakWB-b}_{\dag,\ddag}
\\*[-2ex]
&=
\Res\displaylimits_{z\to \beta_i}\Res\displaylimits_{q\to \beta_i}  
\frac{dx(q)dw}{w-q}
\sum_{s=2}^{|I|-2} \sum_{s'=1}^{|I|-s-1} \bigg\{
\sum_{a=1}^{s-1}
\binom{s{+}s'{-}a{-}2}{s'-1} 
\frac{(-1)^{s'}\mathfrak{B}_{a,0,s,s'}(I;q,z)}{(y(\sigma_i(q))-y(q))^{s+s'-a}}
\nonumber
\\
&\hspace*{4cm} +\sum_{a'=1}^{s'-1}
\binom{s{+}s'{-}a'{-}2}{s-2}
\frac{(-1)^{s'-a'}\mathfrak{B}_{0,a',s,s'}(I;q,z)}{
  (y(\sigma_i(q))-y(q))^{s+s'-a'}}\bigg\}\;.
\nonumber
\end{align*}
Therefore, (\ref{frakWB-b}) and hence
the rhs of (\ref{eq:Ws0-to11}) are identically zero.
\end{proof}
\end{lemma}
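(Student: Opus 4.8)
The plan is to treat the first line of (\ref{eq:Ws0-to11}) as the object to be evaluated and to show, through iterated use of Lemma~\ref{lem:frakW-rec}, that it is exactly cancelled by the double-residue $\mathfrak{B}$-terms in the second line. Concretely, I would apply (\ref{eq:res-frakW}) to each summand $\Res_{q\to\beta_i}\frac{dx(q)dw}{w-q}\mathfrak{W}_{s,s,0}(I;q)$ with the choice $a=s$, $s'=0$, $f_{a,s,s'}\equiv 1$, and sum over $s$ from $1$ to $|I|-1$. Each application converts one $\mathfrak{W}_{s,s,0}$ into a sum of iterated-residue $\mathfrak{B}$-terms together with single-residue $\mathfrak{W}_{0,\cdot,\cdot}$ and $\mathfrak{W}_{a'',\cdot,\cdot}$ terms carrying negative powers of $y(\sigma_i(q))-y(q)$.

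First I would exploit the obvious telescoping: the $a'=1$ instance of the second family of single-residue terms cancels, upon summing over $s$, every $\mathfrak{W}_{0,s,s'+1}$ contribution except the boundary term with $s=1$, $s'=0$; this leftover is precisely the $\mathfrak{W}_{0,1,1}/(y(\sigma_i(q))-y(q))$ term displayed in the first line of (\ref{eq:Ws0-to11}). Subtracting the second line of (\ref{eq:Ws0-to11}), the $s'=0$ contribution cancels verbatim, and the $s=1$ contributions cancel after a partial-fraction reordering. After renaming the indices this leaves me with an equation of the shape (\ref{eq:Ws0-to11-a}): a collection of single-residue $\mathfrak{W}_{a,s,s'}$ terms (with $0<a\le s$) and a collection of double-residue $\mathfrak{B}$-terms, all weighted by binomials that I will organise by a height parameter $k$, starting at $k=0$.

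The combinatorial heart is then an induction on $k$ interlaced with a lattice-path expansion. On the $\mathfrak{B}$ side I would repeatedly apply the three-term index recurrence
\[
\frac{\mathfrak{B}_{a,a',s,s'}(I;q,z)}{(y(\sigma_i(q))-y(q))^{s+s'-a-a'}}
=\frac{\mathfrak{B}_{a-1,a',s,s'}(I;q,z)-\mathfrak{B}_{a,a'-1,s,s'}(I;q,z)}{(y(\sigma_i(q))-y(q))^{s+s'-a-a'+1}}
\]
to express each $\mathfrak{B}_{s-1,s',s,s'}/(y(\sigma_i(q))-y(q))$ as a lattice-path sum of the boundary objects $\mathfrak{B}_{a,0,s,s'}$ and $\mathfrak{B}_{0,a',s,s'}$, the up/right path counts producing the binomials $\binom{s+s'-a-2}{s'-1}$ and $\binom{s+s'-a'-2}{s-2}$ with a sign $(-1)$ per right step. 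On the $\mathfrak{W}$ side I would feed the surviving $\mathfrak{W}_{a,s,s'}$ terms back into Lemma~\ref{lem:frakW-rec}; here the hockey-stick identity $\sum_{a=1}^{s-k-1}\binom{s-a-1}{k}=\binom{s-1}{k+1}$ collapses the inner $a$-sum, and Pascal's identity $\binom{s-1}{k+1}-\binom{s-2}{k}=\binom{s-2}{k+1}$ shows that the $\mathfrak{W}$-terms at height $k$ equal the identically structured $\mathfrak{W}$-terms at height $k+1$ plus a fresh batch of iterated-residue $\mathfrak{B}$-terms. Iterating until the ranges $s\ge 2+k$, $s'\ge 1+k$ become incompatible with $|I|$ (i.e.\ $k=[|I|/2]-2$) eliminates all $\mathfrak{W}$-terms. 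Finally I would sum the accumulated $\mathfrak{B}$-coefficients over $k$ with the Vandermonde-type convolutions $\sum_k\binom{n}{k}\binom{s-1}{k}=\binom{n+s-1}{s-1}$ and $\sum_k\binom{n}{k+1}\binom{s-1}{k}=\binom{n+s-1}{n-1}$, which reproduce exactly the lattice-path binomials already generated on the $\mathfrak{B}$ side, so the two families cancel and the right-hand side of (\ref{eq:Ws0-to11}) is identically zero.

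I expect the main obstacle to be the bookkeeping of the $k$-induction: one must verify that every single-residue $\mathfrak{W}$ contribution is consumed (none survives at the top of the induction), that the summation ranges for $s,s',a,a'$ shift consistently at each step, and that the binomial weights produced by the hockey-stick/Pascal mechanism on the $\mathfrak{W}$ side match, after the final Vandermonde summation, the lattice-path weights produced independently on the $\mathfrak{B}$ side. Getting the signs $(-1)^{s'}$, $(-1)^{s'-a'}$ and the powers of $y(\sigma_i(q))-y(q)$ to line up across the two derivations is where the delicate combinatorics lives; the residue manipulations themselves are routine once (\ref{eq:res-frakW}) is in hand.
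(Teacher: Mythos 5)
Your proposal follows the paper's own proof essentially step for step: the application of Lemma~\ref{lem:frakW-rec} with $a=s$, $s'=0$, $f\equiv 1$, the telescoping that isolates $\mathfrak{W}_{0,1,1}/(y(\sigma_i(q))-y(q))$, the lattice-path expansion of $\mathfrak{B}_{s-1,s',s,s'}$ into the boundary terms $\mathfrak{B}_{a,0,s,s'}$ and $\mathfrak{B}_{0,a',s,s'}$, the hockey-stick/Pascal induction on the height parameter $k$, and the closing Vandermonde convolutions are all exactly the ingredients of the paper's argument, in the same order. The plan is correct; what remains is only the index bookkeeping you already flag, which the paper carries out explicitly.
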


We will prove by induction that the second line of (\ref{eq:Ws0-to11}) vanishes identically. 
This requires a rearrangement of the forms $\mathfrak{B}$. To simplify notation we
introduce the split operator 
\begin{align}
\mathsf{S}\omega_{0,|I|+1}(I,q)
&:= \sum_{I_1\uplus I_2=I}
\frac{\omega_{0,|I_1|+1}(I_1,q)\omega_{0,|I_2|+1}(I_2,\sigma_i(q))}{
 dx(\sigma_i(q)) (y(\sigma_i(q))-y(q))}
\label{calS2}
\end{align}
with $\mathsf{S}\omega_{0,2}(u,q)=0$. Then (\ref{eq:Ws0-to11})
can be written with (\ref{omega-q-iotaq}) as
\begin{align}
0  &=\Res\displaylimits_{q\to \beta_i}  
\frac{\omega_{0,|I|+1}(I,q)+ \mathsf{S}\omega_{0,|I|+1}(I,q)}{w-q}dw
\label{res-omSom}
\\
&- \sum_{\substack{I_0\uplus I' \uplus I''=I\\ \text{possibly } I''= \emptyset}}
\Res\displaylimits_{z\to \beta_i} 
\Res\displaylimits_{q\to \beta_i}  
\frac{\omega_{0,|I_0|+1}(I_0,\iota z)}{x(q)-x(z)}
\Big\{
\frac{\omega_{0,|I'|+1}(I',q)+ \mathsf{S}\omega_{0,|I'|+1}(I',q)}{(w-q)(y(z)-y(q))}
\nonumber
\\*[-2ex]
&\hspace*{8cm} \times \tilde{\mathfrak{B}}(I'';q,z) \Big\}dw
\nonumber
\end{align}
where $\tilde{\mathfrak{B}}(\emptyset ;q,z)=1$ and for $I''\neq \emptyset$
\begin{align}
  &\tilde{\mathfrak{B}}(I'';q,z)
  \label{frakBqz-1}
  \\*
  &= \!\!\sum_{s=1}^{|I''|}\sum_{s_0=0}^{s}
\sum_{I_1\uplus...\uplus I_s=I''}
\prod_{j=1}^{s_0} \frac{\omega_{0,|I_j|+1}(I_j,q)}{dx(q) (y(z)-y(q))}
\prod_{j=s_0+1}^{s} \frac{\omega_{0,|I_j|+1}(I_j,\sigma_i(q))  }{dx(\sigma_i(q)) (y(z)-y(\sigma_i(q)))}
\;.
\nonumber
\end{align}
We claim that this expression  can be reordered into
\begin{align}
\tilde{\mathfrak{B}}(I'';q,z)&=\sum_{p=1}^{|I''|}\sum_{I_1\uplus ...\uplus I_p=I''}
  \prod_{j=1}^{p} \Big\{
  \frac{\omega_{0,|I_j|+1}(I_j,q) +\omega_{0,|I_j|+1}(I_j,\sigma_i(q))}{dx(\sigma_i(q)) 
    (y(z)-y(\sigma_i(q)))}
  \nonumber
  \\
  &-\frac{y(\sigma_i(q))-y(q)}{dx(q)}
  \frac{\omega_{0,|I_j|+1}(I_j,q)+\mathsf{S}\omega_{0,|I_j|+1}(I_j,q)
    }{(y(z)-y(q))(y(z)-y(\sigma_i(q)))}\Big\}\;.
  \label{frakBqz-2}
\end{align}
Indeed, the term in braces expands with $dx(\sigma_i(q))=dx(q)$ into
\begin{align*}
  \{~\}_j&=
  \frac{\omega_{0,|I_j|+1}(I_j,q)}{dx(q)(y(z)-y(q))}
+  \frac{\omega_{0,|I_j|+1}(I_j,\sigma_i(q))}{dx(\sigma_i(q))(y(z)-y(\sigma_i(q)))}
  \\
  &-\sum_{I_j'\uplus I_j''=I_j}
  \frac{\omega_{0,|I_j'|+1}(I'_j,q)}{dx(q)(y(z)-y(q))}
 \frac{\omega_{0,|I_j''|+1}(I''_j,\sigma_i(q))}{dx(\sigma_i(q))(y(z)-y(\sigma_i(q)))}\;.
\end{align*}
A $p$-fold product is then of the form
\begin{align*}
\sum_{I_1\uplus ...\uplus I_p=I''} \prod_{j=1}^p \{~\}_j
&=\sum_{n+n_2+\tilde{n}=p}
\frac{(-1)^{n_2}(n+n_2+\tilde{n})!}{n!n_2!\tilde{n}!}
\sum_{I_1\uplus....\uplus I_{n+n_2}
\uplus  I'_1\uplus....\uplus I'_{\tilde{n}+n_2}=I''}\times
\\
&\times \prod_{j=1}^{n+n_2}
  \frac{\omega_{0,|I_j|+1}(I_j,q)}{dx(q)(y(z)-y(q))}
\prod_{k=1}^{\tilde{n}+n_2}
\frac{\omega_{0,|I_k'|+1}(I_k',\sigma_i(q))}{dx(\sigma_i(q))(y(z)-y(\sigma_i(q)))}
\;.
\end{align*}
We change the summation variables to $n+n_2=s_0$, $\tilde{n}+n_2=s-s_0$
and first sum over $n_2\in [0..\min(s_0,s-s_0)]$ and then over $s,s_0$. Because of
\[
\sum_{n_2=0}^{\min(s_0,s-s_0)}
 \frac{(-1)^{n_2}(s-n_2)!}{(s_0-n_2)!n_2!(s-s_0-n_2)!}
 =\sum_{n_2=0}^{\min(s_0,s-s_0)}(-1)^{n_2} \binom{s-n_2}{s_0} \binom{s_0}{n_2}
 =1
\]
(see e.g.\ \cite[Vol.~4, eq.\ (10.13)]{Gould}) we obtain the same
expression as (\ref{frakBqz-1}), which proves (\ref{frakBqz-2}).

With these preparations we complete the final step:
\begin{proposition}
\label{prop:tr-kernel}
  For all $|I|\geq 2$ one has  
\begin{align*}
  &
\Res\displaylimits_{q\to \beta_i}  
\frac{\omega_{0,|I|+1}(I,q)  + \mathsf{S}\omega_{0,|I|+1}(I,q)}{w-q}dw=0\;.
\end{align*}  
Equivalently, the meromorphic differentials $\omega_{0,m+1}$ satisfy
the topological recursion 
\begin{align}
&\mathcal{P}_z^i \omega_{0,|I|+1}(I,z)
\label{tr-formula}
\\
&=
\Res\displaylimits_{q\to \beta_i} \frac{\frac{1}{2} (\frac{dz}{z-q}-\frac{dz}{z-\sigma_i(q)})
  }{dx(\sigma_i(q))(y(q)-y(\sigma_i(q)))}
  \sum_{I_1\uplus I_2=I} \omega_{0,|I_1|+1}(I_1,q)
  \omega_{0,|I_2|+1}(I_2,\sigma_i(q))\;.
  \nonumber
\end{align}
\begin{proof} 
  By induction on $|I|\geq 2$ using (\ref{res-omSom}) together with
  (\ref{frakBqz-2}). For $|I|=2$ we necessarily have $|I_0|=|I'|=1$
  and $I''=\emptyset$. This implies $\tilde{\mathfrak{B}}(I'';q,z)=1$ and
  $\mathsf{S}\omega_{0,2}(I_1,q)=0$.  Since $\omega_{0,2}(I_1,q)$ is
  regular at $q=\beta_i$, the integrand of the rhs of
  (\ref{res-omSom}) has vanishing residue.  Assume the proposition is
  true for $|I|\leq \ell$. Because of $|I_0|\geq 1$, any
  $I',I'',I_1,...,I_p$ on the rhs of (\ref{res-omSom}) and in
  (\ref{frakBqz-2}) is of length strictly $<\ell$. Then by induction
  hypothesis, the linear loop equation Proposition~\ref{prop-linloop} and
  the regularity of $\frac{y(\sigma_i(q))-y(q)}{dx(q)}$ at
  $q\to \beta_i$, the whole integrand on the rhs of (\ref{res-omSom})
  is regular at $q=\beta_i$,  i.e.\ its residue equals $0$.

  This shows $\mathcal{P}^i_z\omega_{0,|I|+1}(I,z) =-
\Res\displaylimits_{q\to \beta_i}  
\frac{\mathsf{S}\omega_{0,|I|+1}(I,q)}{z-q}dz
$.
With Remark~\ref{rem:id} we have
$\Res\displaylimits_{q\to \beta_i}  
\frac{\mathsf{S}\omega_{0,|I|+1}(I,q)}{z-q}dz
=\Res\displaylimits_{q\to \beta_i}  
(\frac{1}{2}\frac{\mathsf{S}\omega_{0,|I|+1}(I,q)}{z-q}
+\frac{1}{2}\frac{\mathsf{S}\omega_{0,|I|+1}(I,\sigma_i(q))}{z-\sigma_i(q)}
)dz$. Now (\ref{tr-formula}) follows from 
$\mathsf{S}\omega_{0,|I|+1}(I,\sigma_i(q))
=-\mathsf{S}\omega_{0,|I|+1}(I,q)$.
\end{proof}
\end{proposition}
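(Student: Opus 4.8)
The plan is to prove the vanishing of $\Res_{q\to\beta_i}\frac{\omega_{0,|I|+1}(I,q)+\mathsf{S}\omega_{0,|I|+1}(I,q)}{w-q}\,dw$ by induction on $|I|\geq 2$, taking the already-established identity \eqref{res-omSom} (which was extracted from Lemma~\ref{lem:WBnew}) as the starting point. The content of the proposition is then precisely that the double-residue term in the second and third lines of \eqref{res-omSom} is identically zero. I would first record the reformulation that $\Res_{q\to\beta_i}\frac{g(q)\,dw}{w-q}=0$ for $g(q)=\omega_{0,|I|+1}(I,q)+\mathsf{S}\omega_{0,|I|+1}(I,q)$ is nothing but the vanishing of the principal part $\mathcal{P}^i_w g$, i.e.\ holomorphicity of $g$ at $\beta_i$; this is the exact shape in which the induction hypothesis will be consumed for the smaller combinations $\omega_{0,|I'|+1}(I',q)+\mathsf{S}\omega_{0,|I'|+1}(I',q)$.

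For the base case $|I|=2$ the only admissible split $I_0\uplus I'\uplus I''=I$ in \eqref{res-omSom} forces $|I_0|=|I'|=1$ and $I''=\emptyset$, so that $\tilde{\mathfrak{B}}(\emptyset;q,z)=1$ and $\mathsf{S}\omega_{0,2}=0$. Since $\omega_{0,2}$ is regular at $\beta_i$ (it is neither on the diagonal nor the antidiagonal there), the $q$-integrand of the double-residue term is holomorphic at $q=\beta_i$ and the residue vanishes.

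For the inductive step I would insert the reordered expression \eqref{frakBqz-2} for $\tilde{\mathfrak{B}}(I'';q,z)$ into \eqref{res-omSom}. The decisive bookkeeping observation is that, because $|I_0|\geq 1$ in the outer split, both $I'$ and every block $I_j$ produced in \eqref{frakBqz-2} satisfy $|I'|,|I_j|<|I|$, so the induction hypothesis is available for all of them (singletons being handled directly by regularity of $\omega_{0,2}$). Three regularity inputs then combine to make the whole $q$-integrand holomorphic at $\beta_i$: (i) the factor $\omega_{0,|I'|+1}(I',q)+\mathsf{S}\omega_{0,|I'|+1}(I',q)$ is holomorphic at $\beta_i$ by the induction hypothesis; (ii) each factor $\omega_{0,|I_j|+1}(I_j,q)+\omega_{0,|I_j|+1}(I_j,\sigma_i(q))$ in the first line of \eqref{frakBqz-2} is holomorphic at $\beta_i$ by the linear loop equation, Proposition~\ref{prop-linloop}; and (iii) in the second line of \eqref{frakBqz-2} the combination $\omega_{0,|I_j|+1}(I_j,q)+\mathsf{S}\omega_{0,|I_j|+1}(I_j,q)$ is again holomorphic by induction, while the prefactor $\frac{y(\sigma_i(q))-y(q)}{dx(q)}$ is regular at $\beta_i$ because the simple zero of $y(\sigma_i(q))-y(q)$ there is cancelled by the simple zero of $dx$. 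The residual factors $\frac{\omega_{0,|I_0|+1}(I_0,\iota z)}{x(q)-x(z)}$ and $\frac{1}{(w-q)(y(z)-y(q))}$ are regular in $q$ near $\beta_i$ for $w,z$ off the ramification point, so the inner residue $\Res_{q\to\beta_i}$ annihilates the double-residue term, closing the induction.

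Finally, to pass to the topological-recursion form \eqref{tr-formula}, I would read off $\mathcal{P}^i_z\omega_{0,|I|+1}(I,z)=-\Res_{q\to\beta_i}\frac{\mathsf{S}\omega_{0,|I|+1}(I,q)}{z-q}\,dz$ from the vanishing just proved, symmetrize the integrand over $q\leftrightarrow\sigma_i(q)$ using Remark~\ref{rem:id} (equivalently \eqref{Id}), and invoke the antisymmetry $\mathsf{S}\omega_{0,|I|+1}(I,\sigma_i(q))=-\mathsf{S}\omega_{0,|I|+1}(I,q)$, which is immediate from the definition \eqref{calS2} together with $y(\sigma_i(q))-y(q)\mapsto y(q)-y(\sigma_i(q))$ and the relabelling $I_1\leftrightarrow I_2$; this reproduces the kernel of \eqref{tr-formula} with denominator $y(q)-y(\sigma_i(q))$. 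The main obstacle is really step (iii): it is exactly the purpose of the reordering \eqref{frakBqz-2} to expose the \emph{good} combinations $\omega+\mathsf{S}\omega$ and $\omega(q)+\omega(\sigma_i(q))$, and the crux is to verify that, once the regular prefactor $\frac{y(\sigma_i(q))-y(q)}{dx(q)}$ is accounted for, no spurious pole at $\beta_i$ survives in the product — so that the induction hypothesis and the linear loop equation are enough to kill the entire double residue.
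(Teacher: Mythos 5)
Your proposal is correct and follows essentially the same route as the paper's own proof: induction on $|I|$ starting from \eqref{res-omSom} with the reordering \eqref{frakBqz-2}, the same base case $|I|=2$, the same three regularity inputs (induction hypothesis for $\omega+\mathsf{S}\omega$, the linear loop equation for $\omega(\cdot,q)+\omega(\cdot,\sigma_i(q))$, and regularity of $\tfrac{y(\sigma_i(q))-y(q)}{dx(q)}$ at $\beta_i$), and the same final symmetrization via Remark~\ref{rem:id} together with the antisymmetry of $\mathsf{S}\omega$ to obtain \eqref{tr-formula}. No gaps.
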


\subsection{Symmetry of the involution identity II:
  \texorpdfstring{$q\to \beta_i$}{q->\beta} and
  \texorpdfstring{$q\to \iota \beta_i$}{q->\iota\beta}}
\label{sec:iota-symm-II}

Recall that the investigation of $\omega_{0,|I|+1}(I,q)$ for $q$ near
a ramification point $\beta_i$ of $x$ in Sections~\ref{sec:linloopeq}
and \ref{sec:tr-kernel} started from the $\iota$-reflection
(\ref{eq:flip-om-refl}) of the involution identity (\ref{eq:flip-om}).
It thus remains to prove that the obtained solution is consistent with
the original equation (\ref{eq:flip-om}). This means we have to show
\begin{align}
& 
\Res\displaylimits_{q\to \beta_i}  \frac{\omega_{0,|I|+1}(I,q)dw}{w-q}
\label{eq:symm2}
\\
&=\Res\displaylimits_{q\to \beta_i}  \frac{dy(q)dw}{w-q}
\sum_{s=2}^{|I|} \sum_{I_1\uplus ...\uplus I_s=I}
\frac{1}{s} \Res\displaylimits_{z\to q}  \Big(
\frac{dx(z)}{(y(q)-y(z))^{s}}  \prod_{j=1}^s 
\frac{\omega_{0,|I_j|+1}(I_j, z)}{dx(z)} 
\Big)\;.
\nonumber
\end{align}
This is the same as 
\begin{align*}
0
&=-\Res\displaylimits_{q\to \beta_i}  \frac{dy(q)dw}{w-q}
\sum_{s=1}^{|I|} \sum_{I_1\uplus ...\uplus I_s=I}
\frac{1}{s} \Res\displaylimits_{z\to q}  \Big(
\frac{dx(z)}{(y(q)-y(z))^{s}}  \prod_{j=1}^s 
\frac{\omega_{0,|I_j|+1}(I_j, z)}{dx(z)} 
\Big)
\\
&=\Res\displaylimits_{q\to \beta_i}  \frac{dy(q)dw}{w-q}
\sum_{s=1}^{|I|} \sum_{I_1\uplus ...\uplus I_s=I}
\frac{1}{s} \Res\displaylimits_{z\to \beta_i}  \Big(
\frac{dx(z)}{(y(q)-y(z))^{s}}  \prod_{j=1}^s 
\frac{\omega_{0,|I_j|+1}(I_j, z)}{dx(z)} 
\Big)\;,
\end{align*}
where (\ref{commute-res}) has been used. We expand $\frac{1}{(y(q)-y(z))^{s}}$
about $y(z)=y(\beta_i)$ and then order into powers of $y(\beta_i)$.
Hence (\ref{eq:symm2}) holds iff
\begin{align*}
0
&=\Res\displaylimits_{q\to \beta_i}  \frac{dy(q)dw}{w-q}
\sum_{p=1}^\infty \frac{1}{p (y(q)-y(\beta_i))^{p}}
\\
&\times
\sum_{s=1}^{\min(|I|,p)} \sum_{I_1\uplus ...\uplus I_s=I}
\binom{p}{s} 
\Res\displaylimits_{z\to \beta_i}  \Big(
dx(z) (y(z)-y(\beta_i))^{p-s}  \prod_{j=1}^s 
\frac{\omega_{0,|I_j|+1}(I_j, z)}{dx(z)} 
\Big)
\\
&=\Res\displaylimits_{q\to \beta_i}  \frac{dy(q)dw}{w-q}
\sum_{p=1}^\infty \sum_{k=0}^p \frac{1}{p (y(q)-y(\beta_i))^{p}}
\binom{p}{k} (-1)^k (y(\beta_i))^k
\\
&\times
\sum_{s=1}^{\min(|I|,p-k)} \sum_{I_1\uplus ...\uplus I_s=I}
\binom{p-k}{s} 
\Res\displaylimits_{z\to \beta_i}  \Big(
dx(z) (y(z))^{p-k-s}  \prod_{j=1}^s \frac{\omega_{0,|I_j|+1}(I_j, z)}{dx(z)} 
\Big)\;.
\end{align*}
We prove that the last line vanishes identically for any $n=p-k$:
\begin{proposition}
\label{prop-new}
For any family $\omega_{0,|I|+1}(I,z)$ of $1$-forms in $z$
which satisfy the linear and quadratic loop
equations\footnote{In our situation, the linear loop equations are
proved in Proposition~\ref{prop-linloop} and the quadratic loop equations 
are equivalent to Proposition~\ref{prop:tr-kernel}.}
\cite{Borot:2013lpa, Borot:2015hna} one has, for any $n\geq 1$,
\begin{align}\label{new}
  \sum_{s=1}^{|I|} \sum_{I_1\uplus...\uplus I_s=I}
  \Res\displaylimits_{z\to \beta_i}\Bigg[\binom{n}{s}y(z)^{n-s}
  dx(z)\prod_{j=1}^s\frac{\omega_{0,|I_j|+1}(I_j,z)}{dx(z)}\Bigg]=0\;.
\end{align}
In particular, \eqref{eq:symm2} holds under these assumptions.
\begin{proof} In Remark~\ref{rem:loopinsertion} below
  we indicate that the assertion would be an immediate consequence of
  existence of a loop insertion operator. Because we did not prove
  that such an operator exists in our case we  give a direct
  combinatorial proof based on a technical Lemma~\ref{combi}.

We associate to the complex numbers $y,\bar{y},w,\bar{w}$ in
Lemma \ref{combi} the functions (and forms in $u_k$)
$y\mapsto y(z)$, $\bar{y}\mapsto
y(\sigma_{i}(z))$, $w\mapsto
\sum_{\emptyset \neq I'\subset I} t_{I'}
\frac{\omega_{0,|I'|+1}(I',z)}{dx(z)}$, $\bar{w}\mapsto
\sum_{\emptyset \neq I'\subset I} t_{I'}
\frac{\omega_{0,|I'|+1}(I',\sigma_{i}(z))}{dx(z)}$.
We keep only those terms which give rise to an admissible 
partition of $I$ (restrict to admissible products of 
$t_{I_k}$, then set $t_{I_k}\mapsto 1$). Lemma~ \ref{combi}
together with $e_2:=y\bar{w}+\bar{y}w+w\bar{w}
=y(w+\bar{w})+(\bar{y}-y)(w+\frac{w\bar{w}}{\bar{y}-y})$ 
gives
\begin{align}
  &  \sum_{k=0}^{n-1}\sum_{I_1\uplus ...\uplus I_{n-k}=I}\binom{n}{k}
  \bigg[y(z)^k \prod_{j=1}^{n-k}\frac{\omega_{0,|I_j|+1}(I_j,z)}{dx(z)}
  \nonumber
 \\*[-2ex]
&\hspace*{5cm}  
+y(\sigma_{i}(z))^k \prod_{j=1}^{n-k}
\frac{\omega_{0,|I_j|+1}(I_j,\sigma_i(z))}{dx(z)}\bigg]dx(z)
  \label{zsig}
  \\
  &=
\!\!\!\!\sum_{(n_1,n_2,n_3,n_4)\in \mathcal{D}_n}\!\!\!\!\!\!
(-1)^{n_4}n\frac{\prod_{k=1}^{n_3+n_4-1}(n_1+k)(n_2+k)}{n_3!n_4!(n_3+n_4-1)!}
y(z)^{n_1}y(\sigma_i(z))^{n_2}dx(z)
\nonumber
\\
&\times
\sum_{I_1\uplus I_2\uplus  ...\uplus I_{n_3+n_4}=I}
\prod_{j=1}^{n_3}\frac{(\omega_{0,|I_j|+1}(I_j,z)
  +\omega_{0,|I_j|+1}(I_j,\sigma_i(z)))}{dx(z)}
\nonumber
\\
&\times\prod_{j=n_3+1}^{n_3+n_4}\bigg[y(z)
\frac{\omega_{0,|I_j|+1}(I_j,\sigma_i(z))+\omega_{0,|I_j|+1}(I_j,z)}{dx(z)}
\nonumber
\\
&\hspace*{2cm}
+\frac{y(\sigma_i(z))-y(z)}{dx(z)}
\Big(\omega_{0,|I_j|+1}(I_j,z) +\mathsf{S}\omega_{0,|I_j|+1}(I_j,z) \Big)
\bigg],
\nonumber
\end{align}
where $\mathsf{S}\omega$ was defined in (\ref{calS2}) and
$\mathcal{D}_n$ is a set of tuples specified in (\ref{Dn}).  The
linear loop equation Proposition~\ref{prop-linloop} and
Remark~\ref{rem:id} imply that
$\frac{\omega_{0,|I_j|+1}(I_j,\sigma_i(z))+\omega_{0,|I_j|+1}(I_j,z)}{dx(z)}$
is holomorphic at $z=\beta_i$. Holomorphicity of the last line at $z=\beta_i$
follows from Proposition~\ref{prop:tr-kernel}. Thus, \eqref{zsig} is
regular at $z=\beta_i$.  The projection to admissible partitions of
$I$ guarantees that contributions to \eqref{zsig} with $n-k>|I|$ are
automatically zero.

We finish the proof of the proposition with the fact (\ref{Id}) that 
for any meromorphic $1$-form $\omega(q)$ the residue
does not change under the Galois involution, 
\begin{align*}
  \Res\displaylimits_{q\to \beta_i}(\omega (q)+\omega(\sigma_i(q)))
  =2  \Res\displaylimits_{q\to \beta_i} \omega (q)\;.
\end{align*}
Since the residue of \eqref{zsig} vanishes\footnote{Note that
  \eqref{Id} only states equality of the residue.  The integrand of
  \eqref{new} has, in general, higher order poles, but no residue.},
this implies the assertion~(\ref{new}).
\end{proof}
\end{proposition}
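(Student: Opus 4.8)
The plan is to read the left-hand side of \eqref{new} as the extraction of a full partition of $I$ from a binomial power. Introducing the generating series $W(z):=\sum_{\emptyset\neq I'\subseteq I} t_{I'}\frac{\omega_{0,|I'|+1}(I',z)}{dx(z)}$ with formal weights $t_{I'}$, the inner sum $\sum_{s=1}^{|I|}\binom{n}{s} y(z)^{n-s} W(z)^s\,dx(z)$ reproduces, after keeping only the monomials $\prod_k t_{I_k}$ supported on a genuine partition $I_1\uplus\cdots\uplus I_s=I$ and setting all $t_{I_k}=1$, precisely the integrand of \eqref{new}. The first move is to symmetrise under the local Galois involution: by \eqref{Id} the residue at $\beta_i$ is unchanged if we replace the integrand by the average of its values at $z$ and $\sigma_i(z)$, and this is harmless because $x$ and hence $dx$ are $\sigma_i$-invariant. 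Thus it suffices to show that the \emph{symmetrised} integrand is holomorphic at $z=\beta_i$, since then it carries no residue.

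Next I would assemble the two symmetric building blocks that are already known to be regular at $\beta_i$. The linear loop equation (Proposition~\ref{prop-linloop}) together with Remark~\ref{rem:id} shows that $\frac{\omega_{0,|I_j|+1}(I_j,z)+\omega_{0,|I_j|+1}(I_j,\sigma_i(z))}{dx(z)}$ is holomorphic at $\beta_i$, while Proposition~\ref{prop:tr-kernel} (the quadratic loop equation) guarantees holomorphicity of the combination
\[
  y(z)\frac{\omega_{0,|I_j|+1}(I_j,z)+\omega_{0,|I_j|+1}(I_j,\sigma_i(z))}{dx(z)}
  +\frac{y(\sigma_i(z))-y(z)}{dx(z)}\big(\omega_{0,|I_j|+1}(I_j,z)+\mathsf{S}\omega_{0,|I_j|+1}(I_j,z)\big),
\]
with $\mathsf{S}$ as in \eqref{calS2}. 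The heart of the argument is then a purely algebraic reorganisation of the symmetrised sum into products of these two blocks. Writing $\bar y:=y(\sigma_i(z))$ and letting $w,\bar w$ be the values of $W$ at $z,\sigma_i(z)$, the seed of the rewriting is the decomposition of the second elementary symmetric function $e_2=y\bar w+\bar y w+w\bar w=y(w+\bar w)+(\bar y-y)\big(w+\tfrac{w\bar w}{\bar y-y}\big)$, in which the first summand feeds the linear block and the second feeds the quadratic block. Lemma~\ref{combi} promotes this seed to the full identity, yielding exactly \eqref{zsig} with the tuple set $\mathcal{D}_n$ and the rational coefficients recorded there; every factor on its right-hand side is one of the two holomorphic building blocks, so \eqref{zsig} is regular at $\beta_i$.

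Finally I would close the loop: since the symmetrised integrand equals \eqref{zsig}, which has no pole at $\beta_i$ and hence no residue, \eqref{Id} gives $2\Res_{z\to\beta_i}(\text{original integrand})=\Res_{z\to\beta_i}(\text{symmetrised integrand})=0$, which is \eqref{new}; substituting this back into the $p,k$-expansion derived immediately before the proposition then yields \eqref{eq:symm2}. I expect the main obstacle to be the combinatorial step isolated as Lemma~\ref{combi}: the raw left-hand side is an alternating-weighted sum over compositions of $n$ into subsets of $I$, and proving that \emph{all} the non-symmetric cross terms cancel — so that the result collapses onto products of just the two symmetric blocks, with the precise coefficients $\frac{\prod_{k=1}^{n_3+n_4-1}(n_1+k)(n_2+k)}{n_3!\,n_4!\,(n_3+n_4-1)!}$ and signs $(-1)^{n_4}$ — requires delicate generating-function bookkeeping. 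This is precisely why the authors delegate it to the appendix rather than treat it inline.
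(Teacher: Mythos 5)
Your proposal is correct and follows essentially the same route as the paper's proof: the same generating-series substitution into Lemma~\ref{combi}, the same decomposition of $e_2$ feeding the linear and quadratic loop-equation blocks, and the same symmetrisation via \eqref{Id} to conclude that the residue vanishes. The only difference is presentational, so there is nothing to add.
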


\begin{remark}\label{rem:loopinsertion}
  For topological recursion, the existence of a loop insertion
  operator $D_w$ is proved \cite{Eynard:2007kz}.  It is unclear
  whether the same holds for blobbed topological recursion in general
  as well. However, assuming that a loop insertion operator $D_w$
  exists\footnote{$D_w$ acts as a derivation and satisfies
    $D_w(dx)=0$, $D_w(y(z)dx(z))dx(w)=\omega_{0,2}(w,z)$ and
    $D_w(\omega_{0,|I|+1}(I,z))dx(w)=\omega_{0,|I|+2}(I,w,z)$.}
  for any blobbed topological recursion, we could
  prove \eqref{new} by induction in $|I|\mapsto |I|+1$ with the
  following consideration:
\begin{align*}
  0&=D_w\sum_{s=1}^{|I|} \sum_{I_1\uplus...\uplus I_s=I}
  \Res\displaylimits_{z\to \beta_i}
  \bigg[\binom{n}{s}y(z)^{n-s}dx(z)\prod_{j=1}^s
  \frac{\omega_{0,|I_j|+1}(I_j,z)}{dx(z)}\bigg]dx(w)
  \\
  &=\sum_{s=1}^{|I|} \sum_{I_1\uplus...\uplus I_s=I}  \Res\displaylimits_{z\to \beta_i}
  \bigg[\binom{n}{s} (n-s)y(z)^{n-s-1}dx(z)\frac{\omega_{0,2}(w,z)}{dx(z)}
  \prod_{j=1}^s\frac{\omega_{0,|I_j|+1}(I_j,z)}{dx(z)}\bigg]
  \\
  &+\sum_{s=1}^{|I|} \sum_{I_1\uplus...\uplus I_s=I}\Res\displaylimits_{z\to \beta_i}
  \bigg[\binom{n}{s} y(z)^{n-s}dx(z)
  \\
  &\qquad \qquad \qquad \qquad  \times
  \sum_{\ell=1}^s\frac{\omega_{0,|I_\ell|+2}(I_\ell,w,z)}{dx(z)}
  \prod_{j=1,j\neq \ell}^s
  \frac{\omega_{0,|I_j|+1}(I_j,z)}{dx(z)}\bigg]
  \\
  &=\sum_{s=1}^{|I|+1} \sum_{I_1\uplus...\uplus I_s=I\uplus w}
  \Res\displaylimits_{z\to \beta_i}\bigg[\binom{n}{s}y(z)^{n-s}dx(z)\
  \prod_{j=1}^s\frac{\omega_{0,|I_j|+1}(I_j,z)}{dx(z)}\bigg]\;.
\end{align*}
We have used that the sum $\sum_{I_1\uplus...\uplus I_s=I\uplus w}$
should be symmetric such that all terms with the form $\omega_{0,2}(w,z)$
coming from the second line get a symmetry factor $\frac{1}{s+1}$
so that  $\binom{n}{s}\frac{n-s}{s+1}=\binom{n}{s+1}$.
\end{remark}

\subsection{Finishing the proof of Theorem~\ref{thm:flip}}

We can now assemble the pieces into a proof of
Theorem~\ref{thm:flip}. In a first step we assume that the rhs of
(\ref{eq:flip-om}) and of (\ref{eq:flip-om-refl}) are the same. By
induction these rhs have poles in the points
$q\in \{\beta_i,\iota \beta_i,u_k,\iota u_k\}$. Then conditions (b),(c),(d)
imply that $\omega_{0,|I|+1}(I,q)$ is meromorphic on $\hat{\mathbb{C}}$ with
poles only in $q\in \{\beta_i,\iota u_k\}$. Therefore,
\begin{align}
  &\omega_{0,m+1}(u_1,..,u_m,z)
  \label{liouville}
  \\
  &-\sum_{i=1}^r
  \Res\displaylimits_{q\to \beta_i}
  \frac{\omega_{0,m+1}(u_1,..,u_m,q)dz}{z-q}
  -\sum_{k=1}^m
  \Res\displaylimits_{q\to \iota u_k}
  \frac{\omega_{0,m+1}(u_1,..,u_m,q)dz}{z-q}
    \nonumber
\end{align}
is a holomorphic 1-form on the Riemann sphere $\hat{\mathbb{C}}\ni
z$, hence identically zero. Inserting the
residues from Proposition~\ref{prop:om-hol2} and
Proposition~\ref{prop:tr-kernel} represents $\omega_{0,|I|+1}(I,z)$ as
(\ref{sol:omega})+(\ref{eq:kernel}).

It remains to prove that the difference between the rhs
of (\ref{eq:flip-om}) and (\ref{eq:flip-om-refl}) is a holomorphic form on
$\hat{\mathbb{C}}\ni q$, hence zero.
By induction it can have poles at most in
$q\in \{\beta_i,\iota \beta_i,u_k,\iota u_k\}$. In
Section~\ref{sec:iota-symm-I} we have shown that the difference is
holomorphic at every $q=u_k$ and $q=\iota u_k$, and in
Section~\ref{sec:iota-symm-II} it
is shown that the difference is
holomorphic at every $q=\beta_i$ and $q=\iota \beta_i$.
This completes the proof of
Theorem~\ref{thm:flip}.
\hspace*{\fill} $\square$%

\subsection{The sum over all preimages}

Let $\omega^{\text{TR}}_{g,n+1}$ be the differential forms generated
by topological recursion only \cite{Eynard:2007kz}.  It is well-known
that for any $\omega^{\text{TR}}_{g,n+1}$, except $(g,n)=(0,0)$, the
following identity holds:
\begin{theorem}[\cite{Eynard:2007kz}]\label{thm:eynard}
  Let $I=\{u_1,...,u_n\}$ and $\omega^{\text{TR}}_{g,n+1}$ be the
  differential forms generated by topological recursion, where
  $\omega^{\text{TR}}_{0,2}$ is the Bergman kernel and
  $\omega^{\text{TR}}_{0,1}=y\,dx$. Let further be $\hat{z}^k$,
  $k=1,...,d$ the preimages with $x(z)=x(\hat{z}^k)$ such that
  $z\neq \hat{z}^k$ and $\hat{z}^0\equiv z$. Then, the sum of
  $\omega^{\text{TR}}_{g,n+1}$ over all preimages vanishes, except for
  the Bergman kernel,
\begin{align*}
  \sum_{k=0}^d\frac{\omega^{\text{TR}}_{g,n+1}(I,\hat{z}^k)}{dx(z^k)}
  =\frac{\delta_{g,0}\delta_{n,1}dx(u_1)}{(x(z)-x(u_1))^2}.
\end{align*}
\end{theorem}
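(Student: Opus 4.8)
The plan is to view the left-hand side, for fixed $I$, as a function of the base point $x_0:=x(z)$. Because it is the sum over the whole fibre $x^{-1}(x_0)$, it is invariant under the deck transformations of $x$ and hence descends to a single-valued meromorphic function on the base $\hat{\mathbb{C}}$; write
\[
  P(x_0):=\sum_{k=0}^d \frac{\omega^{\mathrm{TR}}_{g,n+1}(I,\hat{z}^k)}{dx(\hat{z}^k)}\;,
\]
a rational function of $x_0$ with values in the $1$-forms on the $I$-variables. A rational function is determined by its poles and principal parts, so the task reduces to locating the singularities of $P$.

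First I would record where the summands can be singular. For $(g,n+1)\neq(0,1),(0,2)$ the differential $\omega^{\mathrm{TR}}_{g,n+1}(I,w)$ is stable and therefore has poles in $w$ only at the ramification points $\beta_i$; in particular it is regular at every ordinary point and at the poles of $x$. Hence $P$ can be singular only over a branch point $x_0=x(\beta_i)$ or at $x_0=\infty$. At a branch point, simplicity of the ramification means exactly two sheets collide, namely $q$ and $\sigma_i(q)$. Choosing a local coordinate $t$ at $\beta_i$ with $x-x(\beta_i)=t^2$ and $\sigma_i\colon t\mapsto-t$, and writing $\omega^{\mathrm{TR}}_{g,n+1}(I,w)=f(t_w)\,dt_w$, these two sheets contribute $\tfrac{f(t)-f(-t)}{2t}$. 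The linear loop equation — the standard property that $\omega^{\mathrm{TR}}_{g,n+1}(I,q)+\omega^{\mathrm{TR}}_{g,n+1}(I,\sigma_i(q))$ is holomorphic at $\beta_i$ (for all $g$; see \cite{Eynard:2007kz,Borot:2013lpa}, and Proposition~\ref{prop-linloop} for the genus-$0$ instance used elsewhere in this paper) — says in the coordinate $t$ that $(f(t)-f(-t))\,dt$ has no negative Laurent part. As $f(t)-f(-t)$ is odd, its lowest power is then at least $t^1$, so $\tfrac{f(t)-f(-t)}{2t}$ is regular at $t=0$ and $P$ is holomorphic at every branch point.

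Next I would treat $x_0=\infty$. As $x_0\to\infty$ each sheet $\hat{z}^k$ approaches a pole of $x$, where $dx$ has a pole of order at least two while the stable correlator remains holomorphic; hence each summand acquires a zero and $P(x_0)\to0$. Thus $P$ is holomorphic on all of $\hat{\mathbb{C}}$, therefore constant, and the value $0$ at infinity forces $P\equiv0$, which is the assertion for $(g,n)\neq(0,0)$. The Bergman kernel $\omega^{\mathrm{TR}}_{0,2}$ is the sole exception: it carries no pole at the $\beta_i$, so the branch-point and infinity analyses apply verbatim, and its only singularity comes from the one sheet $\hat{z}^k\to u_1$ at which the kernel has its diagonal double pole. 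A short local expansion, using $x_0-x(u_1)=x'(u_1)(\hat{z}^k-u_1)+\cdots$, shows that this produces a second-order pole in $x_0$ with principal part $\tfrac{dx(u_1)}{(x_0-x(u_1))^2}$ and no residue; regularity elsewhere and vanishing at infinity then identify $P(x_0)$ with exactly this expression, i.e.\ $\tfrac{\delta_{g,0}\delta_{n,1}\,dx(u_1)}{(x(z)-x(u_1))^2}$.

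The delicate step is the branch-point analysis: one must make sure that the linear loop equation — invoked here for arbitrary genus from the topological-recursion literature rather than from the genus-$0$ Proposition~\ref{prop-linloop} — genuinely cancels the would-be simple pole of $\tfrac{f(t)-f(-t)}{2t}$, and that higher-order poles of $\omega^{\mathrm{TR}}$ at $\beta_i$ do no harm after the $\sigma_i$-symmetrisation, since only the odd negative Laurent coefficients are constrained and these are precisely the ones the loop equation kills. Some extra care is also needed should a pole of $x$ coincide with a ramification point, where the vanishing-at-infinity argument must be reexamined in the corresponding local coordinate.
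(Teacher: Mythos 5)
Your argument is correct, but it is not the one the paper uses. The paper's proof is a two-line observation about the recursion kernel: for stable $(g,n)$ the entire $z$-dependence of $\omega^{\text{TR}}_{g,n+1}(I,z)$ sits in $K_i(z,q)\propto \frac{dz}{z-q}-\frac{dz}{z-\sigma_i(q)}$, and the fibre-sum identity $\sum_{k=0}^d\frac{1}{x'(\hat z^k)(\hat z^k-q)}=\frac{1}{x(z)-x(q)}$ annihilates this difference because $x(q)=x(\sigma_i(q))$; the $(0,2)$ case is done by hand. You instead push the whole correlator forward to the base, obtaining a rational function $P(x_0)$, and run a Liouville argument: regular at branch points by the linear loop equation (the odd negative Laurent coefficients being exactly the ones that could obstruct $\tfrac{f(t)-f(-t)}{2t}$), vanishing at $x_0=\infty$ because $dx$ has higher-order poles where $\omega$ is regular, hence identically zero, with the Bergman kernel contributing precisely the residue-free double pole at $x_0=x(u_1)$. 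The trade-off: the paper's route is shorter but tied to the explicit Eynard--Orantin formula, whereas yours uses only the abstract inputs (pole locations of stable correlators plus the linear loop equation), so it applies to abstract loop equations at all genera and, usefully for this paper, makes transparent \emph{why} the blobbed forms $\omega_{0,n+1}$ fail to have vanishing preimage sum -- their extra poles at $\iota u_k$ are exactly the additional principal parts computed in the Proposition that follows Theorem~\ref{thm:eynard}. The caveats you flag (ramification points colliding with poles of $x$) do not occur in the standing hypotheses of simple ramification, so your proof is complete as written.
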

For $\omega^{\text{TR}}_{0,2}$ the Theorem can be proved directly, and
for any other $\omega^{\text{TR}}_{g,n+1}$ it follows from the
structure of the recursive kernel
\begin{align*}
	K_i(z,q)=   \frac{\frac{1}{2} (\frac{dz}{z-q}-\frac{dz}{z-\sigma_i(q)})
	}{dx(\sigma_i(q))(y(q)-y(\sigma_i(q)))}\;,
\end{align*}
since 
\begin{align*}
  \sum_{k=0}^d\frac{\frac{1}{\hat{z}^k-q}
    -\frac{1}{\hat{z}^k-\sigma_i(q)}}{x'(\hat{z}^k)}
  =\frac{1}{x(z)-x(q)}-\frac{1}{x(z)-x(\sigma_{i}(q))}=0.
\end{align*}
Consequently, it is natural to ask whether a similar identity holds
for the preimage sum of $\omega_{0,n+1}$ defined by \eqref{eq:flip-om}
together with \eqref{om02}. Applying Theorem \ref{thm:flip}, we get:
\begin{proposition}
Let $I=\{u_1,...,u_n\}$. For $n>0$ the sum over all preimages is 
\begin{align*}
  \sum_{k=0}^d\frac{\omega_{0,n+1}(I,\hat{z}^k)}{dx(z^k)}
  &=\frac{\delta_{n,1}dx(u_1)}{(x(z)-x(u_1))^2}
  +\frac{\delta_{n,1}dy(u_1)}{(x(z)+y(u_1))^2}
  \\
  &+\sum_{j=1}^n d_{u_j}\bigg[\sum_{s=1}^{|I|-1}(-1)^{s+1}
  \sum_{I_1\uplus ...\uplus I_s=I{\setminus} u_j}\frac{
    \prod_{i=1}^s\frac{\omega_{0,|I_1|+1}(I_i,u_j)}{
      (x(z)+y(u_j))dx(u_j)}}{x(z)+y(u_j)}\bigg]\;,
\end{align*}
where $\hat{z}^0\equiv z$.
\begin{proof}
First, look at $\omega_{0,2}$ from the second line of \eqref{om02}. Dividing it by $dx(\hat{z}^k)$ and summing over $k$ yields
\begin{align}\label{om02pre}
\sum_{k=0}^d\frac{\omega_{0,2}(u,\hat{z}^k)}{dx(z^k)}
&=-d_u\sum_{k=0}^d\Big(
\frac{1}{2} \frac{1}{x'(\hat{z}^k)(u-\hat{z}^k)}+
\frac{1}{2} \frac{\iota'(\hat{z}^k)}{x'(\hat{z}^k)(\iota u-\iota \hat{z}^k)}
\\
&\qquad\qquad  -\frac{1}{2} \frac{\iota'(\hat{z}^k)}{
  x'(\hat{z}^k)(u-\iota \hat{z}^k)}
-  \frac{1}{2} \frac{1}{x'(\hat{z}^k)(\iota u- \hat{z}^k)}\Big).
\nonumber
\end{align}
Now, use the fact that $\chi_k=\iota \hat{z}^k$ are preimages
of $\chi=\iota z$ under the map $y$, i.e.\ $y(\chi)=y(\chi_k)$.
Furthermore, if a point $z$ does not coincide with one of its
preimages $\hat{z}^k$, it will generically also not 
coincide under the global involution, $\chi\neq \chi_k$. Together with
$\frac{\iota'(z)}{x'(z)}=-\frac{1}{y'(\iota z)}$,
        \eqref{om02pre} breaks down to
\begin{align*}
  \sum_{k=0}^d\frac{\omega_{0,2}(u,\hat{z}^k)}{dx(z^k)}
  &=\frac{1}{2}d_u\Big(\frac{1}{x(z){-}x(u)}
  -\frac{1}{y(\iota z){-}y(\iota z)}+\frac{1}{y(\iota z){-}y(u)}
  -\frac{1}{x(z){-}x(\iota u)}\Big)
  \\
&=d_u\Big(\frac{1}{x(z)-x(u)}-\frac{1}{x(z)+y(u)}\Big)\;.
\end{align*}
For $\omega_{0,n}$ with $n>2$, Theorem \ref{thm:flip} proves by the
same consideration as for topological recursion in Theorem
\ref{thm:eynard} that the poles at the ramification points $\beta_i$
do not contribute. For the remaining part, we use the equivalence
given by Proposition \ref{prop:om-hol2} to Lemma
\ref{lem:om-hol1}. Interchanging the integral and the sum over all
preimages in Lemma \ref{lem:om-hol1} gives
\begin{align*}
 &\sum_{k=0}^d\frac{\omega_{0,|I|+1}(I,\hat{z}^k)}{dx(z^k)}
 \\
 &=-\sum_{u_j\in I}d_{u_j} \bigg[\sum_{s=1}^{|I|-1}
 \sum_{I_1\uplus ...\uplus I_s=I{\setminus} u_j}\sum_{k=0}^d
\frac{1}{s!}\frac{\partial^s
  \big(\frac{\iota'(\hat{z}^k)}{x'(\hat{z}^k)(\iota \hat{z}^k-u_j)}\big)}{
  \partial (y(u_j))^s}
\prod_{i=1}^s \frac{\omega_{0,|I_i|+1}(I_i,u_j)}{dx(u_j)} \bigg]
\\
&=-\sum_{u_j\in I}d_{u_j} \bigg[\sum_{s=1}^{|I|-1}
\sum_{I_1\uplus ...\uplus I_s=I{\setminus} u_j}
\frac{1}{s!}\frac{\partial^s
\big(\frac{1}{x(z)+y(u_j)}\big)}{\partial (y(u_j))^s}
\prod_{i=1}^s \frac{\omega_{0,|I_i|+1}(I_i,u_j)}{dx(u_j)} \bigg]\;.
\end{align*}
Carrying out the derivative with respect to $y(u_j)$ yields the assertion.
\end{proof}
\end{proposition}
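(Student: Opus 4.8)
The plan is to compute the preimage sum separately for the base case $n=1$ and for $n\geq 2$, throughout exploiting that summing over the $x$-preimages $\hat z^k$ of $z$ (those $t$ with $x(t)=x(z)$) is the same as summing over the $y$-preimages $\chi_k:=\iota\hat z^k$ of $\iota z$, because $y(\iota\hat z^k)=-x(\hat z^k)=-x(z)=y(\iota z)$ by \eqref{def:y}. The two elementary tools are the relation $\frac{\iota'(z)}{x'(z)}=-\frac{1}{y'(\iota z)}$ (obtained by differentiating $y=-x\circ\iota$ and using $\iota'(\iota z)\iota'(z)=1$) together with the two residue-theorem identities on $\hat{\mathbb{C}}$,
\begin{align*}
\sum_{k=0}^d \frac{1}{x'(\hat z^k)(u-\hat z^k)}&=-\frac{1}{x(z)-x(u)}\,,
\\
\sum_{k=0}^d \frac{1}{y'(\chi_k)(u-\chi_k)}&=-\frac{1}{y(\iota z)-y(u)}\,,
\end{align*}
each of which follows by summing over $\hat{\mathbb{C}}$ the residues of $\frac{1}{(x(t)-x(z))(u-t)}$ resp.\ $\frac{1}{(y(t)-y(\iota z))(u-t)}$.

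For $n=1$ I would insert the explicit second line of \eqref{om02} with $w=u$, $z=\hat z^k$, divide by $dx(\hat z^k)=x'(\hat z^k)\,d\hat z^k$, and sum. The first and fourth summands are treated by the $x$-identity, the second and third (after pulling out $\frac{\iota'(\hat z^k)}{x'(\hat z^k)}=-\frac{1}{y'(\chi_k)}$) by the $y$-identity. Using $y(\iota z)=-x(z)$, $y(\iota u)=-x(u)$ and $x(\iota u)=-y(u)$, the four terms collapse to $d_u\big(\frac{1}{x(z)-x(u)}-\frac{1}{x(z)+y(u)}\big)$, and applying $d_u$ reproduces the two $\delta_{n,1}$ terms $\frac{dx(u)}{(x(z)-x(u))^2}+\frac{dy(u)}{(x(z)+y(u))^2}$.

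For $n\geq 2$ I would start from the reconstruction \eqref{liouville} established while finishing the proof of Theorem~\ref{thm:flip}: as a form in its last argument, $\omega_{0,m+1}(I,\cdot)$ is meromorphic on $\hat{\mathbb{C}}$ with poles only at the ramification points $\beta_i$ and at the reflected insertions $\iota u_j$, hence equals the sum of its principal parts there. After dividing by $dx(\hat z^k)$ and summing over preimages, the $\beta_i$-contributions cancel by exactly the computation of Theorem~\ref{thm:eynard}: the recursion kernel $K_i$ contributes the combination $\sum_{k}\frac{1}{x'(\hat z^k)}\big(\frac{1}{\hat z^k-q}-\frac{1}{\hat z^k-\sigma_i(q)}\big)=\frac{1}{x(z)-x(q)}-\frac{1}{x(z)-x(\sigma_i(q))}=0$, since $x(\sigma_i(q))=x(q)$. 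This leaves only the poles at $\iota u_j$, which I would evaluate with the principal-part formula of Lemma~\ref{lem:om-hol1}. Commuting $d_{u_j}$ and the $y(u_j)$-derivatives out of the preimage sum, the sole $\hat z^k$-dependence is $\frac{\iota'(\hat z^k)}{x'(\hat z^k)(\iota\hat z^k-u_j)}=-\frac{1}{y'(\chi_k)(\chi_k-u_j)}$, whose preimage sum is $\frac{1}{x(z)+y(u_j)}$ by the $y$-identity together with $y(\iota z)=-x(z)$. Carrying out $\frac{1}{s!}\partial_{y(u_j)}^s$ on $\frac{1}{x(z)+y(u_j)}$ gives $\frac{(-1)^s}{(x(z)+y(u_j))^{s+1}}$; distributing the $s+1$ factors $(x(z)+y(u_j))^{-1}$ over the product and absorbing the overall minus sign into $(-1)^s\mapsto(-1)^{s+1}$ yields precisely the claimed blobbed term.

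I expect the main obstacle to be bookkeeping rather than conceptual. One must check that Lemma~\ref{lem:om-hol1} delivers exactly the principal part at $\iota u_j$ (the $y(u_j)$-derivatives raise the pole order in $z$ but create no new singularities, and the holomorphic remainder is a global holomorphic form on $\hat{\mathbb{C}}$, hence zero), that the interchange of the preimage sum with $d_{u_j}$ and the $y(u_j)$-derivatives is legitimate, and that the correction terms at the poles of $x$ and $y$ in the two residue identities drop out — they cancel in the antisymmetric combination $\frac{1}{\cdot-q}-\frac{1}{\cdot-\sigma_i(q)}$ for the $\beta_i$-part and are annihilated by the exterior derivative $d_{u_j}$ for the $\iota u_j$-part.
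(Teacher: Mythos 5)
Your proposal is correct and takes essentially the same route as the paper: the $n=1$ case by summing the explicit second line of \eqref{om02} over preimages, and the $n\geq 2$ case by splitting $\omega_{0,n+1}$ into its polar parts at the $\beta_i$ (which cancel in the preimage sum exactly as in Theorem~\ref{thm:eynard}) and at the $\iota u_j$ (evaluated via Lemma~\ref{lem:om-hol1}), with the same final sign bookkeeping $\frac{1}{s!}\partial_{y(u_j)}^s\frac{1}{x(z)+y(u_j)}=\frac{(-1)^s}{(x(z)+y(u_j))^{s+1}}$. Your explicit residue-theorem identities for the preimage sums merely make precise what the paper asserts directly, so there is nothing substantive to add.
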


\subsection{A particular symmetry under the involution
  \texorpdfstring{$\iota$}{\iota}}
\label{sec:prod-nabla-sym}

In the second part we prove that the planar sector (genus $0$) of the
quartic Kontsevich model is completely governed by the involution
identity (\ref{eq:flip-om}). In a decisive step of the proof we will
need an intriguing symmetry resulting from (\ref{eq:flip-om}) alone:
\begin{align}
  0&=\Res\displaylimits_{z\to q} \Big[
  \sum_{s=1}^{|I|} \frac{1}{s}\sum_{I_1\uplus ... \uplus I_s=I}
  \Big( \frac{dx(z)
  \prod_{j=1}^s
\frac{\omega_{0,|I_j|+1}(I_j,z)}{dx(z)}
}{(x(z)-x(q))(y(q)-y(z))^s}
\Big)\Big]
\nonumber
\\
&
+\Res\displaylimits_{z\to q} \Big[
  \sum_{s=1}^{|I|} \frac{1}{s}\sum_{I_1\uplus ... \uplus I_s=I}
  \Big( \frac{dx(\iota z)
  \prod_{j=1}^s
\frac{\omega_{0,|I_j|+1}(I_j,\iota z)}{dx(\iota z)}
}{(x(\iota z)-x(\iota q))(y(\iota q)-y(\iota z))^s}
\Big)\Big]\;.
\label{prod-om-nabla-sym}
\end{align}
The residues in (\ref{prod-om-nabla-sym}) can be expressed as limits
of partial derivatives of
$\big( \frac{x(z)-x(q)}{y(z)-y(q)}\big)^s \prod_{j=1}^s
\frac{\omega_{0,|I_j|+1}(I_j,z)}{dx(z)}$ and
$\big( \frac{x(\iota z)-x(\iota q)}{y(\iota z)-y(\iota q)}\big)^s
\prod_{j=1}^s \frac{\omega_{0,|I_j|+1}(I_j,\iota z)}{dx(z)}$ with
respect to $x(z)$ and $x(\iota z)$. Using (\ref{def:nabla-om}) we thus
bring (\ref{prod-om-nabla-sym}) into
an equation that we need:
\begin{align}
0&=\sum_{s=1}^{|I|} \frac{1}{s}\sum_{I_1\uplus ... \uplus I_s=I}
\sum_{n_1+...+n_s=s} \Big(
\prod_{j=1}^s
\nabla^{n_j}\omega_{0,|I_j|+1}(I_j,q)
+\prod_{j=1}^s
  \nabla^{n_j}\omega_{0,|I_j|+1}(I_j,\iota q)\Big)\;.
\label{log-nabla-sym}
\end{align}

The main combinatorial tool to verify (\ref{prod-om-nabla-sym}) is
Corollary~\ref{coro-axby}.  Using Corollay~\ref{coro-axby} we
prove that the integrand in (\ref{prod-om-nabla-sym}) is an exact
$1$-form in $z$:
\begin{proposition}
\label{prop:nabla-sym}
\begin{align}
& \sum_{s=1}^{|I|} \frac{1}{s}\sum_{I_1\uplus ... \uplus I_s=I}
\bigg\{ \frac{dx(z)\prod_{j=1}^s
\frac{\omega_{0,|I_j|+1}(I_j,z)}{dx(z)}
}{(x(z)-x(q))(y(q)-y(z))^s}
-\frac{dy(z)\prod_{j=1}^s
\frac{\omega_{0,|I_j|+1}(I_j,\iota z)}{(-dy(z))}
}{(y(q)-y(z))(x(z)-x(q))^s}
\bigg\}
\nonumber
\\
&=\sum_{s=2}^{|I|}\frac{1}{s!} \sum_{I_1\uplus...\uplus I_s=I} d_z\bigg[
\sum_{r=0}^{s-2}
\Big(\frac{1}{dy(z)}d_z\Big)^{s-2-r} \Big[\frac{1}{y(q)-y(z)}\Big]
\nonumber
\\
&\qquad \times \Big({-}\frac{1}{dy(z)}d_z\Big)^{r}
\Big[\frac{1}{(x(z)-x(q))}
\frac{dx(z)}{dy(z)} \prod_{j=1}^s \frac{\omega_{0,{|I_j|+1}}(I_j,z)}{dx(z)}\Big]
\bigg]\;.
\label{prod-om-nabla-d}
\end{align}
In particular, the residue \eqref{prod-om-nabla-sym} at $z=q$ is zero.
\begin{proof}
  In the first line of (\ref{prod-om-nabla-d}), restricted to
  $s\geq 2$, we write $\frac{1}{(y(q)-y(z))^s}$ as a multiple
  differential and integrate by parts:
\begin{align*}
&\sum_{s=2}^{|I|} \frac{1}{s}\sum_{I_1\uplus ... \uplus I_s=I}
\frac{dx(z)}{(x(z)-x(q))(y(q)-y(z))^s}
\prod_{j=1}^s
\frac{\omega_{0,|I_j|+1}(I_j,z)}{dx(z)}
\\
&=\sum_{s=2}^{|I|}\frac{1}{s!} \sum_{I_1\uplus...\uplus I_s=I} d_z\Big\{
\sum_{r=0}^{s-2}
\Big(\frac{1}{dy(z)}d_z\Big)^{s-2-r} \Big[\frac{1}{(y(q)-y(z))}\Big]
\\*
&\qquad \times \Big({-}\frac{1}{dy(z)}d_z\Big)^{r}
\Big[\frac{1}{(x(z)-x(q))}
\frac{dx(z)}{dy(z)} \prod_{j=1}^s \frac{\omega_{0,{|I_j|+1}}(I_j,z)}{dx(z)}\Big]
\Big\}
\\[-1ex]
&+\sum_{s=2}^{|I|} \frac{1}{s!}
\sum_{I_1\uplus...\uplus I_s=I}
\frac{dy(z)}{(y(q)-y(z))}
\\
&\qquad \times
\Big({-}\frac{1}{dy(z)}d_z\Big)^{s-1}
\Big[\frac{1}{(x(z)-x(q))}
\frac{dx(z)}{dy(z)} \prod_{j=1}^s \frac{\omega_{0,{|I_j|+1}}(I_j,z)}{dx(z)}\Big]\;.
\end{align*}
Hence the assertion is true if the $1$-form in $z$
\begin{align*}
f(I;z,q)&:=  \frac{\omega_{0,|I|+1}(I,z)+\omega_{0,|I|+1}(I,\iota z)}{(x(z)-x(q))}
\\
&+\sum_{s=2}^{|I|} 
\sum_{I_1\uplus...\uplus I_s=I}
\Big\{
-\frac{dy(z)}{s} 
\frac{\prod_{j=1}^s \omega_{0,{|I_j|+1}}(I_j,\iota z)}{
  (x(z)-x(q))^s (-dy(z))^{s}}
\\
&+ 
\frac{dy(z)}{s!}
\Big({-}\frac{1}{dy(z)}d_z\Big)^{s-1}
\Big[\frac{1}{(x(z)-x(q))}
\Big(\frac{dx(z)}{dy(z)}
\prod_{j=1}^s \frac{\omega_{0,{|I_j|+1}}(I_j,z)}{dx(z)}\Big)\Big]\Big\}
\end{align*}
is identically zero.  The last line of $f(I;z,q)$ is of the form of
Corollary~\ref{coro-axby} with
$\frac{1}{dy(z)}d_z=b(y) \partial_x + \partial_y$,
$a(x)=\frac{1}{x -x_q}$, $b(y)=\frac{dx(z)}{dy(z)}$,
$c_j(y)= \frac{\omega_{0,{|I_j|+1}}(I_j,z)}{dx(z)}$ and a constant
$x_q=x(q)$. We thus get
\begin{align}
f(I;z,q)&:=  \frac{\omega_{0,|I|+1}(I,z)+\omega_{0,|I|+1}(I,\iota z)}{(x(z)-x(q))}
\label{fIzq}
\\
&+\sum_{s=2}^{|I|} 
\sum_{I_1\uplus...\uplus I_s=I}
\Big\{
-\frac{dy(z)}{s} 
\frac{\prod_{j=1}^s \omega_{0,{|I_j|+1}}(I_j,\iota z)}{(x(z)-x(q))^s (-dy(z))^{s}}
\nonumber
\\
&+
\frac{dy(z)}{s!}
\sum_{r=1}^s \frac{(r-1)!}{(x(z)-x(q))^r}
\nonumber
\\
&\times
\sum_{\substack{J_1\uplus ... \uplus J_r=\{1,2,...,s\}\\ J_1<...<J_r}}
\prod_{k=1}^r
\Big(-\frac{1}{dy(z)}d_z\Big)^{|J_k|-1} \Big[
\frac{dx(z)}{dy(z)} \prod_{j\in J_k}
\frac{\omega_{0,{|I_j|+1}}(I_j,z)}{dx(z)}\Big]\Big\}\;.
\nonumber
\end{align}
The derivatives in the last line are expressed as a residue:
\begin{align*}
&  \Big(-\frac{1}{dy(z)}d_z\Big)^{|J_k|-1} \Big[
\frac{dx(z)}{dy(z)} \prod_{j\in J_k}
\frac{\omega_{0,{|I_j|+1}}(I_j,z)}{dx(z)}\Big]
\\
&=-(|J_k|-1)!
\Res\displaylimits_{w\to z}
\Big(\frac{dx(w)}{(y(z)-y(w))^{|J_k|}} \prod_{j\in J_k}
\frac{\omega_{0,{|I_j|+1}}(I_j,w)}{dx(w)}\Big)\;.
\end{align*}
The case $r=1$ combines to the involution identity (\ref{eq:flip-om})
and is thus identified as the negative of the first line of
(\ref{fIzq}).  In the remainder we order the partitions of $I$:
\begin{align}
f(I;z,q)&=
\sum_{s=2}^{|I|} 
\sum_{\substack{I_1\uplus...\uplus I_s=I \\ I_1<....<I_s}}
\bigg\{
-\frac{(s-1)!dy(z)\prod_{j=1}^s \omega_{0,{|I_j|+1}}(I_j,\iota z)}{
  (x(z)-x(q))^s (-dy(z))^{s}}
\label{fIzq-1}
\\
&+
dy(z)
\sum_{r=1}^s \frac{(-1)^r (r-1)!}{(x(z)-x(q))^r(dy(z))^r}
\nonumber
\\[-1ex]
&\times \!\!
\sum_{\substack{J_1\uplus ... \uplus J_r=\{1,2,...,s\}\\ J_1<...<J_r}}
\prod_{k=1}^r
\Res\displaylimits_{w\to z}
\Big(\frac{(|J_k|-1)! dy(z)dx(w)}{(y(z)-y(w))^{|J_k|}} \prod_{j\in J_k}
\frac{\omega_{0,{|I_j|+1}}(I_j,w)}{dx(w)}\Big)\bigg\}\,.
\nonumber
\end{align}
We change the order of the summations. The outer summation is a sum
over ordered partitions $I'_1\uplus ... \uplus I'_r$ given by
$I'_k=\cup_{j\in J_k} I_j$, which is combined with an inner summation
over ordered partitions of the individual $I'_k$.  Renaming in the
first line of (\ref{fIzq-1}) $s\mapsto r$ and $I_j\mapsto I'_j$, we
arrive at
\begin{align*}
f(I;z,q)&=
\sum_{r=2}^{|I|} 
\sum_{\substack{I'_1\uplus...\uplus I'_r=I \\ I'_1<....<I'_r}}
\frac{(r-1)!dy(z)}{(x(z)-x(q))^r (-dy(z))^{r}}
\Big\{-\prod_{j=1}^r \omega_{0,{|I'_j|+1}}(I'_j,\iota z)
\\
&+ 
\prod_{k=1}^r
\Res\displaylimits_{w\to z}
\Big(
\sum_{s=1}^{|I_k'|} \sum_{\substack{I_{k1}\uplus ... \uplus I_{ks}=I_k' \\ I_{k1}<...<I_{ks}}}
\frac{(s-1)! dy(z)dx(w)}{(y(z)-y(w))^{s}} \prod_{j=1}^s
\frac{\omega_{0,{|I_{kj}|+1}}(I_{kj},w)}{dx(w)}\Big)\Big\}\;.
\end{align*}
The outcome is zero thanks to (\ref{eq:flip-om}).
\end{proof}
\end{proposition}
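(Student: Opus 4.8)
The plan is to reduce the claimed identity to the vanishing of a single explicit $1$-form, which I will call $f(I;z,q)$, and then to annihilate $f$ by reassembling the involution identity \eqref{eq:flip-om} out of its constituent pieces. First I would isolate the first sum on the left-hand side (the one carrying $dx(z)$), restricted to $s\geq 2$, and rewrite its denominator through the operator $\tfrac{1}{dy(z)}d_z$ via the relation $\big(\tfrac{1}{dy(z)}d_z\big)^{s-1}\tfrac{1}{y(q)-y(z)}=\tfrac{(s-1)!}{(y(q)-y(z))^s}$; combined with the prefactor $\tfrac{1}{s}$ this produces exactly the $\tfrac{1}{s!}$ of the right-hand side. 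Applying the Leibniz rule for $\tfrac{1}{dy(z)}d_z$ and integrating by parts then splits this sum into a total derivative $d_z[\cdots]$ — which is precisely the claimed right-hand side, with its $\sum_{r=0}^{s-2}$ structure — plus a boundary remainder in which all $s-1$ copies of the operator act on the single bracket $\tfrac{1}{x(z)-x(q)}\tfrac{dx(z)}{dy(z)}\prod_{j}\tfrac{\omega_{0,|I_j|+1}(I_j,z)}{dx(z)}$. The identity therefore holds exactly when this remainder, together with the second (the $\iota z$) sum on the left, vanishes; I would package that combination into $f(I;z,q)$ and aim to prove $f\equiv 0$.

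To analyse $f$ I would apply Corollary~\ref{coro-axby} to the iterated operator $\big({-}\tfrac{1}{dy(z)}d_z\big)^{s-1}$, which acts as $b(y)\partial_x+\partial_y$, on a product $a(x)\,b(y)\prod_j c_j(y)$ with $a(x)=\tfrac{1}{x-x(q)}$, $b(y)=\tfrac{dx(z)}{dy(z)}$ and $c_j=\tfrac{\omega_{0,|I_j|+1}(I_j,z)}{dx(z)}$. This expands the iterated derivative as a sum over ordered partitions $J_1<\dots<J_r$ of $\{1,\dots,s\}$, the $r$-th term carrying a factor $\tfrac{(r-1)!}{(x(z)-x(q))^r}$ and, on each block $J_k$, a lower iterated derivative which I would immediately convert into a residue in an auxiliary variable $w$, namely $\big({-}\tfrac{1}{dy(z)}d_z\big)^{|J_k|-1}[\cdots]=-(|J_k|-1)!\,\Res_{w\to z}\big(\tfrac{dx(w)}{(y(z)-y(w))^{|J_k|}}\prod_{j\in J_k}\tfrac{\omega_{0,|I_j|+1}(I_j,w)}{dx(w)}\big)$.

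The decisive step is recognising the involution identity inside this expansion. The $r=1$ term is by construction exactly the right-hand side of \eqref{eq:flip-om}, hence equals $\omega_{0,|I|+1}(I,z)+\omega_{0,|I|+1}(I,\iota z)$ up to sign and cancels the first line of $f$. For $r\geq 2$ I would reorganise the nested partitions: the blocks assemble an ordered partition $I'_1<\dots<I'_r$ of $I$ through $I'_k=\biguplus_{j\in J_k}I_j$, and the residue attached to each $I'_k$ — now summed over its own ordered sub-partitions — is again precisely the right-hand side of \eqref{eq:flip-om}, so it reassembles $\omega_{0,|I'_k|+1}(I'_k,z)+\omega_{0,|I'_k|+1}(I'_k,\iota z)$. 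Using the involution identity once more inside each block collapses every factor to $\omega_{0,|I'_k|+1}(I'_k,\iota z)$, and the resulting product cancels the remaining $\iota z$-term of $f$ term by term. This gives $f\equiv 0$, so the left-hand side is the exact form $d_z[\cdots]$; as an exact $1$-form carries no residue, the residue of \eqref{prod-om-nabla-sym} at $z=q$ vanishes.

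I expect the main obstacle to lie in the combinatorial bookkeeping of this last step: matching the ordered-partition structure delivered by Corollary~\ref{coro-axby} with the recursive partition structure of \eqref{eq:flip-om}, and tracking the factorials $\tfrac{1}{s}$, $(s-1)!$, $\tfrac{1}{s!}$ together with the alternating signs so that the two nested applications of the involution identity — the $r$-level one and the per-block one — interlock exactly. By contrast, the integration by parts and the rewriting of derivatives as residues are routine once the operator $\tfrac{1}{dy(z)}d_z$ and Corollary~\ref{coro-axby} are available.
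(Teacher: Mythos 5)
Your proposal follows the paper's own proof essentially step for step: the same integration by parts that isolates the exact form and reduces the claim to $f(I;z,q)\equiv 0$, the same application of Corollary~\ref{coro-axby} with identical identifications of $a,b,c_j$, the same conversion of the block derivatives into residues, and the same reassembly of \eqref{eq:flip-om} at the $r=1$ level and then blockwise after regrouping the nested partitions into $I'_k=\cup_{j\in J_k}I_j$. The only imprecision is in your final collapse: each block factor reduces to $\omega_{0,|I'_k|+1}(I'_k,\iota z)$ not by ``using the involution identity once more'' but because the $s=1$ term of the per-block residue sum contributes $-\omega_{0,|I'_k|+1}(I'_k,z)$, which cancels the $\omega_{0,|I'_k|+1}(I'_k,z)$ produced by the $s\geq 2$ terms via \eqref{eq:flip-om}.
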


\section{The quartic Kontsevich model}

\subsection{Summary of previous results}

Let $H_N$ be the real vector space of self-adjoint
$N\times N$-matrices, $H_N'$ be its dual and $(e_{kl})$ be the
standard matrix basis in the complexification of $H_N$.
We define a measure $d\mu_{E,\lambda}$ on $H_N'$ by
\begin{align}
  d\mu_{E,\lambda}(\Phi)&=\frac{1}{\mathcal{Z}}
  \exp\Big({-}\frac{\lambda N}{4}\mathrm{Tr}(\Phi^4)\Big)
d\mu_{E,0}(\Phi)\;,\quad
\label{measure}
\\
\mathcal{Z}&:=\int_{H_N'} \exp\Big({-}\frac{\lambda N}{4}
\mathrm{Tr}(\Phi^4)\Big)
d\mu_{E,0}(\Phi)\;,
\nonumber
\end{align}
where
$d\mu_{E,0}(\Phi)$ is a Gau\ss{}ian measure with covariance
\begin{align}
\Big[\int_{H_N'} d\mu_{E,0}(\Phi)\;
\Phi(e_{jk})\Phi(e_{lm})\Big]_c=\frac{\delta_{jm}\delta_{kl}}{N(E_j+E_l)}
\label{measure-0}
\end{align}
for some $0<E_1<\dots <E_N$.
The trace is understood as $\mathrm{Tr}(\Phi^4)
=\sum_{k,l,m,n=1}^N \Phi(e_{kl})\Phi(e_{lm})\Phi(e_{mn})\Phi(e_{nk})$.
Moments or cumulants of
$d\mu_{E,\lambda}$ are viewed as general or connected correlation
functions in a finite-dimensional approximation of a Euclidean quantum
field theory.

We call the objects resulting from (\ref{measure})+(\ref{measure-0})
the \emph{Quartic
  Kontsevich Model} because of its formal analogy with the Kontsevich
model \cite{Kontsevich:1992ti} in which $\mathrm{Tr}(\Phi^4)$ in
(\ref{measure}) is replaced by $\mathrm{Tr}(\Phi^3)$. The Gau\ss{}ian
measure $d\mu_{E,0}(\Phi)$ 
is the same (\ref{measure-0}). Kontsevich proved in
\cite{Kontsevich:1992ti} that (\ref{measure}) with
$\mathrm{Tr}(\Phi^3)$-term, viewed as function of the $E_k$, is the
generating function for intersection numbers of tautological
characteristic classes on the moduli space
$\overline{\mathcal{M}}_{g,n}$ of stable complex curves.

Derivatives of the Fourier transform
$\mathcal{Z}(M):=\int_{H_N'}
d\mu_{E,\lambda}(\Phi)\;e^{\mathrm{i}\Phi(M)}$ with respect to matrix
entries $M_{kl}$ and parameters $E_k$ of the free theory give rise to
\emph{Dyson-Schwinger equations} between the cumulants
\begin{align}
\langle e_{k_1l_1}... e_{k_nl_n}\rangle_c 
&= \frac{1}{\mathrm{i}^{n}}
\frac{\partial^n\log \mathcal{Z}(M)}{\partial 
M_{k_1l_1}... \partial M_{k_nl_n}} \Big|_{M=0}\;.
\label{cumulants}
\end{align}
After $1/N$-expansion one obtains a closed non-linear equation
\cite{Grosse:2009pa} for the $1/N$-leading part $G^{(0)}_{|kl|}$ of
the $2$-point function
$N\langle e_{kl}e_{lk}\rangle_c =\sum_{g=0}^\infty N^{-2g}
G^{(g)}_{|kl|}$ and a hierarchy of affine equations
\cite{Grosse:2012uv, Hock:2020rje} for all other functions.
The non-linear equation
for $G^{(0)}_{|kl|}$ was solved in a special case in
\cite{Panzer:2018tvy} and then in \cite{Grosse:2019jnv} in full
generality. The solution introduces a ramified covering
$R:\hat{\mathbb{C}}\to \hat{\mathbb{C}}$ of the Riemann sphere
$\hat{\mathbb{C}}=\mathbb{C}\cup \{\infty\}$ given by
(see \cite{Schurmann:2019mzu-v3})
\begin{align}
  R(z)=z-\frac{\lambda}{N} \sum_{k=1}^d \frac{\varrho_k}{\varepsilon_k+z}\;.
  \label{eq:R}
\end{align}
Here $(\varepsilon_k,\varrho_k)$ are
implicitly defined as solution of the system $R(\varepsilon_l)=e_l$,
$\varrho_l R'(\varepsilon_l)=r_l$ when assuming that $(E_1,...,E_N)$
consists of $d$ pairwise different values $e_1,...,e_d$ which arise
with multiplicities $r_1,...,r_d$. The planar $2$-point function is
then given by
$G^{(0)}_{|kl|}=\mathcal{G}^{(0)}(\varepsilon_k,\varepsilon_l)$ where
$\mathcal{G}^{(0)}$ is the rational function
\begin{align}
\mathcal{G}^{(0)}(z,w)&=\frac{\displaystyle
  1 -\frac{\lambda}{N} \sum_{k=1}^d \frac{r_k
  \prod_{j=1}^d \frac{
R(w){-}R({-}\widehat{\varepsilon_k}^j)}{ R(w)-R(\varepsilon_j)}
}{
(R(z)-R(\varepsilon_k))(R(\varepsilon_k)-R({-}w))}
}{R(w)-R(-z)}
\label{Gzw-final}
\end{align}
with poles located at $z+w=0$ and
$z,w\in \{\widehat{\varepsilon_k}^j\}$ for $k,j\in\{1,...,d\}$.  Here
$v\in\{z,\hat{z}^1,\dots,\hat{z}^d\}$ is the set of solutions of
$R(v)=R(z)$. One has $\mathcal{G}^{(0)}(z,w)=\mathcal{G}^{(0)}(w,z)$.

In \cite{Branahl:2020yru} we identified an algorithm which constructs
recursively, starting from (\ref{Gzw-final}), any cumulant
(\ref{cumulants}) of the measure
(\ref{measure})+(\ref{measure-0}). Its core is a coupled system of
loop equations \cite[Prop.~5.3, Prop.~5.6, Cor.~5.9]{Branahl:2020yru}
for three families of functions $\Omega^{(g)}_{m}(u_1,....,u_m)$,
$\mathcal{T}^{(g)}(u_1,...,u_m\|z,w|)$ and
$\mathcal{T}^{(g)}(u_1,...,u_m\|z|w|)$ with
$\mathcal{T}^{(0)}(\emptyset \|z,w|)=\mathcal{G}^{(0)}(z,w)$ and
$\mathcal{T}^{(0)}(\emptyset \|z|w|)$ determined in
\cite{Schurmann:2019mzu-v3}.  Of particular importance are the functions
$\Omega^{(g)}_{m}(u_1,....,u_m)$ which arise from complexification of
derivatives $\Omega^{(g)}_{a_1,\dots,a_n}$ of the partially summed
two-point function:
\begin{align}
  \sum_{g=0}^\infty N^{1-2g-n} \Omega^{(g)}_{a_1,\dots,a_n}
  :=\frac{\delta_{n,2}}{N(E_{a_1}-E_{a_2})^2}
  +\frac{\partial^{n-1}}{\partial
  E_{a_2}\cdots\partial E_{a_{n}}} \sum_{k=1}^N \langle
e_{a_1k}e_{ka_1}\rangle_{c}\;.
\end{align}
In \cite{Branahl:2020uxs} it is shown that the $\Omega^{(g)}_{a_1,\dots,a_n}$ 
are distinguished polynomials of the cumulants (\ref{cumulants}).

The system of equations established in \cite{Branahl:2020yru} permits
to determine $\Omega^{(g)}_{m}(u_1,...,u_m)$ without prior knowledge
of $\langle e_{a_1k}e_{ka_1}\rangle_{c}$. The solution of this system
for $\Omega^{(0)}_{2}$, $\Omega^{(0)}_{3}$, $\Omega^{(0)}_{4}$ and
$\Omega^{(1)}_{1}$ in \cite{Branahl:2020yru} gave strong support for
the conjecture that the meromorphic forms
$\omega_{g,n}(z_1,...,z_n):=\Omega^{(g)}_{n}(z_1,...,z_n)
dR(z_1)\cdots dR(z_n)$ obey blobbed topological recursion
\cite{Borot:2015hna} for the spectral curve
$(x:\hat{\mathbb{C}}\to \hat{\mathbb{C}},
\omega_{0,1}=ydx,\omega_{0,2})$ with
\begin{align}
  x(z)=R(z)\;,\qquad
  y(z)=-R(-z)\;,\qquad
  \omega_{0,2}(u,z)=\frac{du\,dz}{(u-z)^2}+\frac{du\,dz}{(u+z)^2}\;.
\end{align}

In the remainder of this paper we prove this conjecture for $\omega_{0,n}$.
More precisely, we prove that the solution of the system of equations given in
\cite{Branahl:2020yru} is identical to the solution of the involution identity
(\ref{eq:flip-om}) given in Theorem~\ref{thm:flip} for $\iota z=-z$
and $x=R$ as in (\ref{eq:R}). In particular, the part of 
$\omega_{0,n}$ with poles at ramification points of $x=R$ obeys exactly the
universal formula of topological recursion \cite{Eynard:2007kz}, and the other part with
poles along opposite diagonals $z_i+z_j=0$ is described by a residue formula of
very similar type.

\subsection{Loop equations}

The loop equations derived in \cite{Branahl:2020yru} imply that
$\omega_{0,m+1}(u_1,...,u_m,z)$ is an exact 1-form in every variable
$u_1,...,u_k$. We set
\begin{align}
  \omega_{0,m+1}(u_1,...,u_m,z)=d_{u_1}\cdots d_{u_m}
  \varpi_{0,m+1}(u_1,...,u_m;z)\;.
\end{align}
The $\varpi_{0,m+1}(u_1,...,u_m;z)$ are $1$-forms in $z$, they relate
via
$\varpi_{0,m+1}(u_1,...,u_m;z)= \lambda^{1-m}
\mathcal{W}^{(0)}_{m+1}(u_1,...,u_m,z)dR(z)$ to functions introduced
in \cite{Branahl:2020yru}. The loop equations derived in
\cite[Appendix E]{Branahl:2020yru} translate as follows into equations
between $\varpi_{0,m+1}$ and two classes of
auxiliary functions:
\begin{proposition} \label{prop:om-t}
  The loop equations of the quartic Kontsevich model have in lowest
  degree the solution 
$\varpi_{0,2}(u;z)=-\frac{dz}{(u-z)}-\frac{dz}{(u+z)}$ 
and can be turned for $I=\{u_1,...,u_m\}$ with $m\geq 2$ into
\cite[Prop.~E.1, eqs.~(E.4)+(E.5)]{Branahl:2020yru}
\begin{align}
\varpi_{0,|I|+1}(I;z)
&=\Res\displaylimits_{\substack{q\to -u_{1,...,m} \\ q\to \beta_{1,...2d}}}
\frac{dz}{z-q}
\bigg[
\sum_{I_1\uplus I_2=I}\varpi_{0,|I_1|+1}(I_1;q) \mathfrak{v}_{0,|I_2|}(I_2\|q)\bigg]
\nonumber
\\*
&+
\sum_{k=1}^m\frac{\mathfrak{v}_{0,|I|-1}(I{\setminus}u_k \|u_k)dz}{z+u_k}\;,
\label{om0-fraku}
\\
\text{where}\qquad
  \mathfrak{v}_{0,|I|}(I\|q)
&=\sum_{\substack{I_1\uplus I_2=I\\
 \text{possibly }   I_2= \emptyset}}
\sum_{j=1}^d \frac{\varpi_{0,|I_1|+1}(I_1;-\hat{q}^j)
\tilde{\mathfrak{t}}_{0,|I_2|}(I_2\|{-}\hat{q}^j,q|)}{dR(\hat{q}^j) (R(-q)-R(-\hat{q}^j))}
\nonumber
\\*
&-\sum_{k=1}^m 
\frac{\tilde{\mathfrak{t}}_{0,|I|-1}(I\backslash u_k\|u_k,q|)}{
  (R(u_k)-R(-q))(R(-u_k)-R(q))}
\label{frakuIq}
\end{align}
and $\tilde{\mathfrak{t}}_{0,0}(\emptyset\|z,q|)=1$ and for $|I|\geq 1$
\begin{align}
\tilde{\mathfrak{t}}_{0,|I|}(I\|z,q|)
&=
-\sum_{\substack{I_1\uplus I_2=I\\
\text{possibly }    I_2= \emptyset}}
\sum_{l=1}^d \frac{\varpi_{0,|I_1|+1}(I_1;{-}\hat{q}^l)
  \tilde{\mathfrak{t}}_{0,|I_2|}(I_2\|{-}\hat{q}^l,q|)}{
  dR(\hat{q}^l) (R(z)-R(-\hat{q}^l))}
\label{uIzq}
\\
&+\sum_{k=1}^m 
\frac{\tilde{\mathfrak{t}}_{0,|I|-1}(I\backslash u_k\|u_k,q|)}{(R(z)-R(u_k))(
R(q)-R(-u_k))}
\nonumber
\\
&-
\sum_{\substack{I_1\uplus I_2=I\\ \text{possibly }    I_2= \emptyset}}
\frac{\varpi_{0,|I_1|+1}(I_1;z)
\tilde{\mathfrak{t}}_{0,|I_2|}(I_2\|z,q|)}{dR(z)(R(q)-R(-z))}\;.
\nonumber
\end{align}
In \eqref{om0-fraku}, $\beta_1,...,\beta_{2d}$ are the ramification points of
the ramified cover $R$ given in \eqref{eq:R}. By
$\hat{q}^1,...,\hat{q}^d$ we denote the other preimages of $q$ under $R$,
i.e.\ $R(\hat{q}^j)=R(q)$. Generically they are
pairwise different and different from $q$. 
\end{proposition}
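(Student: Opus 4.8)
The plan is to translate the loop equations of \cite[Prop.~E.1, eqs.~(E.4)+(E.5)]{Branahl:2020yru}, which are stated for the functions $\mathcal{W}^{(0)}_{m+1}$ together with the auxiliary objects $\mathcal{T}^{(0)}(\cdots\|z,q|)$, into the present notation. The dictionary is provided by the two relations recorded just before the proposition, namely $\omega_{0,m+1}(u_1,\dots,u_m,z)=d_{u_1}\cdots d_{u_m}\varpi_{0,m+1}(u_1,\dots,u_m;z)$ and $\varpi_{0,m+1}(I;z)=\lambda^{1-m}\mathcal{W}^{(0)}_{m+1}(I,z)\,dR(z)$, supplemented by $\mathcal{T}^{(0)}(\emptyset\|z,w|)=\mathcal{G}^{(0)}(z,w)$.

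First I would dispose of the lowest-degree statement. Reading off the residue of $\mathcal{G}^{(0)}(z,w)$ in (\ref{Gzw-final}) at the diagonal $z+w=0$ and using the defining relation between $\varpi_{0,2}$ and $\mathcal{G}^{(0)}$ yields $\varpi_{0,2}(u;z)=-\frac{dz}{u-z}-\frac{dz}{u+z}$; applying $d_u$ then recovers $\omega_{0,2}$ exactly as in (\ref{om02}) for $\iota z=-z$.

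For $m\geq 2$ I would substitute the dictionary term by term into (E.4) and (E.5). After factoring out the overall $\lambda^{1-m}\,dR(z)$ the left-hand side becomes $\varpi_{0,|I|+1}(I;z)$. On the right the raw source terms split into two families: the quadratic-in-$\varpi$ terms, whose poles sit at the reflected points $q=-u_k$ and at the ramification points $\beta_i$ of $R$, reorganise into the single residue $\Res_{q\to -u_{1,\dots,m},\,q\to\beta_{1,\dots,2d}}$ of (\ref{om0-fraku}) acting on $\sum_{I_1\uplus I_2=I}\varpi_{0,|I_1|+1}(I_1;q)\,\mathfrak{v}_{0,|I_2|}(I_2\|q)$, while the remaining terms furnish the explicit $\frac{dz}{z+u_k}$ contributions. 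The auxiliary function $\mathfrak{v}_{0,|I|}(I\|q)$ of (\ref{frakuIq}) is then precisely the repackaging of the preimage sums $\sum_{j=1}^d$ over $\hat{q}^j$ together with the $\tilde{\mathfrak{t}}$-terms, and $\tilde{\mathfrak{t}}_{0,|I|}(I\|z,q|)$ of (\ref{uIzq}) is the direct transcription of $\mathcal{T}^{(0)}$ under the same substitution.

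The main obstacle will be the bookkeeping rather than any conceptual difficulty: one must verify that every power of $\lambda$ and every factor of $dR$ balances consistently throughout the recursion, and that the scattered pole contributions of (E.4)/(E.5) collapse exactly into the combined residue notation and into the closed forms of $\mathfrak{v}$ and $\tilde{\mathfrak{t}}$. The most delicate point is matching the two distinct denominators $R(-q)-R(-\hat{q}^j)$ in the preimage sum and $(R(u_k)-R(-q))(R(-u_k)-R(q))$ in the marked-point term of (\ref{frakuIq}) against the corresponding kernels in the original loop equations, since these encode the separate roles of the preimage summation and of the reflected marked points.
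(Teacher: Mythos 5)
Your proposal matches the paper's treatment: the proposition is presented there without an independent proof, precisely as a notational translation of the cited loop equations via the dictionary $\varpi_{0,m+1}(I;z)=\lambda^{1-m}\mathcal{W}^{(0)}_{m+1}(I,z)\,dR(z)$ together with $\tilde{\mathfrak{t}}_{0,|I|}(I\|z,w|)=\mathcal{U}^{(0)}(I\|z,w|)/(\lambda^{|I|}\mathcal{G}^{(0)}(z,w))$ and $\mathfrak{v}_{0,|I|}(I\|z)=-\lambda^{1-|I|}\mathfrak{U}^{(0)}(I\|z)$. The only imprecision is that the auxiliary function is obtained from $\mathcal{U}^{(0)}$ \emph{divided by} $\lambda^{|I|}\mathcal{G}^{(0)}(z,w)$ rather than being a direct transcription of $\mathcal{T}^{(0)}$ — this normalisation is what makes $\tilde{\mathfrak{t}}_{0,0}(\emptyset\|z,q|)=1$ — but otherwise your bookkeeping plan is exactly what the paper does.
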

\noindent
Note that conditions (a),(c),(d) of Theorem~\ref{thm:flip} are 
automatically satisfied by (\ref{om0-fraku}).
Compared with \cite{Branahl:2020yru} we have set
  $\tilde{\mathfrak{t}}_{0,|I|}(I\|z,w|)=
\frac{\mathcal{U}^{(0)}(I\|z,w|)}{\lambda^{|I|} \mathcal{G}^{(0)}(z,w)}
$ and $\mathfrak{v}_{0,|I|}(I\|z)=-\lambda^{1-|I|}
\mathfrak{U}^{(0)}(I\|z)$.

The function $\tilde{\mathfrak{t}}_{0,|I|}(I\|z,q|)$ is regular at
every $z=-\hat{q}^j$. To see this we write in the last line of
(\ref{uIzq}) the denominator as $R(q)-R(-z)=R(-(-\hat{q}^j))-R(-z)$ and
insert the Taylor expansion (\ref{nabla-om-taylor}) (for $\varpi$) and
the usual Taylor expansion of
$\tilde{\mathfrak{t}}_{0,|I|}(I_2\|z,q|)$:
\begin{align*}
&
\frac{\varpi_{0,|I_1|+1}(I_1;z)
  \tilde{\mathfrak{t}}_{0,|I_2|}(I_2\|z,q|)}{dR(z)(R(q)-R(-z))}
\\
&=
\sum_{n,p=0}^\infty \frac{(-1)^n}{p!}
(R(z)-R(-\hat{q}^j))^{n+p-1} \nabla^n\varpi_{0,|I_1|+1}(I_1;-\hat{q}^j)
\frac{\partial^p \tilde{\mathfrak{t}}_{0,|I_2|}(I_2\|z,q|)}{\partial (R(z))^p}
\Big|_{z=-\hat{q}^j}\;.
\end{align*}
Inserted back into 
(\ref{uIzq}), the case $p=n=0$ cancels the term $l=j$ of the first line of 
(\ref{uIzq}) when taking $\nabla^0\varpi_{0,|I_1|+1}(I_1;-\hat{q}^j)
=\frac{\varpi_{0,|I_1|+1}(I_1;-\hat{q}^j)}{-dR(\hat{q}^j)}$ into account.
Hence, all partial derivatives of $\tilde{\mathfrak{t}}_{0,|I|}(I\|z,q|)$
are regular at $z=-\hat{q}^j$:
\begin{align}
&  \frac{(-1)^n}{n!}
\frac{\partial^n \tilde{\mathfrak{t}}_{0,|I|}(I\|z,q|)}{
  \partial (R(z))^n} \Big|_{z=-\hat{q}^j}
\label{uIzq-d}
  \\
  &=
-\sum_{\substack{I_1\uplus I_2=I\\
\text{possibly }    I_2= \emptyset}}
\sum_{\substack{l=1\\ l\neq j}}^d \frac{\varpi_{0,|I_1|+1}(I_1;{-}\hat{q}^l)
  \tilde{\mathfrak{t}}_{0,|I_2|}(I_2\|{-}\hat{q}^l,q|)}{dR(\hat{q}^l)
  (R(-\hat{q}^j)-R(-\hat{q}^l))^{n+1}}
\nonumber
\\
&+\sum_{k=1}^m \frac{\tilde{\mathfrak{t}}_{0,|I|-1}
  (I\backslash u_k\|u_k,q|)}{(R(-\hat{q}^j)-R(u_k))^{n+1}(R(q)-R(-u_k))}
\nonumber
\\
&+
\sum_{\substack{I_1\uplus I_2=I \\\text{possibly }    I_2= \emptyset}}
\sum_{p=0}^{n+1} 
\nabla^{n-p+1}\varpi_{0,|I_1|+1}(I_1;-\hat{q}^j)
\frac{(-1)^{p}}{p!}
\frac{\partial^p \tilde{\mathfrak{t}}_{0,|I_2|}(I_2\|z,q|)}{
  \partial (R(z))^p} \Big|_{z=-\hat{q}^j}\;.
\nonumber
\end{align}
Formulae (\ref{uIzq}) for $z\mapsto u_k$ and (\ref{uIzq-d}) provide a
system of equations whose resolution provides
$\tilde{\mathfrak{t}}_{0,|I|}(I\|{-}\hat{q}^l,q|)$ and
$\tilde{\mathfrak{t}}_{0,|I|-1}(I\backslash u_k\|u_k,q|)$ as
polynomials in $\nabla\varpi$ with coefficients in rational functions
of $R$.  Inserted into (\ref{om0-fraku}) we recursively express
$\varpi_{0,|I|+1}(I;z)$ in terms of $\nabla^n \varpi_{0,|I'|+1}(I';z)$
for $|I'|< |I|$. We find it convenient to develop a graphical
description for this resolution. With these tools we can establish:
\begin{theorem}
  \label{thm:sol-dse}
  Starting from $\varpi_{0,2}(u;z)=-\frac{dz}{(u-z)}-\frac{dz}{(u+z)}$,
  the system of equations  \eqref{om0-fraku}, \eqref{frakuIq},
  \eqref{uIzq} for $z\mapsto u_k$ and \eqref{uIzq-d} has the solution
\begin{align}
\varpi_{0,|I|+1}(I;z)
 &= \sum_{i=1}^r
\Res\displaylimits_{q\to \beta_i}K_i(z,q)
  \sum_{I_1\uplus I_2=I} \varpi_{0,|I_1|+1}(I_1;q)\varpi_{0,|I_2|+1}(I_2;\sigma_i(q))
\nonumber  \\
  &-\sum_{k=1}^m 
\Res\displaylimits_{q\to - u_k}
\sum_{I_1\uplus I_2=I}
\tilde{K}(z,q,u_k)
\varpi_{0,|I_1|+1}(I_1;q)
\varpi_{0,|I_2|+1}(I_2;q)\big)\;, 
\label{sol:omegaR}
\\
\text{where} \qquad 
K_i(z,q)&:=   \frac{\frac{1}{2} (\frac{dz}{z-q}-\frac{dz}{z-\sigma_i(q)})
}{dR(\sigma_i(q))(R(-\sigma_i(q))-R(-q))}\;,\qquad
\nonumber
\\
\tilde{K}(z,q,u)&:=
\frac{\frac{1}{2}\big(\frac{dz}{z-q}-\frac{dz}{z+ u}\big)}{
  dR(q)(R(u)-R(-q))}\;.
\nonumber
\end{align}
Hence, $\omega_{0,m+1}(u_1,...,u_m,z)=
d_{u_1}\cdots d_{u_m}\varpi_{0,m+1}(u_1,...,u_m;z)$
coincides with the solution of equation~\eqref{eq:flip-om}
for $x(z)=R(z)$ and $\iota (z)=-z$
given in Theorem~\ref{thm:flip}.
\end{theorem}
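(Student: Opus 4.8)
The plan is to prove Theorem~\ref{thm:sol-dse} by strong induction on $m=|I|$, eliminating the auxiliary functions $\mathfrak{v}_{0,|I|}$ and $\tilde{\mathfrak{t}}_{0,|I|}$ in favour of the lower $\varpi_{0,|I'|+1}$ with $|I'|<m$, and then identifying the resulting closed recursion for $\varpi_{0,m+1}$ with the specialisation of \eqref{eq:flip-om} to $x=R$, $\iota z=-z$, $y=-R(-\,\cdot\,)$. The base case is the stated $\varpi_{0,2}(u;z)=-\frac{dz}{u-z}-\frac{dz}{u+z}$, whose exterior derivative $d_u\varpi_{0,2}$ reproduces \eqref{om02} for $\iota z=-z$. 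For the inductive step I would first treat \eqref{uIzq} evaluated at $z\mapsto u_k$ together with the regularity relations \eqref{uIzq-d} as a finite linear system: the top-length values $\tilde{\mathfrak{t}}_{0,|I|}(I\|{-}\hat{q}^l,q|)$ and $\tilde{\mathfrak{t}}_{0,|I|-1}(I\setminus u_k\|u_k,q|)$ appear linearly, can be moved to one side, and (as asserted below \eqref{uIzq-d}) the resolution expresses each of them, via the Taylor expansion \eqref{nabla-om-taylor}, as a polynomial in the $\nabla^n\varpi_{0,|I'|+1}$ with coefficients rational in $R$. Substituting into \eqref{frakuIq} fixes $\mathfrak{v}_{0,|I|}(I\|q)$, and substituting that into \eqref{om0-fraku} gives a recursion for $\varpi_{0,m+1}(I;z)$ purely in terms of the $\nabla^n\varpi_{0,|I'|+1}$, $|I'|<m$.

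The second step is to evaluate the two groups of residues in \eqref{om0-fraku} separately. Near a ramification point $\beta_i$ the preimages organise into the local Galois pair $q,\sigma_i(q)$: using $y(q)=-R(-q)$ together with the preimage-sum collapse already exploited in the proof of Theorem~\ref{thm:eynard}, the $\nabla$-polynomial carried by $\mathfrak{v}$ near $\beta_i$ resums into a single quadratic term, and the residue reproduces the first line of \eqref{sol:omegaR} with the topological-recursion kernel $K_i$. This is the exact analogue in the Kontsevich model of the passage from Lemma~\ref{lem:om-onenabla} through Proposition~\ref{prop:tr-kernel}: collapsing arbitrarily many $\nabla^{n_j}\varpi$ factors to a single split $I_1\uplus I_2$ is governed by the same combinatorial resummation and relies on the identities of the appendix. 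Near $q=-u_k$ the explicit $\frac{1}{z+u_k}$ term in \eqref{om0-fraku} together with the corresponding $-u_k$ pole of $\mathfrak{v}$ reproduces the second line of \eqref{sol:omegaR} with kernel $\tilde{K}$, paralleling Proposition~\ref{prop:om-hol2}.

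Once the recursion for $\varpi_{0,m+1}$ is in the form \eqref{sol:omegaR}, the coincidence with Theorem~\ref{thm:flip} is immediate by inspection of the kernels \eqref{eq:kernel}: with $x=R$, $\iota z=-z$ one has $y(q)-y(\sigma_i(q))=R(-\sigma_i(q))-R(-q)$ and $dx(\sigma_i(q))=dR(\sigma_i(q))$, so the $K_i$ of \eqref{eq:kernel} becomes the $K_i$ of \eqref{sol:omegaR}; likewise $d(\iota z)=-dz$ turns $\frac{1}{2}(\frac{d(\iota z)}{\iota z-\iota q}-\frac{d(\iota z)}{\iota z-u})$ into $\frac{1}{2}(\frac{dz}{z-q}-\frac{dz}{z+u})$ and $y(q)-y(\iota u)=R(u)-R(-q)$, giving the $\tilde{K}$ of \eqref{sol:omegaR}. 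Applying $d_{u_1}\cdots d_{u_m}$ to \eqref{sol:omegaR} and using $\omega_{0,m+1}=d_{u_1}\cdots d_{u_m}\varpi_{0,m+1}$ reinstates the $d_{u_k}$ and $d_{u_k}^{-1}$ of \eqref{sol:omega}, the $d_{u_k}/d_{u_k}^{-1}$ pair accounting for the distinguished variable $u_k$. Since \eqref{om0-fraku} automatically enforces conditions (a),(c),(d) of Theorem~\ref{thm:flip} and the pole structure just derived supplies (b), this identifies $\omega_{0,m+1}$ with \eqref{sol:omega}$+$\eqref{eq:kernel}; alternatively one may phrase the same computation as checking that $\omega_{0,m+1}$ solves \eqref{eq:flip-om} and then invoke the uniqueness in Theorem~\ref{thm:flip} directly.

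I expect the middle step to be the main obstacle. Eliminating $\tilde{\mathfrak{t}}$ and $\mathfrak{v}$ produces a sum over all ordered partitions weighted by products of $\nabla^{n_j}\varpi$, and the hard part is to resum this into the quadratic kernel form of \eqref{sol:omegaR}, exactly as the passage from Lemma~\ref{lem:omega-nabla} to Proposition~\ref{prop:tr-kernel} required Pascal-triangle, hockey-stick and multinomial identities. Keeping track of this bookkeeping is what makes a graphical encoding of the partitions indispensable, and it is precisely the point at which the combinatorial identities proved in the appendix are used.
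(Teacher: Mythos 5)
Your overall strategy (induction on $|I|$, elimination of $\tilde{\mathfrak{t}}$ and $\mathfrak{v}$ in favour of $\nabla^n\varpi_{0,|I'|+1}$ with $|I'|<|I|$, separate evaluation of the polar parts at $q=\beta_i$ and $q=-u_k$, and identification of the kernels under $x=R$, $\iota z=-z$) matches the paper's route, and your kernel identification in the last step is correct. But there is a genuine gap at $q=-u_k$. The recursion \eqref{om0-fraku} a priori produces a \emph{first-order} pole of $\varpi_{0,|I|+1}(I;z)$ at $z=-u_k$ with residue given by \eqref{ResWnegu}: the explicit term $\sum_k\frac{\mathfrak{v}_{0,|I|-1}(I\setminus u_k\|u_k)\,dz}{z+u_k}$ contributes directly, and the residue of the first line of \eqref{om0-fraku} at $q=-u_k$ contributes as well. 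The claimed solution \eqref{sol:omegaR}, by contrast, has \emph{no residue} at $z=-u_k$ — the $\tilde K$-kernel produces only poles of second and higher order (cf.\ Lemma~\ref{lem:om-hol1}). So "reproducing the second line of \eqref{sol:omegaR}" only accounts for the higher-order part of the pole captured by the projection $\mathcal{H}^k_z$; one must separately prove that the first-order residue vanishes, i.e.\ Assumption~\ref{assump-res0}. Without this, the Liouville-type identification of $\varpi_{0,|I|+1}$ with the sum of its polar parts fails, and the theorem is not proved.

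Proving that residue zero is the hardest part of the paper and occupies all of Section~\ref{sec:assumption}: the residue is rewritten as the expression \eqref{reflect-U}, which is shown to be an entire function on $\hat{\mathbb{C}}$ vanishing at $\infty$. Establishing regularity at $q=\pm\beta_i$ and $q=\pm u_k$ requires the antisymmetry of \eqref{reflect-U} under $q\mapsto -q$, which reduces to the logarithm identity of Proposition~\ref{prop:log-Delta-om} (itself a nontrivial combinatorial computation with $C^1_{n_1\dots n_s}=(s-2)!\,\#(n_j=0)$ and $C_{n_1\dots n_s}=(s-1)!$) together with the hidden symmetry \eqref{log-nabla-sym} proved in Proposition~\ref{prop:nabla-sym} via Corollary~\ref{coro-axby} of the appendix, plus Lemma~\ref{lem:v2-prodnabla} for the poles at $q=-u_k$. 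Your guess that the appendix identities enter mainly in resumming the $\nabla$-products into the quadratic kernel is therefore only partly right: the residue-vanishing argument is their main consumer, and it is entirely absent from your plan. A secondary, more minor omission: organising the eliminated system requires not just "a graphical encoding" but the specific cancellation mechanism of Lemma~\ref{lemma-cancel-noCatalan} (removal of all tipped edges and of repeated v1-labels via the Catalan-tuple splitting), and the identification of a v1-group's weight with $-\varpi_{0,|I|+1}(I;\hat q^j)$ in Proposition~\ref{prop:weight-v1}, which is precisely where the induction hypothesis is invoked.
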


\subsection{Graphical description}

We introduce in Table~\ref{tab1} weighted functions, vertices and
edges. These are connected to chains which provide a graphical
description for the terms
$\frac{(-1)^p\partial^p}{\partial (R(z))^p}
\tilde{\mathfrak{t}}_{0,|I|}(I\|z,q|)\big|_{z={-}\hat{q}^j}$ and
$\tilde{\mathfrak{t}}_{0,|I|}(I\|u_k,q|)$ and its constituents.
\begin{table}[p]
\begin{tabular}{@{}|p{0.5cm}|c|p{6cm}|p{4cm}|}\hline
\# &   function & weight & remark 
  \\ \hline
  f1 & \begin{tikzpicture}
    \useasboundingbox (0,-.1) rectangle (2,.5);
    \draw[thick, densely dotted] (0,0) -- (1,0);
    \draw[gray,fill=gray!20] (1.2,0) circle (2ex) ;
    \node (l1) at (0,.3) {$j$};
    \node (r) at (1.2,0) {$I$};
  \end{tikzpicture}
  & $\tilde{\mathfrak{t}}_{0,|I|}(I\|{-}\hat{q}^j,q)$ 
  & $|I|\geq 0$ \par equals $1$ for $I=\emptyset$  
  \\
 & \begin{tikzpicture}
    \useasboundingbox (0,-.1) rectangle (2,.5);
    \draw[thick, dashed] (0,0) -- (1,0);
    \draw[gray,fill=gray!20] (1.2,0) circle (2ex) ;
    \node (l1) at (0,.3) {$j$};
    \node (r) at (1.2,0) {$I$};
  \end{tikzpicture}
  & $\tilde{\mathfrak{t}}_{0,|I|}(I\|{-}\hat{q}^j,q)$ 
                         & $I\neq \emptyset$ \par case $p=0$ of f2
\\  \hline
  f2$^p$ & \begin{tikzpicture}
    \useasboundingbox (0,-.1) rectangle (2,.6);
    \draw[thick, dashed] (0,0) -- (0.8,0);
    \draw[gray,fill=gray!20] (1.2,0) circle (1.8ex) ;
    \centerarc[](1.2,0)(30:330:2.6ex) ;
    \node (l1) at (0,.3) {$j$};
    \node (r) at (1.2,0) {$I$};
    \node (f) at (1.7,0) {\footnotesize$p$};
\end{tikzpicture}
                & $\dfrac{(-1)^p \partial^p}{\partial (R(z))^p}
                  \tilde{\mathfrak{t}}_{0,|I|}(I\|z,q)\big|_{z=-\hat{q}^j}$
  & $I\neq \emptyset$
  \\[2ex] \hline
  f3 & \begin{tikzpicture}
    \useasboundingbox (0,-.1) rectangle (2,.5);
    \draw[thick, densely dotted] (0,0) -- (1,0);
    \draw[fill=gray!20] (1,-.35) rectangle (1.7,.35) ;
    \node (l1) at (0,-.3) {$u$};
    \node (r) at (1.3,0) {$I$};
  \end{tikzpicture}
  & $\tilde{\mathfrak{t}}_{0,|I|}(I\|u,q)$ 
  & $|I|\geq 0$ \par equals $1$ for $I=\emptyset$  
  \\
 & \begin{tikzpicture}
    \useasboundingbox (0,-.1) rectangle (2,.5);
    \draw[thick, dashed] (0,0) -- (1,0);
    \draw[fill=gray!20] (1,-.35) rectangle (1.7,.35) ;
    \node (l1) at (0,-.3) {$u$};
    \node (r) at (1.3,0) {$I$};
  \end{tikzpicture}
  & $\tilde{\mathfrak{t}}_{0,|I|}(I\|u,q)$ 
                         & $I\neq \emptyset$ 
\\[2ex]  \hline\hline
\# &   vertex & weight & remark 
  \\ \hline
  v0 & \begin{tikzpicture}[baseline]
    \draw (1,0) circle (4pt);
    \node (l1) at (1,.35) {$0$};
    \node (l2) at (1,-.35) {$I$};
  \end{tikzpicture} &  $-\varpi_{0,|I|+1}(I;q)$ &
initial vertex 
  \\ \hline
  v1 & \begin{tikzpicture}[baseline]
    \draw[fill=black] (1,0) circle (4pt);
    \node (l1) at (1,.35) {$j$};
    \node (l2) at (1,-.35) {$I$};
  \end{tikzpicture} &  $\varpi_{0,|I|+1}(I;-\hat{q}^j)$ &
                      follows edges e1$^p$,e2,e6
  \\ \hline
  v2 & \begin{tikzpicture}[baseline]
    \useasboundingbox (0,-.5) rectangle (.5,.5);
    \draw (0,-0.05) rectangle (.3,.25);
    \node (l2) at (0.05,-.2) {$u$};
  \end{tikzpicture} &  $\dfrac{1}{R(q)-R(-u)}$ & follows edges
                                                       e3$^p$,e4
  \\ \hline
  v3 & \begin{tikzpicture}
    \useasboundingbox (0,-.3) rectangle (.5,.4);
    \draw (0,0) node[star,draw]  {\begin{picture}(0,0)
        \put(-3,-2.4){\mbox{\small$u$}}\end{picture}} ;
    \node (l2) at (0,-.45) {$I$};
  \end{tikzpicture} &
                      $\dfrac{\varpi_{0,|I|+1}(I;u)}{dR(u) (R(-u)-R(q))}$ &
                   follows edges e5 \rule[-5mm]{0mm}{4mm}
  \\ \hline  \hline
\# &   edge  & weight & remark 
  \\ \hline
e1$^p$ &  \begin{tikzpicture}[baseline]
    \useasboundingbox (0,-.7) rectangle (1,1);
    \draw[-{To[length=10]}] (0,0) --  (1,0);
    \node (n) at (0.5,.5) {\footnotesize$p$};
    \node (l1) at (0,.3) {$j$};
    \node (l2) at (1,.3) {$l$};
  \end{tikzpicture}&
                     $\displaystyle 
                     \frac{1}{(R(-\hat{q}^j)-R(-\hat{q}^l))^{p+1}
                     (-dR(\hat{q}^l))}$
                         &
                           \parbox[c]{4cm}{follows vertices v0,v1 \par
                           requires $l\neq j$ \par $\hat{q}^0{\equiv} q$,
                           no tip for $p=0$}
  \\ \hline
e2 &  \begin{tikzpicture}
    \useasboundingbox (0,-.1) rectangle (1,.6);
    \draw (0,0) --  (1,0);
    \node (l1) at (0,-.3) {$u$};
    \node (l2) at (1,.3) {$l$};
  \end{tikzpicture}&
   $\displaystyle 
  \frac{1}{(R(u)-R(-\hat{q}^l))(-dR(\hat{q}^l))}$   &
  follows vertices v2,v3 \rule[-5mm]{0mm}{4mm}
  \\ \hline
e3$^p$ &  \begin{tikzpicture}
    \useasboundingbox (0,.1) rectangle (1,.75);
    \draw[-{To[length=10]},decorate, decoration={snake}] (0,0) --  (1,0);
    \node (n) at (0.5,.5) {\footnotesize$p$};
    \node (l1) at (0,.3) {$j$};
    \node (l2) at (1,-.3) {$u$};
  \end{tikzpicture}&
                     $
  \!\!\dfrac{1}{(R({-}\hat{q}^j){-}R(u))^{p+1}}$   &
  follows vertices v0,v1\par no tip for $p=0$
  \\ \hline
e4 &  \begin{tikzpicture}
    \useasboundingbox (0,-.1) rectangle (1,.55);
    \draw[decorate, decoration={snake}] (0,0) --  (1,0);
    \node (l1) at (0,-.3) {$v$};
    \node (l2) at (1,-.3) {$u$};
  \end{tikzpicture}&
                     $
  \dfrac{1}{R(v)-R(u)}$   &
  follows vertices v2,v3 \par requires $u\neq v$
  \\ \hline
e5 &  \begin{tikzpicture}
    \useasboundingbox (0,-.2) rectangle (1,.25);
    \draw[decorate, decoration={zigzag}] (0,0) --  (1,0);
    \node (l1) at (0,-.3) {$u$};
    \node (l2) at (1,-.3) {$u$};
  \end{tikzpicture}&
                    $1$   &
  follows vertices v2,v3
  \\[1ex] \hline
e6$^n$ &  \begin{tikzpicture}
    \useasboundingbox (0,.2) rectangle (1,.65);
    \draw[->,cap=rect,double distance=3pt] (0,0) -- (1,0);
    \node (n) at (0.5,.5) {\footnotesize$n$};
    \node (l1) at (0,.3) {$j$};
    \node (l2) at (1,.3) {$j$};
  \end{tikzpicture}&
   $\nabla^n$ &
  follows vertices v1\par applies to next vertex
  \\ \hline
\end{tabular}
\caption{Graphical rules for building blocks of chains\label{tab1}}
\end{table}
We agree that arrow tips with label $p=0$ are not shown. Also the
surrounding circle segment indicating the $n$-th derivative with
respect to $R(z)$ is not shown for $n=0$.

Equation~(\ref{uIzq-d}) has for $|I|\geq 1$ the following graphical
description (we keep the order of the last three lines of
(\ref{uIzq-d})):
\begin{align}
\begin{tikzpicture}
      \useasboundingbox (0,-.1) rectangle (2.1,.3);
      \draw[thick, dashed] (0,0) -- (0.8,0);
    \node (l) at (0,.35) {$j$};
    \draw[gray,fill=gray!20] (1.2,0) circle (1.8ex) ;
    \node (r) at (1.2,0) {$I$};
    \centerarc[](1.2,0)(30:330:2.67ex) ;
    \node (f) at (1.7,0) {\footnotesize$n$};
  \end{tikzpicture}
  &= \sum_{\substack{l=1 \\ l\neq j}}^d
  \sum_{\substack{I_1\uplus I_2= I\\ \text{possibly } I_2=\emptyset}}
\begin{tikzpicture}
     \useasboundingbox (-.2,-.1) rectangle (3,.3);
     \draw[-{To[length=10]}] (0,0) -- (0.82,0);
    \node (n) at (.5,.5) {\footnotesize$n$};
    \draw[thick,densely dotted] (1,0) -- (2,0);
    \node (l) at (0,.35) {$j$};
    \draw[fill=black] (1,0) circle (4pt) ;
    \draw[gray,fill=gray!20] (2.2,0) circle (2ex) ;
    \node (m1) at (1,-.35) {$I_1$};
    \node (m2) at (1,.35) {$l$};
    \node (r) at (2.2,0) {$I_2$};
  \end{tikzpicture}
\nonumber
\\*[0.2ex]
&+\sum_{k=1}^{|I|} \begin{tikzpicture}
    \useasboundingbox (-0.5,-.1) rectangle (3,.3);
    \draw[-{To[length=10]},decorate, decoration={snake}] (0,0) -- (0.95,0);
    \node (n) at (.5,.5) {\footnotesize$n$};
    \draw[thick,densely dotted] (1.3,0) -- (2,0);
    \node (l) at (0,.35) {$j$};
    \draw (.98,-.12) rectangle (1.22,.12) ;
    \draw[gray,fill=gray!20] (2,-.35) rectangle (2.7,.35) ;
    \node (m1) at (1.1,-.37) {$u_k$};
    \node (r) at (2.35,0) {\scriptsize$I{\setminus}u_k$};
  \end{tikzpicture}
\nonumber
\\*[0.2ex]
&+  \sum_{\substack{I_1\uplus I_2= I\\ \text{possibly } I_2=\emptyset}}
\sum_{s=0}^{n+1} \begin{tikzpicture}
  \useasboundingbox (-0.5,0) rectangle (3.8,.5);
    \draw[->,cap=rect,double distance=3pt] (0,0) -- (0.8,0);
    \node (l) at (0,.35) {$j$};
    \draw[fill=black] (1,0) circle (4pt) ;
    \draw[thick,densely dotted] (1.2,0) -- (1.7,0);
    \draw[gray,fill=gray!20] (2.2,0) circle (1.8ex) ;
    \centerarc[](2.2,0)(25:335:2.6ex) ;
    \node (f) at (3.15,0) {\footnotesize$n{+}1{-}s$};
    \node (m1) at (1,-.45) {$I_1$};
    \node (m2) at (1,.45) {$j$};
    \node (r) at (2.2,0) {$I_2$};
    \node (n) at (.5,.5) {\footnotesize$s$};
  \end{tikzpicture}\;.
\label{calUIqq}
\end{align}

Similarly, equation (\ref{uIzq}) is for $|I|\geq 1$ represented as (we
keep the order of lines)
\begin{align}
\begin{tikzpicture}
  \useasboundingbox (1,-.2) rectangle (3.2,.3);
    \draw[thick, dashed] (1.3,0) -- (2,0);
    \draw[fill=gray!20] (2,-.35) rectangle (2.7,.35) ;
    \node (m1) at (1.3,-.25) {$u$};
    \node (r) at (2.35,0) {$I$};
  \end{tikzpicture}
  &=
\sum_{\substack{I_1\uplus I_2=I\\
    \text{possibly }I_2= \emptyset}}\sum_{j=1}^d
\begin{tikzpicture}
    \useasboundingbox (-.5,-.1) rectangle (3,.7);
    \draw (0,0) -- (1,0);
    \draw[thick, densely dotted] (1,0) -- (2,0);
    \node (l) at (0,-.35) {$u$};
    \draw[gray,fill=black] (1,0) circle (4pt) ;
    \draw[gray,fill=gray!20] (2.2,0) circle (2ex) ;
    \node (m1) at (1,-.35) {$I_1$};
    \node (m2) at (1,.35) {$j$};
    \node (r) at (2.2,0) {$I_2$};
  \end{tikzpicture}
\nonumber
\\
&+\sum_{k=1}^{|I|} 
\begin{tikzpicture}
    \useasboundingbox (-0.5,-.1) rectangle (3.2,.4);
    \draw[decorate, decoration={snake}] (0,0) -- (0.95,0);
    \draw[thick, densely dotted] (1.3,0) -- (2,0);
    \node (l) at (0,-.35) {$u$};
    \draw (.98,-.12) rectangle (1.22,.12) ;
    \draw[gray,fill=gray!20] (2,-.35) rectangle (2.7,.35) ;
    \node (m1) at (1.1,-.37) {$u_k$};
    \node (r) at (2.35,0) {\scriptsize$I{\setminus}u_k$};
  \end{tikzpicture}
\nonumber
\\
&+ \sum_{\substack{ I_1\uplus I_2=I\\
    \text{possibly } I_2=\emptyset}}
\begin{tikzpicture}
    \useasboundingbox (-0.5,-.1) rectangle (2.8,.4);
    \draw[decorate, decoration={zigzag}] (0,0) -- (0.9,0);
    \draw[thick, densely dotted] (1.35,0) -- (2,0);
    \node (l) at (0,-.35) {$u$};
    \draw (1.1,0) node[star,draw]  {\begin{picture}(1,1)
        \put(-3.5,-2){\mbox{\small$u$}}\end{picture}} ;
    \draw[gray,fill=gray!20] (2,-.35) rectangle (2.7,.35) ;
    \node (m1) at (1.1,-.45) {$I_1$};
    \node (r) at (2.35,0) {$I_2$};
  \end{tikzpicture}\;.
\end{align}

The integrand of the first line of (\ref{om0-fraku}) is now
iteratively obtained by distinguishing in
$\tilde{\mathfrak{t}}_{0,|I|}(I\|z,q|)$ the cases $I=\emptyset$ from
$I\neq \emptyset$.  We describe this iteration graphically.  The
integrand of the first line of (\ref{om0-fraku}) is the sum of weights
of chains made of initial vertex v0, subsequent vertices v1, v2, v3
and edges in between.  A vertex v3 can follow v2 or another v3,
whereas v1,v2 can be placed anywhere. The edge to choose is governed
by the type of vertices at both ends.  One multiplies the weights
given in Table~\ref{tab1} and sums for each order of vertices over
partitions of $I$ into subsets $I_1,...,I_s,u_{k_1},....,u_{k_r}$ at
the vertices, over the v1-labels $j,l,\dots$ (from $1$ to $d$, but
excluding the preceding label) and over the possible exponents $n,p$
of the edges e1$^p$,e3$^p$ and e6$^n$. These exponents are not
arbitrary; we discuss later their pattern.  

\subsection{Examples}

We write the first iteration in full details:
\begin{align}
&
\sum_{I_1\uplus I_2=I}
\varpi_{0,|I_1|+1}(I_1;q) \mathfrak{v}_{0,|I_2|}(I_2\|q)
\label{omu-it1}
\\
&= \sum_{I_1\uplus I_2=I}\sum_{j=1}^d
\begin{tikzpicture}
    \useasboundingbox (-.3,-0.1) rectangle (1.3,.4);
  \draw (0,0) circle (4pt) ;
  \node (l1) at (0,0.35) {$0$};
  \node (l2) at (0,-0.4) {$I_1$};
    \draw (0.2,0) -- (1,0);
    \draw[fill=black] (1,0) circle (4pt) ;
    \node (m1) at (1,-.4) {$I_2$};
    \node (m2) at (1,.35) {$j$};
 \end{tikzpicture} 
+\sum_{k=1}^{|I|} \sum_{I_1\uplus u_k=I} \begin{tikzpicture}
    \useasboundingbox (-.3,-0.1) rectangle (1.5,.4);
    \draw (0,0) circle (4pt) ;
    \node (l1) at (0,0.35) {$0$};
    \node (l2) at (0,-0.4) {$I_1$};
    \draw[decorate, decoration={snake}] (0.2,0) -- (0.95,0);
    \draw (.98,-.12) rectangle (1.22,.12) ;
    \node (m1) at (1.1,-.37) {$u_k$};
  \end{tikzpicture}
  \nonumber
  \\
  &+\sum_{I_1\uplus I_2\uplus I_3=I}\sum_{j=1}^d
  \begin{tikzpicture}
    \useasboundingbox (-.3,-0.1) rectangle (2.8,.4);
  \draw (0,0) circle (4pt) ;
  \node (l1) at (0,0.35) {$0$};
  \node (l2) at (0,-0.4) {$I_1$};
    \draw (0.2,0) -- (1,0);
    \draw[thick, dashed] (1,0) -- (2,0);
    \draw[fill=black] (1,0) circle (4pt) ;
    \draw[gray,fill=gray!20] (2.2,0) circle (2ex) ;
    \node (m1) at (1,-.4) {$I_2$};
    \node (m2) at (1,.35) {$j$};
    \node (r) at (2.2,0) {$I_3$};
 \end{tikzpicture}
+\sum_{k=1}^{|I|} \sum_{I_1\uplus u_k\uplus I_2=I} \begin{tikzpicture}
    \useasboundingbox (-.3,-0.1) rectangle (2.8,.4);
    \draw (0,0) circle (4pt) ;
    \node (l1) at (0,0.35) {$0$};
    \node (l2) at (0,-0.4) {$I_1$};
    \draw[decorate, decoration={snake}] (0.2,0) -- (0.95,0);
    \draw[thick, dashed] (1.3,0) -- (2,0);
    \draw (.98,-.12) rectangle (1.22,.12) ;
    \draw[fill=gray!20] (2,-.35) rectangle (2.7,.35) ;
    \node (m1) at (1.1,-.37) {$u_k$};
    \node (r) at (2.35,0) {$I_2$};
  \end{tikzpicture}\;.
  \nonumber
\end{align}
The necessary sum over partitions of $I$ and over ranges of labels $j$
are obvious from the vertex labels.
We therefore employ from now on a simplified notation were these summations 
are omitted. This means that instead of (\ref{omu-it1}) we simply write
\begin{align*}
&
\sum_{I_1\uplus I_2=I}
\varpi_{0,|I_1|+1}(I_1;q) \mathfrak{v}_{0,|I_2|}(I_2\|q)
\\
&= 
\begin{tikzpicture}
    \useasboundingbox (-.3,-0.1) rectangle (1.3,.4);
  \draw (0,0) circle (4pt) ;
  \node (l1) at (0,0.35) {$0$};
  \node (l2) at (0,-0.4) {$I_1$};
    \draw (0.2,0) -- (1,0);
    \draw[fill=black] (1,0) circle (4pt) ;
    \node (m1) at (1,-.4) {$I_2$};
    \node (m2) at (1,.35) {$j$};
 \end{tikzpicture} 
+\begin{tikzpicture}
    \useasboundingbox (-.3,-0.1) rectangle (1.5,.4);
    \draw (0,0) circle (4pt) ;
    \node (l1) at (0,0.35) {$0$};
    \node (l2) at (0,-0.4) {$I_1$};
    \draw[decorate, decoration={snake}] (0.2,0) -- (0.95,0);
    \draw (.98,-.12) rectangle (1.22,.12) ;
    \node (m1) at (1.1,-.37) {$u_k$};
  \end{tikzpicture}
+
  \begin{tikzpicture}
    \useasboundingbox (-.3,-0.1) rectangle (2.8,.4);
  \draw (0,0) circle (4pt) ;
  \node (l1) at (0,0.35) {$0$};
  \node (l2) at (0,-0.4) {$I_1$};
    \draw (0.2,0) -- (1,0);
    \draw[thick, dashed] (1,0) -- (2,0);
    \draw[fill=black] (1,0) circle (4pt) ;
    \draw[gray,fill=gray!20] (2.2,0) circle (2ex) ;
    \node (m1) at (1,-.4) {$I_2$};
    \node (m2) at (1,.35) {$j$};
    \node (r) at (2.2,0) {$I_3$};
 \end{tikzpicture}
 +
 \begin{tikzpicture}
    \useasboundingbox (-.3,-0.1) rectangle (2.8,.4);
    \draw (0,0) circle (4pt) ;
    \node (l1) at (0,0.35) {$0$};
    \node (l2) at (0,-0.4) {$I_1$};
    \draw[decorate, decoration={snake}] (0.2,0) -- (0.95,0);
    \draw[thick, dashed] (1.3,0) -- (2,0);
    \draw (.98,-.12) rectangle (1.22,.12) ;
    \draw[fill=gray!20] (2,-.35) rectangle (2.7,.35) ;
    \node (m1) at (1.1,-.37) {$u_k$};
    \node (r) at (2.35,0) {$I_2$};
  \end{tikzpicture}\;.\rule[-4mm]{0pt}{4mm}
\end{align*}
For $|I|=2$ only the first two chains contribute. 
The next iteration reads in simplified notation
\begin{align*}
&
\sum_{I_1\uplus I_2=I}
\varpi_{0,|I_1|+1}(I_1;q) \mathfrak{v}_{0,|I_2|}(I_2\|q)
\\
&= 
\begin{tikzpicture}
    \useasboundingbox (-.3,-0.1) rectangle (1.3,.4);
  \draw (0,0) circle (4pt) ;
  \node (l1) at (0,0.35) {$0$};
  \node (l2) at (0,-0.4) {$I_1$};
    \draw (0.2,0) -- (1,0);
    \draw[fill=black] (1,0) circle (4pt) ;
    \node (m1) at (1,-.4) {$I_2$};
    \node (m2) at (1,.35) {$j$};
 \end{tikzpicture} 
 + \begin{tikzpicture}
    \useasboundingbox (-.3,-0.1) rectangle (1.5,.4);
    \draw (0,0) circle (4pt) ;
    \node (l1) at (0,0.35) {$0$};
    \node (l2) at (0,-0.4) {$I_1$};
    \draw[decorate, decoration={snake}] (0.2,0) -- (0.95,0);
    \draw (.98,-.12) rectangle (1.22,.12) ;
    \node (m1) at (1.1,-.37) {$u_k$};
  \end{tikzpicture}
  \\[2ex]
  &+\begin{tikzpicture}
    \useasboundingbox (-.3,-0.1) rectangle (2.4,.4);
  \draw (0,0) circle (4pt) ;
  \node (l1) at (0,0.35) {$0$};
  \node (l2) at (0,-0.4) {$I_1$};
  \draw[fill=black] (1,0) circle (4pt) ;
      \draw (0.2,0) -- (1,0);
    \node (m1) at (1,-.4) {$I_2$};
    \node (m2) at (1,.35) {$j_1$};
    \draw[fill=black] (2,0) circle (4pt) ;
    \draw (1.2,0) -- (2,0);
    \node (m21) at (2,-.4) {$I_3$};
    \node (m22) at (2,.35) {$j_2$};
  \end{tikzpicture}
+\begin{tikzpicture}
    \useasboundingbox (-.3,-0.1) rectangle (2.4,.4);
  \draw (0,0) circle (4pt) ;
  \node (l1) at (0,0.35) {$0$};
  \node (l2) at (0,-0.4) {$I_1$};
    \draw (0.2,0) -- (1,0);
    \draw[fill=black] (1,0) circle (4pt) ;
    \node (m1) at (1,-.4) {$I_2$};
    \node (m2) at (1,.35) {$j$};
    \draw[decorate, decoration={snake}] (1.2,0) -- (1.95,0);
    \draw (1.98,-.12) rectangle (2.22,.12) ;
    \node (m21) at (2.1,-.37) {$u_k$};
  \end{tikzpicture}
+\begin{tikzpicture}
    \useasboundingbox (-.3,-0.1) rectangle (3.8,.4);
  \draw (0,0) circle (4pt) ;
  \node (l1) at (0,0.35) {$0$};
  \node (l2) at (0,-0.4) {$I_1$};
  \draw[fill=black] (1,0) circle (4pt) ;
      \draw (0.2,0) -- (1,0);
    \node (m1) at (1,-.4) {$I_2$};
    \node (m2) at (1,.35) {$j$};
    \draw[fill=black] (2,0) circle (4pt) ;
    \draw[->,cap=rect,double distance=3pt]  (1.2,0) -- (1.8,0); 
    \node (m21) at (2,-.4) {$I_3$};
    \node (m22) at (2,.35) {$j$};
    \node (n) at (1.5,.55) {\footnotesize$1$};
  \end{tikzpicture}
\\[2ex]
&
+ \begin{tikzpicture}
    \useasboundingbox (-.3,-0.1) rectangle (2.4,.4);
    \draw (0,0) circle (4pt) ;
    \node (l1) at (0,0.35) {$0$};
    \node (l2) at (0,-0.4) {$I_1$};
    \draw[decorate, decoration={snake}] (0.2,0) -- (0.95,0);
    \draw (.98,-.12) rectangle (1.22,.12);
    \node (m1) at (1.1,-.37) {$u_k$};
    \draw[fill=black] (2,0) circle (4pt) ;
    \draw (1.25,0) -- (2,0);
    \node (m21) at (2,-.4) {$I_2$};
    \node (m22) at (2,.35) {$j$};
  \end{tikzpicture}
+ \begin{tikzpicture}
    \useasboundingbox (-.3,-0.1) rectangle (2.4,.4);
    \draw (0,0) circle (4pt) ;
    \node (l1) at (0,0.35) {$0$};
    \node (l2) at (0,-0.4) {$I_1$};
    \draw[decorate, decoration={snake}] (0.2,0) -- (0.95,0);
    \draw (.98,-.12) rectangle (1.22,.12) ;
    \node (m1) at (1.1,-.37) {$u_k$};
    \draw[decorate, decoration={snake}] (1.2,0) -- (1.95,0);
    \draw (1.98,-.12) rectangle (2.22,.12) ;
    \node (m1) at (2.1,-.37) {$u_l$};
  \end{tikzpicture}
  + \begin{tikzpicture}
    \useasboundingbox (-.3,-0.1) rectangle (2.4,.4);
    \draw (0,0) circle (4pt) ;
    \node (l1) at (0,0.35) {$0$};
    \node (l2) at (0,-0.4) {$I_1$};
    \draw[decorate, decoration={snake}] (0.2,0) -- (0.95,0);
    \draw (.98,-.12) rectangle (1.22,.12) ;
    \node (m1) at (1.1,-.37) {$u_k$};
    \draw[decorate, decoration={zigzag}] (1.2,0) -- (1.9,0);
    \draw (2.1,0) node[star,draw]  {\begin{picture}(1,1)
        \put(-5,-2){\mbox{\small$u_k$}}\end{picture}} ;
    \node (m1) at (2.1,-.45) {$I_2$};
  \end{tikzpicture}
 \\[3ex]
  &+\begin{tikzpicture}
    \useasboundingbox (-.3,-0.1) rectangle (3.8,.4);
  \draw (0,0) circle (4pt) ;
  \node (l1) at (0,0.35) {$0$};
  \node (l2) at (0,-0.4) {$I_1$};
  \draw[fill=black] (1,0) circle (4pt) ;
      \draw (0.2,0) -- (1,0);
    \node (m1) at (1,-.4) {$I_2$};
    \node (m2) at (1,.35) {$j_1$};
    \draw[fill=black] (2,0) circle (4pt) ;
    \draw (1.2,0) -- (2,0);
    \node (m21) at (2,-.4) {$I_3$};
    \node (m22) at (2,.35) {$j_2$};
    \draw[thick, dashed] (2,0) -- (3,0);
    \draw[fill=gray!20] (3.2,0) circle (2ex) ;
    \node (r) at (3.2,0) {$I_4$};
  \end{tikzpicture}
+\begin{tikzpicture}
    \useasboundingbox (-.3,-0.1) rectangle (3.8,.4);
  \draw (0,0) circle (4pt) ;
  \node (l1) at (0,0.35) {$0$};
  \node (l2) at (0,-0.4) {$I_1$};
    \draw (0.2,0) -- (1,0);
    \draw[fill=black] (1,0) circle (4pt) ;
    \node (m1) at (1,-.4) {$I_2$};
    \node (m2) at (1,.35) {$j$};
    \draw[decorate, decoration={snake}] (1.2,0) -- (1.95,0);
    \draw (1.98,-.12) rectangle (2.22,.12) ;
    \node (m21) at (2.1,-.37) {$u_k$};
    \draw[thick, dashed] (2.3,0) -- (3,0);
    \draw[fill=gray!20] (3,-.35) rectangle (3.7,.35) ;
    \node (r) at (3.35,0) {$I_3$};
  \end{tikzpicture}
+
  \begin{tikzpicture}
    \useasboundingbox (-.3,-0.1) rectangle (3.8,.4);
  \draw (0,0) circle (4pt) ;
  \node (l1) at (0,0.35) {$0$};
  \node (l2) at (0,-0.4) {$I_1$};
  \draw[fill=black] (1,0) circle (4pt) ;
      \draw (0.2,0) -- (1,0);
    \node (m1) at (1,-.4) {$I_2$};
    \node (m2) at (1,.35) {$j$};
    \draw[fill=black] (2,0) circle (4pt) ;
    \draw[->,cap=rect,double distance=3pt]  (1.2,0) -- (1.8,0); 
    \node (m21) at (2,-.4) {$I_3$};
    \node (m22) at (2,.35) {$j$};
    \draw[thick, dashed] (2,0) -- (3,0);
    \draw[fill=gray!20] (3.2,0) circle (2ex) ;
    \node (r) at (3.2,0) {$I_4$};
    \node (n) at (1.5,.55) {\footnotesize$1$};
  \end{tikzpicture}
 \\[2.5ex]
& +\begin{tikzpicture}
    \useasboundingbox (-.3,-0.1) rectangle (3.8,.4);
  \draw (0,0) circle (4pt) ;
  \node (l1) at (0,0.35) {$0$};
  \node (l2) at (0,-0.4) {$I_1$};
  \draw[fill=black] (1,0) circle (4pt) ;
      \draw (0.2,0) -- (1,0);
    \node (m1) at (1,-.4) {$I_2$};
    \node (m2) at (1,.35) {$j$};
    \draw[fill=black] (2,0) circle (4pt) ;
    \draw[cap=rect,double distance=3pt]  (1.2,0) -- (1.8,0); 
    \node (m21) at (2,-.4) {$I_3$};
    \node (m22) at (2,.35) {$j$};
    \draw[thick, dashed] (2,0) -- (2.8,0);
    \draw[fill=gray!20] (3.2,0) circle (1.8ex) ;
    \centerarc[](3.2,0)(30:330:2.5ex) ;
    \node (r) at (3.2,0) {$I_4$};
    \node (n) at (3.65,0) {\footnotesize$1$};    
  \end{tikzpicture}
\\[2ex]
&
+ \begin{tikzpicture}
    \useasboundingbox (-.3,-0.1) rectangle (3.8,.4);
    \draw (0,0) circle (4pt) ;
    \node (l1) at (0,0.35) {$0$};
    \node (l2) at (0,-0.4) {$I_1$};
    \draw[decorate, decoration={snake}] (0.2,0) -- (0.95,0);
    \draw (.98,-.12) rectangle (1.22,.12);
    \node (m1) at (1.1,-.37) {$u_k$};
    \draw[fill=black] (2,0) circle (4pt) ;
    \draw (1.25,0) -- (2,0);
    \node (m21) at (2,-.4) {$I_2$};
    \node (m22) at (2,.35) {$j$};
    \draw[thick, dashed] (2,0) -- (3,0);
    \draw[fill=gray!20] (3.2,0) circle (2ex) ;
    \node (r) at (3.2,0) {$I_3$};
  \end{tikzpicture}
+ \begin{tikzpicture}
    \useasboundingbox (-.3,-0.1) rectangle (3.8,.4);
    \draw (0,0) circle (4pt) ;
    \node (l1) at (0,0.35) {$0$};
    \node (l2) at (0,-0.4) {$I_1$};
    \draw[decorate, decoration={snake}] (0.2,0) -- (0.95,0);
    \draw (.98,-.12) rectangle (1.22,.12) ;
    \node (m1) at (1.1,-.37) {$u_k$};
    \draw[decorate, decoration={snake}] (1.2,0) -- (1.95,0);
    \draw (1.98,-.12) rectangle (2.22,.12) ;
    \node (m1) at (2.1,-.37) {$u_l$};
    \draw[thick, dashed] (2.3,0) -- (3,0);
    \draw[fill=gray!20] (3,-.35) rectangle (3.7,.35) ;
    \node (r) at (3.35,0) {$I_2$};
  \end{tikzpicture}
  + \begin{tikzpicture}
    \useasboundingbox (-.3,-0.1) rectangle (3.8,.4);
    \draw (0,0) circle (4pt) ;
    \node (l1) at (0,0.35) {$0$};
    \node (l2) at (0,-0.4) {$I_1$};
    \draw[decorate, decoration={snake}] (0.2,0) -- (0.95,0);
    \draw (.98,-.12) rectangle (1.22,.12) ;
    \node (m1) at (1.1,-.37) {$u_k$};
    \draw[decorate, decoration={zigzag}] (1.2,0) -- (1.9,0);
    \draw (2.1,0) node[star,draw]  {\begin{picture}(1,1)
        \put(-5,-2){\mbox{\small$u_k$}}\end{picture}} ;
    \node (m1) at (2.1,-.45) {$I_2$};
    \draw[thick, dashed] (2.35,0) -- (3,0);
    \draw[fill=gray!20] (2.9,-.35) rectangle (3.6,.35) ;
    \node (r) at (3.35,0) {$I_3$};
  \end{tikzpicture}\;. \rule[-4mm]{0pt}{4mm}
\end{align*}  
For $|I|=3$ only the first three lines of the rhs are relevant.
We give another iteration, but stop it at $|I|=4$:
\begin{align*}
&
\sum_{I_1\uplus I_2=I}
\varpi_{0,|I_1|+1}(I_1;q) \mathfrak{v}_{0,|I_2|}(I_2\|q)
\\
&= 
\begin{tikzpicture}
    \useasboundingbox (-.3,-0.1) rectangle (1.3,.4);
  \draw (0,0) circle (4pt) ;
  \node (l1) at (0,0.35) {$0$};
  \node (l2) at (0,-0.4) {$I_1$};
    \draw (0.2,0) -- (1,0);
    \draw[fill=black] (1,0) circle (4pt) ;
    \node (m1) at (1,-.4) {$I_2$};
    \node (m2) at (1,.35) {$j$};
 \end{tikzpicture} 
 + \begin{tikzpicture}
    \useasboundingbox (-.3,-0.1) rectangle (1.5,.4);
    \draw (0,0) circle (4pt) ;
    \node (l1) at (0,0.35) {$0$};
    \node (l2) at (0,-0.4) {$I_1$};
    \draw[decorate, decoration={snake}] (0.2,0) -- (0.95,0);
    \draw (.98,-.12) rectangle (1.22,.12) ;
    \node (m1) at (1.1,-.37) {$u_k$};
  \end{tikzpicture}
  \\[2ex]
  &+\begin{tikzpicture}
    \useasboundingbox (-.3,-0.1) rectangle (2.4,.4);
  \draw (0,0) circle (4pt) ;
  \node (l1) at (0,0.35) {$0$};
  \node (l2) at (0,-0.4) {$I_1$};
  \draw[fill=black] (1,0) circle (4pt) ;
      \draw (0.2,0) -- (1,0);
    \node (m1) at (1,-.4) {$I_2$};
    \node (m2) at (1,.35) {$j_1$};
    \draw[fill=black] (2,0) circle (4pt) ;
    \draw (1.2,0) -- (2,0);
    \node (m21) at (2,-.4) {$I_3$};
    \node (m22) at (2,.35) {$j_2$};
  \end{tikzpicture}
+\begin{tikzpicture}
    \useasboundingbox (-.3,-0.1) rectangle (2.4,.4);
  \draw (0,0) circle (4pt) ;
  \node (l1) at (0,0.35) {$0$};
  \node (l2) at (0,-0.4) {$I_1$};
    \draw (0.2,0) -- (1,0);
    \draw[fill=black] (1,0) circle (4pt) ;
    \node (m1) at (1,-.4) {$I_2$};
    \node (m2) at (1,.35) {$j$};
    \draw[decorate, decoration={snake}] (1.2,0) -- (1.95,0);
    \draw (1.98,-.12) rectangle (2.22,.12) ;
    \node (m21) at (2.1,-.37) {$u_k$};
  \end{tikzpicture}
+\begin{tikzpicture}
    \useasboundingbox (-.3,-0.1) rectangle (3.8,.4);
  \draw (0,0) circle (4pt) ;
  \node (l1) at (0,0.35) {$0$};
  \node (l2) at (0,-0.4) {$I_1$};
  \draw[fill=black] (1,0) circle (4pt) ;
      \draw (0.2,0) -- (1,0);
    \node (m1) at (1,-.4) {$I_2$};
    \node (m2) at (1,.35) {$j$};
    \draw[fill=black] (2,0) circle (4pt) ;
    \draw[->,cap=rect,double distance=3pt]  (1.2,0) -- (1.8,0); 
    \node (m21) at (2,-.4) {$I_3$};
    \node (m22) at (2,.35) {$j$};
    \node (n) at (1.5,.55) {\footnotesize$1$};
  \end{tikzpicture}
\\[2ex]
&
+ \begin{tikzpicture}
    \useasboundingbox (-.3,-0.1) rectangle (2.4,.4);
    \draw (0,0) circle (4pt) ;
    \node (l1) at (0,0.35) {$0$};
    \node (l2) at (0,-0.4) {$I_1$};
    \draw[decorate, decoration={snake}] (0.2,0) -- (0.95,0);
    \draw (.98,-.12) rectangle (1.22,.12);
    \node (m1) at (1.1,-.37) {$u_k$};
    \draw[fill=black] (2,0) circle (4pt) ;
    \draw (1.25,0) -- (2,0);
    \node (m21) at (2,-.4) {$I_2$};
    \node (m22) at (2,.35) {$j$};
  \end{tikzpicture}
+ \begin{tikzpicture}
    \useasboundingbox (-.3,-0.1) rectangle (2.4,.4);
    \draw (0,0) circle (4pt) ;
    \node (l1) at (0,0.35) {$0$};
    \node (l2) at (0,-0.4) {$I_1$};
    \draw[decorate, decoration={snake}] (0.2,0) -- (0.95,0);
    \draw (.98,-.12) rectangle (1.22,.12) ;
    \node (m1) at (1.1,-.37) {$u_k$};
    \draw[decorate, decoration={snake}] (1.2,0) -- (1.95,0);
    \draw (1.98,-.12) rectangle (2.22,.12) ;
    \node (m1) at (2.1,-.37) {$u_l$};
  \end{tikzpicture}
  + \begin{tikzpicture}
    \useasboundingbox (-.3,-0.1) rectangle (2.4,.4);
    \draw (0,0) circle (4pt) ;
    \node (l1) at (0,0.35) {$0$};
    \node (l2) at (0,-0.4) {$I_1$};
    \draw[decorate, decoration={snake}] (0.2,0) -- (0.95,0);
    \draw (.98,-.12) rectangle (1.22,.12) ;
    \node (m1) at (1.1,-.37) {$u_k$};
    \draw[decorate, decoration={zigzag}] (1.2,0) -- (1.9,0);
    \draw (2.1,0) node[star,draw]  {\begin{picture}(1,1)
        \put(-5,-2){\mbox{\small$u_k$}}\end{picture}} ;
    \node (m1) at (2.1,-.45) {$I_2$};
  \end{tikzpicture}
 \\[3ex]
  &+\begin{tikzpicture}
    \useasboundingbox (-.3,-0.1) rectangle (3.4,.4);
  \draw (0,0) circle (4pt) ;
  \node (l1) at (0,0.35) {$0$};
  \node (l2) at (0,-0.4) {$I_1$};
  \draw[fill=black] (1,0) circle (4pt) ;
      \draw (0.2,0) -- (1,0);
    \node (m1) at (1,-.4) {$I_2$};
    \node (m2) at (1,.35) {$j_1$};
    \draw[fill=black] (2,0) circle (4pt) ;
    \draw (1.2,0) -- (2,0);
    \node (m21) at (2,-.4) {$I_3$};
    \node (m22) at (2,.35) {$j_2$};
    \draw[fill=black] (3,0) circle (4pt) ;
    \draw (2.2,0) -- (3,0);
    \node (m31) at (3,-.4) {$I_4$};
    \node (m32) at (3,.35) {$j_3$};
  \end{tikzpicture}
  +\begin{tikzpicture}
    \useasboundingbox (-.3,-0.1) rectangle (3.4,.4);
  \draw (0,0) circle (4pt) ;
  \node (l1) at (0,0.35) {$0$};
  \node (l2) at (0,-0.4) {$I_1$};
  \draw[fill=black] (1,0) circle (4pt) ;
      \draw (0.2,0) -- (1,0);
    \node (m1) at (1,-.4) {$I_2$};
    \node (m2) at (1,.35) {$j_1$};
    \draw[fill=black] (2,0) circle (4pt) ;
    \draw (1.2,0) -- (2,0);
    \node (m21) at (2,-.4) {$I_3$};
    \node (m22) at (2,.35) {$j_2$};
    \draw  (2.9,-.12) rectangle (3.14,.12) ;
    \draw[decorate, decoration={snake}] (2.2,0) -- (2.9,0);
    \node (m31) at (3,-.4) {$u_k$};
  \end{tikzpicture}
  +\begin{tikzpicture}
    \useasboundingbox (-.3,-0.1) rectangle (3.4,.4);
  \draw (0,0) circle (4pt) ;
  \node (l1) at (0,0.35) {$0$};
  \node (l2) at (0,-0.4) {$I_1$};
  \draw[fill=black] (1,0) circle (4pt) ;
      \draw (0.2,0) -- (1,0);
    \node (m1) at (1,-.4) {$I_2$};
    \node (m2) at (1,.35) {$j_1$};
    \draw[fill=black] (2,0) circle (4pt) ;
    \draw (1.2,0) -- (2,0);
    \node (m21) at (2,-.4) {$I_3$};
    \node (m22) at (2,.35) {$j_2$};
    \draw[fill=black] (3,0) circle (4pt) ;
    \draw[->,cap=rect,double distance=3pt] (2.2,0) -- (2.8,0);
    \node (m31) at (3,-.4) {$I_4$};
    \node (m32) at (3,.35) {$j_2$};
    \node (n) at (2.5,.55) {\footnotesize$1$};
  \end{tikzpicture}
  \\[2ex]
  &+\begin{tikzpicture}
    \useasboundingbox (-.3,-0.1) rectangle (3.4,.4);
  \draw (0,0) circle (4pt) ;
  \node (l1) at (0,0.35) {$0$};
  \node (l2) at (0,-0.4) {$I_1$};
    \draw (0.2,0) -- (1,0);
    \draw[fill=black] (1,0) circle (4pt) ;
    \node (m1) at (1,-.4) {$I_2$};
    \node (m2) at (1,.35) {$j_1$};
    \draw[decorate, decoration={snake}] (1.2,0) -- (1.95,0);
    \draw (1.98,-.12) rectangle (2.22,.12) ;
    \node (m21) at (2.1,-.37) {$u_k$};
    \draw[fill=black] (3,0) circle (4pt) ;
    \draw (2.2,0) -- (3,0);
    \node (m31) at (3,-.4) {$I_3$};
    \node (m32) at (3,.35) {$j_2$};
  \end{tikzpicture}
+\begin{tikzpicture}
    \useasboundingbox (-.3,-0.1) rectangle (3.4,.4);
  \draw (0,0) circle (4pt) ;
  \node (l1) at (0,0.35) {$0$};
  \node (l2) at (0,-0.4) {$I_1$};
    \draw (0.2,0) -- (1,0);
    \draw[fill=black] (1,0) circle (4pt) ;
    \node (m1) at (1,-.4) {$I_2$};
    \node (m2) at (1,.35) {$j$};
    \draw[decorate, decoration={snake}] (1.2,0) -- (1.95,0);
    \draw (1.98,-.12) rectangle (2.22,.12) ;
    \node (m21) at (2.1,-.37) {$u_k$};
    \draw  (2.9,-.12) rectangle (3.14,.12) ;
    \draw[decorate, decoration={snake}] (2.2,0) -- (2.9,0);
    \node (m31) at (3,-.4) {$u_l$};
  \end{tikzpicture}
+\begin{tikzpicture}
    \useasboundingbox (-.3,-0.1) rectangle (3.4,.4);
  \draw (0,0) circle (4pt) ;
  \node (l1) at (0,0.35) {$0$};
  \node (l2) at (0,-0.4) {$I_1$};
    \draw (0.2,0) -- (1,0);
    \draw[fill=black] (1,0) circle (4pt) ;
    \node (m1) at (1,-.4) {$I_2$};
    \node (m2) at (1,.35) {$j$};
    \draw[decorate, decoration={snake}] (1.2,0) -- (1.95,0);
    \draw (1.98,-.12) rectangle (2.22,.12) ;
    \node (m21) at (2.1,-.37) {$u_k$};
    \draw[decorate, decoration={zigzag}] (2.2,0) -- (2.9,0);
    \draw (3.1,0) node[star,draw]  {\begin{picture}(1,1)\put(-5,-2){\mbox{\small$u_k$}}\end{picture}} ;
    \node (m1) at (3.1,-.45) {$I_3$};
  \end{tikzpicture}
 \\[3ex]
  &+\begin{tikzpicture}
    \useasboundingbox (-.3,-0.1) rectangle (3.4,.4);
  \draw (0,0) circle (4pt) ;
  \node (l1) at (0,0.35) {$0$};
  \node (l2) at (0,-0.4) {$I_1$};
  \draw[fill=black] (1,0) circle (4pt) ;
      \draw (0.2,0) -- (1,0);
    \node (m1) at (1,-.4) {$I_2$};
    \node (m2) at (1,.35) {$j_1$};
    \draw[fill=black] (2,0) circle (4pt) ;
    \draw[->,cap=rect,double distance=3pt]  (1.2,0) -- (1.8,0); 
    \node (n) at (1.5,.55) {\footnotesize$1$};
    \node (m21) at (2,-.4) {$I_3$};
    \node (m22) at (2,.35) {$j_1$};
    \draw[fill=black] (3,0) circle (4pt) ;
    \draw (2.2,0) -- (3,0);
    \node (m31) at (3,-.4) {$I_4$};
    \node (m32) at (3,.35) {$j_2$};
  \end{tikzpicture}
+\begin{tikzpicture}
    \useasboundingbox (-.3,-0.1) rectangle (3.4,.4);
  \draw (0,0) circle (4pt) ;
  \node (l1) at (0,0.35) {$0$};
  \node (l2) at (0,-0.4) {$I_1$};
  \draw[fill=black] (1,0) circle (4pt) ;
      \draw (0.2,0) -- (1,0);
    \node (m1) at (1,-.4) {$I_2$};
    \node (m2) at (1,.35) {$j$};
    \draw[fill=black] (2,0) circle (4pt) ;
    \draw[->,cap=rect,double distance=3pt]  (1.2,0) -- (1.8,0); 
    \node (n) at (1.5,.55) {\footnotesize$1$};
    \node (m21) at (2,-.4) {$I_3$};
    \node (m22) at (2,.35) {$j$};
    \draw  (2.9,-.12) rectangle (3.14,.12) ;
    \draw[decorate, decoration={snake}] (2.2,0) -- (2.9,0);
    \node (m31) at (3,-.4) {$u_k$};
  \end{tikzpicture}
+\begin{tikzpicture}
    \useasboundingbox (-.3,-0.1) rectangle (3.4,.4);
  \draw (0,0) circle (4pt) ;
  \node (l1) at (0,0.35) {$0$};
  \node (l2) at (0,-0.4) {$I_1$};
  \draw[fill=black] (1,0) circle (4pt) ;
      \draw (0.2,0) -- (1,0);
    \node (m1) at (1,-.4) {$I_2$};
    \node (m2) at (1,.35) {$j$};
    \draw[fill=black] (2,0) circle (4pt) ;
    \draw[->,cap=rect,double distance=3pt]  (1.2,0) -- (1.8,0); 
    \node (n) at (1.5,.55) {\footnotesize$1$};
    \node (m21) at (2,-.4) {$I_3$};
    \node (m22) at (2,.35) {$j$};
    \draw[->,cap=rect,double distance=3pt]  (2.2,0) -- (2.8,0); 
    \node (n1) at (2.5,.55) {\footnotesize$1$};
    \draw[fill=black] (3,0) circle (4pt) ;
   \node (m21) at (3,-.4) {$I_4$};
    \node (m22) at (3,.35) {$j$};
  \end{tikzpicture}
 \\[3ex]
  &+\begin{tikzpicture}
    \useasboundingbox (-.3,-0.1) rectangle (3.4,.4);
  \draw (0,0) circle (4pt) ;
  \node (l1) at (0,0.35) {$0$};
  \node (l2) at (0,-0.4) {$I_1$};
  \draw[fill=black] (1,0) circle (4pt) ;
      \draw (0.2,0) -- (1,0);
    \node (m1) at (1,-.4) {$I_2$};
    \node (m2) at (1,.35) {$j_1$};
    \draw[fill=black] (2,0) circle (4pt) ;
    \draw[cap=rect,double distance=3pt]  (1.2,0) -- (1.8,0); 
    \node (m21) at (2,-.4) {$I_3$};
    \node (m22) at (2,.35) {$j_1$};
    \draw[fill=black] (3,0) circle (4pt) ;
    \draw[-{To[length=10]}] (2.2,0) -- (3,0);
    \node (n) at (2.5,.55) {\footnotesize$1$};
    \node (m31) at (3,-.4) {$I_4$};
    \node (m32) at (3,.35) {$j_2$};
  \end{tikzpicture}
+\begin{tikzpicture}
    \useasboundingbox (-.3,-0.1) rectangle (3.4,.4);
  \draw (0,0) circle (4pt) ;
  \node (l1) at (0,0.35) {$0$};
  \node (l2) at (0,-0.4) {$I_1$};
  \draw[fill=black] (1,0) circle (4pt) ;
      \draw (0.2,0) -- (1,0);
    \node (m1) at (1,-.4) {$I_2$};
    \node (m2) at (1,.35) {$j$};
    \draw[fill=black] (2,0) circle (4pt) ;
    \draw[cap=rect,double distance=3pt]  (1.2,0) -- (1.8,0); 
    \node (m21) at (2,-.4) {$I_3$};
    \node (m22) at (2,.35) {$j$};
    \draw  (2.9,-.12) rectangle (3.14,.12) ;
    \draw[-{To[length=10]},decorate, decoration={snake}] (2.2,0) -- (2.9,0);
    \node (n) at (2.5,.55) {\footnotesize$1$};
    \node (m31) at (3,-.4) {$u_k$};
  \end{tikzpicture}
+\begin{tikzpicture}
    \useasboundingbox (-.3,-0.1) rectangle (3.4,.4);
  \draw (0,0) circle (4pt) ;
  \node (l1) at (0,0.35) {$0$};
  \node (l2) at (0,-0.4) {$I_1$};
  \draw[fill=black] (1,0) circle (4pt) ;
      \draw (0.2,0) -- (1,0);
    \node (m1) at (1,-.4) {$I_2$};
    \node (m2) at (1,.35) {$j$};
    \draw[fill=black] (2,0) circle (4pt) ;
    \draw[cap=rect,double distance=3pt]  (1.2,0) -- (1.8,0); 
    \node (m21) at (2,-.4) {$I_3$};
    \node (m22) at (2,.35) {$j$};
    \draw[->,cap=rect,double distance=3pt]  (2.2,0) -- (2.8,0); 
    \node (n1) at (2.5,.55) {\footnotesize$2$};
    \draw[fill=black] (3,0) circle (4pt) ;
   \node (m21) at (3,-.4) {$I_4$};
    \node (m22) at (3,.35) {$j$};
  \end{tikzpicture}
\\[3ex]  
&
+ \begin{tikzpicture}
    \useasboundingbox (-.3,-0.1) rectangle (3.4,.4);
    \draw (0,0) circle (4pt) ;
    \node (l1) at (0,0.35) {$0$};
    \node (l2) at (0,-0.4) {$I_1$};
    \draw[decorate, decoration={snake}] (0.2,0) -- (0.95,0);
    \draw (.98,-.12) rectangle (1.22,.12);
    \node (m1) at (1.1,-.37) {$u_k$};
    \draw[fill=black] (2,0) circle (4pt) ;
    \draw (1.25,0) -- (2,0);
    \node (m21) at (2,-.4) {$I_2$};
    \node (m22) at (2,.35) {$j_1$};
    \draw[fill=black] (3,0) circle (4pt) ;
    \draw (2.2,0) -- (3,0);
    \node (m31) at (3,-.4) {$I_3$};
    \node (m32) at (3,.35) {$j_2$};
  \end{tikzpicture}
+ \begin{tikzpicture}
    \useasboundingbox (-.3,-0.1) rectangle (3.4,.4);
    \draw (0,0) circle (4pt) ;
    \node (l1) at (0,0.35) {$0$};
    \node (l2) at (0,-0.4) {$I_1$};
    \draw[decorate, decoration={snake}] (0.2,0) -- (0.95,0);
    \draw (.98,-.12) rectangle (1.22,.12);
    \node (m1) at (1.1,-.37) {$u_k$};
    \draw[fill=black] (2,0) circle (4pt) ;
    \draw (1.25,0) -- (2,0);
    \node (m21) at (2,-.4) {$I_2$};
    \node (m22) at (2,.35) {$j$};
    \draw (2.9,-.12) rectangle (3.14,.12) ;
    \draw[decorate, decoration={snake}] (2.2,0) -- (2.9,0);
    \node (m31) at (3,-.4) {$u_l$};
  \end{tikzpicture}
+ \begin{tikzpicture}
    \useasboundingbox (-.3,-0.1) rectangle (3.4,.4);
    \draw (0,0) circle (4pt) ;
    \node (l1) at (0,0.35) {$0$};
    \node (l2) at (0,-0.4) {$I_1$};
    \draw[decorate, decoration={snake}] (0.2,0) -- (0.95,0);
    \draw (.98,-.12) rectangle (1.22,.12);
    \node (m1) at (1.1,-.37) {$u_k$};
    \draw[fill=black] (2,0) circle (4pt) ;
    \draw (1.25,0) -- (2,0);
    \node (m21) at (2,-.4) {$I_2$};
    \node (m22) at (2,.35) {$j$};
    \draw[->,cap=rect,double distance=3pt]  (2.2,0) -- (2.8,0); 
    \node (n1) at (2.5,.55) {\footnotesize$1$};
    \draw[fill=black] (3,0) circle (4pt) ;
   \node (m31) at (3,-.4) {$I_3$};
   \node (m32) at (3,.35) {$j$};
  \end{tikzpicture}
  \\[2.5ex]
  &+ \begin{tikzpicture}
    \useasboundingbox (-.3,-0.1) rectangle (3.4,.4);
    \draw (0,0) circle (4pt) ;
    \node (l1) at (0,0.35) {$0$};
    \node (l2) at (0,-0.4) {$I_1$};
    \draw[decorate, decoration={snake}] (0.2,0) -- (0.95,0);
    \draw (.98,-.12) rectangle (1.22,.12) ;
    \node (m1) at (1.1,-.37) {$u_k$};
    \draw[decorate, decoration={snake}] (1.2,0) -- (1.95,0);
    \draw (1.98,-.12) rectangle (2.22,.12) ;
    \node (m1) at (2.1,-.37) {$u_l$};
    \draw[fill=black] (3,0) circle (4pt) ;
    \draw (2.2,0) -- (3,0);
    \node (m31) at (3,-.4) {$I_2$};
    \node (m32) at (3,.35) {$j$};
  \end{tikzpicture}
+ \begin{tikzpicture}
    \useasboundingbox (-.3,-0.1) rectangle (3.4,.4);
    \draw (0,0) circle (4pt) ;
    \node (l1) at (0,0.35) {$0$};
    \node (l2) at (0,-0.4) {$I_1$};
    \draw[decorate, decoration={snake}] (0.2,0) -- (0.95,0);
    \draw (.98,-.12) rectangle (1.22,.12) ;
    \node (m1) at (1.1,-.37) {$u_k$};
    \draw[decorate, decoration={snake}] (1.2,0) -- (1.95,0);
    \draw (1.98,-.12) rectangle (2.22,.12) ;
    \node (m1) at (2.1,-.37) {$u_l$};
    \draw (2.9,-.12) rectangle (3.14,.12) ;
    \draw[decorate, decoration={snake}] (2.2,0) -- (2.9,0);
    \node (m31) at (3,-.4) {$u_m$};
  \end{tikzpicture}
+ \begin{tikzpicture}
    \useasboundingbox (-.3,-0.1) rectangle (3.4,.4);
    \draw (0,0) circle (4pt) ;
    \node (l1) at (0,0.35) {$0$};
    \node (l2) at (0,-0.4) {$I_1$};
    \draw[decorate, decoration={snake}] (0.2,0) -- (0.95,0);
    \draw (.98,-.12) rectangle (1.22,.12) ;
    \node (m1) at (1.1,-.37) {$u_k$};
    \draw[decorate, decoration={snake}] (1.2,0) -- (1.95,0);
    \draw (1.98,-.12) rectangle (2.22,.12) ;
    \node (m1) at (2.1,-.37) {$u_l$};
\draw[decorate, decoration={zigzag}] (2.2,0) -- (2.85,0);
   \draw (3.1,0) node[star,draw]  {\begin{picture}(1,1)\put(-5,-2){\mbox{\small$u_l$}}\end{picture}} ;
   \node (m1) at (3.1,-.45) {$I_2$};
  \end{tikzpicture}
  \\[2ex]
  &+ \begin{tikzpicture}
    \useasboundingbox (-.3,-0.1) rectangle (3.4,.4);
    \draw (0,0) circle (4pt) ;
    \node (l1) at (0,0.35) {$0$};
    \node (l2) at (0,-0.4) {$I_1$};
    \draw[decorate, decoration={snake}] (0.2,0) -- (0.95,0);
    \draw (.98,-.12) rectangle (1.22,.12) ;
    \node (m1) at (1.1,-.37) {$u_k$};
    \draw[decorate, decoration={zigzag}] (1.2,0) -- (1.9,0);
    \draw (2.1,0) node[star,draw]  {\begin{picture}(1,1)\put(-5,-2){\mbox{\small$u_k$}}\end{picture}} ;
    \node (m1) at (2.1,-.45) {$I_2$};
    \draw[fill=black] (3,0) circle (4pt) ;
    \draw (2.35,0) -- (3,0);
    \node (m31) at (3,-.4) {$I_3$};
    \node (m32) at (3,.35) {$j$};
  \end{tikzpicture}
+ \begin{tikzpicture}
    \useasboundingbox (-.3,-0.1) rectangle (3.4,.4);
    \draw (0,0) circle (4pt) ;
    \node (l1) at (0,0.35) {$0$};
    \node (l2) at (0,-0.4) {$I_1$};
    \draw[decorate, decoration={snake}] (0.2,0) -- (0.95,0);
    \draw (.98,-.12) rectangle (1.22,.12) ;
    \node (m1) at (1.1,-.37) {$u_k$};
    \draw[decorate, decoration={zigzag}] (1.2,0) -- (1.9,0);
    \draw (2.1,0) node[star,draw]  {\begin{picture}(1,1)\put(-5,-2){\mbox{\small$u_k$}}\end{picture}} ;
    \node (m1) at (2.1,-.45) {$I_2$};
    \draw[decorate, decoration={snake}] (2.35,0) -- (2.9,0);
    \draw (2.9,-.12) rectangle (3.14,.12) ;
    \node (m31) at (3,-.4) {$u_l$};
  \end{tikzpicture}
+ \begin{tikzpicture}
    \useasboundingbox (-.3,-0.1) rectangle (3.4,.4);
    \draw (0,0) circle (4pt) ;
    \node (l1) at (0,0.35) {$0$};
    \node (l2) at (0,-0.4) {$I_1$};
    \draw[decorate, decoration={snake}] (0.2,0) -- (0.95,0);
    \draw (.98,-.12) rectangle (1.22,.12) ;
    \node (m1) at (1.1,-.37) {$u_k$};
    \draw[decorate, decoration={zigzag}] (1.2,0) -- (1.9,0);
    \draw (2.1,0) node[star,draw]  {\begin{picture}(1,1)\put(-5,-2){\mbox{\small$u_k$}}\end{picture}} ;
    \node (m2) at (2.1,-.45) {$I_2$};
    \draw[decorate, decoration={zigzag}] (2.35,0) -- (2.9,0);
    \draw (3.1,0) node[star,draw]  {\begin{picture}(1,1)\put(-5,-2){\mbox{\small$u_k$}}\end{picture}} ;
    \node (m3) at (3.1,-.45) {$I_3$};
  \end{tikzpicture}\rule[-5mm]{0pt}{4mm}
  \\
  &+\text{chains with f1, f2$^p$,f3}\;.
\end{align*}

\subsection{Cancellations between chains}
\label{sec:cancellations}

The following tuples will occur in the subsequent combinatorial description:
\begin{definition}[\cite{DeJong}]
\label{def:Catalantuple}
A Catalan tuple $\tilde{\underline{n}}=(n_{0},\ldots,n_{k})$ of length
$k\in\mathbb{N}$ is a tuple of integers $n_{j}\geq 0$ for
$j=0,\ldots,k$, such that
\begin{equation*}
	\sum_{j=0}^{k}n_{j}=k\qquad \text{ and }\qquad
	\sum_{j=0}^{l}n_{j}>l\quad\text{for }l=0,\ldots,k-1\;.
\end{equation*}
The set of Catalan tuples of length
$|\tilde{\underline{n}}|:=k$ is denoted by $\mathcal{C}_{k}$.
\end{definition}  
\noindent
The cardinality of $\mathcal{C}_{k}$ is the $k^\text{th}$
Catalan number \href{https://oeis.org/A000108}{OEIS A000108}.

Now, it will be convenient to collect subchains of consecutive vertices v1
with the same upper label $j$:
\begin{definition}
A \underline{\textup{v1}-block} is a subchain
\[
 \begin{tikzpicture}
  \useasboundingbox (-.3,-0.1) rectangle (1,.7);
  \draw (0,0) node[regular polygon,regular polygon sides=6,draw,fill=black] {};
  \node (j) at (0,.5) {\footnotesize$j,(n_1,n_2,...,n_s)$};
  \node (I) at (0,-.4) {\footnotesize$(I_0,I_1,...,I_s)$};
\end{tikzpicture}
    =
  \begin{tikzpicture}
  \useasboundingbox (-.3,-0.1) rectangle (4.4,.7);
  \draw[fill=black] (0,0) circle (4pt) ;
  \node (l1) at (0,0.35) {$j$};
  \node (l2) at (0,-0.4) {$I_0$};
  \draw[->,cap=rect,double distance=3pt]  (0.2,0) -- (0.8,0);
   \node (n0) at (0.5,.55) {\footnotesize$n_1$};
  \draw[fill=black] (1,0) circle (4pt) ;
    \node (m1) at (1,-.4) {$I_1$};
    \node (m2) at (1,.35) {$j$};
    \draw[fill=black] (2,0) circle (4pt) ;
    \draw[->,cap=rect,double distance=3pt]  (1.2,0) -- (1.8,0); 
    \node (n) at (1.5,.55) {\footnotesize$n_2$};
    \node (m21) at (2,-.4) {$I_2$};
    \node (m22) at (2,.35) {$j$};
    \node (z) at (2.7,0) {$\dots$};
    \draw[->,double distance=3pt]  (3.2,0) -- (3.85,0); 
    \node (n1) at (3.5,.55) {\footnotesize$n_s$};
    \draw[fill=black] (4,0) circle (4pt) ;
   \node (m21) at (4,-.4) {$I_s$};
    \node (m22) at (4,.35) {$j$};
  \end{tikzpicture}\rule[-4mm]{0mm}{4mm}
\]
of vertices \textup{v1} of the same label $j$,
connected by edges \textup{e6}$^{n_i}$.  We call $j$ the
\underline{label}, $\underline{n}=(n_1,...,n_s)$ the
\underline{degree} and $\underline{I}=(I_0,I_1,...,I_s)$ the
\underline{partition distribution} of the block.  Moreover, we let
$s=s(\underline{n})$ be the \underline{size},
$|\underline{n}|=n_1+...+n_s$ be the \underline{length} and
$s(\underline{n})-|\underline{n}|$ be the \underline{deficit} of the
\textup{v1}-block.  We also regard vertices \textup{v1} as
\textup{v1}-blocks of size or length $0$.
\end{definition}
\noindent
A v1-block can terminate a chain iff the deficit is $0$.  A v1-block
can be followed by an edge e1$^p$ or e3$^p$; the label $p$ of such an
edge is then given by the deficit $p=s(\underline{n})-|\underline{n}|$
of the v1-block before it. Since a v1-block is formed by repeatedly
attaching a function f2$^p$ labelled $p\geq 0$, the condition on the
deficit must hold at all intermediate steps. This amounts to a
condition $\sum_{i=1}^r n_i \leq r$ on any partial sum. For blocks of
total deficit $0$ (those which terminate a chain or are followed by
edges e1$^0$ or e3$^0$, this is equivalent to the opposite condition
$\sum_{i=0}^r n_{s-i} > r$ for $0\leq r \leq s-1$ and
$\sum_{i=0}^{s} n_{s-i} = s$ when prepending $n_0=0$. This means that
the reversely ordered tuple
$\underline{\tilde{n}}:=(n_s,n_{s-1}, ...,n_1,0)$ is a \emph{Catalan
  tuple}. We consider the subset of chains
which differ only in the degrees $\underline{n}$ of a v1-block of size
$s$, but otherwise have identically labelled vertices. In this subset
any degree $\underline{n}$ of the v1-block compatible with the deficit
condition is produced, and precisely once.
\begin{definition}
A  \underline{\textup{v2}-block} is a subchain
\[
 \begin{tikzpicture}
  \useasboundingbox (-.3,-0.1) rectangle (1,.3);
  \draw (0,0) node[regular polygon,regular polygon sides=5,draw] {};
  \node (I) at (0,-.4) {\footnotesize$(I_1,...,I_s);u$};
\end{tikzpicture}
    =
\begin{tikzpicture}
    \useasboundingbox (-.3,-0.1) rectangle (4.4,.4);
    \draw (-.12,-.12) rectangle (.12,.12) ;
    \node (m1) at (0,-.37) {$u$};
    \draw[decorate, decoration={zigzag}] (0.15,0) -- (0.75,0);    
    \draw (1,0) node[star,draw]  {\begin{picture}(1,1)
        \put(-3,-2){\mbox{\small$u$}}\end{picture}} ;
    \node (l1) at (1,-0.45) {$I_1$};
    \draw[decorate, decoration={zigzag}] (1.25,0) -- (1.75,0);
    \draw (2,0) node[star,draw]  {\begin{picture}(1,1)
        \put(-3,-2){\mbox{\small$u$}}\end{picture}} ;
    \node (l2) at (2,-0.45) {$I_2$};
    \node (z) at (2.7,0) {$\dots$};
    \draw[decorate, decoration={zigzag}] (3.25,0) -- (3.75,0);
    \draw (4,0) node[star,draw]  {\begin{picture}(1,1)
        \put(-3,-2){\mbox{\small$u$}}\end{picture}} ;
    \node (ms) at (4,-.45) {$I_s$};
  \end{tikzpicture}\rule[-5mm]{0mm}{4mm}
\]
starting with a vertex \textup{v2} of lower label $u$ and several consecutive
vertices \textup{v3} with the same inner label $u$, connected by edges
\textup{e5}.  We let $u$ be the \underline{label}, $s$ be the
\underline{size} and $(I_1,...,I_s)$ be the \underline{partition
  distribution} of a \textup{v2}-block.  A \textup{v2}-block of size
$0$ is identified with a vertex \textup{v2} with lower label $u$.
\end{definition}
\noindent
If several v2-blocks arise in a chain then its labels $u_k,u_l,\dots$
are necessarily different.

We will prove that, after taking cancellations into account, also the
labels $j_1,j_2,\dots$ of v1-blocks in the surviving chains are
pairwise different. These cancellations start with chains of 4
vertices:
\begin{align*}
\begin{tikzpicture}
    \useasboundingbox (-.3,-0.1) rectangle (3.4,.4);
  \draw (0,0) circle (4pt) ;
  \node (l1) at (0,0.35) {$0$};
  \node (l2) at (0,-0.4) {$I_1$};
  \draw[fill=black] (1,0) circle (4pt) ;
      \draw (0.2,0) -- (1,0);
    \node (m1) at (1,-.4) {$I_2$};
    \node (m2) at (1,.35) {$j_1$};
    \draw[fill=black] (2,0) circle (4pt) ;
    \draw (1.2,0) -- (2,0);
    \node (m21) at (2,-.4) {$I_3$};
    \node (m22) at (2,.35) {$j_2$};
    \draw[fill=black] (3,0) circle (4pt) ;
    \draw (2.2,0) -- (3,0);
    \node (m31) at (3,-.4) {$I_4$};
    \node (m32) at (3,.35) {$j_1$};
  \end{tikzpicture}
  +
\begin{tikzpicture}
    \useasboundingbox (-.3,-0.1) rectangle (3.4,.4);
  \draw (0,0) circle (4pt) ;
  \node (l1) at (0,0.35) {$0$};
  \node (l2) at (0,-0.4) {$I_1$};
  \draw[fill=black] (1,0) circle (4pt) ;
      \draw (0.15,0) -- (1,0);
    \node (m1) at (1,-.4) {$I_2$};
    \node (m2) at (1,.35) {$j_1$};
    \draw[fill=black] (2,0) circle (4pt) ;
    \draw[cap=rect,double distance=3pt]  (1.2,0) -- (1.8,0); 
    \node (m21) at (2,-.4) {$I_3$};
    \node (m22) at (2,.35) {$j_1$};
    \draw[fill=black] (3,0) circle (4pt) ;
    \draw[-{To[length=10]}] (2.15,0) -- (3,0);
    \node (n) at (2.5,.55) {\footnotesize$1$};
    \node (m31) at (3,-.4) {$I_4$};
    \node (m32) at (3,.35) {$j_2$};
  \end{tikzpicture}
&=0\;,
\nonumber
\\[2ex]
\begin{tikzpicture}
    \useasboundingbox (-.3,-0.1) rectangle (3.4,.4);
  \draw (0,0) circle (4pt) ;
  \node (l1) at (0,0.35) {$0$};
  \node (l2) at (0,-0.4) {$I_1$};
    \draw (0.15,0) -- (1,0);
    \draw[fill=black] (1,0) circle (4pt) ;
    \node (m1) at (1,-.4) {$I_2$};
    \node (m2) at (1,.35) {$j$};
    \draw[decorate, decoration={snake}] (1.15,0) -- (1.85,0);
    \draw (1.88,-.12) rectangle (2.12,.12) ;
    \node (m21) at (2,-.37) {$u_k$};
    \draw[fill=black] (3,0) circle (4pt) ;
    \draw (2.15,0) -- (3,0);
    \node (m31) at (3,-.4) {$I_3$};
    \node (m32) at (3,.35) {$j$};
  \end{tikzpicture}
  +  \begin{tikzpicture}
    \useasboundingbox (-.3,-0.1) rectangle (3.4,.4);
  \draw (0,0) circle (4pt) ;
  \node (l1) at (0,0.35) {$0$};
  \node (l2) at (0,-0.4) {$I_1$};
  \draw[fill=black] (1,0) circle (4pt) ;
      \draw (0.15,0) -- (1,0);
    \node (m1) at (1,-.4) {$I_2$};
    \node (m2) at (1,.35) {$j$};
    \draw[fill=black] (2,0) circle (4pt) ;
    \draw[cap=rect,double distance=3pt]  (1.2,0) -- (1.8,0); 
    \node (m21) at (2,-.4) {$I_3$};
    \node (m22) at (2,.35) {$j$};
    \draw  (2.9,-.12) rectangle (3.14,.12) ;
    \draw[-{To[length=10]},decorate, decoration={snake}] (2.15,0) -- (2.9,0);
    \node (n) at (2.5,.55) {\footnotesize$1$};
    \node (m31) at (3,-.4) {$u_k$};
  \end{tikzpicture}\rule[-5mm]{0mm}{4mm}
  &=0\;,
\end{align*}
which follows from the weights in Table~\ref{tab1} and with
$\nabla^0\varpi_{0,|I_3|+1}(I_3;-\hat{q}^j)
=\frac{\varpi_{0,|I_3|+1}(I_3;-\hat{q}^j)}{(-dR(\hat{q}^j))}$.
These identities reduce the set of graphs to a much simpler subset:
\begin{lemma} \label{lemma-cancel-noCatalan}
  Let $M$ be the set of
  chains generated by the loop equations for
  $\sum_{I_1\uplus I_2=I} \varpi_{0,|I_1|+1}(I_1;q) \mathfrak{v}_{0,|I_2|}(I_2\|q)$.
  Then cancellations between weights
  remove all chains with edges \textup{e1}$^p$ and \textup{e3}$^p$
  having a tip labelled $p\geq 1$ and all chains with two or more
  identically labelled \textup{v1}-blocks. The subset of surviving
  graphs is given by the set of chains made of \textup{v2}-blocks and
  of \textup{v1}-blocks which have deficit 0 and pairwise different
  labels, connected by appropriate edges without tip.
\begin{proof}
  Consider a v1-block of label $j$, partition distribution
  $\underline{I}$ and degree $\underline{n}=(n_1,n_2,...,n_s)$ with
  deficit $p=s-n_1-...-n_s\geq 1$. Its reverse degree
  $(n_s,n_{s-1},...,n_1,0)$ cannot be a Catalan tuple  for $p\geq
  1$. This means that either $n_s=0$, or there is a unique
  $2\leq t\leq s$ such that $(n_s,n_{s-1},...,n_t,0)$ is a Catalan
  tuple but $(n_s,n_{s-1},...,n_t,n_{t-1},0)$ is not. This necessarily
  means $n_{t-1}=0$. We define a unique splitting into two v1-blocks
  of degrees $\underline{n}^{\!-}$ and $\underline{n}^{\!+}$:
  \\[\medskipamount]
  \hspace*{\fill}\begin{tabular}{rr@{\,}l} $n_s=0$: & Set
    $\underline{n}^{\!-} $ &$=(n_1,\dots,n_{s-1})$,
    $\underline{I}^{-}=(I_0,\dots,I_{s-1})$,
    $\underline{n}^{\!+}=\emptyset$, $\underline{I}^{+}=(I_s)$.
                   \\[\medskipamount]
                   $n_s\neq 0$: & Set
                                  $\underline{n}^{\!-}$
                           &$=(n_1,\dots,n_{t-2})$,
$\underline{I}^{-}=(I_0,\dots,I_{t-2})$,
                   \\
&         $\underline{n}^{\!+}$& $=(n_t,\dots,n_s)$,
$\underline{I}^{+}=(I_{t-1},I_t,...,I_s)$.
\end{tabular}
\\[\medskipamount]
By construction, $\underline{n}^{\!+}$ has deficit 0 so that it can
terminate a chain or is followed by edges e1$^0$ or e3$^0$. The other
label $\underline{n}^{\!-}$ has deficit $p-1$ and is followed by edges
e1$^{p{-}1}$ or e3$^{p{-}1}$. Conversely, two degrees
$\underline{n}^{\!-}$ of deficit $p-1\geq 0$ and $\underline{n}^{\!+}$
of deficit $0$ can be joined to a unique degree $\underline{n}$ of
deficit $p$.

The weights given in Table~\ref{tab1} together with
$\nabla^{n_{t-1}}\varpi_{0,|I_{t-1}|+1}(I_{t-1};-\hat{q}^j)
= \frac{\varpi_{0,|I_{t-1}|+1}(I_{t-1};-\hat{q}^j)}{(-dR(\hat{q}^j))}$
confirm the following identity:
\usetikzlibrary{patterns}
\begin{align}
0=\begin{tikzpicture}
    \useasboundingbox (-.5,-0.1) rectangle (3.6,.4);
  \draw[pattern=north east lines] (0,0) circle (8pt) ;
  \draw (0.3,0) -- (1.2,0);
  \draw (1.2,0) node[regular polygon,regular polygon sides=6,fill=black,draw] {};
    \node (m1) at (1.2,-.45) {\footnotesize$\underline{I}$};
    \node (m2) at (1.2,.4) {\footnotesize$j,\underline{n}$};
    \draw[-{To[length=10]}] (1.4,0) -- (2.8,0);  
    \node (n) at (2.3,.55) {\footnotesize$p$};
    \draw (3,0) node[regular polygon,regular polygon sides=6,fill=black,draw] {};
    \node (m21) at (3,-.45) {\footnotesize$\underline{I}_1$};
    \node (m22) at (3.2,.4) {\footnotesize$j_1,\underline{n}_1$};
  \end{tikzpicture}
  +
\begin{tikzpicture}
    \useasboundingbox (-.5,-0.1) rectangle (5,.4);
    \draw[pattern=north east lines] (0,0) circle (8pt) ;
    \draw (0.3,0) -- (1.2,0);
    \draw (1.2,0) node[regular polygon,regular polygon sides=6,fill=black,draw] {};
    \node (m1) at (1.2,-.45) {\footnotesize$\underline{I}^{-}$};
    \node (m2) at (1.2,.4) {\footnotesize$j,\underline{n}^{-}$};
    \draw[-{To[length=10]}] (1.4,0) -- (2.8,0);  
    \node (n) at (2.2,.55) {\footnotesize$p{-}1$};
    \draw (3,0) node[regular polygon,regular polygon sides=6,fill=black,draw] {};
    \node (m21) at (3,-.45) {\footnotesize$\underline{I}_1$};
    \node (m22) at (3.2,.4) {\footnotesize$j_1,\underline{n}_1$};
    \draw (3.2,0) -- (4.5,0);
    \draw (4.5,0) node[regular polygon,regular polygon sides=6,fill=black,draw] {};
    \node (m31) at (4.5,-.45) {\footnotesize$\underline{I}^{+}$};
    \node (m32) at (4.5,.4) {\footnotesize$j,\underline{n}^{\!+}$};
  \end{tikzpicture} \rule[-5mm]{0mm}{4mm} 
\;,
\label{chain-12}
\end{align}
where the shaded circle stands for any identical subchain in both chains.
The same cancellation arises if the v1-block labelled $j_1$ is replaced
by a v2-block and e1$^p$ by e3$^p$.

Next for chains which extend by further blocks to the right,
all with labels $\neq j$, we have
\begin{align}
0&=\begin{tikzpicture}
    \useasboundingbox (-.5,-0.1) rectangle (3.8,.4);
  \draw[pattern=north east lines] (0,0) circle (8pt) ;
  \draw (0.3,0) -- (1.2,0);
  \draw (1.2,0) node[regular polygon,regular polygon sides=6,fill=black,draw] {};
    \node (m1) at (1.2,-.45) {\footnotesize$\underline{I}$};
    \node (m2) at (1.2,.4) {\footnotesize$j,\underline{n}$};
    \draw[-{To[length=10]}] (1.4,0) -- (2.8,0);  
    \node (n) at (2.3,.55) {\footnotesize$p$};
    \draw (3,0) node[regular polygon,regular polygon sides=6,fill=black,draw] {};
    \node (m21) at (3,-.45) {\footnotesize$\underline{I}_1$};
    \node (m22) at (3.2,.4) {\footnotesize$j_1,\underline{n}_1$};
    \draw (3,0) -- (4.3,0) ;
      \draw (4.5,0) node[regular polygon,regular polygon sides=6,fill=black,draw] {};
    \node (m31) at (4.5,-.45) {\footnotesize$\underline{I}_3$};
    \node (m32) at (4.5,.4) {\footnotesize$j_2,\underline{n}_2$};
    \node (d) at (5.3,0) {\dots};
      \draw (6,0) node[regular polygon,regular polygon sides=6,fill=black,draw] {};
    \node (m41) at (6,-.45) {\footnotesize$\underline{I}_{r-1}$};
    \node (m42) at (6,.4) {\footnotesize$j_{r-1},\underline{n}_{r-1}$};
    \draw (6,0) -- (7.3,0) ;
    \draw (7.5,0) node[regular polygon,regular polygon sides=6,fill=black,draw] {};
    \node (m41) at (7.5,-.45) {\footnotesize$\underline{I}_{r}$};
    \node (m42) at (7.5,.4) {\footnotesize$j_r,\underline{n}_{r}$};
  \end{tikzpicture}
  \nonumber
  \\[3ex]
  &+ \sum_{l=1}^{r}
\begin{tikzpicture}
    \useasboundingbox (-.3,-0.1) rectangle (11,.4);
  \draw[pattern=north east lines] (0,0) circle (8pt) ;
  \draw (0.3,0) -- (1.2,0);
  \draw (1.2,0) node[regular polygon,regular polygon sides=6,fill=black,draw] {};
    \node (m1) at (1.2,-.45) {\footnotesize$\underline{I}^{-}$};
    \node (m2) at (1.2,.4) {\footnotesize$j,\underline{n}^{\!-}$};
    \draw[-{To[length=10]}] (1.4,0) -- (2.8,0);  
    \node (n) at (2.2,.55) {\footnotesize$p{-}1$};
    \draw (3,0) node[regular polygon,regular polygon sides=6,fill=black,draw] {};
    \node (m21) at (3,-.45) {\footnotesize$\underline{I}_1$};
    \node (m22) at (3.2,.4) {\footnotesize$j_1,\underline{n}_1$};
    \draw (3.2,0) -- (4.5,0);
  \draw (4.5,0) node[regular polygon,regular polygon sides=6,fill=black,draw] {};
    \node (m31) at (4.5,-.45) {\footnotesize$\underline{I}_2$};
    \node (m32) at (4.5,.4) {\footnotesize$j_2,\underline{n}_2$};
    \node (d) at (5.3,0) {\dots};
      \draw (6,0) node[regular polygon,regular polygon sides=6,fill=black,draw] {};
    \node (m41) at (6,-.45) {\footnotesize$\underline{I}_{l}$};
    \node (m42) at (6,.4) {\footnotesize$j_{l},\underline{n}_{l}$};
    \draw (6,0) -- (7.3,0) ;
    \draw (7.5,0) node[regular polygon,regular polygon sides=6,fill=black,draw] {};
    \node (m41) at (7.5,-.45) {\footnotesize$\underline{I}^+$};
    \node (m42) at (7.5,.4) {\footnotesize$j,\underline{n}^{\!+}$};
    \draw (7.5,0) -- (8.8,0) ;
    \draw (9,0) node[regular polygon,regular polygon sides=6,fill=black,draw] {};
    \node (m51) at (9,-.45) {\footnotesize$\underline{I}_{l+1}$};
    \node (m52) at (9,.4) {\footnotesize$j_{l+1},\underline{n}_{l+1}$};
    \node (d) at (9.8,0) {\dots};
    \draw (10.5,0) node[regular polygon,regular polygon sides=6,fill=black,draw] {};
    \node (m61) at (10.5,-.45) {\footnotesize$\underline{I}_{r}$};
    \node (m62) at (10.5,.4) {\footnotesize$j_{r},\underline{n}_{r}$};
  \end{tikzpicture}\rule[-5mm]{0mm}{4mm}  \;.
  \label{chain-12p}
\end{align}
Again the shaded circle stands for any
identical subchain.  The same cancellation arises if any subset of
v1-blocks (other than the one labelled $j$) is replaced by
corresponding v2-blocks.

After these preparations we prove that (\ref{chain-12}) and
(\ref{chain-12p}) provide the claimed reduction in the set of chains
describing
$\sum_{I_1\uplus I_2=I}
\varpi_{0,|I_1|+1}(I_1;q) 
\mathfrak{v}_{0,|I_2|}(I_2\|q)$.
\begin{enumerate}  
\item[(1)] We start with the type of chains indicated by the left
  graph in (\ref{chain-12}), with $p\geq 1$. Since the splitting of
  $\underline{n}$ into $\underline{n}^{\!-},\underline{n}^{\!+}$ is
  unique, it cancels against a unique chain indicated on the right of
  (\ref{chain-12}). Conversely, for any chain $K$ terminating in
  a triple consisting of two v1-blocks of the same label $j$ and any
  other block in between, there is a unique chain indicated on the
  left of (\ref{chain-12}) against which $K$ cancels. As result
  we remove all chains with a single block after the last e1$^p$ or
  e3$^p$ edge (with $p\geq 1$) and all those chains which terminate in
  a triple of blocks in which two v1-blocks are equally labelled.

\item[(2)] We pass to (\ref{chain-12p}) for $r=2$. The chain in the
  first line is only present for $j_2\neq j$ because the case $j_2= j$
  was removed in step (1). According to (\ref{chain-12p}) the chain
  $K$ indicated in the first line cancels against two uniquely
  determined chains terminating in a quadruple of blocks two of which
  are labelled $j$, and conversely. After all we remove all chains
  with two blocks after the last e1$^p$ or e3$^p$ edge (with
  $p\geq 1$) and all those chains terminating in a v1-block labelled
  $j$ which is followed by three more blocks one of them also labelled
  $j$.

\item[($r$)] Continuing in this manner removes all chains with an
  e1$^p$ or e3$^p$ edge with $p\geq 1$ and all chains with two or more
  identically labelled v1-blocks.
\end{enumerate}
We are left with chains in which all blocks have different labels and
are connected by edges e1$^0$,e3$^0$, i.e.\ without tip.
\end{proof}
\end{lemma}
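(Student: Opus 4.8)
The plan is to reduce the set $M$ by pairing up, in an explicit sign-reversing involution, every chain that contains a \emph{forbidden} feature — either a tipped edge \textup{e1}$^p$/\textup{e3}$^p$ with $p\geq 1$, or two identically labelled \textup{v1}-blocks — with a unique partner chain, so that after summation only the claimed skeleton survives. The local input is the two elementary four-vertex cancellations displayed just above the lemma, which follow verbatim from the weights of Table~\ref{tab1} together with the identity $\nabla^{0}\varpi_{0,|I|+1}(I;-\hat{q}^{j})=\varpi_{0,|I|+1}(I;-\hat{q}^{j})/(-dR(\hat{q}^{j}))$. First I would upgrade these to block-level identities: a \textup{v1}-block of degree $\underline{n}=(n_{1},\dots,n_{s})$ and positive deficit $p=s-|\underline{n}|\geq 1$ carries, via the $\nabla^{0}$-identity applied to the appropriate factor, the same weight as a configuration in which the block has been cut into a shorter block followed by a fresh tipped edge of label $p-1$. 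The sign bookkeeping here amounts to comparing the weight of the edge \textup{e6}$^{n_{t-1}}$ (with $n_{t-1}=0$) against that of the newly created \textup{e1}$^{p-1}$ or \textup{e3}$^{p-1}$.

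The decisive structural step is to make this cut \emph{canonical}. By Definition~\ref{def:Catalantuple}, the reverse degree $(n_{s},\dots,n_{1},0)$ is a Catalan tuple precisely when the block has deficit $0$; for deficit $p\geq 1$ this fails, and I would argue that there is a unique longest suffix $(n_{s},\dots,n_{t},0)$ that is still Catalan, which forces $n_{t-1}=0$ (the borderline case being $n_{s}=0$). This pinpoints a unique splitting into $\underline{n}^{\!-}$ of deficit $p-1$ and $\underline{n}^{\!+}$ of deficit $0$, and conversely any pair $(\underline{n}^{\!-},\underline{n}^{\!+})$ of deficits $(p-1,0)$ reassembles into a unique $\underline{n}$ of deficit $p$. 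This bijectivity is exactly what guarantees that each forbidden chain has one and only one partner, so that the prescribed cancellation is genuinely an involution with no fixed points among the forbidden chains.

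With the splitting in hand I would run the induction on the number $r$ of blocks trailing the last tipped edge, as organised by (\ref{chain-12}) for $r=1$ and (\ref{chain-12p}) for general $r$. At the $r$-th stage the chain carrying a tipped edge followed by $r$ blocks cancels against the chains terminating in a repeated \textup{v1}-label downstream, and vice versa; the summation over the insertion position $l$ in (\ref{chain-12p}) accounts for every way of reinserting the detached deficit-$0$ block. A point I would stress is that the same identities hold with any trailing \textup{v1}-block replaced by a \textup{v2}-block and \textup{e1}$^{p}$ replaced by \textup{e3}$^{p}$, so both edge types and both block types are eliminated simultaneously. Iterating until $p\geq1$ edges and repeated \textup{v1}-labels are exhausted leaves precisely the chains built from \textup{v2}-blocks and deficit-$0$, pairwise-distinctly-labelled \textup{v1}-blocks joined by untipped edges, which is the assertion.

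The main obstacle I anticipate is the second step: confirming that the Catalan-suffix splitting is a clean bijection on the nose, including the degenerate case $n_{s}=0$, and that the induced matching of $\nabla$-exponents and of signs makes the two sides of (\ref{chain-12p}) cancel term-by-term rather than only after an additional resummation. One must also verify that the sum over $l$ neither omits a reassembly possibility nor double-counts one, which is where the uniqueness built into the splitting — and hence the Catalan condition — is really doing the work.
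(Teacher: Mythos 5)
Your proposal follows essentially the same route as the paper's own proof: the canonical splitting of a deficit-$p$ v1-block via the maximal Catalan suffix (forcing $n_{t-1}=0$), the local weight identity based on $\nabla^{0}\varpi_{0,|I|+1}(I;-\hat{q}^{j})=\varpi_{0,|I|+1}(I;-\hat{q}^{j})/(-dR(\hat{q}^{j}))$, and the induction on the number of blocks trailing the last tipped edge organised by (\ref{chain-12}) and (\ref{chain-12p}), including the extension to v2-blocks and e3-edges. The crux you flag — that the splitting is a genuine bijection between deficit-$p$ degrees and pairs of deficits $(p-1,0)$, so the sum over the reinsertion position $l$ neither omits nor double-counts — is exactly the uniqueness argument the paper supplies.
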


All surviving v1-blocks have degrees of deficit $0$, i.e.\ are
reversals of Catalan tuples.  In the next step we collect all
v1-blocks which have the same union of their partition distribution
(and deficit $0$) to a \emph{\textup{v1}-group}:
\begin{align}
 \begin{tikzpicture}
  \useasboundingbox (-.3,-0.1) rectangle (.3,.7);
  \draw (0,0) node[regular polygon,regular polygon sides=6,draw] {};
  \node (j) at (0,.5) {\footnotesize$j$};
  \node (I) at (0,-.4) {\footnotesize$I$};
\end{tikzpicture}
  &:=\sum_{s=0}^{|I|-1} \sum_{I_0\uplus I_1\uplus ... \uplus I_s=I}
  \sum_{(n_s,....,n_1,0)\in \mathcal{C}_s} 
 \begin{tikzpicture}
  \useasboundingbox (-1,-0.1) rectangle (1,.7);
  \draw (0,0) node[regular polygon,regular polygon sides=6,draw,fill=black] {};
  \node (j) at (0,.5) {\footnotesize$j,(n_1,n_2,...,n_s)$};
  \node (I) at (0,-.4) {\footnotesize$(I_0,I_1,...,I_s)$};
\end{tikzpicture}\;,
\label{coll-v1}
\\
\text{weight}\Big(
 \begin{tikzpicture}
  \useasboundingbox (-.3,-0.1) rectangle (.3,.7);
  \draw (0,0) node[regular polygon,regular polygon sides=6,draw] {};
  \node (j) at (0,.5) {\footnotesize$j$};
  \node (I) at (0,-.4) {\footnotesize$I$};
\end{tikzpicture}
\Big)&=
(-dR(\hat{q}^j)) \sum_{s=0}^{|I|-1} \sum_{I_0\uplus I_1\uplus ... \uplus I_s=I}
  \sum_{(n_s,....,n_1,0)\in \mathcal{C}_s} 
\prod_{i=0}^s \nabla^{n_i}\varpi_{0,|I_i|+1}(I_i;-\hat{q}^j)
\;.
\nonumber
\end{align}
We have used that the leftmost vertex of every v1-block
has weight
$\varpi_{0,|I_0|+1}(I_0;-\hat{q}^j)
=(-dR(\hat{q}^j)) \nabla^{0}\varpi_{0,|I_0|+1}(I_0;-\hat{q}^j)$.

Similarly we collect v2-blocks with the same union of their partition
distribution to a \emph{\textup{v2}-group}:
\begin{align}
 \begin{tikzpicture}
  \useasboundingbox (-.3,-0.1) rectangle (.3,.4);
  \draw (0,0) node[diamond,draw] {};
  \node (I) at (0,-.4) {\footnotesize$I;u$};
\end{tikzpicture}
  &:= \begin{tikzpicture}
  \useasboundingbox (-.3,-0.1) rectangle (.2,.4);
  \draw (-.12,-.12) rectangle (.12,.12);
  \node (I) at (0,-.4) {\footnotesize$u$};
\end{tikzpicture}\delta_{\|I\|,0}
+
\sum_{s=1}^{|I|} \sum_{I_1\uplus ... \uplus I_s=I}
 \begin{tikzpicture}
  \useasboundingbox (-1,-0.1) rectangle (1,.7);
  \draw (0,0) node[regular polygon,regular polygon sides=5,draw] {};
  \node (I) at (0,-.4) {\footnotesize$(I_1,...,I_s);u$};
\end{tikzpicture}\;,
\label{coll-v2}
\\
\text{weight}\Big(
 \begin{tikzpicture}
  \useasboundingbox (-.3,-0.1) rectangle (.3,.4);
  \draw (0,0) node[diamond,draw] {};
  \node (I) at (0,-.4) {\footnotesize$I;u$};
\end{tikzpicture}
\Big)&= -\sum_{s=0}^{|I|}
\sum_{I_1\uplus ... \uplus I_s=I}^{(I\neq \emptyset)}
\frac{1}{(R(-u)-R(q))^{s+1}}
\prod_{i=1}^s \frac{\varpi_{|I_i|+1}(I_i;u)}{dR(u)} \;.
\nonumber
\end{align}
The summation $\sum_{I_1\uplus ... \uplus I_s=I}^{(I\neq \emptyset)}$ is left out for $|I| =\emptyset$. 
For $I\neq \emptyset$ there is no contribution from $s=0$.
We summarise the previous simplifications and collections:
\begin{corollary}
  \label{cor-graph-reduced}
  The integrand $\sum_{I_1\uplus I_2=I}
\varpi_{0,|I_1|+1}(I_1;q) \mathfrak{v}_{0,|I_2|}(I_2\|q)$
in the first line of \eqref{om0-fraku} is the sum of weights of all
  different chains which meet the criteria:
  \begin{itemize}
    \item The leftmost vertex is \textup{v0} with weight $-\varpi_{0,|I_1|+1}(I_1;q)$.
    \item Any other vertex is a \textup{v1}-group with weight \eqref{coll-v1}
      or a \textup{v2}-group with weight \eqref{coll-v2}.
      The labels $j_i$ of the \textup{v1}-groups are pairwise different.
    \item The union of all subsets $I_i$ at the initial vertex, the \textup{v1}-groups
      and the \textup{v2}-groups, together with the labels $u_k$ of the
      \textup{v2}-groups, is $I=\{u_1,...,u_m\}$.

    \item The edges between the groups (and initial vertex) are given
      by \textup{e1}$^0$,\textup{e2},\textup{e3}$^0$,\textup{e4}
      depending on the groups they connect. Their weights are given in
      Table~\ref{tab1}.

    \end{itemize}
\end{corollary}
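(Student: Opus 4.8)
The plan is to treat Corollary~\ref{cor-graph-reduced} as a repackaging of the chain expansion of the integrand once the cancellations of Lemma~\ref{lemma-cancel-noCatalan} are in place. First I would recall what has already been set up: iterating the recursion (\ref{frakuIq}), (\ref{uIzq}) and (\ref{calUIqq}) — exactly the procedure carried out step by step in the Examples — expresses the integrand $\sum_{I_1\uplus I_2=I}\varpi_{0,|I_1|+1}(I_1;q)\,\mathfrak{v}_{0,|I_2|}(I_2\|q)$ of the first line of (\ref{om0-fraku}) as the sum of weights over the full set $M$ of chains. Each chain begins with the initial vertex v0 of weight $-\varpi_{0,|I_1|+1}(I_1;q)$, continues with vertices v1, v2, v3 joined by the edges e1$^p$, e2, e3$^p$, e4, e5, e6$^n$ of Table~\ref{tab1}, and the sum runs over all partitions of $I$ among the vertices, over the v1-labels, and over the admissible exponents.

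Next I would invoke Lemma~\ref{lemma-cancel-noCatalan} directly. The pairwise cancellations (\ref{chain-12}) and (\ref{chain-12p}) annihilate every chain that carries an edge e1$^p$ or e3$^p$ with a tip $p\geq 1$, as well as every chain containing two equally labelled v1-blocks. The surviving chains are therefore built solely from v2-blocks and from v1-blocks of deficit $0$ — whose reverse degrees $(n_s,\dots,n_1,0)$ are Catalan tuples in $\mathcal{C}_s$ — with pairwise distinct v1-labels, all connected by the tip-free edges e1$^0$, e2, e3$^0$, e4.

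The remaining content of the corollary is the collection of blocks into groups. Holding a chain position, a label $j$, and a union $I_0\uplus\dots\uplus I_s$ fixed, summing the weights of all deficit-$0$ v1-blocks over their internal data — the size $s$, the ordered partition distribution $(I_0,\dots,I_s)$, and the degree constrained by $(n_s,\dots,n_1,0)\in\mathcal{C}_s$ — reproduces the v1-group weight (\ref{coll-v1}); the analogous collection of v2-blocks of fixed label and union gives the v2-group weight (\ref{coll-v2}). This regrouping is legitimate because the weight of an edge joining two blocks depends only on the labels of its two endpoints and, every surviving block having deficit $0$, carries no tip; hence those edge weights are constant across all internal structures of the adjacent groups and factor out of the internal sums. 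Reading off the resulting chains of groups then yields the four stated criteria: the leftmost vertex is v0 with weight $-\varpi_{0,|I_1|+1}(I_1;q)$; every other vertex is a v1-group or a v2-group, with pairwise distinct v1-labels inherited from Lemma~\ref{lemma-cancel-noCatalan}; the subsets together with the v2-labels $u_k$ exhaust $I$; and the connecting edges are e1$^0$, e2, e3$^0$, e4 fixed by the endpoint types.

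The main obstacle I expect lies in this last regrouping, namely confirming that it is simultaneously weight-preserving and a bijection between surviving chains and chains of groups carrying a choice of internal data. Concretely, I would verify that no edge weight secretly depends on the internal degree or partition data being summed over — which is precisely where the deficit-$0$ condition forcing tip-free edges is indispensable — and that prepending $n_0=0$ turns the deficit-$0$ requirement into exactly the Catalan-tuple condition $(n_s,\dots,n_1,0)\in\mathcal{C}_s$ used in (\ref{coll-v1}), so that each admissible internal structure of a group is generated once and only once.
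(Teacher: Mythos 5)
Your proposal is correct and follows essentially the same route as the paper, which states Corollary~\ref{cor-graph-reduced} as a direct summary of the chain expansion, Lemma~\ref{lemma-cancel-noCatalan}, and the collection of blocks into groups via \eqref{coll-v1} and \eqref{coll-v2}. Your explicit verification that the tip-free edge weights depend only on the block labels (so that they factor out of the internal sums) and that prepending $n_0=0$ identifies the deficit-$0$ condition with the Catalan-tuple constraint is precisely the content the paper leaves implicit.
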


\subsection{Weight of a v1-group}

\label{sec:weightv1}

Next we prove a simpler formula for the weight (\ref{coll-v1}) of a
v1-group. Its main step is Corollary (\ref{cor:leibniz}), a variant of
Corollary \ref{lemma:multinomialDolega} given in the
appendix. In the
second line of (\ref{coll-v1}) we write the sum over all
partitions as sum over ordered partitions (introduced in the
beginning of Subsection~\ref{sec:tools}) together with a sum over
permutations $\varsigma$.  Inserting the definition (\ref{def:nabla-om}) of
$\nabla \varpi$ we thus have
\begin{align*}
\text{weight}\Big(
 \begin{tikzpicture}
  \useasboundingbox (-.3,-0.1) rectangle (.3,.7);
  \draw (0,0) node[regular polygon,regular polygon sides=6,draw] {};
  \node (j) at (0,.5) {\footnotesize$j$};
  \node (I) at (0,-.4) {\footnotesize$I$};
\end{tikzpicture}
\Big)
&=
(-dR(\hat{q}^j)) \sum_{s=0}^{|I|-1} \sum_{\substack{I_0\uplus I_1\uplus ... \uplus I_s=I \\ I_0<I_1<...<I_s}}
\sum_{\varsigma \in \mathcal{S}_{s+1}} 
\sum_{(n_s,....,n_1,0)\in \mathcal{C}_s} \lim_{z\to -\hat{q}^j} \Big\{
\\*
&\qquad 
  \prod_{i=0}^s \frac{(-1)^{n_i}}{n_i!}
  \frac{\partial}{\partial (R(z))^{n_i}}
  \Big( \frac{R(z)-R(-\hat{q}^j)}{R(q)-R(-z)} 
\frac{\varpi_{0,|I_{\varsigma(i)}|+1}(I_{\varsigma(i)};z)}{dR(z)}\Big)\Big\}\;.
\end{align*}
With Corollary~\ref{cor:leibniz} and the bijection between rooted
plane trees and Catalan tuples we can replace
$\sum_{\varsigma \in \mathcal{S}_{s+1}} \sum_{(n_s,....,n_1,0)\in
  \mathcal{C}_s} \mapsto s! \sum_{n_0+...+n_s=s}$. We reexpress the
result in terms of $\nabla \varpi$ and admit again any order of
partitions of $I$ into $s+1$ subsets:
\begin{align}
&\text{weight}\Big(
 \begin{tikzpicture}
  \useasboundingbox (-.3,-0.1) rectangle (.3,.7);
  \draw (0,0) node[regular polygon,regular polygon sides=6,draw] {};
  \node (j) at (0,.5) {\footnotesize$j$};
  \node (I) at (0,-.4) {\footnotesize$I$};
\end{tikzpicture}
\Big)
\label{weight-v1-a}
\\
&=
(-dR(\hat{q}^j)) \sum_{s=0}^{|I|-1} \sum_{I_0\uplus I_1\uplus ... \uplus I_s=I}
\frac{1}{s+1}
  \sum_{n_0+...+n_s=s}  \prod_{i=0}^s \nabla^{n_i}\varpi_{0,|I_i|+1}(I_i;-\hat{q}^j)\;.
\nonumber
\end{align}

Our aim is to prove Theorem~\ref{thm:sol-dse}, namely that the
solution $\varpi_{0,|I|+1}(I;q)$ of the system \eqref{om0-fraku},
\eqref{frakuIq}, \eqref{uIzq} (for $z\mapsto u_k$) and \eqref{uIzq-d}
is, after applying the exterior differentials $d_{u_k}$ to pass from
$\varpi_{0,|I|+1}$ to $\omega_{0,|I|+1}$, the same as the
solution of (\ref{eq:flip-om}) for $x(z)=R(z)$ and
$\iota(z)=-z$. We prove this theorem by induction. The v1-group is
always a genuine subchain because at least the initial vertex v0 is
excluded. Therefore Theorem~\ref{thm:sol-dse} is the induction
hypothesis for the v1-group, which gives:
\begin{proposition}
\label{prop:weight-v1}
  The \textup{v1}-group has
  $\textup{weight}\Big(
 \begin{tikzpicture}
  \useasboundingbox (-.3,-0.1) rectangle (.3,.7);
  \draw (0,0) node[regular polygon,regular polygon sides=6,draw] {};
  \node (j) at (0,.5) {\footnotesize$j$};
  \node (I) at (0,-.4) {\footnotesize$I$};
\end{tikzpicture}
\Big)
=-\varpi_{0,|I|+1}(I;\hat{q}^j)$.
\begin{proof}
  This follows from Lemma \ref{lem:omega-nabla} for
  $q\mapsto -\hat{q}^j$ when moving the first term
  $\omega_{0,|I|+1}(I,-\hat{q}^j)=dy(-\hat{q}^j) \nabla^0
  \omega_{0,|I|+1}(I,-\hat{q}^j)$ to the rhs. Then
  $d_{u_1}\cdots d_{u_m}$ applied to (\ref{weight-v1-a}) equals
  $-\omega_{0,|I|+1}(I,\hat{q}^j)$ when taking
  Theorem~\ref{thm:sol-dse} as induction hypothesis, for
  $I=\{u_1,...,u_m\}$. Inverting the differentials $d_{u_k}$ gives the
  assertion.
\end{proof}
\end{proposition}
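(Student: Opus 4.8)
The plan is to recognise the weight (\ref{weight-v1-a}) as a single instance of the reformulation of the involution identity in Lemma~\ref{lem:omega-nabla}, evaluated at the reflected point. Write $W$ for the weight displayed in (\ref{weight-v1-a}) and set $q=-\hat{q}^j$, so that $\iota q=\hat{q}^j$. First I would reindex the outer sum in (\ref{weight-v1-a}) so that $s$ counts the number of blocks $I_i$; then $s$ runs from $1$ to $|I|$, the partitions are $I_1\uplus\dots\uplus I_s=I$, the exponents obey $n_1+\dots+n_s=s-1$, and the prefactor is $\tfrac1s$. Splitting off the single $s=1$ contribution, which by (\ref{def:nabla-om}) equals $\nabla^0\varpi_{0,|I|+1}(I;q)=\varpi_{0,|I|+1}(I;q)/dy(q)$, presents $W$ as $(-dR(\hat{q}^j))$ times this linear term plus a tail over $s\ge2$ that has exactly the shape of the right-hand side of Lemma~\ref{lem:omega-nabla}.

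Second, I would apply $d_{u_1}\cdots d_{u_m}$ to convert $\varpi$ into $\omega$. Each $d_{u_k}$ acts on its own variable only, and since the pole at $z=q$ defining $\nabla$ does not depend on any $u_k$, it commutes with that residue; hence the exterior derivatives distribute over the partition and replace each factor $\nabla^{n_i}\varpi_{0,|I_i|+1}(I_i;q)$ by $\nabla^{n_i}\omega_{0,|I_i|+1}(I_i;q)$. In the $s\ge2$ tail every block $I_i$ is a proper subset of $I$, so the induction hypothesis (Theorem~\ref{thm:sol-dse} for sets strictly smaller than the top-level argument) identifies these $\omega_{0,|I_i|+1}$ with the solutions of (\ref{eq:flip-om}); in particular $\omega_{0,|I|+1}(I,\cdot)$ itself then satisfies the involution identity. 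This legitimises invoking Lemma~\ref{lem:omega-nabla} at $q=-\hat{q}^j$, which collapses the entire tail to $-\big(\omega_{0,|I|+1}(I,-\hat{q}^j)+\omega_{0,|I|+1}(I,\hat{q}^j)\big)/dy(-\hat{q}^j)$.

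Third, the two copies of $\omega_{0,|I|+1}(I,-\hat{q}^j)$ cancel, and inserting $dy(-\hat{q}^j)=-dR(\hat{q}^j)$ — the specialisation $-dy(\iota z)=dx(z)$ of (\ref{def:y}) with $\iota z=-z$, $x=R$ at $z=\hat{q}^j$ — makes the remaining prefactors cancel, leaving $d_{u_1}\cdots d_{u_m}W=-\omega_{0,|I|+1}(I,\hat{q}^j)$. Since $\omega_{0,|I|+1}(I,\hat{q}^j)=d_{u_1}\cdots d_{u_m}\varpi_{0,|I|+1}(I;\hat{q}^j)$, I would finish by inverting the differentials $d_{u_k}$ with the right inverse $d^{-1}_u\omega(u)=\int_{u'=\infty}^{u'=u}\omega(u')$, which gives $W=-\varpi_{0,|I|+1}(I;\hat{q}^j)$.

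I expect the only genuine obstacle to be conceptual: applying Lemma~\ref{lem:omega-nabla} at the full length $|I|$ is tantamount to already knowing that $\omega_{0,|I|+1}(I,\cdot)$ solves (\ref{eq:flip-om}), which is precisely the kind of statement the whole programme is establishing. The resolution is structural — a \textup{v1}-group never carries the top-level set (the initial vertex \textup{v0} absorbs at least one marked point), so $|I|$ is strictly below the induction parameter and Theorem~\ref{thm:sol-dse} is available as hypothesis. A minor point to verify is that inverting $d_{u_1}\cdots d_{u_m}$ loses no additive term independent of all $u_k$; this is controlled by the chosen normalisation of $d^{-1}_u$, under which both $W$ and $\varpi_{0,|I|+1}$ are fixed by the same vanishing behaviour as the $u_k\to\infty$.
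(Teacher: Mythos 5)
Your proposal is correct and follows essentially the same route as the paper: reindex (\ref{weight-v1-a}) so that the $s=1$ term is $\nabla^0\varpi_{0,|I|+1}(I;-\hat{q}^j)$, recognise the $s\geq 2$ tail as the right-hand side of Lemma~\ref{lem:omega-nabla} at $q\mapsto-\hat{q}^j$ (legitimate because the \textup{v1}-group's $I$ is strictly below the induction parameter of Theorem~\ref{thm:sol-dse}, the initial vertex \textup{v0} having absorbed at least one marked point), cancel the two copies of $\omega_{0,|I|+1}(I,-\hat{q}^j)$ using $dy(-\hat{q}^j)=-dR(\hat{q}^j)$, and invert the exterior differentials. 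Your additional remarks on commuting $d_{u_k}$ past the residue and on the normalisation of $d^{-1}_u$ fixing the integration constants are consistent with the paper's conventions and do not change the argument.
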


\subsection{Poles of \texorpdfstring{$\varpi_{0,|I|+1}(I;z)$}{\omega(I,z)} at
  \texorpdfstring{$z=\beta_i$}{z=\beta}}

We let
$\mathcal{P}^i_z\omega(z) = \Res\displaylimits_{q\to \beta_i}
\frac{\omega(q)dz}{z-q}$ be the projection of a 1-form $\omega$ to its poles at
$z=\beta_i$. Proposition~\ref{prop:om-t} gives
\begin{align*}
\mathcal{P}^i_z \varpi_{0,|I|+1}(I;z)
&= \mathcal{P}^i_z\Big(
\sum_{I_1\uplus I_2=I} \varpi_{0,|I_1|+1}(I_1;z) \mathfrak{v}_{0,|I_2|}(I_2\|z)\Big)\;.
\end{align*}
\begin{proposition}
  \label{prop:varpi-polar}
  Let $\hat{q}^{j_i}=\sigma_i(q)$ be the preimage of $q$ which
  corresponds to the local Galois involution near $\beta_i$. Then for
  all $I=\{u_1,...,u_m\}$ with $m\geq 2$ one has
\begin{align*}  
\mathcal{P}^i_z\varpi_{0,|I|+1}(I;z)
&=- \mathcal{P}^i_z\Big(
\sum_{I_1\uplus I_2=I}
\frac{\varpi_{0,|I_1|+1}(I_1;z)\varpi_{0,|I_1|+1}(I_1;\hat{z}^{j_i})}{
 dR(\hat{z}^{j_i})(R(-z)-R(-\hat{z}^{j_i}))}  \Big)\;.
\end{align*}
The application of $d_{u_1}\cdots d_{u_k}$ agrees with the restriction of \eqref{sol:omega}
to poles at $z=\beta_i$.
\begin{proof}
In the graphical representation of Corollary \ref{cor-graph-reduced}, the assertion amounts to
\begin{align}
\mathcal{P}_q^i\Big(- \begin{tikzpicture}
        \useasboundingbox (.8,-.1) rectangle (1.2,.55);
    \draw (1,0) circle (4pt);
    \node (l1) at (1,.35) {$0$};
    \node (l2) at (1,-.35) {$I$};
  \end{tikzpicture} \Big)
  =\mathcal{P}_q^i\Big(
   \begin{tikzpicture}
  \useasboundingbox (-.3,-0.1) rectangle (1.3,.7);
  \draw (0,0) circle (4pt) ;
  \node (a1) at (0,0.35) {$0$};
  \node (a2) at (0,-0.4) {$I_{1}$};  
  \draw (.15,0) -- (.8,0);  
  \draw (1,0) node[regular polygon,regular polygon sides=6,draw] {};
  \node (j) at (1,.5) {\footnotesize$j_i$};
  \node (I) at (1,-.4) {\footnotesize$I_2$};
\end{tikzpicture}\rule[-4mm]{0mm}{4mm}\Big)\;.
\label{polar-assertion}
\end{align}
The rhs is one of the chains contributing to the residue at $q=\beta_i$ in the first line
of (\ref{om0-fraku}).  We have to prove that all other chains described in
Corollary \ref{cor-graph-reduced} sum up to expressions regular at
$q=\beta_i$.

We prove this regularity by induction on the length of chains (with
v1/v2-groups as vertices). By $\bar{j}$ we denote a label different
from $j_i$.  There are two remaining chains of length $2$, namely $
\begin{tikzpicture}
  \useasboundingbox (-.3,-0.1) rectangle (1.3,.7);
  \draw (0,0) circle (4pt) ;
  \node (a1) at (0,0.35) {$0$};
  \node (a2) at (0,-0.4) {$I_{1}$};  
  \draw (.15,0) -- (.8,0);  
  \draw (1,0) node[regular polygon,regular polygon sides=6,draw] {};
  \node (j) at (1,.5) {\footnotesize$\bar{j}$};
  \node (I) at (1,-.4) {\footnotesize$I_2$};
\end{tikzpicture}
+
   \begin{tikzpicture}
  \useasboundingbox (-.3,-0.1) rectangle (1.3,.7);
  \draw (0,0) circle (4pt) ;
  \node (a1) at (0,0.35) {$0$};
  \node (a2) at (0,-0.4) {$I_{1}$};  
  \draw[decorate, decoration={snake}] (0.15,0) -- (0.75,0); 
  \draw (1,0) node[diamond,draw] {};
  \node (I) at (1,-.4) {\footnotesize$I_2;u$};
\end{tikzpicture}\rule[-2.4ex]{0mm}{4mm}$
(in the first chain summation over $\bar{j}\neq j_i$ and over
partitions $I_1\uplus I_2=I$, in the second chain summation over
partitions $I_1\uplus u \uplus I_2=I$). Edges, v1-groups with label
$\bar{j}$ and v2-groups are regular at $q=\beta_i$. The initial vertex
v0 is regular for $|I_1|=1$ so that these chains only contribute to
$\mathcal{P}^i_q$ for $|I_1|\geq 2$.  In that case we can, up to
terms holomorphic at $\beta_i$, replace the initial 
vertex by (\ref{polar-assertion})
for $I\mapsto I_1$, which is true by induction hypothesis. We thus
have
\begin{align}
  &\mathcal{P}_q^i\Big(
\begin{tikzpicture}
  \useasboundingbox (-.3,-0.1) rectangle (1.3,.7);
  \draw (0,0) circle (4pt) ;
  \node (a1) at (0,0.35) {$0$};
  \node (a2) at (0,-0.4) {$I_{1}$};  
  \draw (.15,0) -- (.8,0);  
  \draw (1,0) node[regular polygon,regular polygon sides=6,draw] {};
  \node (j) at (1,.5) {\footnotesize$\bar{j}$};
  \node (I) at (1,-.4) {\footnotesize$I_2$};
\end{tikzpicture}
+
   \begin{tikzpicture}
  \useasboundingbox (-.3,-0.1) rectangle (1.4,.7);
  \draw (0,0) circle (4pt) ;
  \node (a1) at (0,0.35) {$0$};
  \node (a2) at (0,-0.4) {$I_{1}$};  
  \draw[decorate, decoration={snake}] (0.15,0) -- (0.75,0); 
  \draw (1,0) node[diamond,draw] {};
  \node (I) at (1,-.4) {\footnotesize$I_2;u$};
\end{tikzpicture}
  \Big)
  =-\mathcal{P}_q^i\Bigg(  \begin{tikzpicture}
    \useasboundingbox (-.3,-0.1) rectangle (1.8,1);
  \draw (0,0) circle (4pt) ;
  \node (l1) at (0,0.35) {$0$};
  \node (l2) at (0,-0.4) {$I_{-1}$};
  \draw (0.1,-0.1) -- (1.35,-0.6);
  \draw (0.1,0.1) -- (1.35,0.6);
  \draw (1.5,-.7) node[regular polygon,regular polygon sides=6,draw] {};
  \node (m1) at (1.5,-1.1) {\footnotesize$I_2$};
  \node (m2) at (1.5,-.3) {\footnotesize$\bar{j}$};
  \draw (1.5,.7) node[regular polygon,regular polygon sides=6,draw] {};
  \node (m3) at (1.5,.3) {\footnotesize$I_0$};
  \node (m4) at (1.5,1.1) {\footnotesize$j_i$};
 \end{tikzpicture} 
 + \begin{tikzpicture}
    \useasboundingbox (-.3,-0.1) rectangle (1.8,1);
  \draw (0,0) circle (4pt) ;
  \node (l1) at (0,0.35) {$0$};
  \node (l2) at (0,-0.4) {$I_{-1}$};
  \draw[decorate, decoration={snake}] (0.1,-0.1) -- (1.32,-0.6);
  \draw (0.1,0.1) -- (1.32,0.6);
  \draw (1.5,-.7) node[diamond,draw] {};
  \node (m1) at (1.5,-1.1) {\footnotesize$I_2;u$};
  \draw (1.5,.7) node[regular polygon,regular polygon sides=6,draw] {};
  \node (m3) at (1.5,.3) {\footnotesize$I_0$};
  \node (m4) at (1.5,1.1) {\footnotesize$j_i$};
\end{tikzpicture}
\Bigg)
\label{Pq-3}
  \\[4ex]
  &=-\mathcal{P}_q^i\Big(
  \begin{tikzpicture}
    \useasboundingbox (-.3,-0.1) rectangle (2.3,.4);
  \draw (0,0) circle (4pt) ;
  \node (l1) at (0,0.35) {$0$};
  \node (l2) at (0,-0.4) {$I_1$};
    \draw (0.15,0) -- (.8,0);
    \draw (1,0) node[regular polygon,regular polygon sides=6,draw] {};
    \node (m01) at (1,-.4) {\footnotesize$I_2$};
    \node (m02) at (1,.4) {\footnotesize$j_i$};
    \draw (1.2,0) -- (1.8,0);
    \draw (2,0) node[regular polygon,regular polygon sides=6,draw] {};
    \node (m11) at (2,-.4) {\footnotesize$I_3$};
    \node (m12) at (2,.4) {\footnotesize$\bar{j}$};
  \end{tikzpicture}
  +
  \begin{tikzpicture}
    \useasboundingbox (-.3,-0.1) rectangle (2.3,.4);
  \draw (0,0) circle (4pt) ;
  \node (l1) at (0,0.35) {$0$};
  \node (l2) at (0,-0.4) {$I_1$};
    \draw (0.15,0) -- (.8,0);
    \draw (1,0) node[regular polygon,regular polygon sides=6,draw] {};
    \node (m01) at (1,-.4) {\footnotesize$I_2$};
    \node (m02) at (1,.4) {\footnotesize$\bar{j}$};
    \draw (1.2,0) -- (1.8,0);
    \draw (2,0) node[regular polygon,regular polygon sides=6,draw] {};
    \node (m11) at (2,-.4) {\footnotesize$I_3$};
    \node (m12) at (2,.4) {\footnotesize$j_i$};
  \end{tikzpicture}
  +
  \begin{tikzpicture}
    \useasboundingbox (-.3,-0.1) rectangle (2.3,.4);
    \draw (0,0) circle (4pt) ;
    \node (l1) at (0,0.35) {$0$};
    \node (l2) at (0,-0.4) {$I_1$};
     \draw (0.15,0) -- (.8,0);
    \draw (1,0) node[regular polygon,regular polygon sides=6,draw] {};
    \node (m01) at (1,-.4) {\footnotesize$I_2$};
    \node (m02) at (1,.4) {\footnotesize$j_i$};
    \draw[decorate, decoration={snake}] (1.2,0) -- (1.75,0);
    \draw (2,0) node[diamond,draw] {};
    \node (m1) at (2,-.4) {\footnotesize$I_3;u$};
  \end{tikzpicture}
 + \begin{tikzpicture}
    \useasboundingbox (-.3,-0.1) rectangle (2.3,.4);
    \draw (0,0) circle (4pt) ;
    \node (l1) at (0,0.35) {$0$};
    \node (l2) at (0,-0.4) {$I_1$};
    \draw[decorate, decoration={snake}] (0.15,0) -- (.75,0);
    \draw (1,0) node[diamond,draw] {};
    \node (m1) at (1,-.4) {\footnotesize$I_2;u$};
    \draw (1.2,0) -- (1.8,0);
    \draw (2,0) node[regular polygon,regular polygon sides=6,draw] {};
    \node (m01) at (2,-.4) {\footnotesize$I_3$};
    \node (m02) at (2,.4) {\footnotesize$j_i$};
  \end{tikzpicture}\rule[-2.4ex]{0mm}{4mm}
  \Big)\;.
  \nonumber
\end{align}
This identity removes all chains of length 3 with a v1-group labelled $j_i$
at any position. There remain only the chains of length 3 without
v1-group labelled $j_i$. For $|I_1|=1$ these are
holomorphic at $q=\beta_i$ and can be discarded in the projection
$\mathcal{P}_q^i$. The only poles come from initial v0-vertices
with $|I_1|\geq 2$ multiplied by regular expressions. We can thus use 
(\ref{polar-assertion}) for $I\mapsto I_1$ again and
express by the same mechanism as (\ref{Pq-3})
the survived length-3 chains as $-\mathcal{P}_q^i$ of
all length-4 chains which have a v1-group labelled $j_i$
at any position. These cancel in the graphical representation.
Since the  v1-group labelled $j_i$ can occur only once by
Lemma~\ref{lemma-cancel-noCatalan}, 
only the length-4 chains without
any v1-group labelled $j_i$ survive the cancellation.

We repeat this procedure up to chains of length $|I|$. The surviving
ones have an initial v0-vertex and otherwise v1/v2-groups with other
labels than $j_i$.  Now because the initial v0-vertex necessarily has
$|I_1|=1$, it is also regular at $q=\beta_i$. Therefore, all chains
survived up to this point project with $\mathcal{P}_q^i$ to $0$.
\end{proof}
\end{proposition}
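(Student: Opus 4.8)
The plan is to compute the principal part $\mathcal{P}^i_z\varpi_{0,|I|+1}(I;z)$ directly from the loop equation \eqref{om0-fraku}. First I would observe that among the three groups of terms on the right-hand side of \eqref{om0-fraku}, only the ramification residue $\Res_{q\to\beta_i}\frac{dz}{z-q}[\,\cdot\,]$ in the first line can produce a pole at $z=\beta_i$: the explicit terms $\frac{\mathfrak v_{0,|I|-1}(I\setminus u_k\|u_k)\,dz}{z+u_k}$ and the residues at $q\to -u_k$ are supported at $z=\mp u_k$. Hence
\[
\mathcal{P}^i_z\varpi_{0,|I|+1}(I;z)
=\mathcal{P}^i_z\Big(\sum_{I_1\uplus I_2=I}\varpi_{0,|I_1|+1}(I_1;z)\,\mathfrak v_{0,|I_2|}(I_2\|z)\Big),
\]
and I would feed this chain integrand into the reduced graphical description of Corollary~\ref{cor-graph-reduced}, so that the right-hand side becomes a sum of weights of chains built from the initial vertex \textup{v0}, \textup{v1}-groups of pairwise distinct labels, and \textup{v2}-groups, joined by tipless edges.

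The second step is to single out the one chain that survives. By Proposition~\ref{prop:weight-v1}, the \textup{v1}-group of label $j_i$ (with $\hat q^{j_i}=\sigma_i(q)$) carrying the subset $I_2$ has weight $-\varpi_{0,|I_2|+1}(I_2;\sigma_i(q))$, and the edge \textup{e1}$^0$ joining \textup{v0} to it has weight $\big((R(-q)-R(-\sigma_i(q)))(-dR(\sigma_i(q)))\big)^{-1}$ from Table~\ref{tab1}. Multiplying by the \textup{v0}-weight $-\varpi_{0,|I_1|+1}(I_1;q)$ and summing over $I_1\uplus I_2=I$ reproduces exactly the claimed formula. Thus the proposition is equivalent to the graphical identity \eqref{polar-assertion}, namely that \emph{every other} chain contributes a term holomorphic at $q=\beta_i$.

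The regularity of the remaining chains I would prove by induction on the number of groups in a chain. The key local observation is that, in any chain containing no \textup{v1}-group of label $j_i$, all edges, all \textup{v1}-groups (labels $\bar j\neq j_i$) and all \textup{v2}-groups are regular at $q=\beta_i$, because their denominators $R(-q)-R(-\hat q^{\bar j})$, $dR(\hat q^{\bar j})$, $R(-u)-R(q)$ do not degenerate there; only the coalescing preimage $\sigma_i(q)\to\beta_i$ could create a singularity, and that is precisely what a $j_i$-group contributes. Consequently the sole source of a pole in such a chain is the initial vertex $-\varpi_{0,|I_1|+1}(I_1;q)$, and only when $|I_1|\geq 2$. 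For those chains I would invoke the induction hypothesis — the proposition for the strictly smaller subset $I_1$, i.e.\ \eqref{polar-assertion} with $I\mapsto I_1$ — to replace the polar part of the \textup{v0}-vertex by the length-two chain consisting of \textup{v0} followed by a \textup{v1}-group of label $j_i$. This lengthens the chain by one and inserts a $j_i$-labelled group; the resulting configurations are precisely the chains of the next length that carry such a group, and the graphical identity \eqref{Pq-3} shows they cancel.

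The main obstacle is this telescoping cancellation: one must check that the chains produced by the induction-hypothesis substitution match, chain by chain and with the correct signs and edge weights, the $j_i$-bearing chains of the next length that are required to vanish. Here two inputs are essential: that a $j_i$-labelled \textup{v1}-group can occur at most once, by Lemma~\ref{lemma-cancel-noCatalan}, which makes the position of the inserted group unambiguous and forbids double counting; and the Galois relation $x(\sigma_i(q))=x(q)$ near $\beta_i$, which is what turns the two admissible positions of the inserted $j_i$-group into a single cancelling identity in \eqref{Pq-3}. Iterating raises the chain length one step at a time; at length $|I|$ the initial vertex is forced to carry a single element $|I_1|=1$, where $\varpi_{0,2}$ is regular at $\beta_i$, so no chain survives the projection $\mathcal{P}^i_q$. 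This establishes \eqref{polar-assertion}, and applying $d_{u_1}\cdots d_{u_m}$ and comparing with the first residue in \eqref{sol:omega}$+$\eqref{eq:kernel} gives the stated identification.
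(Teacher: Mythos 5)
Your proposal is correct and follows essentially the same route as the paper's own proof: reduce to the graphical identity that all chains other than the one ending in the $j_i$-labelled \textup{v1}-group are regular at $q=\beta_i$, then run the induction on chain length in which the polar part of an initial \textup{v0}-vertex with $|I_1|\geq 2$ is replaced via the induction hypothesis, producing the telescoping cancellation against the $j_i$-bearing chains of the next length, with Lemma~\ref{lemma-cancel-noCatalan} guaranteeing the $j_i$-group occurs at most once and the base of the telescope closed because $\varpi_{0,2}$ is regular at $\beta_i$. Your explicit verification that the surviving chain's weight (via Proposition~\ref{prop:weight-v1} and the \textup{e1}$^0$ edge weight) reproduces the stated formula is a useful elaboration the paper leaves implicit, but the argument is the same.
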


\subsection{Poles of
  \texorpdfstring{$\varpi_ {0,|I|+1}(I;z)$}{\omega(I,z)} at
  \texorpdfstring{$z=-u_k$}{z=-uk}}

We prove in Section~\ref{sec:assumption}:
\begin{assumption}
\label{assump-res0}
  Let $I=\{u_1,...,u_m\}$ with $m\geq 2$. Then for every $k=1,...,m$ one has   
\[
\Res\displaylimits_{q\to  -u_k} \varpi_{0,|I|+1}(I;q)=0\;.
\]
\end{assumption}
\noindent We can thus focus on poles of second or higher order
captured by the projection
\begin{align}
\mathcal{H}^k_z \omega(u_k,z):=
\Res\displaylimits_{q\to -u_k} \Big[\Big(\frac{dz}{z-q}-\frac{dz}{z+u_k}\Big)
\omega(u_k,q)\Big]
\end{align}
for some $1$-form $\omega(u_k,z)$ in $z$ (which may depend on further
variables).  We prove:
\begin{proposition}
  \label{prop:W0-holom}
Let $I=\{u_1,...,u_m\}$ with $m\geq 2$. The projection $\mathcal{H}^k_q$
of $\varpi_{0,|I|+1}(I;q)$
is recursively given by 
  \begin{align}
    \mathcal{H}^k_q\Big( -   \begin{tikzpicture}
  \useasboundingbox (-.3,-0.1) rectangle (.3,.7);
  \draw (0,0) circle (4pt) ;
  \node (a1) at (0,0.35) {$0$};
  \node (a2) at (0,-0.4) {$I$};  
\end{tikzpicture}\Big)
=\mathcal{H}^k_q\bigg(
    \begin{tikzpicture}
  \useasboundingbox (-.3,-0.1) rectangle (1.5,.7);
  \draw (0,0) circle (4pt) ;
  \node (a1) at (0,0.35) {$0$};
  \node (a2) at (0,-0.4) {$I_{1}$};  
  \draw[decorate, decoration={snake}] (0.15,0) -- (0.75,0); 
  \draw (1,0) node[diamond,draw] {};
  \node (I) at (1,-.4) {\footnotesize$I_2;u_k$};
\end{tikzpicture}\bigg)
\label{Hkq-graph}
\end{align}
in the graphical description or explicitly by
\begin{align}
  &\mathcal{H}^k_q \varpi_{0,|I|+1}(I;q)
  \label{eq-Hkq}
  \\
  &=\sum_{s=0}^{|I|-2}
  \sum_{I_0\uplus...\uplus I_s=I{\setminus} u_k } \!\!\!\!\!\!
  \mathcal{H}^k_q\Big(
  \frac{\varpi_{0,|I_0|+1}(I_0;q)}{(R(-q)-R(u_k))(R(-u_k)-R(q))^{s+1}}
  \prod_{i=1}^s\frac{\varpi_{0,|I_i|+1}(I_i;u_k)}{dR(u_k)}
  \Big)\;.
  \nonumber
\end{align}
Remark. \normalfont Under the Assumption \ref{assump-res0} the
expression \eqref{eq-Hkq} is equal to
$\Res\displaylimits_{q\to -u_k} \frac{dz}{z-q}
\varpi_{0,|I|+1}(I;q)$. Application of $d_{u_1}\cdots d_{u_m}$ thus
coincides with \eqref{om-holqu} at $q\mapsto \iota q=-q$ and
$w\mapsto -z$. This was shown to be equivalent to (\ref{om-hol-u}) and
to \eqref{om-hol-qq}, both for $z\mapsto \iota z=-z$ and
$q\mapsto \iota z=-q$. Together with
Proposition~\ref{prop:varpi-polar} it follows that
$\omega_{0,m+1}(u_1,...,u_m,z):=d_{u_1}\cdots
d_{u_m}\varpi_{0,m+1}(u_1,...,u_m;z)$ agrees with
(\ref{sol:omega})+(\ref{eq:kernel}). Hence Theorem~\ref{thm:sol-dse}
is true if Assumption~\ref{assump-res0} holds.
\begin{proof}
  Since the second line of (\ref{om0-fraku}) only has a first-order pole at
  $z=-u_k$, the projection of (\ref{om0-fraku}) to poles of higher order reads
\begin{align}
  &\mathcal{H}^k_q\varpi_{0,|I|+1}(I;q)
  =\mathcal{H}^k_q\Big[
\sum_{I_1\uplus I_2=I}
\varpi_{|I_1|+1}(I_1;q) \mathfrak{v}_{0,|I_2|}(I_2\|q)\Big]\;.
\label{Hkq-all}
\end{align}
The rhs of (\ref{Hkq-graph}) is one of the chains contributing to the
rhs of (\ref{Hkq-all}). We prove by induction on the chain length
(with v1/v2-groups as vertices) that all other chains sum to
expressions which at $q=-u_k$ are holomorphic or have at most a
first-order pole. By $\bar{u}$ we denote any $u_l\neq u_k$.

At length 2 we have in addition to the rhs of (\ref{Hkq-graph}) the chains
\begin{tikzpicture}
  \useasboundingbox (-.3,-0.1) rectangle (1.3,.4);
  \draw (0,0) circle (4pt) ;
  \node (a1) at (0,0.35) {$0$};
  \node (a2) at (0,-0.4) {$I_{1}$};  
  \draw (.15,0) -- (.8,0);  
  \draw (1,0) node[regular polygon,regular polygon sides=6,draw] {};
  \node (j) at (1,.5) {\footnotesize$j$};
  \node (I) at (1,-.4) {\footnotesize$I_2$};
\end{tikzpicture}\rule[-3mm]{0mm}{4mm}
+
   \begin{tikzpicture}
  \useasboundingbox (-.3,-0.1) rectangle (1.3,.4);
  \draw (0,0) circle (4pt) ;
  \node (a1) at (0,0.35) {$0$};
  \node (a2) at (0,-0.4) {$I_{1}$};  
  \draw[decorate, decoration={snake}] (0.15,0) -- (0.75,0); 
  \draw (1,0) node[diamond,draw] {};
  \node (I) at (1,-.4) {\footnotesize$I_2;\bar{u}$};
\end{tikzpicture}\rule[-4mm]{0mm}{4mm}.
In the case $u_k\in I_2$ these chains are holomorphic at $q=-u_k$ and can
be discarded under $\mathcal{H}^k_q$. Remains $u_k\in I_1$. If
$I_1=\{u_k\}$, then the initial vertex v0 has weight
$-\varpi_{0,2}(u_k;q)=\frac{dq}{u_k-q}+\frac{dq}{u_k+q}$ and thus only
a first-order pole at $q=-u_k$ which does not contribute to
$\mathcal{H}^k_q$. The only contributions are thus from $u_k\in I_1$
with $|I_1|\geq 2$. Here we can use the induction hypothesis
(\ref{Hkq-graph}) for $I\mapsto I_1\mapsto I_2\mapsto I_3$ so that
\begin{align}
  &\mathcal{H}_q^k\Big(
\begin{tikzpicture}
  \useasboundingbox (-.3,-0.1) rectangle (1.3,.7);
  \draw (0,0) circle (4pt) ;
  \node (a1) at (0,0.35) {$0$};
  \node (a2) at (0,-0.4) {$I_{1}$};  
  \draw (.15,0) -- (.8,0);  
  \draw (1,0) node[regular polygon,regular polygon sides=6,draw] {};
  \node (j) at (1,.5) {\footnotesize$j$};
  \node (I) at (1,-.4) {\footnotesize$I_2$};
\end{tikzpicture}
+
   \begin{tikzpicture}
  \useasboundingbox (-.3,-0.1) rectangle (1.4,.7);
  \draw (0,0) circle (4pt) ;
  \node (a1) at (0,0.35) {$0$};
  \node (a2) at (0,-0.4) {$I_{1}$};  
  \draw[decorate, decoration={snake}] (0.15,0) -- (0.75,0); 
  \draw (1,0) node[diamond,draw] {};
  \node (I) at (1,-.4) {\footnotesize$I_2;\bar{u}$};
\end{tikzpicture}
  \Big)
\label{Hk-3}
  \\[2ex]
  &=-\mathcal{H}_q^k\Big(
  \begin{tikzpicture}
    \useasboundingbox (-.3,-0.1) rectangle (2.3,.4);
  \draw (0,0) circle (4pt) ;
  \node (l1) at (0,0.35) {$0$};
  \node (l2) at (0,-0.4) {$I_1$};
  \draw[decorate, decoration={snake}] (0.15,0) -- (0.75,0); 
  \draw (1,0) node[diamond,draw] {};
    \node (m01) at (1,-.4) {\footnotesize$I_2;u_k$};
    \draw (1.25,0) -- (1.8,0);
    \draw (2,0) node[regular polygon,regular polygon sides=6,draw] {};
    \node (m11) at (2,-.4) {\footnotesize$I_3$};
    \node (m12) at (2,.4) {\footnotesize$j$};
  \end{tikzpicture}
  +
  \begin{tikzpicture}
    \useasboundingbox (-.3,-0.1) rectangle (2.3,.4);
  \draw (0,0) circle (4pt) ;
  \node (l1) at (0,0.35) {$0$};
  \node (l2) at (0,-0.4) {$I_1$};
    \draw (0.15,0) -- (.8,0);
    \draw (1,0) node[regular polygon,regular polygon sides=6,draw] {};
    \node (m01) at (1,-.4) {\footnotesize$I_2$};
    \node (m02) at (1,.4) {\footnotesize$j$};
  \draw[decorate, decoration={snake}] (1.2,0) -- (1.75,0); 
  \draw (2,0) node[diamond,draw] {};
  \node (m01) at (2,-.4) {\footnotesize$I_3;u_k$};
  \end{tikzpicture}
  +
  \begin{tikzpicture}
    \useasboundingbox (-.3,-0.1) rectangle (2.3,.4);
    \draw (0,0) circle (4pt) ;
    \node (l1) at (0,0.35) {$0$};
    \node (l2) at (0,-0.4) {$I_1$};
   \draw[decorate, decoration={snake}] (0.15,0) -- (0.75,0); 
   \draw (1,0) node[diamond,draw] {};
   \node (m01) at (1,-.4) {\footnotesize$I_2;u_k$};
   \draw[decorate, decoration={snake}] (1.2,0) -- (1.75,0);
    \draw (2,0) node[diamond,draw] {};
    \node (m1) at (2,-.4) {\footnotesize$I_3;\bar{u}$};
  \end{tikzpicture}
 + \begin{tikzpicture}
    \useasboundingbox (-.3,-0.1) rectangle (2.4,.4);
    \draw (0,0) circle (4pt) ;
    \node (l1) at (0,0.35) {$0$};
    \node (l2) at (0,-0.4) {$I_1$};
    \draw[decorate, decoration={snake}] (0.15,0) -- (.75,0);
    \draw (1,0) node[diamond,draw] {};
    \node (m1) at (1,-.4) {\footnotesize$I_2;\bar{u}$};
   \draw[decorate, decoration={snake}] (1.2,0) -- (1.75,0); 
   \draw (2,0) node[diamond,draw] {};
   \node (m01) at (2,-.4) {\footnotesize$I_3;u_k$};
  \end{tikzpicture}\rule[-2.4ex]{0mm}{4mm}
  \Big)\;.
  \nonumber
\end{align}
This identity removes all chains of length 3 with a v2-group labelled
$u_k$ at any position. The remaining length-3 chains have edges and
v1/v2-groups which are holomorphic at $=-u_k$.  Poles arise only if
$u_k\in I_1$ in the initial vertex, and poles of second and higher
order require $|I_1|\geq 2$. Here the induction hypothesis is
available, so that the same mechanism removes all chains of length 4
with a v2-group labelled $u_k$. We repeat this construction until the
initial vertex necessarily has $|I_1|=1$ and also projects to $0$
under $\mathcal{H}^k_q$. This finishes the proof of (\ref{eq-Hkq}).
\end{proof}
\end{proposition}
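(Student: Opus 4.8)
The plan is to transcribe, almost verbatim, the strategy that settled the ramification-point poles in Proposition~\ref{prop:varpi-polar}, now localised at the opposite-diagonal point $q=-u_k$. The first move is a cheap reduction: since the second line of \eqref{om0-fraku} contributes only a simple pole at $q=-u_k$, and $\mathcal{H}^k_q$ by definition retains only poles of order $\geq 2$, that line is annihilated. Combined with Assumption~\ref{assump-res0}, which forbids any residue, one is left with
\[
\mathcal{H}^k_q\varpi_{0,|I|+1}(I;q)=\mathcal{H}^k_q\Big[\sum_{I_1\uplus I_2=I}\varpi_{0,|I_1|+1}(I_1;q)\,\mathfrak{v}_{0,|I_2|}(I_2\|q)\Big],
\]
so the entire argument is carried out inside the reduced graphical expansion of Corollary~\ref{cor-graph-reduced}.

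Next I would classify the building blocks by their behaviour at $q=-u_k$. By Proposition~\ref{prop:weight-v1} every v1-group equals $-\varpi_{0,|I|+1}(I;\hat q^{j})$, and since $\hat q^{j}\neq q$ stays away from $-u_k$, such a group is regular there; the e1, e2, e4 edges depend on $q$ only through $R(-\hat q^{j})$ and are likewise regular. Singularities at $q=-u_k$ can therefore enter a chain only through: (i) a v2-group carrying the label $u_k$, whose weight \eqref{coll-v2} contains $(R(-u_k)-R(q))^{-(s+1)}$; (ii) an $\mathrm{e}3^{0}$ edge joining the initial vertex directly to such a group, contributing the additional factor $1/(R(-q)-R(u_k))$; and (iii) the initial vertex v0 itself when $u_k\in I_1$, where $\varpi_{0,|I_1|+1}(I_1;q)$ is singular. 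The distinguished chain on the right of \eqref{Hkq-graph}—initial vertex linked by an $\mathrm{e}3^{0}$ edge to a single v2-group labelled $u_k$—is precisely the product of sources (i) and (ii), giving a pole of order $s+2\geq 2$, and reading off its weight from Table~\ref{tab1} yields the explicit formula \eqref{eq-Hkq}.

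The heart of the proof is then an induction on the chain length (counting v1/v2-groups as single vertices), showing that every chain other than the distinguished one is cancelled modulo terms that are regular or only simply-polar at $q=-u_k$. In the base case the surviving competitors are singular only when $u_k\in I_1$ at the initial vertex; if $I_1=\{u_k\}$ the weight is $-\varpi_{0,2}(u_k;q)$, which has only a first-order pole and is killed by $\mathcal{H}^k_q$, so only $|I_1|\geq 2$ matters. There I would apply the induction hypothesis \eqref{Hkq-graph} to the subset $I_1$, replacing the singular initial vertex by an initial vertex followed by a v2-group labelled $u_k$. This manufactures exactly the longer chains carrying a v2-group labelled $u_k$ in a non-initial slot, and a telescoping identity of the same shape as the one relating the length-three chains—built from the weights in Table~\ref{tab1} together with $\nabla^{0}\varpi_{0,|I|+1}=\varpi_{0,|I|+1}/(-dR(\hat q^{j}))$—pairs them off. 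Iterating pushes the uncancelled remainder one group further to the right at each step, until the only residual initial vertex is forced to have $|I_1|=1$ and hence carries no pole of order $\geq 2$.

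The step I expect to be delicate is exactly this cancellation bookkeeping: one must verify that the chains produced by substituting the induction hypothesis match, one-to-one and with the correct signs, the chains containing a non-initial v2-group labelled $u_k$, so that the telescoping is exact and no spurious higher-order pole survives. The fact, established in Lemma~\ref{lemma-cancel-noCatalan}, that a group of a fixed label can appear at most once is what makes this pairing unambiguous, and it is the direct analogue of the cancellations organised for the ramification points in Proposition~\ref{prop:varpi-polar}. A secondary subtlety is the reliance on Assumption~\ref{assump-res0}: without it the simple pole at $q=-u_k$ would be uncontrolled, and the identification of $\mathcal{H}^k_q\varpi_{0,|I|+1}(I;q)$ with the full residue $\Res_{q\to-u_k}\frac{dz}{z-q}\varpi_{0,|I|+1}(I;q)$ claimed in the Remark—and needed to match \eqref{om-holqu}—would break down.
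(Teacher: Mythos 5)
Your proposal follows essentially the same route as the paper: annihilate the second line of \eqref{om0-fraku} with $\mathcal{H}^k_q$, isolate the distinguished chain ending in the v2-group labelled $u_k$, and run an induction on chain length in which the induction hypothesis replaces a singular initial vertex ($u_k\in I_1$, $|I_1|\geq 2$) by the initial-vertex-plus-v2-group chain, telescoping away all chains carrying a non-initial v2-group labelled $u_k$ until only $|I_1|=1$ remains. One minor remark: Assumption~\ref{assump-res0} is not actually needed to obtain \eqref{Hkq-all} itself, since $\mathcal{H}^k_q$ discards first-order poles by definition (and avoiding that dependence matters, because the Assumption is later proved \emph{using} this proposition); as you correctly note at the end, it enters only in the Remark's identification of $\mathcal{H}^k_q\varpi_{0,|I|+1}$ with the full principal part.
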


As discussed in the remark directly after Proposition~\ref{prop:W0-holom},
the proof of Theorem~\ref{thm:sol-dse} is complete provided that 
Assumption~\ref{assump-res0} holds.

\section{Proof of Assumption~\ref{assump-res0}}

\label{sec:assumption}

\subsection{The residue}

The recursion formula (\ref{om0-fraku}) generates, a priori, also a
first-order pole at $z=-u_k$ with residue
\begin{align}
&\Res\displaylimits_{q\to -u_k} \varpi_{0,|I|+1}(I;q)
\label{ResWnegu}
\\
&=
\Res\displaylimits_{q\to -u_k}\Big[
\sum_{I_1\uplus I_2=I}
\varpi_{0,|I_1|+1}(I_1;q) \mathfrak{v}_{0,|I_2|}(I_2\|q)\Big]
+\mathfrak{v}_{0,|I|-1}(I{\setminus}u_k \|u_k)\;.
\nonumber
\end{align}
Our goal is to prove Assumption~\ref{assump-res0}, i.e.\ that the
residue (\ref{ResWnegu}) vanishes for $|I|\geq 2$. Of particular
importance will be the functions
\begin{align}
  \Delta\omega_{0,|I|+1}(I,z)&=\sum_{s=0}^{I-1}
  \sum_{I_0\uplus I_1\uplus...\uplus I_s=I} \nabla^{s+1}\omega_{0,|I_0|+1}(I_0,z)
  \prod_{j=1}^s\frac{\omega_{0,|I_j|+1}(I_j,\iota z)}{-dy(z)}
  \;,\qquad
  \nonumber
  \\
  \Delta\varpi_{0,|I|+1}(I;z)&=\sum_{s=0}^{I-1}
  \sum_{I_0\uplus I_1\uplus...\uplus I_s=I} \nabla^{s+1}\varpi_{0,|I_0|+1}(I_0;z)
  \prod_{j=1}^s\frac{\varpi_{0,|I_j|+1}(I_j;- z)}{dR(-z)}  \;.
  \label{def:Delta-om}
\end{align} 
These arise as follows:
\begin{lemma}
  Let $I=\{u_1,...,u_m\}$ for $m\geq 2$. Suppose Assumption~\ref{assump-res0} holds for
$\varpi_{0,|I'|+1}$ with $u_k\in I'$ and $2\leq |I'|<|I|$ (there is no condition for $m=2$).
Then
\begin{align}
 \Res\displaylimits_{z\to -u_k} \varpi_{0|I|+1}(I;z)
&= \mathfrak{v}_{0,|I|-1}(I{\setminus} u_k\|u_k)
-\mathfrak{v}_{0,|I|-1}(I{\setminus} u_k\|{-}u_k)
-\Delta \varpi_{0,|I|}(I{\setminus} u_k;{-}u_k)
\nonumber
\\
&+ \sum_{I_1\uplus I_2=I{\setminus} u_k}
\mathfrak{v}_{0,|I_1|}(I_1\|{-}u_k)
\Delta \varpi_{0,|I_2|+1}(I_2;{-}u_k)\;.
\label{ResWnegu1}
\end{align}
\begin{proof}
We evaluate the residue on the rhs of (\ref{ResWnegu}). In the graphical representation we have 
a contribution from the chain (in which $I'=\emptyset$ is allowed;
the weight of the v2-group is given in  (\ref{coll-v2}))
\begin{align}
&\Res\displaylimits_{q\to -u_k}\bigg(  
    \begin{tikzpicture}
  \useasboundingbox (-.3,-0.1) rectangle (1.4,.7);
  \draw (0,0) circle (4pt) ;
  \node (a1) at (0,0.35) {$0$};
  \node (a2) at (0,-0.4) {$I_0$};  
  \draw[decorate, decoration={snake}] (0.15,0) -- (0.75,0); 
  \draw (1,0) node[diamond,draw] {};
  \node (I) at (1,-.4) {\footnotesize$I';u_k$};
\end{tikzpicture}\bigg)_{I_0\uplus I'=I{\setminus} u_k}
\label{Res-v2-l2}
\\
&= \!\!\sum_{s=0}^{|I|-2}\sum_{I_0\uplus...\uplus I_s=I{\setminus} u_k}\!\!
\Res\displaylimits_{q\to -u_k} \!\! \Big(  
\frac{\varpi_{0,|I_0|+1}(I_0;q)}{
  (R({-}q){-}R(u_k))  (R({-}u_k){-}R(q))^{s+1}} \Big)
\prod_{i=1}^s \frac{\varpi_{0,|I_i|+1}(I_i;u_k)}{dR(u_k)}
\nonumber
\\
&= -\sum_{s=0}^{|I|-2}\sum_{I_0\uplus...\uplus I_s=I{\setminus} u_k}
\nabla^{s+1} \varpi_{0,|I_0|+1}(I_0;-u_k)
\prod_{i=1}^s \frac{\varpi_{0,|I_i|+1}(I_i;u_k)}{dR(u_k)}
\nonumber
\\
&\equiv -\Delta\varpi_{0,|I|}(I{\setminus} u_k;-u_k)
\;,
\nonumber
\end{align}
where the definition (\ref{def:nabla-om}) for $\omega\mapsto \varpi$
at $q\mapsto -u_k$ and $z\mapsto q$ has been used. This provides the
third term on the rhs of (\ref{ResWnegu1}). The first term is copied
from (\ref{ResWnegu}).

We investigate the residues at $q=-u_k$ of all other chains. The
remaining chains of length 2 (with v1/v2-groups as vertices) with
residue at $q=-u_k$ are the same as in the first line of (\ref{Hk-3})
with $u_k\in I_1$. Two cases are to distinguish. For $I_1=u_k$ we have
a purely first-order pole and
\[
  \Res\displaylimits_{q\to -u_k}\bigg(
\begin{tikzpicture}
  \useasboundingbox (-.3,-0.1) rectangle (1.3,.7);
  \draw (0,0) circle (4pt) ;
  \node (a1) at (0,0.35) {\footnotesize$0$};
  \node (a2) at (0,-0.4) {\footnotesize$u_k$};  
  \draw (.15,0) -- (.8,0);  
  \draw (1,0) node[regular polygon,regular polygon sides=6,draw] {};
  \node (j) at (1,.5) {\footnotesize$j$};
  \node (I) at (1,-.4) {\footnotesize$I{\setminus} u_k$};
\end{tikzpicture}
+
   \begin{tikzpicture}
  \useasboundingbox (-.3,-0.1) rectangle (2.4,.7);
  \draw (0,0) circle (4pt) ;
  \node (a1) at (0,0.35) {\footnotesize$0$};
  \node (a2) at (0,-0.4) {\footnotesize$u_k$};  
  \draw[decorate, decoration={snake}] (0.15,0) -- (0.75,0); 
  \draw (1,0) node[diamond,draw] {};
  \node (I) at (1.4,-.4) {\footnotesize$I{\setminus} \{u_k,\bar{u}_l\};\bar{u}_l$};
\end{tikzpicture}
\bigg)
=\bigg(
\begin{tikzpicture}
  \useasboundingbox (-.1,-0.1) rectangle (1.3,.7);
  \node (a1) at (0,0.2) {\footnotesize$0$};
  \draw (0,0) -- (.8,0);  
  \draw (1,0) node[regular polygon,regular polygon sides=6,draw] {};
  \node (j) at (1,.5) {\footnotesize$j$};
  \node (I) at (1,-.4) {\footnotesize$I{\setminus} u_k$};
\end{tikzpicture}
+
   \begin{tikzpicture}
  \useasboundingbox (-.1,-0.1) rectangle (2.4,.7);
  \node (a1) at (0,0.2) {\footnotesize$0$};
  \draw[decorate, decoration={snake}] (0,0) -- (0.75,0); 
  \draw (1,0) node[diamond,draw] {};
  \node (I) at (1.3,-.4) {\footnotesize$I{\setminus}\{u_k,\bar{u}_l\};
                          \bar{u}_l$};
\end{tikzpicture}\bigg)_{q\mapsto -u_k}\;.
\]
Amputation of the initial v0-vertex gives the chains contributing to
$-\mathfrak{v}_{0,|I'|}(I'\|q)$. Hence, 
the rhs of the above equation is the restriction of
$-\mathfrak{v}_{0,|I|-1}(I{\setminus}u_k\|{-}u_k)$ to chains of length $1$.
The other case is $u_k\in I_1$ but $|I_1|\geq 2$. Since $|I_1|<|I|$,
Assumption~\ref{assump-res0} holds for 
the initial vertex $-\varpi_{|I_1|+1}(I_1;q)$ whose poles at $q=-u_k$
are thus of purely higher order. They are thus given by 
$-\mathcal{H}^{k}_q \varpi_{0,|I_1|+1}(I_1;q)\big)$, which can be
expressed by (\ref{Hkq-graph}):
\begin{align*}
&  \Res\displaylimits_{q\to -u_k}\bigg(
\begin{tikzpicture}
  \useasboundingbox (-.3,-0.1) rectangle (1.3,.7);
  \draw (0,0) circle (4pt) ;
  \node (a1) at (0,0.35) {\footnotesize$0$};
  \node (a2) at (0,-0.4) {\footnotesize$I_1$};  
  \draw (.15,0) -- (.8,0);  
  \draw (1,0) node[regular polygon,regular polygon sides=6,draw] {};
  \node (j) at (1,.5) {\footnotesize$j$};
  \node (I) at (1,-.4) {\footnotesize$I_2$};
\end{tikzpicture}
+
   \begin{tikzpicture}
  \useasboundingbox (-.3,-0.1) rectangle (1.4,.7);
  \draw (0,0) circle (4pt) ;
  \node (a1) at (0,0.35) {\footnotesize$0$};
  \node (a2) at (0,-0.4) {\footnotesize$I_1$};  
  \draw[decorate, decoration={snake}] (0.15,0) -- (0.75,0); 
  \draw (1,0) node[diamond,draw] {};
  \node (I) at (1,-.4) {\footnotesize$I_2;\bar{u}_l$};
\end{tikzpicture}
\bigg)_{\substack{u_k\in I_1\\|I_1|\geq 2}}
\\
&={-}\!  \Res\displaylimits_{q\to -u_k}\!\!\bigg\{
\mathcal{H}^k_q\bigg(
  \begin{tikzpicture}
    \useasboundingbox (-.25,-0.1) rectangle (1.4,.4);
  \draw (0,0) circle (4pt) ;
  \node (l1) at (0,0.35) {$0$};
  \node (l2) at (0,-0.4) {$I_0$};
  \draw[decorate, decoration={snake}] (0.15,0) -- (0.75,0); 
  \draw (1,0) node[diamond,draw] {};
    \node (m01) at (1,-.4) {\footnotesize$I_1;u_k$};
  \end{tikzpicture}
  \bigg) \cdot
  \bigg(
  \begin{tikzpicture}
  \useasboundingbox (-.1,-0.1) rectangle (1.3,.7);
  \node (a1) at (0,0.35) {\footnotesize$0$};
  \draw (0,0) -- (.8,0);  
  \draw (1,0) node[regular polygon,regular polygon sides=6,draw] {};
  \node (j) at (1,.5) {\footnotesize$j$};
  \node (I) at (1,-.4) {\footnotesize$I_2$};
\end{tikzpicture}
+
   \begin{tikzpicture}
  \useasboundingbox (-.1,-0.1) rectangle (1.4,.7);
  \node (a1) at (0,0.35) {\footnotesize$0$};
  \draw[decorate, decoration={snake}] (0,0) -- (0.75,0); 
  \draw (1,0) node[diamond,draw] {};
  \node (I) at (1,-.4) {\footnotesize$I_2;\bar{u}_l$};
\end{tikzpicture}
\bigg)\bigg\}
\\
&={-}\!  \Res\displaylimits_{q\to -u_k}\!\!\bigg\{
\bigg(
  \begin{tikzpicture}
    \useasboundingbox (-.25,-0.1) rectangle (1.4,.4);
  \draw (0,0) circle (4pt) ;
  \node (l1) at (0,0.35) {$0$};
  \node (l2) at (0,-0.4) {$I_0$};
  \draw[decorate, decoration={snake}] (0.15,0) -- (0.75,0); 
  \draw (1,0) node[diamond,draw] {};
    \node (m01) at (1,-.4) {\footnotesize$I_1;u_k$};
  \end{tikzpicture}
-\frac{dq}{u_k+q}\Res\displaylimits_{q\to -u_k}\!\!\bigg(
  \begin{tikzpicture}
    \useasboundingbox (-.25,-0.1) rectangle (1.4,.4);
  \draw (0,0) circle (4pt) ;
  \node (l1) at (0,0.35) {$0$};
  \node (l2) at (0,-0.4) {$I_0$};
  \draw[decorate, decoration={snake}] (0.15,0) -- (0.75,0); 
  \draw (1,0) node[diamond,draw] {};
    \node (m01) at (1,-.4) {\footnotesize$I_1;u_k$};
  \end{tikzpicture}
\bigg)
  \bigg) \cdot
  \bigg(
  \begin{tikzpicture}
  \useasboundingbox (-.1,-0.1) rectangle (1.3,.7);
  \node (a1) at (0,0.35) {\footnotesize$0$};
  \draw (0,0) -- (.8,0);  
  \draw (1,0) node[regular polygon,regular polygon sides=6,draw] {};
  \node (j) at (1,.5) {\footnotesize$j$};
  \node (I) at (1,-.4) {\footnotesize$I_2$};
\end{tikzpicture}
+
   \begin{tikzpicture}
  \useasboundingbox (-.1,-0.1) rectangle (1.4,.7);
  \node (a1) at (0,0.35) {\footnotesize$0$};
  \draw[decorate, decoration={snake}] (0,0) -- (0.75,0); 
  \draw (1,0) node[diamond,draw] {};
  \node (I) at (1,-.4) {\footnotesize$I_2;\bar{u}_l$};
\end{tikzpicture}
\bigg)\bigg\}
\\
&= \Res\displaylimits_{q\to -u_k}\!\!\bigg(
  \begin{tikzpicture}
    \useasboundingbox (-.25,-0.1) rectangle (1.4,.4);
  \draw (0,0) circle (4pt) ;
  \node (l1) at (0,0.35) {$0$};
  \node (l2) at (0,-0.4) {$I_0$};
  \draw[decorate, decoration={snake}] (0.15,0) -- (0.75,0); 
  \draw (1,0) node[diamond,draw] {};
    \node (m01) at (1,-.4) {\footnotesize$I_1;u_k$};
  \end{tikzpicture}
  \bigg) \cdot
  \bigg(
  \begin{tikzpicture}
  \useasboundingbox (-.1,-0.1) rectangle (1.3,.7);
  \node (a1) at (0,0.35) {\footnotesize$0$};
  \draw (0,0) -- (.8,0);  
  \draw (1,0) node[regular polygon,regular polygon sides=6,draw] {};
  \node (j) at (1,.5) {\footnotesize$j$};
  \node (I) at (1,-.4) {\footnotesize$I_2$};
\end{tikzpicture}
+
   \begin{tikzpicture}
  \useasboundingbox (-.1,-0.1) rectangle (1.4,.7);
  \node (a1) at (0,0.35) {\footnotesize$0$};
  \draw[decorate, decoration={snake}] (0,0) -- (0.75,0); 
  \draw (1,0) node[diamond,draw] {};
  \node (I) at (1,-.4) {\footnotesize$I_2;\bar{u}_l$};
\end{tikzpicture}
\bigg)_{q\mapsto -u_k}
\\
&-\!  \Res\displaylimits_{q\to -u_k}\!\!\bigg(
  \begin{tikzpicture}
    \useasboundingbox (-.25,-0.1) rectangle (2.25,.4);
  \draw (0,0) circle (4pt) ;
  \node (l1) at (0,0.35) {$0$};
  \node (l2) at (0,-0.4) {$I_1$};
  \draw[decorate, decoration={snake}] (0.15,0) -- (0.75,0); 
  \draw (1,0) node[diamond,draw] {};
    \node (m01) at (1,-.4) {\footnotesize$I_2;u_k$};
    \draw (1.25,0) -- (1.8,0);
    \draw (2,0) node[regular polygon,regular polygon sides=6,draw] {};
    \node (m11) at (2,-.4) {\footnotesize$I_3$};
    \node (m12) at (2,.4) {\footnotesize$j$};
  \end{tikzpicture}
  +
  \begin{tikzpicture}
    \useasboundingbox (-.25,-0.1) rectangle (2.25,.4);
  \draw (0,0) circle (4pt) ;
  \node (l1) at (0,0.35) {$0$};
  \node (l2) at (0,-0.4) {$I_1$};
    \draw (0.15,0) -- (.8,0);
    \draw (1,0) node[regular polygon,regular polygon sides=6,draw] {};
    \node (m01) at (1,-.4) {\footnotesize$I_2$};
    \node (m02) at (1,.4) {\footnotesize$j$};
  \draw[decorate, decoration={snake}] (1.2,0) -- (1.75,0); 
  \draw (2,0) node[diamond,draw] {};
  \node (m01) at (2,-.4) {\footnotesize$I_3;u_k$};
  \end{tikzpicture}
  +
  \begin{tikzpicture}
    \useasboundingbox (-.25,-0.1) rectangle (2.25,.4);
    \draw (0,0) circle (4pt) ;
    \node (l1) at (0,0.35) {$0$};
    \node (l2) at (0,-0.4) {$I_1$};
   \draw[decorate, decoration={snake}] (0.15,0) -- (0.75,0); 
   \draw (1,0) node[diamond,draw] {};
   \node (m01) at (1,-.4) {\footnotesize$I_2;u_k$};
   \draw[decorate, decoration={snake}] (1.25,0) -- (1.75,0);
    \draw (2,0) node[diamond,draw] {};
    \node (m1) at (2,-.4) {\footnotesize$I_3;\bar{u}_l$};
  \end{tikzpicture}
 + \begin{tikzpicture}
    \useasboundingbox (-.25,-0.1) rectangle (2.4,.4);
    \draw (0,0) circle (4pt) ;
    \node (l1) at (0,0.35) {$0$};
    \node (l2) at (0,-0.4) {$I_1$};
    \draw[decorate, decoration={snake}] (0.15,0) -- (.75,0);
    \draw (1,0) node[diamond,draw] {};
    \node (m1) at (1,-.4) {\footnotesize$I_2;\bar{u}_l$};
   \draw[decorate, decoration={snake}] (1.25,0) -- (1.75,0); 
   \draw (2,0) node[diamond,draw] {};
   \node (m01) at (2,-.4) {\footnotesize$I_3;u_k$};
  \end{tikzpicture}\rule[-2.4ex]{0mm}{4mm}
\bigg).
\end{align*}
In the step from the second to third line we have used that only the
whole projection $\mathcal{H}^k_q+\frac{dq}{q+u_k}\Res_{q\to -u_k}$ to
the principal part of a Laurent series, but not $\mathcal{H}^k_q$
alone, is the identity operator under the residue.  According to
(\ref{Res-v2-l2}), the second line from below equals
$-\Delta \varpi_{0,|I_0|+|I_1|+1}(I_0\cup I_1;-u_k)$ times the
restriction of $-\mathfrak{v}_{0,|I_3|}(I_3\|{-}u_k)$ to chains of
length $1$, here with $I_0\uplus I_1\uplus I_3=I{\setminus} u_k$.  The
last line removes from the residue all chains of length 3 with a
v2-group labelled $u_k$.

Thus only those length-3 chains for which $u_k\in I_1$ is located at
the initial vertex v0 contribute to the remaining residue. Again the
case $I_1=u_k$ produces the restriction of
$-\mathfrak{v}_{0,|I|-1}(I{\setminus} u_k\|{-}u_k)$ to chains of
length $2$. For $|I_1|\geq 2$ we use Assumption~\ref{assump-res0} that
$-\varpi_{0,|I_1|+1}(I_1;q)$ has at $q=-u_k$ a pole of second or
higher order given by (\ref{Hkq-graph}). The same argument as before
produces on one hand (\ref{Res-v2-l2}) times the restriction of
$-\mathfrak{v}_{0,|I_3|}(I_3;{-}u_k)$ to chains of length $2$, on the
other hand removes from the residue all chains of length 4 with a
v2-group labelled $u_k$.  Continuing this strategy until $I_1=u_k$ is
the only choice shows that the residue of all chains other than
(\ref{Res-v2-l2}) evaluates to
$-\mathfrak{v}_{0,|I|-1}(I{\setminus}u_k\|{-}u_k)$ plus
(\ref{Res-v2-l2}) times $-\mathfrak{v}_{0,|I'|}(I'\|{-}u_k)$, summed
over partitions of $I{\setminus} u_k$.
\end{proof}
\end{lemma}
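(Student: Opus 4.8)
The plan is to begin from the residue identity (\ref{ResWnegu}), which the recursion (\ref{om0-fraku}) supplies directly: the explicit term $\frac{dz}{z+u_k}\mathfrak{v}_{0,|I|-1}(I\setminus u_k\|u_k)$ in its second line contributes the coefficient $\mathfrak{v}_{0,|I|-1}(I\setminus u_k\|u_k)$, which is the first term on the right-hand side of (\ref{ResWnegu1}). The remaining and substantial task is to evaluate $\Res_{q\to -u_k}\big[\sum_{I_1\uplus I_2=I}\varpi_{0,|I_1|+1}(I_1;q)\,\mathfrak{v}_{0,|I_2|}(I_2\|q)\big]$. First I would expand this integrand through the graphical description of Corollary~\ref{cor-graph-reduced}, writing it as a sum of weights of chains built from the initial vertex \textup{v0}, \textup{v1}-groups and \textup{v2}-groups.

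Next I would isolate the sources of a pole at $q=-u_k$. Since a \textup{v2}-group of label $u$ carries factors $\frac{1}{R(-u)-R(q)}$, the single chain consisting of the initial vertex followed by exactly one \textup{v2}-group of label $u_k$ produces, upon taking the residue, precisely $-\Delta\varpi_{0,|I|}(I\setminus u_k;-u_k)$ — this follows from the weight (\ref{coll-v2}) and the definition (\ref{def:nabla-om}) of $\nabla\varpi$ evaluated at $q\mapsto -u_k$ — which is the third term of (\ref{ResWnegu1}). Any further pole at $q=-u_k$ must arise either from an additional \textup{v2}-group labelled $u_k$ inside a longer chain, or from the initial \textup{v0}-vertex $-\varpi_{0,|I_1|+1}(I_1;q)$ when $u_k\in I_1$.

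The core of the argument would be an induction on the chain length, counted in \textup{v1}/\textup{v2}-groups. For a chain whose initial vertex has $u_k\in I_1$ with $|I_1|\geq 2$, the inductive hypothesis — Assumption~\ref{assump-res0} for $\varpi_{0,|I_1|+1}$ — ensures that its pole at $q=-u_k$ is purely of second or higher order, hence captured by $\mathcal{H}^k_q$ and can be re-expanded via (\ref{Hkq-graph}) of Proposition~\ref{prop:W0-holom}; feeding this back in detaches a \textup{v2}-group labelled $u_k$ from a genuinely shorter subchain. The delicate point is that $\mathcal{H}^k_q$ is not the identity under the residue: only $\mathcal{H}^k_q+\frac{dq}{q+u_k}\Res_{q\to -u_k}$ reproduces the full principal part, so I would add and subtract the missing simple-pole contribution at each step. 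The resulting relation telescopes: at every length it cancels all chains that carry a $u_k$-labelled \textup{v2}-group against chains one group longer, while the boundary case $I_1=\{u_k\}$ — where the initial vertex reduces to the pure simple pole $-\varpi_{0,2}(u_k;q)=\frac{dq}{u_k-q}+\frac{dq}{u_k+q}$ — peels off, length by length, the restriction of $-\mathfrak{v}_{0,|I|-1}(I\setminus u_k\|-u_k)$.

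Summing the telescope over all chain lengths then assembles the two outstanding terms of (\ref{ResWnegu1}): the boundary pieces reconstitute $-\mathfrak{v}_{0,|I|-1}(I\setminus u_k\|-u_k)$, while the products of the residue of a $u_k$-labelled \textup{v2}-group with a restricted $\mathfrak{v}$-factor, summed over partitions $I_1\uplus I_2=I\setminus u_k$, build up $\sum_{I_1\uplus I_2=I\setminus u_k}\mathfrak{v}_{0,|I_1|}(I_1\|-u_k)\,\Delta\varpi_{0,|I_2|+1}(I_2;-u_k)$. I expect the main obstacle to be precisely this telescoping bookkeeping — matching each chain of a given length to its unique partner one group longer while tracking the $\frac{dq}{q+u_k}$ correction and the partition sums — rather than any isolated analytic step, since the pole-order inputs and the definitions of $\mathfrak{v}$, $\Delta\varpi$ and $\mathcal{H}^k_q$ are all established in the preceding sections.
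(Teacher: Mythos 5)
Your proposal follows essentially the same route as the paper's own proof: the same extraction of the first term from (\ref{ResWnegu}), the same identification of the single $u_k$-labelled \textup{v2}-group chain with $-\Delta\varpi_{0,|I|}(I\setminus u_k;-u_k)$, the same induction on chain length using Assumption~\ref{assump-res0} together with the correction $\mathcal{H}^k_q+\frac{dq}{q+u_k}\Res_{q\to -u_k}$, and the same telescoping assembly of the remaining two terms. The plan is correct and matches the paper's argument in all essential steps.
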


Assumption~\ref{assump-res0} for $\varpi_{0,|I|+1}$, that the rhs of
(\ref{ResWnegu1}) evaluates to $0$, is thus a condition on
$\varpi_{0,|I'|+1}$ or $\omega_{0,|I'|+1}$ for $|I'|<I$. Here
Theorem~\ref{thm:sol-dse} is the induction hypothesis. Its proof is
complete (following the previous considerations) if
Theorem~\ref{thm:sol-dse} implies
\begin{align}
0  &=  \mathfrak{v}_{0,|I|}(I\|{-}q)
-\mathfrak{v}_{0,|I|}(I\|q)
-\Delta \varpi_{0,|I|+1}(I;q)
+ \!\!\! \sum_{I_1\uplus I_2=I} \!\!\!
\mathfrak{v}_{0,|I_1|}(I_1\|q)\Delta \varpi_{0,|I_2|+1}(I_2;q)\,.
  \label{reflect-U}
\end{align}
We are going to prove that the rhs of (\ref{reflect-U}) is an entire
holomorphic function on $\hat{\mathbb{C}}$, i.e.\ a constant, equal
to its value $0$ for $q\to \infty$. This implies (\ref{reflect-U}).

We start to discuss absence of poles at $q=\pm \beta_i$. 
Recall that (\ref{reflect-U}) equals $\Res_{z\to q} \varpi_{0,|I|+2}(I,-q;z)$.
The projection of (\ref{reflect-U}) to poles at $q=\beta_i$ is thus given by
\begin{align*}
  \Res\displaylimits_{q\to \beta_i}\frac{\eqref{reflect-U} dq}{w-q}
  &=  \Res\displaylimits_{q\to \beta_i} \Res\displaylimits_{z\to q}
  \frac{\varpi_{0,|I|+2}(I,-q;z) dq}{w-q}
  \\
  &= - \Res\displaylimits_{q\to \beta_i} \Res\displaylimits_{z\to \beta_i}
  \frac{\varpi_{0,|I|+2}(I,-q;z) dq}{w-q}
  + \Res\displaylimits_{z\to \beta_i} \Res\displaylimits_{q\to \beta_i}
  \frac{\varpi_{0,|I|+2}(I,-q;z) dq}{w-q}
\end{align*}  
when taking (\ref{commute-res}) into account. The final term gives
zero because non of the chains contributing to
$\varpi_{0,|I|+2}(I,-q;z)$ has a pole at $-q=-\beta_i$.  The other
term is also zero because $\varpi_{0,|I|+2}(I,-q;z)$ has 
due to the kernel $K_i(z,q)$ in Theorem~\ref{thm:sol-dse} at
$z=\beta_i$ poles of purely higher order, without residue. We have
established this fact in Proposition~\ref{prop:varpi-polar} without relying on
Assumption~\ref{assump-res0}. In summary, (\ref{reflect-U}) is regular
at $q=\beta_i$.  We will show in Subsection~\ref{sec:res0-necessary}
that (\ref{reflect-U}) is antisymmetric under $q\mapsto -q$.  This
means that (\ref{reflect-U}) is also regular at $q=-\beta_i$.

The same simple argument cannot be used to prove that
(\ref{reflect-U}) is regular at $q=\pm u_k$ because this would need
Assumption~\ref{assump-res0}. We therefore give in
Subsection~\ref{sec:res0-u} a direct proof which uses the
antisymmetry of (\ref{reflect-U}).

In principle, the functions $\mathfrak{v}_{0,|I|}(I\|{\pm}q)$ may (and do)
have poles at the other preimages
$q=v$ where $v\in \{\pm \widehat{u_k}^j,\pm \widehat{(-u_k)}^j\}$. 
Recalling  that (\ref{reflect-U}) equals $\Res_{z\to q} \varpi_{0,|I|+2}(I,-q;z)$,
the projection of (\ref{reflect-U}) to a pole at such $q=v$ is 
\begin{align*}
  \Res\displaylimits_{q\to v}\frac{\eqref{reflect-U} dq}{w-q}
  &=  \Res\displaylimits_{q\to v} \Res\displaylimits_{z\to q}
  \frac{\varpi_{0,|I|+2}(I,-q;z) dq}{w-q}
  \\
  &= - \Res\displaylimits_{q\to v} \Res\displaylimits_{z\to v}
  \frac{\varpi_{0,|I|+2}(I,-q;z) dq}{w-q}
  + \Res\displaylimits_{z\to v} \Res\displaylimits_{q\to v}
  \frac{\varpi_{0,|I|+2}(I,-q;z) dq}{w-q}\;.
\end{align*}  
The first term in the last line is trivially zero, but the second term
can indeed have a pole at $-q=\widehat{u_k}^j$ coming from the edge e4
in Table~\ref{tab1}. An edge with these labels can only occur once in
a chain so that it is a first-order pole. Its residue
$\Res\displaylimits_{q\to v} \frac{\varpi_{0,|I|+2}(I,-q;z) dq}{w-q}$
is a $1$-form in $z$ from which we take the residue at the same
$z=-\widehat{u_k}^j$. But there are no such poles so that
(\ref{reflect-U}) is regular at any
$q\in \{\pm \widehat{u_k}^j,\pm \widehat{(-u_k)}^j\}$.

\subsection{A necessary condition}

\label{sec:res0-necessary}

Adding (\ref{reflect-U}) and its copy for $q\to -q$ shows that 
necessary for (\ref{reflect-U}) to be true is the identity (we apply $d_{u_1}\cdots d_{u_m}$ to
pass to $\omega$)
\begin{align}
0&=  
\Delta\omega_{0,|I|+1}(I,q)+\Delta\omega_{0,|I|+1}(I,{-}q)
- \sum_{I_1\uplus I_2=I}
\Delta\omega_{0,|I_1|+1}(I_1,q)\Delta\omega_{0,|I_2|+1}(I_2,{-}q)\;.
\label{Deltaom-qmq}
\end{align}
This is true for $I=\{u\}$. Equations of such type can be disentangled by repeated
insertion into itself, which is the same operation as a treatment of formal power series
(which here are in fact polynomials). This shows that  (\ref{Deltaom-qmq}) is equivalent to
\begin{align}
&  -\Big[\log \big(1-\Delta\omega_{0,|.|+1}(\,.\,,q) \big)\Big](I)
  -\Big[\log \big(1-\Delta\omega_{0,|.|+1}(\,.\,,{-}q) \big)\Big](I)=0
  \label{log-flip}
  \\
  &\text{where}\quad
  -\Big[\log \big(1-\Delta\omega_{0,|.|+1}(\,.\,,q) \big)\Big](I)
\equiv \sum_{s=1}^{I} \frac{1}{s} \sum_{I_1\uplus....\uplus I_s=I} 
\prod_{j=1}^s \Delta\omega_{0,|I_j|+1}(I_j,q)\;.
\nonumber
\end{align}
This logarithm can also be represented as follows:
\begin{proposition}
\label{prop:log-Delta-om}
  \begin{align}
&\sum_{r=1}^{|I|} \frac{1}{r} \sum_{I_1\uplus....\uplus I_r=I} 
\prod_{j=1}^r \Delta\omega_{0,|I_j|+1}(I_j,q)
\label{log-Delta-om}
\\
&=\sum_{s=1}^{|I|} \frac{1}{s}\sum_{I_1\uplus ... \uplus I_s=I}
  \sum_{n_1+...+n_s=s} \prod_{j=1}^s
  \nabla^{n_j}\omega_{0,|I_j|+1}(I_j,q)\;.
\nonumber
\end{align}
\begin{proof}
  We write Lemma~\ref{lem:omega-nabla} as
\begin{align}
\frac{\omega_{0,|I|+1}(I,\iota q)}{-dy(q)}
&=\sum_{s=1}^{|I|} \sum_{\substack{I_1\uplus ...\uplus I_s=I \\ I_1<...<I_s}}
(s-1)! \sum_{n_1+...+n_s=s-1}
\prod_{i=1}^s \nabla^{n_i}\omega_{0,|I_i|+1}(I_i,q)
\label{omega-log-nabla}
\end{align}
and insert it into the product in (\ref{def:Delta-om}). This shows
that products of $\Delta\omega_{0,|I_j|+1}(I_j,q)$ expand into
products of $\nabla^{n_j}\omega_{0,|I_j|+1}(I_j,q)$ with the given
condition on the sum of $n_j$.  That the prefactor reduces to
$\frac{1}{s}$ is, however, by no means obvious.  The first step of the
proof is Lemma~\ref{lem:Delta-om} below, which relies on
Corollary~\ref{coro:combin-l} in the Appendix.
Then a discussion given after the proof
of Lemma~\ref{lem:Delta-om} completes the proof. It relies on the same 
Corollary \ref{coro:combin-l}.
\end{proof}
\end{proposition}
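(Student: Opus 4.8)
The plan is to expand both sides of (\ref{log-Delta-om}) into sums of products of the functions $\nabla^{n}\omega_{0,|J|+1}(J,q)$ and then to match the scalar coefficient of each such product. First I would insert (\ref{omega-log-nabla}) for every factor $\frac{\omega_{0,|I_j|+1}(I_j,\iota q)}{-dy(q)}$ that occurs, through the definition (\ref{def:Delta-om}), inside a single $\Delta\omega_{0,|I|+1}(I,q)$. Merging the distinguished factor $\nabla^{s+1}\omega_{0,|I_0|+1}(I_0,q)$ with the $\nabla$-expansions of the remaining $s$ factors rewrites $\Delta\omega_{0,|I|+1}(I,q)$ as a sum over ordered partitions $I_1\uplus\cdots\uplus I_p=I$ equipped with exponents $(n_1,\dots,n_p)$. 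A short bookkeeping of the $\nabla$-degrees then shows that the total degree always equals the number of parts, $n_1+\cdots+n_p=p$, which is exactly the constraint on the right-hand side of (\ref{log-Delta-om}).

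The first genuine step, Lemma~\ref{lem:Delta-om}, is to bring this expansion into closed form: I expect a single $\Delta\omega_{0,|I|+1}(I,q)$ to equal the sum over ordered partitions into $p$ parts of $\prod_{j=1}^p\nabla^{n_j}\omega_{0,|I_j|+1}(I_j,q)$, restricted to those exponent tuples obeying the strict Catalan partial-sum condition $\sum_{j=1}^{l}n_j>l$ for $1\le l<p$ together with $\sum_{j=1}^{p}n_j=p$ (as one checks on the case $|I|=2$, where only the tuples $(1)$ and $(2,0)$ survive). Showing that the multinomial coefficients produced by (\ref{omega-log-nabla}) collapse to exactly one contribution per admissible tuple is where the appendix identity Corollary~\ref{coro:combin-l} enters, together with the bijection between such tuples and Catalan tuples in the sense of Definition~\ref{def:Catalantuple}.

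With Lemma~\ref{lem:Delta-om} available, the left-hand side of (\ref{log-Delta-om}) is the formal logarithm $-\log(1-\Delta\omega)$ of a series whose homogeneous pieces are indexed by Catalan-admissible exponent tuples. Expanding this logarithm and concatenating parts, the coefficient of a fixed product $\prod_{j=1}^{s}\nabla^{n_j}\omega_{0,|I_j|+1}(I_j,q)$ with $n_1+\cdots+n_s=s$ becomes $\sum_r\frac{1}{r}$ over all ways of cutting the sequence $(n_1,\dots,n_s)$ into $r$ consecutive Catalan-admissible segments. Since a cutting into strictly positive excursions is unique when it exists, this weighted count equals $\frac{1}{r_0}$ for a given ordering and vanishes otherwise. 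The decisive and, I expect, hardest step is then to sum over all orderings of the parts and recover the uniform weight $\frac{1}{s}$ of the right-hand side: this is a weighted cycle-lemma statement asserting that over the $s$ cyclic rotations of an exponent sequence summing to $s$ the numbers $\frac{1}{r_0}$ add up to $1$, so that averaging over permutations yields the factor $\frac{1}{s}$. This is the second invocation of Corollary~\ref{coro:combin-l}. Because the $\nabla$-products commute, the whole comparison must be made at the level of symmetric monomials rather than ordered tuples, which is the single place where genuine care is required.
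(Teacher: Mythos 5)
Your route is genuinely different from the paper's and, in outline, viable. The paper proves Lemma~\ref{lem:Delta-om} by computing the symmetric coefficient $C^1_{n_1\dots n_s}=(s-2)!\,\#(n_j{=}0)$ explicitly through a multi-stage count closed by the falling-factorial identity \eqref{eq:partition-falling}, and then evaluates the logarithm's coefficient $C_{n_1\dots n_s}=(s-1)!$ with the rising-factorial identity \eqref{eq:partition-rising}; there is no path decomposition and no cycle lemma anywhere. You instead propose to encode $\Delta\omega$ by \L{}ukasiewicz-type exponent sequences (strictly-above-diagonal partial sums), observe that a sequence splits uniquely into primitive excursions, so the logarithm contributes $\tfrac{1}{r_0}$ per admissible ordering, and finish with a weighted cycle lemma: over the $s$ rotations of a sequence summing to $s$ the admissible ones all have the same number $r_0$ of returns and there are exactly $r_0$ of them, so $\sum \tfrac{1}{r_0}=1$ and averaging over permutations yields $\tfrac1s$. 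That final step is correct and arguably cleaner and more conceptual than the paper's binomial manipulations. One point of bookkeeping: what you call Corollary~\ref{coro:combin-l} is the power-sum/set-partition identity \eqref{eq:partition-power}, not a cycle lemma; the tools the paper actually leans on here are the factorial identities of Corollary~\ref{cor:partition}, and in your route you would have to supply the cycle-lemma argument yourself (it does have the short proof via last occurrences of the minimum of the prefix sums).

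The genuine gap is in your first step. Your claimed form of Lemma~\ref{lem:Delta-om} --- that $\Delta\omega_{0,|I|+1}(I,q)$ is the sum over \emph{ordered} partitions with coefficient exactly $1$ on each exponent tuple satisfying $\sum_{j\le l}n_j>l$ for $l<p$ and $\sum_j n_j=p$ --- is true only at the level of commutative monomials, and you assert rather than prove it. The raw insertion of \eqref{omega-log-nabla} into \eqref{def:Delta-om} does \emph{not} produce one contribution per admissible tuple: already for $|I|=3$ the monomial $\nabla^{2}\omega(A,q)\,\nabla^{0}\omega(B,q)\,\nabla^{1}\omega(C,q)$ arises from the expansion in the order $(A,B,C)$ with the inadmissible tuple $(2,0,1)$, and only after symmetrising over orderings of the parts does its total coefficient coincide with the number of admissible orderings (here $1$, realised by $(A,C,B)$ with $(2,1,0)$). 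Establishing that the symmetrised coefficient equals the count of admissible orderings --- equivalently, that this count is $(s-2)!\,\#(n_j{=}0)$ --- is exactly the content of the paper's computation of $C^1_{n_1\dots n_p0\dots0}$ and is where essentially all the labour of Lemma~\ref{lem:Delta-om} sits. You flag at the end that ``genuine care is required'' at the level of symmetric monomials, but that care is the proof; until it is carried out, the argument is an outline rather than a demonstration.
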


\begin{lemma}
\label{lem:Delta-om}
  \begin{align*}
\Delta\omega_{0,|I|+1}(I,q)
&=\nabla^{1}\omega_{0,|I|+1}(I,q)
\\
&+
\sum_{s=2}^{|I|} \sum_{I_1\uplus ... \uplus I_s=I}
  \sum_{n_1+...+n_s=s} \frac{\#(n_j=0)}{s(s-1)}\prod_{j=1}^s
  \nabla^{n_j}\omega_{0,|I_j|+1}(I_j,q)\;.
\end{align*}
\begin{proof}
As discussed before we have a representation
\begin{align}
\Delta \omega_{0,|I|+1}(I,q)
&=\sum_{s=1}^{|I|}
\sum_{\substack{I_1\uplus...\uplus I_s=I\\  I_1<...<I_s}}
\sum_{\substack{n_1+...+n_s=s\\ \max(n_j)\geq 2\text{ if } s>1}}
C^1_{n_1...n_s} 
\prod_{j=1}^s\nabla^{n_j} \omega_{0,|I_j|+1}(I_i,q)
\label{om-Iq-C}
\end{align} 
in which $C^1_{n_1...n_s}$ is symmetric in all its arguments. To
determine $C^1_{n_1...n_s}$ we can consider a subsector of the
$n_j$-summations where $n_1,...,n_p>0$ for some $p$ and
$n_{p+1}=...=n_s=0$. Other sectors are then obtained by symmetry.  We
will count the contributions from (\ref{def:Delta-om}) which
contribute to $C^1_{n_1...n_p0...0}$ for given positive integers
$n_1,...,n_p$ (which are followed by $n_1+...+n_p-p$ zeros). In a first
step we show that the number of these contributions is $C^1_{s0...0}=(s-1)!$
(which is clear) and for $p\geq 2$ given by
\begin{align}
&C^1_{n_1...n_p0...0}
\label{C-ns0}
\\
&= \sum_{\ell=1}^{p-1} \sum_{j=1}^{p} (n_j{-}1)! 
\sum_{\substack{J_1\uplus ... \uplus J_\ell =\{1,...,p\}\setminus \{j\} \\
     J_1<....<J_\ell}}
\frac{(n_1+...+n_p-p)!}{(n_j-\ell-1)!}
\prod_{i=1}^\ell \frac{(|\underline{n}|_{J_i})!}{
  (|\underline{n}|_{J_i}-|J_i|+1)!}\;.
\nonumber
\end{align}

The number $C^1_{n_1...n_p0...0}$ is the sum over all $j$ with $n_j\geq 2$
of specially ordered contributions from 
\begin{align}
\sum_{\substack{\tilde{I}_1\uplus...\uplus \tilde{I}_{n_j}=I\\ I_j=\tilde{I}_j,~ \tilde{I}_1<...<\tilde{I}_{n_j}}}
\!\!\!
\nabla^{n_j} \omega_{0,|I_j|+1}(I_j,q)
(n_j-1)! \prod_{\substack{i=1 \\ i\neq j}}^{n_j}
\frac{\omega_{0,|\tilde{I}_i|+1}(\tilde{I}_i,\iota q)}{-dy(q)}\;.
\label{Xnj-0}
\end{align}
The factors
$\frac{\omega_{0,|\tilde{I}_i|+1}(\tilde{I}_i,\iota q)}{-dy(q)}$ are
expressed via (\ref{omega-log-nabla}), but only contributions
compatible with $n_{p+1}=...=n_s=0$ are retained.  The positive
$n_1,...,n_p$, excluding $n_j$, arise from the part of
(\ref{omega-log-nabla}) in which all factors $\nabla^0\omega$ have a
larger order than any $\nabla^{r}\omega$ with $r>0$.  In particular,
contributions $\nabla^{r}\omega$ with $r>0$ only arise from every of
the first $\ell$ factors
$\frac{\omega_{0,|\tilde{I}_i|+1}(\tilde{I}_i,\iota q)}{-dy(q)}$ in
(\ref{Xnj-0}), for some $\ell$ with $1\leq \ell<p-1$ to sum over.
From the last $n_j-1-\ell$ factors in (\ref{Xnj-0}) we only take the
special term $\nabla^0\omega_{0,|\tilde{I}_i|+1}(\tilde{I}_i,q)$.

An expansion (\ref{omega-log-nabla}) used for the first $\ell$ factors
(\ref{Xnj-0}) contributes to the specially ordered
$C^1_{n_1...n_p0...0}$ whenever $\{1,...,p\}\setminus\{j\}$ is partitioned
into $J_1\uplus ...\uplus J_\ell$ with $\min J_k<\min J_l$ for every
pair $k<l$, where $\min J_k$ is the smallest integer in the set
$J_k$. Then the subset $\tilde{I}_i$ in (\ref{omega-log-nabla})
\begin{itemize}
\item contains $\bigcup_{k\in J_i} I_k$ if $i<j$;
\item contains $\bigcup_{k\in J_{i-1}} I_k$ if $j<i\leq p$.
\end{itemize}
We let $|\underline{n}|_{J_i} =\sum_{k\in J_i} n_k$.  To be an
admissible contribution to (\ref{omega-log-nabla}) the factors
$\nabla^{n_k}\omega$ in the $i$-th block must be supplemented by
$|\underline{n}|_{J_i}-|J_i|+1$ factors $\nabla^0\omega$.

Hence, the number $C^1_{n_1...n_p0...0}$ is given by the sum over $j$
and $\ell$ of
\begin{itemize}
\item a sum over ordered partitions
$\{1,...,p\}\setminus\{j\}= J_1\uplus ...\uplus J_\ell$

\item of a factor $(n_j-1)!$ from (\ref{Xnj-0})

\item  times a factor  $(|\underline{n}|_{J_i})!$ for every $1\leq i\leq \ell$
  which is the factor $(s{-}1)!$ in (\ref{omega-log-nabla})

\item times the number of distributions of the $n_1+...+n_p-p$ factors
  $\nabla^0\omega$ into $\ell+1$ blocks, namely
  \begin {enumerate}
\item    a block of
$(n_j-1-\ell)$ factors where from
$\frac{\omega_{0,|\tilde{I}_i|+1}(\tilde{I}_i,\iota q)}{-dy(q)}$ only
the special term
$\nabla^0\omega_{0,|\tilde{I}_i|+1}(\tilde{I}_i,q)$ is retained;

\item $\ell$ blocks of $|\underline{n}|_{J_i}-|J_i|+1$ factors which
  supplement the $\prod_{k\in J_i}\nabla^{n_k}\omega$ in a
  non-trivially expanded
  $\frac{\omega_{0,|\tilde{I}_i|+1}(\tilde{I}_i,\iota q)}{-dy(q)}$.
\end{enumerate}
There are
$\frac{(n_1+...+n_p-p)!}{(n_j-1-\ell)! \prod_{i=1}^\ell
  (|\underline{n}|_{J_i}-|J_i|+1)!}$ such distributions, which is a
valid multinomial coefficient due to
$\sum_{i=1}^\ell |\underline{n}|_{J_i}{=}n_1{+}...{+}n_p{-}n_j$ and
$\sum_{i=1}^\ell |J_i|{=}p{-}1$.
\end{itemize}
This number is (\ref{C-ns0}). We remark that the restriction to
$n_j\geq 2$ is automatic because $\frac{1}{(n_j-\ell-1)!}$ gives zero
for $n_j=1$.

We write (\ref{C-ns0}) in terms of falling factorials (see
Corollary~\ref{cor:partition}), insert (\ref{eq:partition-falling})
and shift $\ell-1\mapsto r$:
  \begin{align*}
&C^1_{n_1...n_p0...0}
\nonumber
\\[-1ex]
&= (n_1+...+n_p-p)! \sum_{\ell=1}^{p-1}
\sum_{j=1}^{p} (n_j{-}1) (n_j{-}2)^{\underline{\ell-1}}
\sum_{\substack{J_1\uplus ... \uplus J_\ell =\{1,...,p\}\setminus \{j\} \\
     J_1<....<J_\ell}}
\prod_{i=1}^\ell (|\underline{n}|_{J_i})^{\underline{|J_i|-1}}
\nonumber
\\
&= (n_1+...+n_p-p)! \sum_{r=0}^{p-2}
\sum_{j=1}^{p} (n_j{-}1) (n_j{-}2)^{\underline{r}}
\binom{p-2}{r} 
(n_1+...+n_p-n_j)^{\underline{p-2-r}}
\nonumber
\\*
&= (n_1+...+n_p-p)! 
\sum_{j=1}^{p} (n_j{-}1) 
(n_1+...+n_p-2)^{\underline{p-2}}
\nonumber
\\*
&= (n_1+...+n_p-2)!(n_1+...+n_p-p)\equiv (s-2)!(s-p)\;.
\end{align*}
In the fourth line we have used the binomial theorem for
the falling factorial. The final line is obvious.

For a general order of the $n_i$ we thus have 
$C^1_{n_1...n_s}=(s-2)!\#(n_j=0)$, where
$\#(n_j=0)$ is the number of $n_j$ which equal zero. Relaxing the
condition that the $I_j$ are ordered amounts to
an additional factor $\frac{1}{s!}$. This is the assertion.
\end{proof}  
\end{lemma}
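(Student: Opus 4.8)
The plan is to reduce the identity to a single combinatorial coefficient and then to evaluate that coefficient using the falling-factorial machinery of the appendix. First I would start from the definition \eqref{def:Delta-om} of $\Delta\omega_{0,|I|+1}(I,q)$ and expand each factor $\frac{\omega_{0,|I_j|+1}(I_j,\iota q)}{-dy(q)}$ by means of Lemma~\ref{lem:omega-nabla}, written in the \emph{logarithmic} form \eqref{omega-log-nabla} as a sum over ordered partitions $I_1<\cdots<I_s$ with weight $(s-1)!$ and constraint $n_1+\cdots+n_s=s-1$. Inserting these expansions into \eqref{def:Delta-om} and collecting terms produces a representation of $\Delta\omega_{0,|I|+1}(I,q)$ as a sum over ordered partitions of $I$ and over tuples $(n_1,\dots,n_s)$ with $n_1+\cdots+n_s=s$, each weighted by a coefficient $C^1_{n_1\cdots n_s}$ that is manifestly symmetric in its arguments. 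The entire statement then amounts to identifying these coefficients, so from here on the problem is purely enumerative.

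By symmetry it suffices to evaluate $C^1_{n_1\cdots n_s}$ on a representative sector in which exactly the first $p$ entries are positive and the remaining $s-p$ vanish. I would organise the count as follows: a factor $\nabla^{n_j}\omega$ with $n_j\ge 2$ can only originate from the leading slot $\nabla^{s+1}\omega$ of \eqref{def:Delta-om}, contributing $(n_j-1)!$; the other positive orders $n_k$ ($k\ne j$) must be distributed among the first $\ell$ of the $\iota q$-factors, each re-expanded via \eqref{omega-log-nabla}, while the surplus $\nabla^0\omega$-factors fill the remaining slots and are counted by a multinomial coefficient. Book-keeping the ordered set partitions $J_1\uplus\cdots\uplus J_\ell=\{1,\dots,p\}\setminus\{j\}$ together with these multiplicities yields an explicit but unwieldy double sum over $j$ and $\ell$, with the feature that the spurious case $n_j=1$ is automatically killed by a factor $\frac{1}{(n_j-\ell-1)!}$.

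The decisive step, which I expect to be the main obstacle, is to collapse this double sum to the clean value $(s-2)!\,(s-p)$, where $s=n_1+\cdots+n_p$. Here I would rewrite the inner sum over ordered set partitions in terms of falling factorials using Corollary~\ref{cor:partition} and \eqref{eq:partition-falling}, which converts $\sum_{J_1<\cdots<J_\ell}\prod_i(|\underline{n}|_{J_i})^{\underline{|J_i|-1}}$ into a binomial times a single falling factorial; after the shift $\ell-1\mapsto r$ the remaining $r$-sum is a binomial convolution of falling factorials, summed in closed form by the binomial theorem for falling factorials to give $(n_1+\cdots+n_p-2)^{\underline{\,p-2}}$, independent of $j$. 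Carrying out the outstanding sum $\sum_{j=1}^{p}(n_j-1)=s-p$ and simplifying the multinomial prefactor then produces exactly $(s-2)!\,(s-p)$. This reliance on the appendix combinatorics is where the real content sits; everything around it is bookkeeping.

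Finally, since $s-p$ is precisely the number $\#(n_j=0)$ of vanishing orders, symmetry gives $C^1_{n_1\cdots n_s}=(s-2)!\,\#(n_j=0)$ uniformly in the order of the arguments. Relaxing the ordering condition on the subsets $I_j$ to the unordered sum $\sum_{I_1\uplus\cdots\uplus I_s=I}$ introduces the customary factor $\frac{1}{s!}$, and $\frac{(s-2)!}{s!}=\frac{1}{s(s-1)}$ reproduces the announced coefficient $\frac{\#(n_j=0)}{s(s-1)}$ for $s\ge 2$; the sector $s=1$ forces $n_1=1$ and contributes the isolated term $\nabla^{1}\omega_{0,|I|+1}(I,q)$ displayed separately on the right-hand side. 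This yields the assertion.
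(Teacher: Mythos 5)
Your proposal is correct and follows essentially the same route as the paper's proof: the same expansion of \eqref{def:Delta-om} via \eqref{omega-log-nabla} into symmetric coefficients $C^1_{n_1\cdots n_s}$, the same sector-by-sector count over $j$, $\ell$ and ordered partitions $J_1\uplus\cdots\uplus J_\ell$, the same reduction via \eqref{eq:partition-falling} and the binomial theorem for falling factorials to $(s-2)!\,(s-p)$, and the same final bookkeeping with the $1/s!$ ordering factor. Nothing to add.
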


Lemma~\ref{lem:Delta-om} is the starting point to evaluate the sum
over $r$ in the first line of (\ref{log-Delta-om}). It is clear that
this sum has a similar expansion as (\ref{om-Iq-C}):
\begin{align}
&-\big[\log(1-\Delta \omega_{0,|.|+1}(\,.\,,q))\big](I)
\label{log-Delta-om1}
\\
&=\sum_{s=1}^{|I|}
\sum_{\substack{I_1\uplus...\uplus I_s=I\\ I_1<...<I_s}}
\sum_{n_1+...+n_s=s} C_{n_1...n_s} 
\prod_{j=1}^s\nabla^{n_j} \omega_{0,|I_j|+1}(I_i,q)
\nonumber
\end{align} 
where $C_{n_1...n_s}$ is symmetric. We first show that for an order
$n_1,...,n_p\geq 1$ and $n_{p+1}=...=n_s=0$ it is given by
\begin{align}
  &  C_{n_1...n_p0...0}
  \label{C-ns-all}
\\
&= \sum_{r=1}^{p}(r-1)!
  \sum_{\substack{J_1\uplus...\uplus J_r=\{1,...,p\}\\
       J_1<....<J_r}}
  (n_1+...+n_p-p)! \prod_{i=1}^r \frac{(|\underline{n}|_{J_i}-2)!
    (|\underline{n}|_{J_i}-|J_i|)}{(|\underline{n}|_{J_i}-|J_i|)!}\;.
  \nonumber
\end{align}
The factor $(r-1)!$ combines the step from any partitions
$I_1\uplus...\uplus I_r=I$ into $r!$ ordered ones with the prefactor
$\frac{1}{r}$ in (\ref{log-Delta-om}). The subset $J_i$ corresponds to
$\Delta \omega_{0,|\tilde{I}_i|+1} (\tilde{I}_i,q)$ with
$\tilde{I}_i =\bigcup_{j\in J_i} I_j \cup
\bigcup_{k=1}^{|\underline{n}|_{J_i}-|J_i|} I'_k$ where the $I'_k$ are
taken from $I_{p+1},...,I_{n_1+...+n_p}$. There are
$\frac{(n_1+...+n_p-p)!}{ \prod_{i=1}^r
  (|\underline{n}|_{J_i}-|J_i|)!}$ different distributions of these
$I_k'$, which explains the corresponding factor above. The numerator
$(|\underline{n}|_{J_i}-2)! (|\underline{n}|_{J_i}-|J_i|)$ is the
weight of $C^1_{\underline{n}_{J_i}0....0}$ found in Lemma~\ref{lem:Delta-om}.

We write (\ref{C-ns-all}) in terms of rising factorials and 
insert (\ref{eq:partition-rising}), where $x_j\mapsto n_j-1$
and $\ell\mapsto r$:
\begin{align*}
  &  C_{n_1...n_p0...0}
\\
&= \sum_{r=1}^{p}(r-1)!
   (n_1+...+n_p-p)! \prod_{i=1}^r 
   \sum_{\substack{J_1\uplus...\uplus J_r=\{1,...,p\}\\
        J_1<....<J_r}}
     (|\underline{n}|_{J_i}-|J_i|)^{\overline{|J_i|-1}}
   \nonumber
 \\
  &=\sum_{r=1}^{p}(r-1)!
  (n_1+...+n_p-p)! \binom{p-1}{r-1}
  (n_1+...+n_p-p)^{\overline{p-r}}
  \nonumber
  \\
  &=
  (n_1+...+n_p-p)! 
  (n_1+...+n_p-p+1)^{\overline{p-2}}=  (n_1+...+n_p-1)! \equiv (s-1)!\;.
\end{align*}
We have used $(r-1)!=1^{\overline{r-1}} $ and then applied
the binomial theorem.

By symmetry we thus have $C_{n_1...n_s}=(s-1)!$ for any partition
$n_1+...+n_s=s$.  Relaxing the condition that the $I_j$ are ordered
has to be corrected with an additional factor $\frac{1}{s!}$.
This completes the proof of Proposition~\ref{prop:log-Delta-om}.

Proposition~\ref{prop:log-Delta-om} together with (\ref{log-flip})
give as necessary condition for (\ref{reflect-U}) to be true
the equality (\ref{log-nabla-sym}) which we have proved in
Section~\ref{sec:prod-nabla-sym}.

\subsection{Absence of poles of (\ref{reflect-U})
  at \texorpdfstring{$q=-u_k$}{q=-u}}

\label{sec:res0-u}

The function $\mathfrak{v}_{0,|I|}(I\|q)$ is holomorphic at $u_k=q$ and
has poles at $u_k=-q$ which exclusively come from v2-groups with label
$u_k$. There are two possibilities. Either this v2-group is the single
vertex of a length-1 chain $-\Big(
    \begin{tikzpicture}
  \useasboundingbox (0,-0.1) rectangle (1.4,.7);
  \node (a1) at (0.2,0.35) {\footnotesize$0$};
  \draw[decorate, decoration={snake}] (0.15,0) -- (0.75,0); 
  \draw (1,0) node[diamond,draw] {};
  \node (I) at (1,-.4) {\footnotesize$I';u_k$};
\end{tikzpicture}\Big)_{I=I'\uplus u_k}$\rule[-4mm]{0pt}{4mm},
or it is part of a chain of larger length. In the second case it can
be collectively taken out of all other chains, the remnant is just
another copy of $-\mathfrak{v}_{0,|I|}(I'\|q)$ of smaller length with
$I'\notin u_k$:
\begin{align}
&\mathfrak{v}_{0,|I|}(I\|q) + \mathcal{O}((q+u_k)^0)
\label{UIq-mq}
\\
&= \sum_{s=0}^{|I|-1} 
\sum_{I_1\uplus...\uplus I_s=I{\setminus} u_k}^{(I\setminus u_k\neq \emptyset)}
\frac{1}{(R(-q)-R(u_k))(R(-u_k)-R(q))^{s+1}}
\prod_{j=1}^s\frac{\varpi_{0,|I_j|+1}(I_j;u_k)}{dR(u_k)}
\nonumber
\\
&-  \sum_{s=0}^{|I|-2} \sum_{I_{-1}\uplus I_1\uplus...\uplus I_s=I\setminus u_k}
\frac{\mathfrak{v}_{0,|I_{-1}|}(I_{-1}\|q)}{(R(-q)-R(u_k))(R(-u_k)-R(q))^{s+1}}
\prod_{j=1}^s\frac{\varpi_{0,|I_j|+1}(I_j;u_k)}{dR(u_k)}\;.
\nonumber
\end{align}
We recall that \mbox{\footnotesize$\displaystyle\sum^{(I'\neq \emptyset)}$} 
indicates that for $|I'|=0$ the sum is omitted, whereas for 
$|I'|>0$ the case $s=0$ is left out.
The following lemma (which we formulate for $q\mapsto -q$)
characterises the polar part of the second line at $q=-u_k$:
\begin{lemma}
\label{lem:v2-prodnabla}
\begin{align}
&  \sum_{s=1}^{|I|} \frac{1}{s}\sum_{I_1\uplus...\uplus I_s=I}
\sum_{n_1+...+n_s=s}  \prod_{j=1}^s \nabla^{n_i}\omega_{0,|I_j|+1}(I_j,q)
\label{v2-prodnabla}
\\  
&=  d_{u_k}\bigg[\sum_{s=0}^{|I|-1} 
\sum_{I_1\uplus...\uplus I_s=I\setminus u_k}^{(I\setminus u_k\neq \emptyset)}
\frac{\prod_{j=1}^s\frac{\omega_{0,|I_j|+1}(I_j,u_k)}{dx(u_k)}
}{(x(q)-x(u_k))(y(q)-y(u_k))^{s+1}}
\bigg]
+\mathcal{O}((q-u_k)^0)\;.
\nonumber
\end{align}
\begin{proof}
The lhs of (\ref{v2-prodnabla}) can be rewritten with (\ref{def:nabla-om}) as
\begin{align*}
  \eqref{v2-prodnabla}_{\text{lhs}}
  &=\sum_{s=1}^{|I|} \frac{1}{s}\sum_{I_1\uplus ... \uplus I_s=I}
\frac{(-1)^s}{s!}
\lim_{z\to q} \frac{\partial^s}{\partial (x(z))^s}
\Big(\Big( \frac{x(z)-x(q)}{y(z)-y(q)}\Big)^s
\prod_{j=1}^s
\frac{\omega_{0,|I_j|+1}(I_j,z)}{dx(z)}\Big)
\\
&=\Res\displaylimits_{z\to q} \Big[
  \sum_{s=1}^{|I|} \frac{1}{s}\sum_{I_1\uplus ... \uplus I_s=I}
\Big( \frac{dx(z)}{(x(z){-}x(q))(y(q){-}y(z))^s}
\prod_{j=1}^s
\frac{\omega_{0,|I_j|+1}(I_j,z)}{dx(z)}\Big)\Big]\,.
\end{align*}
Up to $\mathcal{O}((q-u_k)^0)$-contributions it coincides with its projection
to the polar part in which we change with (\ref{commute-res}) the order of residues:
\begin{align*}
&  \eqref{v2-prodnabla}_{\text{rhs}}+\mathcal{O}((q-u_k)^0)
\\
&=\Res\displaylimits_{w\to u_k} \frac{dw}{q-w}
  \Res\displaylimits_{z\to w} \Big[
  \sum_{s=1}^{|I|} \frac{1}{s}\sum_{I_1\uplus ... \uplus I_s=I}
\Big( \frac{dx(z)\prod_{j=1}^s
\frac{\omega_{0,|I_j|+1}(I_j,z)}{dx(z)}
}{(x(z)-x(w))(y(w)-y(z))^s}\Big)\Big]
\\
&=-\Res\displaylimits_{w\to u_k} \frac{dw}{q-w}
  \Res\displaylimits_{z\to u_k} \Big[
  \sum_{s=0}^{|I|-1} \sum_{I_0\uplus ... \uplus I_s=I{\setminus} u_k}
\Big( \frac{\omega_{0,2}(u_k,z) \prod_{j=1}^s
\frac{\omega_{0,|I_j|+1}(I_j,z)}{dx(z)}
}{(x(z)-x(w))(y(w)-y(z))^{s+1}}\Big)\Big]
\\
&=d_{u_k}\bigg[ \Res\displaylimits_{w\to u_k} \frac{dw}{q-w}
\sum_{s=1}^{|I|-1} \sum_{I_0\uplus ... \uplus I_s=I{\setminus} u_k}
\Big( \frac{\prod_{j=1}^s
\frac{\omega_{0,|I_j|+1}(I_j,u_k)}{dx(u_k)}
}{(x(u_k)-x(w))(y(w)-y(u_k))^{s+1}}\Big)\bigg]
\\
&=d_{u_k}\bigg[ 
\sum_{s=1}^{|I|-1} \sum_{I_0\uplus ... \uplus I_s=I{\setminus} u_k}
\Big( \frac{\prod_{j=1}^s
\frac{\omega_{0,|I_j|+1}(I_j,u_k)}{dx(u_k)}
}{(x(u_k)-x(q))(y(q)-y(u_k))^{s+1}}\Big) + \mathcal{O}((q-u_k)^0)
\bigg]\;.
\end{align*}
We have used that only $\omega_{0,2}$ has a pole at $z=u_k$ which is
given by (\ref{om02}).
\end{proof}  
\end{lemma}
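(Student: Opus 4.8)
The plan is to turn both sides into residue expressions and to match only their principal parts at $q=u_k$, which is all that is asserted. First I would handle the left-hand side exactly as in the derivation behind Lemma~\ref{lem:omega-nabla}: for a fixed ordered partition $I_1\uplus\dots\uplus I_s=I$, the inner sum over $n_1+\dots+n_s=s$ is the Leibniz expansion of a single $s$-th derivative, so that by the definition (\ref{def:nabla-om}) of $\nabla^n$ and the Taylor expansion (\ref{nabla-om-taylor}) one obtains, for each partition,
\[
\sum_{n_1+\dots+n_s=s}\prod_{j=1}^s\nabla^{n_j}\omega_{0,|I_j|+1}(I_j,q)
=\Res\displaylimits_{z\to q}\Big[\frac{dx(z)}{(x(z)-x(q))(y(q)-y(z))^s}\prod_{j=1}^s\frac{\omega_{0,|I_j|+1}(I_j,z)}{dx(z)}\Big].
\]
Summing over $s$ and over partitions then presents the whole left-hand side as one residue $\Res\displaylimits_{z\to q}[\dots]$ of a function of $q$.

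The core of the argument is the passage to the polar part at $q=u_k$. Since the claim is only modulo $\mathcal{O}((q-u_k)^0)$, I would replace the left-hand side by its principal part, writing it as $\Res\displaylimits_{w\to u_k}\tfrac{dw}{q-w}(\dots)(w)$, and then commute the two iterated residues through (\ref{commute-res}); for this one must first verify that near $u_k$ the only poles of the integrand sit at $z=w$, $z=u_k$ and $w=u_k$, so that (\ref{commute-res}) legitimately applies. After the commutation the inner residue becomes $\Res\displaylimits_{z\to u_k}$, and here the decisive point enters: by convention (c) of Theorem~\ref{thm:flip} each factor $\omega_{0,|I_j|+1}(I_j,z)$ with $|I_j|\ge 2$ is holomorphic at $z=u_k$, so the inner residue is non-zero only when one of the blocks is the singleton $I_j=\{u_k\}$, contributing the factor $\omega_{0,2}(u_k,z)$ with its double pole on the diagonal. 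Because this singleton can occupy any of the $s$ slots of the partition, there are $s$ identical contributions and the prefactor $\tfrac1s$ cancels; relabelling $s\mapsto s+1$ leaves a sum over partitions $I_0\uplus\dots\uplus I_s=I\setminus u_k$.

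It then remains to evaluate $\Res\displaylimits_{z\to u_k}$ of the surviving expression. Writing $\omega_{0,2}(u_k,z)$ in the exact-form version of (\ref{om02}) and using (\ref{commute-res-d}) to pull $d_{u_k}$ in front of the residues, the residue at $z=u_k$ produces the derivative of $\frac{\prod_{j}(\omega_{0,|I_j|+1}(I_j,u_k)/dx(u_k))}{(x(u_k)-x(w))(y(w)-y(u_k))^{s+1}}$; the remaining $\Res\displaylimits_{w\to u_k}\tfrac{dw}{q-w}$ merely reconstructs the principal part in $q$, replacing $w$ by $q$ up to $\mathcal{O}((q-u_k)^0)$, which is exactly the right-hand side.

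I expect the main obstacle to be precisely the middle step: justifying the residue commutation (checking the local pole configuration so that (\ref{commute-res}) is valid after projecting onto the principal part) together with the careful bookkeeping that only the singleton $\{u_k\}$ survives the inner residue and that its $s$ admissible placements cancel the $\tfrac1s$. Once this combinatorial reduction is in place, the remaining manipulations are routine residue calculus, relying only on (\ref{om02}), (\ref{commute-res-d}) and the already-established Leibniz identity.
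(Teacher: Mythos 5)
Your proposal is correct and follows essentially the same route as the paper's proof: rewrite the left-hand side via the Leibniz/Taylor identity behind $\nabla^n$ as a single residue at $z=q$, project onto the principal part at $q=u_k$, commute the two residues with \eqref{commute-res}, observe that only a singleton block $\{u_k\}$ (i.e.\ a factor $\omega_{0,2}(u_k,z)$) contributes to the inner residue so that its $s$ placements cancel the $\tfrac1s$, and finish by writing $\omega_{0,2}$ in exact form and pulling $d_{u_k}$ out with \eqref{commute-res-d}. Your extra remarks on checking the local pole configuration before applying \eqref{commute-res} and on invoking convention (c) for the holomorphicity of the higher $\omega_{0,|I_j|+1}$ at $z=u_k$ make explicit what the paper leaves implicit, but do not change the argument.
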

With these preparations we control the polar part of (\ref{reflect-U})
at $q=\pm u_k$:
\begin{proposition}
Holomorphicity of \eqref{reflect-U} at $q=-u_k$ is a consequence of
\eqref{log-nabla-sym} proved in Proposition~\ref{prop:nabla-sym}.
\begin{proof}
  The first term $\mathfrak{v}_{0,|I|}(I\|{-}q)$ in (\ref{reflect-U})
  is holomorphic at $q=-u_k$.  In the sum over $I_1\uplus I_2=I$ we
  distinguish $u_k\in I_1$ from $u_k\in I_2$. The function
  $\mathfrak{v}_{0,|I|}(I\|q)$ is for $u_k\in I$ written as
  (\ref{UIq-mq}) with Lemma~\ref{lem:v2-prodnabla} used for the
  rhs. We thus find
\begin{align*}
  \eqref{reflect-U}
  &= -A(I;q)+ \sum_{\substack{I_1\uplus I_2=I\\ u_k\in I_1}}
  A(I_1;q)\mathfrak{v}_{0,|I_2|}(I_2\|q)
  + \mathcal{O}((q+u_k)^0)\qquad \text{where}
  \\
  A(I;q)&:=\Delta\varpi_{0,|I|+1}(I;q)
  + \sum_{s=1}^{|I|} \frac{1}{s}\sum_{I_1\uplus...\uplus I_s=I}
  \sum_{n_1+...+n_s=s}  \prod_{j=1}^s \nabla^{n_i}\varpi_{0,|I_j|+1}(I_j;-q)  
  \\
  &- \!\!\sum_{\substack{I'\uplus I''=I\\ u_k\in I''}} \!\!\!\!
\Delta\varpi_{0,|I'|+1}(I';q)
  \sum_{s=1}^{|I|} \frac{1}{s}\sum_{I_1\uplus...\uplus I_s=I''}
  \sum_{n_1+...+n_s=s}  \prod_{j=1}^s \nabla^{n_i}\varpi_{0,|I_j|+1}(I_j;-q)\;.
\end{align*}  
Consider the equation
\[
  0=  \Delta\varpi_{0,|I|+1}(I;q)
  + B(I;q)
- \sum_{\substack{I'\uplus I''=I\\ u_k\in I''}} 
\Delta\varpi_{0,|I'|+1}(I';q) B(I'';q)\;.
\]
Its iterative solution is
\begin{align*}
  B(I;q)&=-\Delta\varpi_{0,|I|+1}(I;q)
  -\sum_{s=2}^{|I|} \sum_{\substack{I_1\uplus .. \uplus I_s=I \\ u_k\in I_1}}
  \Delta\varpi_{0,|I_1|+1}(I_1;q)
 \prod_{j=2}^s \Delta\varpi_{0,|I_j|+1}(I_j;q)
 \\[-2ex]
 &=  -\sum_{s=1}^{|I|} \frac{1}{s} \sum_{I_1\uplus .. \uplus I_s=I}
 \prod_{j=1}^s \Delta\varpi_{0,|I_j|+1}(I_j;q)\;.
\end{align*}  
The factor $\frac{1}{s}$ arises by symmetrisation when dropping the
condition $u_k\in I_1$.  The consistency condition
\eqref{log-nabla-sym} of (\ref{reflect-U}) together with
Proposition~\ref{prop:log-Delta-om} thus imply $A(I;q)\equiv 0$, which
gives the assertion.
\end{proof}
\end{proposition}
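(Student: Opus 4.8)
The plan is to isolate the polar part of the right-hand side of \eqref{reflect-U} at $q=-u_k$ and to show that it vanishes identically, the vanishing being a direct consequence of the symmetry \eqref{log-nabla-sym} established in Proposition~\ref{prop:nabla-sym}. First I would dispose of the term $\mathfrak{v}_{0,|I|}(I\|{-}q)$: since $\mathfrak{v}_{0,|I|}(I\|\cdot)$ develops poles at $-u_k$ only through v2-groups carrying the label $u_k$ (whose weight \eqref{coll-v2} is singular precisely when the argument is $-u_k$), the function $\mathfrak{v}_{0,|I|}(I\|{-}q)$ can only be singular at $q=u_k$ and is therefore holomorphic at $q=-u_k$. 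The analysis then concentrates on the three remaining terms, which do carry poles at $q=-u_k$.

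Second, in the quadratic term $\sum_{I_1\uplus I_2=I}\mathfrak{v}_{0,|I_1|}(I_1\|q)\,\Delta\varpi_{0,|I_2|+1}(I_2;q)$ I would split according to whether $u_k\in I_1$ or $u_k\in I_2$ and substitute, for every factor $\mathfrak{v}$ whose index set contains $u_k$, the near-$(-u_k)$ expansion \eqref{UIq-mq}. Lemma~\ref{lem:v2-prodnabla} rewrites the resulting polar contributions as an exterior differential $d_{u_k}[\,\cdots\,]$ of a $\tfrac1s$-weighted sum of products of $\nabla^{n}\varpi$, matching exactly the $\nabla$-structure appearing elsewhere in \eqref{reflect-U}. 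Collecting all polar pieces produces, modulo $\mathcal{O}((q+u_k)^0)$, an expression of the form $-A(I;q)+\sum_{u_k\in I_1}A(I_1;q)\,\mathfrak{v}_{0,|I_2|}(I_2\|q)$, where the auxiliary quantity $A(I;q)$ assembles $\Delta\varpi_{0,|I|+1}(I;q)$, a $\tfrac1s$-weighted $\nabla$-product sum in the reflected argument $-q$, and a convolution of the two.

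Third, I would recognise $A(I;q)$ through an iterative identity of the same shape as \eqref{reflect-U}. Resolving that recursion by repeated self-insertion, exactly as one disentangles a formal power series, expresses the companion quantity $B(I;q)$ as $-\sum_{s\ge 1}\tfrac1s\sum_{I_1\uplus\dots\uplus I_s=I}\prod_{j=1}^s\Delta\varpi_{0,|I_j|+1}(I_j;q)$, the $\tfrac1s$ emerging from symmetrisation after the constraint $u_k\in I_1$ is dropped. Proposition~\ref{prop:log-Delta-om} then converts this logarithmic $\Delta\varpi$-product into the $\tfrac1s\sum\nabla$-product that occurs in \eqref{log-nabla-sym}, so that the established vanishing \eqref{log-nabla-sym} forces $A(I;q)\equiv 0$. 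Since the entire polar part at $q=-u_k$ is governed by $A(I;q)$, holomorphicity of \eqref{reflect-U} there follows.

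I expect the main obstacle to lie in the combinatorial bookkeeping of the second and third steps: tracking precisely which subsets carry the distinguished variable $u_k$, checking that \eqref{UIq-mq} together with Lemma~\ref{lem:v2-prodnabla} accounts for every higher-order pole without omission or double counting, and confirming that the iterative resolution reproduces the exact $\tfrac1s$ weights under which Proposition~\ref{prop:log-Delta-om} applies. The sign conventions under $q\mapsto-q$ and the interplay between $\mathfrak{v}$, $\Delta\varpi$, and the operators $\nabla^{n}$ will demand care, but no analytic input beyond the already-proved identity \eqref{log-nabla-sym} should be required.
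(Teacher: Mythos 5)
Your proposal is correct and follows essentially the same route as the paper's proof: holomorphicity of $\mathfrak{v}_{0,|I|}(I\|{-}q)$ at $q=-u_k$, the split according to which subset contains $u_k$, the substitution of \eqref{UIq-mq} combined with Lemma~\ref{lem:v2-prodnabla}, the packaging of the polar part into $-A(I;q)+\sum A(I_1;q)\mathfrak{v}_{0,|I_2|}(I_2\|q)$, and the iterative resolution yielding the $\tfrac1s$-weighted $\Delta\varpi$-product that Proposition~\ref{prop:log-Delta-om} and \eqref{log-nabla-sym} force to vanish. The combinatorial checkpoints you flag (the $\tfrac1s$ symmetrisation, the bookkeeping of $u_k$) are exactly the points the paper's argument handles.
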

As result we have proved that (\ref{reflect-U}) does not have any poles
on $\hat{\mathbb{C}}$, it is thus a constant equal to its
value $0$ at $q=\infty$.
This means that Assumption~\ref{assump-res0} is true and the proof of 
Theorem~\ref{thm:sol-dse} complete.

\section{Conclusion and outlook}

We have proved for genus $g=0$ the main conjecture of
\cite{Branahl:2020yru} that meromorphic forms $\omega_{g,n}$ which
naturally appear in the quartic analogue of the Kon\-tsevich model
follow blobbed topological recursion \cite{Borot:2015hna}. This makes
the quartic Kontsevich model part of the growing family of structures
in mathematics and physics governed by topological recursion
\cite{Eynard:2007kz, Eynard:2016yaa}. Other examples include the
combinatorics of the Kontsevich model \cite{Kontsevich:1992ti}, the
one- and two matrix models \cite{Chekhov:2006vd}, Hurwitz theory
\cite{Bouchard:2007hi}, Gromov-Witten theory \cite{Bouchard:2007ys},
Weil-Petersson volumes of moduli spaces of hyperbolic Riemann surfaces
\cite{Mirzakhani:2006fta} and many more.

We consider as most important result of this paper the discovery that 
the quartic Kontsevich model is completely characterised by the behaviour of its
objects  $\omega_{g,n}$ under the global involution $\iota z=-z$.
We showed how a single equation (\ref{eq:flip-om}),
\begin{align}
&  \omega_{0,|I|+1}(I,q)
  +\omega_{0,|I|+1}(I,\iota q)
\tag{\ref{eq:flip-om}}
\\
&=\sum_{s=2}^{|I|} \sum_{I_1\uplus ...\uplus I_s=I}
\frac{1}{s} \Res\displaylimits_{z\to q}  \Big(
\frac{dy(q) dx(z)}{(y(q)-y(z))^{s}}  \prod_{j=1}^s
\frac{\omega_{0,|I_j|+1}(I_j,z)}{dx(z)} 
\Big)
\nonumber
\end{align}
governs the genus-0 case. This equation admits a na\"{\i}ve solution 
\begin{align}
&  \omega_{0,|I|+1}(I,z)
\nonumber
\\
&=\sum_{i=1}^r \sum_{s=2}^{|I|} \sum_{I_1\uplus ...\uplus I_s=I}
\frac{1}{s}\Res\displaylimits_{q\to \beta_i} \frac{dz}{z-q}
\Res\displaylimits_{w\to q}  \Big(
\frac{dy(q) dx(w)}{(y(q)-y(w))^{s}}  \prod_{j=1}^s
\frac{\omega_{0,|I_j|+1}(I_j,w)}{dx(w)} 
\Big)
\nonumber
\\
&+ \sum_{k=1}^{|I|} \sum_{s=2}^{|I|} \sum_{I_1\uplus ...\uplus I_s=I}
\frac{1}{s}\Res\displaylimits_{q\to \iota u_k} \frac{dz}{z-q}
\Res\displaylimits_{w\to q}  \Big(
\frac{dy(q) dx(w)}{(y(q)-y(w))^{s}}  \prod_{j=1}^s
\frac{\omega_{0,|I_j|+1}(I_j,w)}{dx(w)} 
\Big)\;,
\nonumber
\end{align}
which however leaves each of the following points obscure:
\begin{enumerate}
\item Is (\ref{eq:flip-om}) meaningful, i.e.\ is its rhs symmetric
  under $q\mapsto \iota q$?
\item Has (\ref{eq:flip-om}) anything to do with topological recursion?

\item Is there any connection between   (\ref{eq:flip-om})  and the quartic
  Kontsevich model?
\end{enumerate}
To answer the first two of these critical questions we had to prove
that the  na\"{\i}ve solution is equivalent to the solution
(\ref{sol:omega})+(\ref{eq:kernel}) given in Theorem~\ref{thm:flip}.
The generated material also allowed to affirm question (c), where a
difficulty was to show that all poles are of purely higher order.
This property is a consequence of a hidden symmetry
(\ref{prod-om-nabla-sym}) resulting from (\ref{eq:flip-om}) alone. As
result we have established for genus $g=0$ a precise
connection between (b)
and (c), which was conjectured in \cite{Branahl:2020yru}.

But the statement is more general: in \cite{Branahl:2020yru} it is
shown that the poles of $\omega_{g\geq 1,n}(z_1,...,z_n)$ are located,
besides ramification points of $x$ and diagonals $z_k=\iota z_l$, at
the fixed points $z_k=\iota z_k$ of the involution.

The next step in our programme will be to extend (\ref{eq:flip-om}) to
higher genus. This should help to answer the exciting question whether
the intersection numbers \cite{Borot:2015hna} encoded in the quartic
Kontsevich model capture geometric information about a moduli space of
curves equipped with an involution. It would also be interesting to
investigate whether these structures relate to other extensions of
topological recursion. We mention a recent work \cite{Belliard:2021jtj}
(which contains a beautiful introduction to topological
recursion and its ramifications) on $r$-spin intersection numbers to which
the quartic Kontsevich model could be related.

\appendix

\markboth{\hfill\textsc{Maciej Do{\l}\k{e}ga}}{\textsc{{Blobbed topological
      recursion of the quartic Kontsevich model II: Genus=0}\hfill}}

\newcommand{\PPP}{\mathcal{P}}
\newcommand{\T}{\mathcal{T}}
\newcommand{\CC}{\mathbb{C}}

\section{Combinatorial identities involving labeled
  trees (by Maciej Do{\l}\k{e}ga)}

A {\em set partition} of $S$ is a (non-ordered) family of non-empty disjoint
subsets of $S$ (called parts of the partition), whose union is $S$.
In the following, we always assume that $S$ is finite. Denote by
$\PPP(S)$ the set of set partitions of $S$ and for any $\pi \in
\PPP(S)$ and for any $B \in \pi$ denote by $|\pi|$ the number of parts
of $\pi$ and by $|B|$ the number of elements in the part $B$.

A graph $G = (V,E)$ is a \emph{forest} if it has no cycles. If a
forest is additionally connected it is called a \emph{tree}. Denote by
$\T^{\ast}_{V}$ the set of \emph{plane trees}, that is trees embedded
in a plane, with the set of vertices $V$. Denote by $\T_V$ the set of
\emph{labeled trees} with the set of vertices $V$, that is the set of
trees, whose vertices are labeled by distinct numbers
$\{1,\dots,|V|\}$ (or, by isomorphism, any linearly ordered set of the
cardinality $|V|$). Finally, a tree is \emph{rooted} if it has a
distinguished vertex $v_\bullet \in V$ called \emph{the root} and we
denote by $\T^{\ast,\bullet}_{V}$ and $\T^\bullet_V$ the
set of plane rooted trees and of labelled rooted trees, respectively, with
the vertex set $V$. The degree of a vertex $v$ in a tree $T$ is the
number of adjacent vertices to $v$.

The following classical theorem is a multivariate version of the
celebrated Cayley's formula for the number of labeled trees.

\begin{theorem}
  \label{theo:Cayley}
    For any positive integer $n$ and family of indeterminates
    $x_1,\dots,x_n, x_{n+1}$the
  following formulas hold true:
  \begin{equation} 
\label{eq:Cayley}
(x_1+\cdots+x_{n+1})^{n-1} = \sum_{T \in \T_{[n+1]}
  }\prod_{v \in V}x_v^{\deg(v)-1},
\end{equation}
and 
\begin{equation} 
\label{eq:RootedCayley} 
(x_1+\cdots+x_{n+1})^{n} = \sum_{T \in \T_{[n+1]}^\bullet
  }x_{v_\bullet}^{\deg(v_\bullet)}\prod_{v \in
    V\setminus\{v_{\bullet}\}}x_v^{\deg(v)-1},
\end{equation}
\end{theorem}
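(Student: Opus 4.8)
The plan is to establish the unrooted identity~\eqref{eq:Cayley} by a weighted Prüfer bijection, and then to read off the rooted identity~\eqref{eq:RootedCayley} from it almost for free. First I would apply the multinomial theorem to the left-hand side of~\eqref{eq:Cayley}:
\[
(x_1+\cdots+x_{n+1})^{n-1}
=\sum_{\substack{a_1,\dots,a_{n+1}\geq 0\\ a_1+\cdots+a_{n+1}=n-1}}
\binom{n-1}{a_1,\dots,a_{n+1}}\prod_{v=1}^{n+1}x_v^{a_v},
\]
and observe that the coefficient $\binom{n-1}{a_1,\dots,a_{n+1}}$ is exactly the number of words of length $n-1$ over the alphabet $[n+1]$ in which each letter $v$ occurs $a_v$ times.

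Next I would invoke the Prüfer correspondence, the classical bijection between $\T_{[n+1]}$ and the set $[n+1]^{n-1}$ of such words obtained by repeatedly deleting the smallest-labelled leaf and recording its unique neighbour. Its decisive feature is that each vertex $v$ occurs in the Prüfer word of $T$ exactly $\deg(v)-1$ times. Establishing this degree property is the technical heart of the argument and is most cleanly done by induction on $n$: if the smallest leaf $\ell$ is deleted first with neighbour $p_1$, then removing $\ell$ leaves a tree on $n$ vertices whose Prüfer word is $(p_2,\dots,p_{n-1})$, and since deletion lowers $\deg(p_1)$ by one while leaving all other degrees (apart from the vanished $\ell$) unchanged, prepending $p_1$ restores the appearance counts to $\deg(v)-1$ for every $v$. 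Granting this, the words with letter-multiplicities $(a_1,\dots,a_{n+1})$ correspond bijectively to the trees with $\deg(v)-1=a_v$ for all $v$, so the multinomial sum regroups as $\sum_{T\in\T_{[n+1]}}\prod_v x_v^{\deg(v)-1}$, which is precisely the right-hand side of~\eqref{eq:Cayley}.

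Finally, for~\eqref{eq:RootedCayley} I would encode a rooted tree as a pair $(T,v_\bullet)$ with $T\in\T_{[n+1]}$ and $v_\bullet\in[n+1]$, and use $x_{v_\bullet}^{\deg(v_\bullet)}=x_{v_\bullet}\,x_{v_\bullet}^{\deg(v_\bullet)-1}$ to factor the weight:
\[
\sum_{T\in\T^{\bullet}_{[n+1]}}x_{v_\bullet}^{\deg(v_\bullet)}
\prod_{v\neq v_\bullet}x_v^{\deg(v)-1}
=\sum_{T\in\T_{[n+1]}}\Big(\prod_{v}x_v^{\deg(v)-1}\Big)
\sum_{v_\bullet\in[n+1]}x_{v_\bullet}
=(x_1+\cdots+x_{n+1})^{n},
\]
where the sum over $v_\bullet$ contributes a factor $(x_1+\cdots+x_{n+1})$ and the remaining tree sum is evaluated by~\eqref{eq:Cayley}.

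The only genuine obstacle is the degree property of the Prüfer word; everything else is bookkeeping. Should the bijection prove awkward to set up in full detail, an alternative is a direct induction on $n$ that peels off a leaf $\ell$ attached to a vertex $v$ and uses the weight $\prod_v x_v^{\deg(v)-1}$ to relate the tree sum on $[n+1]$ to that on $[n]$ rescaled by $(x_1+\cdots+x_{n+1})$, but tracking the weight under leaf deletion is fiddlier than the clean word-counting above, so I would prefer the Prüfer route.
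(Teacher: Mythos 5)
Your proof is correct. Note that the paper does not actually prove Theorem~\ref{theo:Cayley}: it states it as the classical multivariate Cayley formula and merely cites Cayley's determinant computation, Pr\"ufer's bijection, and the matrix-tree theorem as available proofs. Your argument is a complete and correct write-up of precisely the Pr\"ufer route the paper alludes to --- the multinomial expansion matched against the degree property $\#\{i : p_i = v\} = \deg(v)-1$ of the Pr\"ufer word (with a sound induction for that property), followed by the observation that summing the rooted weight over the $n+1$ choices of root contributes an extra factor $(x_1+\cdots+x_{n+1})$, which turns \eqref{eq:Cayley} into \eqref{eq:RootedCayley}. No gaps.
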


Cayley proved his formula by computing a certain determinant
\cite{Cayley1897}, and the first bijective proof was given by Pr\"ufer
\cite{Prufer1918}. Since then many different proofs were proposed and
we would like to mention a relatively general method for counting
trees by the use of the matrix-tree theorem, see for instance
\cite{Abdesselam:2004??} for generalisations and applications.
\begin{corollary}
  \label{lemma:multinomialDolega}
    For any positive integer $n$ and family of indeterminates
    $x_1,\dots,x_n, x_{n+1}$ the
  following formula holds true:
  \begin{equation} 
\label{eq:multinomialDolega}
(x_1+\cdots+x_{n+1})^{n} = \sum_{T \in \T_{[n+1]}^{\ast,\bullet}
}\sum_{\varsigma \in \mathcal{S}_{n+1}}\frac{x_{\varsigma(v_\bullet)}^{\deg(v_\bullet)}}{
  \deg(v_\bullet)!}\prod_{v \in
    V\setminus\{v_{\bullet}\}}\frac{x_{\varsigma(v)}^{\deg(v)-1}}{(\deg(v)-1)!}\;.
\end{equation}
\end{corollary}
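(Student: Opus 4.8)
The plan is to deduce the identity directly from the rooted Cayley formula \eqref{eq:RootedCayley}, by reorganising the double sum on its right-hand side as a sum over \emph{labelled plane rooted trees} --- plane rooted trees whose vertices carry a bijective labelling by $[n+1]$. First I would record that a pair $(T,\varsigma)$ with $T\in\T^{\ast,\bullet}_{[n+1]}$ and $\varsigma\in\mathcal{S}_{n+1}$ is the same datum as one such labelled plane rooted tree. Indeed, fix once and for all a canonical labelling $\kappa\colon V\to[n+1]$ read off from the planar embedding (say the depth-first order from the root); then $\ell:=\varsigma\circ\kappa$ is a bijective labelling of $V$, and conversely the underlying plane tree recovers $T$ while $\ell\circ\kappa^{-1}$ recovers $\varsigma$. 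Because a plane rooted tree admits no nontrivial automorphism, this is a genuine bijection, and the summand of the corollary becomes $\bigl(\deg(v_\bullet)!\prod_{v\neq v_\bullet}(\deg(v)-1)!\bigr)^{-1}\,x_{\ell(v_\bullet)}^{\deg(v_\bullet)}\prod_{v\neq v_\bullet}x_{\ell(v)}^{\deg(v)-1}$.

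Next I would study the forgetful map sending a labelled plane rooted tree to the labelled rooted tree $T'\in\T^{\bullet}_{[n+1]}$ obtained by discarding the planar structure. This map leaves the labels and all vertex degrees unchanged, so it preserves the monomial $x_{\ell(v_\bullet)}^{\deg(v_\bullet)}\prod_{v\neq v_\bullet}x_{\ell(v)}^{\deg(v)-1}$, which is exactly the Cayley weight attached to $T'$ in \eqref{eq:RootedCayley}. The crux of the argument --- and the step I expect to require the most care --- is to count the fibre of this map over a fixed $T'$. Putting a planar structure on $T'$ amounts to choosing independently at each vertex a linear order of its children; the root has $\deg(v_\bullet)$ children and every other vertex $v$ has $\deg(v)-1$ children, one neighbour being its parent. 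Hence the fibre has exactly $\deg(v_\bullet)!\prod_{v\neq v_\bullet}(\deg(v)-1)!$ elements, precisely the denominator occurring in the summand above.

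Finally I would assemble the pieces. Since this denominator depends only on the degree sequence of $T'$, it is constant along each fibre; summing the normalised summand over the fibre of $T'$ therefore cancels the factorials and leaves exactly $x_{v_\bullet}^{\deg(v_\bullet)}\prod_{v\neq v_\bullet}x_{v}^{\deg(v)-1}$. Summing over all $T'\in\T^{\bullet}_{[n+1]}$ and applying \eqref{eq:RootedCayley} then gives $(x_1+\cdots+x_{n+1})^{n}$, as claimed. The whole proof is thus a fibre count over the ``forget the planar structure'' map; the only genuine subtleties are the bijection of the first paragraph and the fibre cardinality, both resting on the rigidity of plane rooted trees.
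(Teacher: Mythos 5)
Your proof is correct and takes essentially the same route as the paper's: both deduce \eqref{eq:multinomialDolega} from the rooted Cayley formula \eqref{eq:RootedCayley} by observing that exactly $\deg(v_\bullet)!\prod_{v \in V\setminus\{v_\bullet\}}(\deg(v)-1)!$ pairs $(T,\varsigma)$ lie over each labelled rooted tree, which cancels the factorials in the summand. The only difference is presentational: you organise the argument as a fibre count over the forgetful map from labelled plane rooted trees, whereas the paper counts the labelings of a fixed plane tree directly.
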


\begin{proof}
  Note that the symmetric group $\mathcal{S}_{n+1}$ acts on the set
  $\T_{[n+1]}^\bullet$ of labeled rooted trees by permuting the
  labels. Moreover, each labeled rooted tree is uniquely constructed
  by choosing a plane tree $T \in \T_{[n+1]}^{\ast,\bullet}$, a label
  for its root, and for each $v \in V$ a subset $L_v \subset [n+1]$ of
  labels of its children. There are
  \[\frac{(n+1)!}{\deg(v_\bullet)!\prod_{v \in
        V\setminus\{v_{\bullet}\}}(\deg(v)-1)!}\]
  choices for such labelings. Therefore acting by the permutation
  group on the labels and comparing it with the formula
  \eqref{eq:RootedCayley} we have got
\[(n+1)!(x_1+\cdots+x_{n+1})^{n} = \!\!\! \sum_{T \in \T_{[n+1]}^{\ast,\bullet}
  }\sum_{\varsigma \in \mathcal{S}_{n+1}}(n{+}1)!
  \frac{x_{\varsigma(v_\bullet)}^{\deg(v_\bullet)}}{\deg(v_\bullet)!}\prod_{v \in
    V\setminus\{v_{\bullet}\}}\frac{x_{\varsigma(v)}^{\deg(v)-1}}{(\deg(v)-1)!},
\]
which finishes the proof.
\end{proof}

\begin{corollary}\label{cor:leibniz}
  For any $(n{+}1)$-tuple $f_0,...,f_n$ of differentiable functions one has
  \begin{align*}
  	n! \sum_{k_0+...+k_n=n}   \prod_{i=0}^n \frac{f_{i}^{(k_i)}(x)}{k_i!}
  	&\equiv
  	(f_0f_1\cdots f_n)^{(n)}(x)
        \\[-1ex]
  	&=\sum_{T \in \T_{[n+1]}^{\ast,\bullet}}
  	\sum_{\varsigma \in \mathcal{S}_{n+1}}
        \frac{f^{(\deg(v_\bullet))}_{\varsigma (v_\bullet)}(x)}{\deg(v_\bullet)}
  	\prod_{v\in V\setminus \{v_\bullet\}}
        \frac{f_{\varsigma(v)}^{(\deg(v)-1)}(x)}{(\deg(v)-1)!}\;.
  \end{align*}
  \begin{proof}
    Set $x_i\mapsto \partial_{x_i}$ in
    (\ref{eq:multinomialDolega}), apply it to
    $f_0(x_0)f_1(x_1)\cdots f_n(x_n)$ and substitute
    $x_0=x_1=\dots=x_n=x$.
  \end{proof}
\end{corollary}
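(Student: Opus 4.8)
The plan is to recognise both sides of the asserted chain of equalities as two expansions of one and the same differential operator, namely $\big(\partial_{x_0}+\cdots+\partial_{x_n}\big)^n$, applied to the product $F(x_0,\dots,x_n):=f_0(x_0)f_1(x_1)\cdots f_n(x_n)$ and then restricted to the diagonal $x_0=\cdots=x_n=x$. The middle equality $n!\sum_{k_0+\cdots+k_n=n}\prod_i f_i^{(k_i)}(x)/k_i!=(f_0\cdots f_n)^{(n)}(x)$ is simply the multinomial form of the generalised Leibniz rule and requires no combinatorial input; I would record it as the standard expansion of the $n$-th total derivative of a product of $n{+}1$ factors.

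For the second equality, which carries the real content, I would first note that since $F$ factorises with each $f_i$ depending only on $x_i$, the chain rule gives $\frac{d}{dx}F(x,\dots,x)=\big(\sum_i\partial_{x_i}\big)F\big|_{\mathrm{diag}}$, and iterating this $n$ times yields $(f_0\cdots f_n)^{(n)}(x)=\big(\sum_{i=0}^n\partial_{x_i}\big)^nF\big|_{\mathrm{diag}}$. The key step is then to expand the operator $\big(\sum_i\partial_{x_i}\big)^n$ by invoking Corollary~\ref{lemma:multinomialDolega} under the formal substitution $x_i\mapsto\partial_{x_i}$: because the partial derivatives commute, the polynomial identity there transcribes verbatim into an identity between commuting differential operators. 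Applying the resulting operator to $F$, each factor $\partial_{x_{\varsigma(v)}}^m$ acts on the single function $f_{\varsigma(v)}$ alone and produces $f_{\varsigma(v)}^{(m)}(x_{\varsigma(v)})$, with $m=\deg(v_\bullet)$ at the root and $m=\deg(v)-1$ at every other vertex; restriction to the diagonal then delivers exactly the labelled-tree sum on the right-hand side, the vertex weights being inherited directly from the coefficients of Corollary~\ref{lemma:multinomialDolega}.

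The one point demanding care---and where I expect the bookkeeping to be most delicate---is matching the exponent carried by each variable $x_i$ in a given tree monomial with the order of the derivative that eventually lands on $f_i$, so that the reciprocal-factorial weights pass correctly through the differentiation and the sum over $\varsigma\in\mathcal S_{n+1}$ redistributes the functions among the vertices without over- or under-counting. This, however, is precisely the content already packaged into Corollary~\ref{lemma:multinomialDolega}: once the substitution $x_i\mapsto\partial_{x_i}$ is performed, the verification reduces to the observation that differentiation and evaluation on the diagonal are linear and commute with the finite sums over plane rooted trees $T$ and permutations $\varsigma$, whence the identity follows at once.
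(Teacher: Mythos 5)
Your proposal is correct and takes essentially the same route as the paper's own proof, which likewise substitutes $x_i\mapsto\partial_{x_i}$ in \eqref{eq:multinomialDolega}, applies the resulting (commuting) differential operator to $f_0(x_0)f_1(x_1)\cdots f_n(x_n)$, and restricts to the diagonal $x_0=\cdots=x_n=x$. You merely make explicit the chain-rule identification $(f_0\cdots f_n)^{(n)}(x)=\big(\sum_i\partial_{x_i}\big)^nF\big|_{\mathrm{diag}}$ and the generalised Leibniz rule for the middle equality, which the paper leaves implicit.
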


Here is a corollary from Theorem~\ref{theo:Cayley} which gives an identity
expressed in terms of set-partitions.
  \begin{corollary}
  \label{theo:conj1}
  For any positive integer $n$ and family of indeterminates $x_1,\dots,x_n$the
  following formula holds true:
  \begin{equation}
    \label{eq:identity}
(1+\sum_{i=1}^nx_i)^{n-1} = \sum_{\pi \in \PPP([n])}\prod_{B \in
  \pi}(\sum_{b \in B}x_b)^{|B|-1}.
    \end{equation}
\begin{proof}
For any $T \in \T_{[n+1]}$ removing the
vertex $n+1$ from it yields the decomposition into a collection of
disjoint rooted labeled
trees on the set $[n]$. This decomposition establishes a bijection
between rooted, labeled forests $F$ on the vertex set $[n]$ and labeled
trees $T$ on $n+1$ vertices. Moreover, the degrees of the non-root
vertices of $F$ coincides with their degrees in $T$, and the degrees of the root
vertices of $F$ are equal to their degrees in $T$ minus one. This
decomposition gives the following identity by plugging
$x_{n+1}=1$ in \eqref{eq:Cayley}:
\[ (x_1+\cdots+x_n+1)^{n-1} = \sum_{F}\prod_{v \in
    V_\bullet}x_{v}^{\deg(v)}\prod_{v \in V\setminus
    V_\bullet}x_{v}^{\deg(v)-1}.\]
Note that for any set-partition $\pi \in \PPP([n])$ and for any
collection of rooted, labeled trees $\{T_B \in \T_B^\bullet: B \in
\pi\}$ there exists a rooted, labeled forest $F$ on $[n]$, which is
the disjoint union of $\{T_B \in \T_B^\bullet: B \in
\pi\}$ and every rooted, labeled forest $F$ on $[n]$ is obtained in
this way. Therefore
\begin{align*}
  (x_1+\cdots+x_n+1)^{n-1} &= \sum_{\pi \in \PPP([n])}\prod_{B \in
  \pi}\bigg(\sum_{T_B \in \T_B^\bullet}x_{v_\bullet(T_B)}^{\deg(v_\bullet(T_B))}\prod_{v \in
                         V(T_B)\setminus\{v_\bullet(T_B)\}}x_v^{\deg(v)-1}\bigg) \\
  &= \sum_{\pi \in \PPP([n])}\prod_{B \in
  \pi}(\sum_{b \in B}x_b)^{|B|-1}
  \end{align*}
by \eqref{eq:RootedCayley}, which finishes the proof.
\end{proof}
  \end{corollary}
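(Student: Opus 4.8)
The plan is to deduce the set-partition identity \eqref{eq:identity} directly from Cayley's formula \eqref{eq:Cayley} by the standard device of distinguishing one vertex and then deleting it. First I would specialise $x_{n+1}=1$ in the unrooted formula \eqref{eq:Cayley}, obtaining
\[
(x_1+\cdots+x_n+1)^{n-1} = \sum_{T \in \T_{[n+1]}}\prod_{v \in V\setminus\{n+1\}}x_v^{\deg(v)-1},
\]
where the factor attached to the vertex $n+1$ has disappeared since $x_{n+1}^{\deg(n+1)-1}=1$. The task is then to reorganise the right-hand side into a sum over set partitions of $[n]$.

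The key combinatorial step is a bijection between labeled trees on $[n+1]$ and rooted labeled forests on the vertex set $[n]$: removing the vertex $n+1$ from a tree $T\in\T_{[n+1]}$ disconnects it into several subtrees, each previously joined to $n+1$ by a single edge, and I declare the formerly-adjacent vertex of each subtree to be its root. Conversely every rooted labeled forest on $[n]$ arises uniquely this way by reattaching all roots to a new vertex $n+1$. Under this correspondence every non-root vertex keeps its degree, while each root vertex has degree one less than in $T$ (it has lost the edge to $n+1$). Writing $V_\bullet$ for the set of roots, I would thus rewrite the above as a sum over rooted labeled forests $F$ on $[n]$,
\[
(x_1+\cdots+x_n+1)^{n-1} = \sum_{F}\prod_{v \in V_\bullet}x_{v}^{\deg(v)}\prod_{v \in V\setminus V_\bullet}x_{v}^{\deg(v)-1}.
\]

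Next I would sort the forests by their underlying set partition. A rooted labeled forest on $[n]$ is the same datum as a set partition $\pi\in\PPP([n])$ together with a choice, for each block $B\in\pi$, of a rooted labeled tree $T_B\in\T_B^\bullet$ on $B$; the exponents in the display above split accordingly into a product over blocks. Hence the sum factorises as
\[
(x_1+\cdots+x_n+1)^{n-1} = \sum_{\pi \in \PPP([n])}\prod_{B \in \pi}\Bigl(\sum_{T_B \in \T_B^\bullet}x_{v_\bullet(T_B)}^{\deg(v_\bullet(T_B))}\prod_{v \in V(T_B)\setminus\{v_\bullet(T_B)\}}x_v^{\deg(v)-1}\Bigr).
\]
Finally I would apply the rooted Cayley formula \eqref{eq:RootedCayley} to each block separately: for a block $B$ of size $|B|$ the inner sum over rooted labeled trees on $B$ equals exactly $(\sum_{b\in B}x_b)^{|B|-1}$. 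Substituting this into the product over blocks yields \eqref{eq:identity}.

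The only genuine content, and the step I expect to require the most care, is the tree-to-forest bijection together with the precise bookkeeping of how the vertex degrees transform under deletion of $n+1$; once that is pinned down, the factorisation over blocks and the application of \eqref{eq:RootedCayley} are purely formal.
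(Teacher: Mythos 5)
Your proposal is correct and follows essentially the same route as the paper's proof: specialise $x_{n+1}=1$ in Cayley's formula \eqref{eq:Cayley}, use the deletion of the vertex $n+1$ to set up the bijection with rooted labeled forests on $[n]$ (with the same degree bookkeeping for roots versus non-roots), factor the forest sum over the blocks of the induced set partition, and apply the rooted Cayley formula \eqref{eq:RootedCayley} blockwise. No gaps.
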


\begin{corollary}\label{coro:combin-l}
  For any positive integers $\ell < n$ and family of indeterminates
  $x_1,\dots,x_n$ the
  following formula holds true:  
  \begin{align}
    \binom{n-1}{\ell-1}(x_1+\cdots +x_{n})^{n-\ell} &= \sum_{\substack{\pi \in
    \PPP([n]):\\ |\pi|=\ell}}\prod_{B \in
    \pi}(\sum_{b \in B}x_b)^{|B|-1}.
        \label{eq:partition-power}
    \end{align}
  \begin{proof}
    It is enough to apply the binomial formula for the left hand
    side of \eqref{eq:identity} and compare the 
    homogenous parts of degree $n-\ell$.
  \end{proof}
  \end{corollary}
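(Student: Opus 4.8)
The final statement to prove is Corollary~\ref{coro:combin-l}:
\begin{align}
    \binom{n-1}{\ell-1}(x_1+\cdots +x_{n})^{n-\ell} &= \sum_{\substack{\pi \in
    \PPP([n]):\\ |\pi|=\ell}}\prod_{B \in
    \pi}\Big(\sum_{b \in B}x_b\Big)^{|B|-1}.
    \nonumber
\end{align}

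The plan is to deduce this directly from Corollary~\ref{theo:conj1} by introducing an auxiliary grading variable that tracks the number of parts of the set partition. First I would rescale the indeterminates, replacing each $x_i$ by $t\,x_i$ in the identity \eqref{eq:identity} of Corollary~\ref{theo:conj1}, where $t$ is a formal bookkeeping variable. The left-hand side becomes $(1+t\sum_{i=1}^n x_i)^{n-1}$, and each factor $(\sum_{b\in B} tx_b)^{|B|-1}$ on the right acquires a power $t^{|B|-1}$. Since $\sum_{B\in\pi}(|B|-1)=n-|\pi|$ for any set partition $\pi\in\PPP([n])$, the monomial contributed by $\pi$ carries the global factor $t^{\,n-|\pi|}$. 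Thus the right-hand side organises itself into a polynomial in $t$ whose coefficient of $t^{\,n-\ell}$ is exactly $\sum_{\pi:\,|\pi|=\ell}\prod_{B\in\pi}(\sum_{b\in B}x_b)^{|B|-1}$, which is the right-hand side of the claim.

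The remaining step is to extract the matching coefficient on the left-hand side. Applying the binomial theorem to $(1+t\sum_{i=1}^n x_i)^{n-1}$ gives $\sum_{k=0}^{n-1}\binom{n-1}{k}t^{k}(x_1+\cdots+x_n)^{k}$, so the coefficient of $t^{\,n-\ell}$ is $\binom{n-1}{n-\ell}(x_1+\cdots+x_n)^{n-\ell}$. Using the symmetry $\binom{n-1}{n-\ell}=\binom{n-1}{\ell-1}$ identifies this with the left-hand side of the desired formula. Comparing the two coefficient extractions term by term yields the identity for each fixed $\ell$ with $1\le \ell<n$; note that the range restriction $\ell<n$ guarantees $n-\ell\le n-1$, so the coefficient genuinely appears in the binomial expansion.

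I do not anticipate a serious obstacle here, since all the combinatorial weight has already been absorbed into Corollary~\ref{theo:conj1} via the tree bijection. The only point requiring a moment of care is the homogeneity bookkeeping: one must verify that every set partition $\pi$ with $|\pi|=\ell$ contributes to the single degree $t^{\,n-\ell}$ (so that the grading cleanly separates the sum over all partitions into the sum over partitions with a fixed number of parts), which follows immediately from $\sum_{B\in\pi}|B|=n$. This is precisely the computation indicated in the stated proof — apply the binomial formula and compare homogeneous parts — and the argument above makes explicit why the comparison isolates exactly the partitions into $\ell$ parts.
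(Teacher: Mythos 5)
Your proof is correct and is essentially the paper's own argument: introducing the bookkeeping variable $t$ is just an explicit way of carrying out the paper's instruction to ``compare the homogeneous parts of degree $n-\ell$'' in \eqref{eq:identity}, since $\sum_{B\in\pi}(|B|-1)=n-|\pi|$ makes the grading by number of parts coincide with the grading by degree. The coefficient extraction $\binom{n-1}{n-\ell}=\binom{n-1}{\ell-1}$ matches the paper exactly.
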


Let $x^{\overline{n}} := \prod_{i=0}^{n-1}(x+i)$ and
$x^{\underline{n}} := \prod_{i=0}^{n-1}(x-i)$ denote the raising and
the falling factorials. 
  \begin{corollary}
  \label{cor:partition}
The following identities hold true:
\begin{align}
 \sum_{\substack{\pi \in \PPP([n])\\|\pi|=\ell}}
 \prod_{B \in \pi} \Big(\sum_{b \in B}x_b\Big)^{\overline{|B|-1}}
 =\binom{n-1}{\ell-1} (x_1+\cdots+x_n)^{\overline{n-\ell}} \;,
 \label{eq:partition-rising}
 \\
 \sum_{\substack{\pi \in \PPP([n])\\|\pi|=\ell}}
 \prod_{B \in \pi} \Big(\sum_{b \in B}x_b\Big)^{\underline{|B|-1}}
 =\binom{n-1}{\ell-1} (x_1+\cdots+x_n)^{n-\ell} \;,
\label{eq:partition-falling}
\end{align}
\begin{proof}
  It is enough to realise that 
  \[
    \Big(\sum_{b \in B}x_b\Big)^{\overline{k}}
    =\Big(- \sum_{b \in B}\partial_{t_b}\Big)^k \prod_{j=1}^nt_j^{- x_j}
    \Big|_{t_j=1}\;,\quad
    \Big(\sum_{b \in B}x_b\Big)^{\underline{k}}
    =\Big(\sum_{b \in B}\partial_{t_b}\Big)^k \prod_{j=1}^nt_j^{x_j}
    \Big|_{t_j=1}\;.
  \]
Substitute
 $x_j\mapsto \mp \partial_{t_j}$ in
    (\ref{eq:partition-power}), apply it to
    $\prod_{j\in J} t_j^{\mp x_j}$ and set all $t_j\equiv 1$.
\end{proof}
\end{corollary}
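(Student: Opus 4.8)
The plan is to derive both identities from the already-established power identity \eqref{eq:partition-power} of Corollary~\ref{coro:combin-l} by promoting the formal variables $x_j$ to commuting differential operators. I would introduce auxiliary variables $t_1,\dots,t_n$ together with the two ``seed'' functions $\prod_{j=1}^n t_j^{-x_j}$ and $\prod_{j=1}^n t_j^{x_j}$, which will manufacture the rising and the falling factorial respectively.

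The single fact I would isolate first is that, for any subset $B\subseteq[n]$ and any integer $k\geq 0$,
\[
\Big({-}\sum_{b\in B}\partial_{t_b}\Big)^{k}\prod_{j=1}^n t_j^{-x_j}\Big|_{t_j=1}
=\Big(\sum_{b\in B}x_b\Big)^{\overline{k}},\qquad
\Big(\sum_{b\in B}\partial_{t_b}\Big)^{k}\prod_{j=1}^n t_j^{x_j}\Big|_{t_j=1}
=\Big(\sum_{b\in B}x_b\Big)^{\underline{k}}.
\]
Both follow from a one-variable reduction: writing $D=\sum_{b\in B}\partial_{t_b}$ and restricting all $t_b$ with $b\in B$ to a common value $s$, the chain rule gives $(D^kF)|_{t_b=s}=\frac{d^k}{ds^k}(F|_{t_b=s})$, so with $X=\sum_{b\in B}x_b$ the computation collapses to $\frac{d^k}{ds^k}s^{\pm X}=(\pm X)^{\underline{k}}s^{\pm X-k}$, evaluated at $s=1$, together with the elementary identity $(-X)^{\underline{k}}=(-1)^k X^{\overline{k}}$.

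With this in hand I would substitute $x_j\mapsto \mp\partial_{t_j}$ (the upper sign producing \eqref{eq:partition-rising}, the lower \eqref{eq:partition-falling}) into the polynomial identity \eqref{eq:partition-power}; since all the $\partial_{t_j}$ commute, a polynomial relation valid for commuting indeterminates remains valid for these operators. Applying the two resulting operator identities to the corresponding seed function and evaluating at $t_1=\dots=t_n=1$ preserves the equality. On the right-hand side the parts $B$ of a fixed $\pi$ are disjoint, so the operators $\mp\sum_{b\in B}\partial_{t_b}$ attached to distinct blocks act on disjoint families of $t$-variables and the seed monomial factors as $\prod_{B\in\pi}\prod_{b\in B}t_b^{\mp x_b}$; the boxed observation then turns each block factor into $(\sum_{b\in B}x_b)^{\overline{|B|-1}}$ (resp. $^{\underline{|B|-1}}$). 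On the left-hand side the single factor $\binom{n-1}{\ell-1}(x_1+\cdots+x_n)^{n-\ell}$ becomes $\binom{n-1}{\ell-1}(x_1+\cdots+x_n)^{\overline{n-\ell}}$ (resp. $^{\underline{n-\ell}}$) via the case $B=[n]$, $k=n-\ell$ of the same observation. Comparing the two sides yields both claimed identities simultaneously.

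The only point demanding genuine care is the block-wise factorisation in the last step, which hinges precisely on the disjointness of the parts of a set partition; everything else reduces to the factorial computation of the second paragraph and the routine remark that polynomial identities survive substitution of commuting operators. I therefore expect no serious obstacle beyond this bookkeeping.
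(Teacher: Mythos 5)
Your proposal is correct and follows essentially the same route as the paper's own proof: the same differential-operator realisation of the rising and falling factorials via the seed functions $\prod_j t_j^{\mp x_j}$, followed by the substitution $x_j\mapsto\mp\partial_{t_j}$ into \eqref{eq:partition-power} and evaluation at $t_j=1$. You merely spell out the one-variable reduction and the block-wise factorisation that the paper leaves implicit.
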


Let $a(x), b(y), c_1(y),\dots, c_n(y) \in C^\infty(\mathbb{R})$ be
smooth functions (in fact they might be formal elements of a ring
equipped with the formal derivations $\partial_x, \partial_y$, see
\cite{Dolega2017}). In the following we are going to prove an explicit
combinatorial formula for the expression
  \[
    \big(b(y)\partial_x+\partial_y\big)^{n-1}
    \big(a(x)\cdot b(y) \cdot c_1(y)\cdots
    c_n(y)\big)
  \]
in terms of special labeled trees, where we allow repetitions.

Consider the set of rooted, labeled
trees $T$ such that
\begin{itemize}
\item the root vertex $v_\bullet(T)$ has label $-1$,
  \item the set of vertices adjacent to the root is denoted by $V_0(T)$
    and for any $v \in V_0(T)$ one has $\deg(v) >1$ and $v$ is labeled by $0$,
    \item the set $V_{[n]}(T)$ of the remaining vertices has cardinality
      $n$ and its elements are labeled by distinct numbers $1,\dots,n$.
    \end{itemize}
    We denote the set of these trees by $\T^{\bullet;0}_{[n]}$.

    \begin{theorem}
  For any $(n+2)$-tuple of functions $a(x), b(y), c_1(y),\dots,
  c_n(y)$ the following identity holds true:
    \begin{align}
      &\big(b(y)\partial_x+\partial_y\big)^{n-1} {}
      \big(a(x)\cdot b(y) \cdot c_1(y)\cdots
      c_n(y)\big) \nonumber\\
      &= \sum_{T \in
            \T^{\bullet;0}_{[n]}}a(x)^{(\deg(v_\bullet(T))-1)}\prod_{v
            \in V_0(T)}b(y)^{(\deg(v)-2)}
          \prod_{v \in V_{[n]}(T)}c_{\lab(v)}(y) ^{(\deg(v)-1)}
          \label{CayleyDerivative},
    \end{align}
    where $f^{(n)}(z) = \partial_z^n f(z)$ with the convention
    $f^{(0)}(z) = f(z)$.
    \begin{proof}
    We can decompose a tree $T \in \T^{\bullet;0}_{[n]}$ as
    follows. Suppose that the degree of the root of $T$ is equal to
    $r$. Let $T' \in \T_{[n+1]}^\bullet$ be a tree obtained from $T$
    by identifying all the vertices labeled by $0$ with the root
    vertex $v_\bullet$. Note that the degree of the root of $T'$ is
    equal to $r \leq \deg(v_\bullet(T')) \leq n$. In particular $T$ is
    uniquely determined by $T'$ and by a set-partition
    $\pi \in \PPP(N(v_\bullet(T')))$, where $N(v_\bullet(T'))$ is the
    set of vertices in $T'$ adjacent to the root $v_\bullet(T')$. Each
    block of $B \in \pi$ corresponds to a vertex of $T$ labeled by
    $0$. This decomposition gives us the following equality:
    \begin{align*}
      &\sum_{T \in
        \T^{\bullet;0}_{[n]}}a(x)^{(\deg(v_\bullet(T))-1)}\prod_{v
        \in V_0(T)}b(y)^{(\deg(v)-2)}\prod_{v \in
        V_{[n]}(T)}c_{\lab(v)}(y) ^{(\deg(v)-1)} \\
      &=\sum_{r=1}^{n}\sum_{\substack{T' \in
          \T_{[n+1]}^\bullet,\\
          \deg(v_\bullet)=r}}\prod_{v \in
        V\setminus \{v_{\bullet}\}}c_{\lab(v)}(y) ^{(\deg(v)-1)}
      \sum_{\pi \in \PPP([r])}a(x)^{(|\pi|-1)}\prod_{B\in \pi}b(y)^{(|B|-1)}.
 \end{align*}

Define a transformation $f : \CC[y,x_1,\dots,x_n] \to \CC(y)$ by
 declaring its action on monomials
 \begin{equation}
            \label{eq:trans}
            f(y^k\cdot x_1^{\alpha_1}\cdots
          x_n^{\alpha_n}) :=
          b^{(k)}(y)\prod_{i=1}^nc_i^{(\alpha_i)}(y).
 \end{equation}
 Using Leibnitz rule we can compute
 \[ \partial_y^{|B|-k}\Big(\prod_{b \in B}c_b(y)\Big) =
   f\Big(\Big(\sum_{b \in B}x_b\Big)^{|B|-k}\Big),
 \]
 and using the proof of Corollary~\ref{theo:conj1} we rewrite the
 right hand side of \eqref{CayleyDerivative} as
  \begin{align*}
    \sum_{r=1}^{n}\binom{n-1}{r-1}
    \bigg(\prod_{i=1}^nc_i(y)\bigg)^{(n-r)}\sum_{\pi \in \PPP([r])}
    a(x)^{(r-1)}\prod_{B   \in \pi}b(y)^{(|B|-1)}.
\end{align*}
It is enough to notice that
\[ \sum_{\pi \in \PPP([r])}a(x)^{(|\pi|-1)}\prod_{B \in \pi}b(y)^{(|B|-1)}
  = (b(y)\partial_x+\partial_y)^{r-1}a(x)b(y),
\]
which is easy to prove by induction on $r$ (every set-partition
$\pi \in \PPP([r+1])$ is either constructed from a set partition
$\pi' \in \PPP([r+1])$ by adding a new block $\{r+1\}$, which
corresponds to the action of $\partial_x b(y)$ on
$ (b(y)\partial_x+\partial_y)^{r-1}a(x)b(y)$ or it is constructed from
$\pi' \in \PPP([r+1])$ by adding $r+1$ to one of its blocks, which
corresponds to the action of $\partial_y$ on $(b(y)\partial_x+\partial_y)^{r-1}a(x)b(y)$). Summing up, we have that
\begin{align*}
  &\sum_{r=1}^{n}\binom{n-1}{r-1}
  \bigg(\prod_{i=1}^nc_i(y)\bigg)^{(n-r)}\sum_{\pi \in \PPP([r])}
  a(x)^{(|\pi|-1)}\prod_{B \in \pi}b(y)^{(|B|-1)} 
  \\
  &=\sum_{r=1}^{n}\binom{n-1}{r-1}\bigg(\prod_{i=1}^nc_i(y)\bigg)^{(n-r)}
  (b(y)\partial_x+\partial_y)^{r-1}a(x)b(y)
    \\
   &=(b(y)\partial_x+\partial_y+\partial_z)^{n-1}
   \big(a(x)b(y)\prod_{i=1}^nc_i(z)\big)\big|_{z=y}
   \\
   &= (b(y)\partial_x+\partial_y)^{n-1}\big(a(x)b(y)\prod_{i=1}^nc_i(y)\big),
 \end{align*}
 which finishes the proof of \eqref{CayleyDerivative}.
 \end{proof}
  \end{theorem}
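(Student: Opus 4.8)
The plan is to reduce this operator identity to the purely algebraic tree-enumeration formulas of Theorem~\ref{theo:Cayley} by means of a careful bijective decomposition together with an auxiliary-variable trick. First I would analyse the combinatorial structure of a tree $T\in\T^{\bullet;0}_{[n]}$: the root carries label $-1$, its neighbours are the label-$0$ vertices (each of degree at least $2$), and the remaining $n$ vertices are labelled by $1,\dots,n$. I would contract every label-$0$ vertex into the root, producing a rooted labelled tree $T'\in\T^{\bullet}_{[n+1]}$ on the vertex set $\{-1\}\cup[n]$, and I would record, as a set partition $\pi$ of the neighbourhood $N(v_\bullet(T'))$ of the new root, which former root-neighbours of $T'$ were grouped under each label-$0$ vertex. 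This contraction is a bijection, and under it $\deg(v_\bullet(T))=|\pi|$, each label-$0$ vertex corresponds to a block $B\in\pi$ with $\deg(v)-2=|B|-1$, while the labelled vertices retain their degrees. Hence the right-hand side of \eqref{CayleyDerivative} factorises, for each fixed root degree $r=\deg(v_\bullet(T'))$, into a sum over such $T'$ (collecting the factors $c_{\lab(v)}^{(\deg(v)-1)}$) times a sum over set partitions $\pi\in\PPP([r])$ collecting $a(x)^{(|\pi|-1)}\prod_{B\in\pi}b(y)^{(|B|-1)}$.

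The next step is an auxiliary lemma, proved by induction on $r$, identifying the partition sum with an operator: $\sum_{\pi\in\PPP([r])}a(x)^{(|\pi|-1)}\prod_{B\in\pi}b(y)^{(|B|-1)}=(b(y)\partial_x+\partial_y)^{r-1}\bigl(a(x)b(y)\bigr)$. The inductive step is transparent once one observes that passing from $\PPP([r])$ to $\PPP([r+1])$ mirrors the action of the operator exactly: applying $\partial_x b(y)$ creates the new singleton block $\{r+1\}$ (raising $|\pi|$ by one, hence differentiating $a$ once more), whereas applying $\partial_y$ inserts $r+1$ into an existing block (raising some $|B|$, hence differentiating the corresponding $b$-factor). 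To handle the $T'$-sum I would introduce the linear transformation $f$ sending a monomial $y^{k}x_1^{\alpha_1}\cdots x_n^{\alpha_n}$ to $b^{(k)}(y)\prod_i c_i^{(\alpha_i)}(y)$; Leibniz's rule gives $\partial_y^{|B|-k}\bigl(\prod_{b\in B}c_b\bigr)=f\bigl((\sum_{b\in B}x_b)^{|B|-k}\bigr)$, so that the tree sum over $T'$ of fixed root degree $r$ is the image under $f$ of the right-hand side of the set-partition identity of Corollary~\ref{theo:conj1}, in the homogeneous-degree form of Corollary~\ref{coro:combin-l}. This evaluates the $T'$-sum to $\binom{n-1}{r-1}\bigl(\prod_i c_i(y)\bigr)^{(n-r)}$.

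Assembling these pieces rewrites the right-hand side as $\sum_{r=1}^{n}\binom{n-1}{r-1}\bigl(\prod_i c_i(y)\bigr)^{(n-r)}(b(y)\partial_x+\partial_y)^{r-1}\bigl(a(x)b(y)\bigr)$, and the final move is to recognise this binomial sum as a single operator acting on the full product. I would introduce a third variable $z$ and expand $(b(y)\partial_x+\partial_y+\partial_z)^{n-1}\bigl(a(x)b(y)\prod_i c_i(z)\bigr)$ by the binomial theorem: the operators $b(y)\partial_x+\partial_y$ act only on $a(x)b(y)$ and $\partial_z$ only on $\prod_i c_i(z)$, reproducing exactly the binomial sum above; setting $z=y$ then collapses $\partial_z\mapsto\partial_y$ and yields $(b(y)\partial_x+\partial_y)^{n-1}\bigl(a(x)b(y)\prod_i c_i(y)\bigr)$, which is the left-hand side. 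I expect the main obstacle to be the bijective bookkeeping in the first step---verifying that the contraction $T\mapsto(T',\pi)$ is a genuine bijection and that the degree shifts ($\deg(v)-2$ at label-$0$ vertices, $\deg(v_\bullet)=|\pi|$ at the root) match the exponents appearing in Cayley's formula exactly; once this dictionary between trees, set partitions, and derivatives is established, the remaining steps are essentially mechanical applications of the identities already proved in the appendix.
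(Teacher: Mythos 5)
Your proposal is correct and follows essentially the same route as the paper's proof: the contraction of the label-$0$ vertices into the root giving the bijection $T\mapsto(T',\pi)$, the inductive identification of the set-partition sum with $(b(y)\partial_x+\partial_y)^{r-1}(a(x)b(y))$, the evaluation of the $T'$-sum via the transformation $f$ and Corollary~\ref{theo:conj1}, and the final reassembly through the auxiliary variable $z$ and the binomial theorem. No substantive differences.
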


\begin{corollary}\label{coro-axby}
  For any $(n+2)$-tuple of functions $a(x), b(y), c_1(y),\dots,
  c_n(y)$ the following identity holds true:
  \begin{align}
    \big(b(y)\partial_x+\partial_y\big)^{n-1} \big(&a(x)\cdot b(y) \cdot c_1(y)\cdots
      c_n(y)\big)  \nonumber\\
    =\sum_{\pi \in \PPP([n])}\partial_x^{|\pi|-1}&a(x)\prod_{B
      \in \pi}\bigg(\partial_y^{|B|-1}\big(b(y)\prod_{b \in
    B}c_b(y)\big)\bigg).
        \label{eq:identity1}
    \end{align}
  \end{corollary}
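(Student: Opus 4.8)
The plan is to deduce the identity directly from the tree formula \eqref{CayleyDerivative}, whose left-hand side already coincides with the left-hand side of \eqref{eq:identity1}; the remaining task is therefore purely combinatorial, namely to reorganise the sum over $T\in\T^{\bullet;0}_{[n]}$ in \eqref{CayleyDerivative} according to the set partition of $[n]$ that each such tree induces. Concretely, for $T\in\T^{\bullet;0}_{[n]}$ every vertex $v\in V_0(T)$ adjacent to the root, together with the collection of $[n]$-labelled vertices lying below $v$, determines a block $B_v\subseteq[n]$, and the family $\pi:=\{B_v:v\in V_0(T)\}$ is a set partition of $[n]$. First I would record that this assignment is a bijection between $\T^{\bullet;0}_{[n]}$ and pairs $(\pi,(T_B)_{B\in\pi})$, where $\pi\in\PPP([n])$ and, for each block $B$, $T_B$ is an (unrooted) labelled tree on the vertex set $B\cup\{v_B\}$ with $v_B$ the $0$-labelled vertex carrying $B$; the inverse attaches each $T_B$ to the root by an edge from $v_\bullet$ to $v_B$. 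Since $\pi$ is an unordered collection of blocks and the $0$-labels are indistinguishable, there is no ordering ambiguity.

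Next I would carry out the degree bookkeeping. The root has degree $\deg(v_\bullet)=|V_0(T)|=|\pi|$, so its weight in \eqref{CayleyDerivative} contributes the overall factor $a(x)^{(|\pi|-1)}=\partial_x^{|\pi|-1}a(x)$. Within a fixed block, passing from $T$ to $T_B$ deletes the single edge joining $v_\bullet$ to $v_B$, so that $\deg_{T_B}(v_B)=\deg_T(v_B)-1$ while all other degrees are unchanged; consequently the block's weight $b(y)^{(\deg_T(v_B)-2)}\prod_{w\in B}c_{\lab(w)}(y)^{(\deg_T(w)-1)}$ becomes $b(y)^{(\deg_{T_B}(v_B)-1)}\prod_{w\in B}c_{\lab(w)}(y)^{(\deg_{T_B}(w)-1)}$. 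As the $c$-functions and the tree structures split across blocks, the full weight factorises into the root factor times a product over the blocks of $\pi$.

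The heart of the argument is then the evaluation of a single block. Summing over all tree structures $T_B$ on $B\cup\{v_B\}$, where $v_B$ carries the variable $y$ and each $w\in B$ carries $x_w$, the multivariate Cayley formula \eqref{eq:Cayley} applied to these $|B|+1$ vertices gives $\big(y+\sum_{w\in B}x_w\big)^{|B|-1}=\sum_{T_B}y^{\deg(v_B)-1}\prod_{w\in B}x_w^{\deg(w)-1}$. Applying the transformation $f$ of \eqref{eq:trans}, in which the $y$-variable already records derivatives of $b$, and using that the multinomial expansion is exactly Leibniz's rule, the left-hand side maps to $\partial_y^{|B|-1}\big(b(y)\prod_{w\in B}c_{\lab(w)}(y)\big)$ while the right-hand side maps to the summed block weight. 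Multiplying the root factor by the product of these per-block factors and summing over all $\pi\in\PPP([n])$ then reproduces \eqref{eq:identity1}. I expect the only delicate point to be the degree accounting that converts the $0$-vertex weight $b^{(\deg-2)}$ in \eqref{CayleyDerivative} into the $b^{(\deg_{T_B}-1)}$ demanded here, together with the observation that the per-block sum is governed by the unrooted formula \eqref{eq:Cayley} rather than by its rooted counterpart \eqref{eq:RootedCayley}.
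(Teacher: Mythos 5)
Your proposal is correct and follows essentially the same route as the paper: you decompose each $T\in\T^{\bullet;0}_{[n]}$ into a set partition $\pi\in\PPP([n])$ together with a labelled tree on $B\cup\{0\}$ for each block, track the degree shift at the $0$-vertices coming from deleting the edge to the root, and evaluate each block sum via the unrooted Cayley formula \eqref{eq:Cayley} composed with the transformation $f$ of \eqref{eq:trans} to obtain $\partial_y^{|B|-1}\big(b(y)\prod_{b\in B}c_b(y)\big)$. This is precisely the paper's argument, so no further comment is needed.
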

          
\begin{proof}    
    Note that any $T \in \T^{\bullet;0}_{[n]}$ is uniquely determined
    by the following data: pick a set-partition $\pi \in
    \PPP([n])$. For each part $B \in \pi$ pick a labeled tree $T_B \in
    \T_{B \cup \{0\}}$. Take the disjoint union of $(T_B)_{B \in
        \pi}$ and connect all the vertices labeled by $0$ to a new
      vertex labeled by $-1$. In this way we obtain a tree $T \in
      \T^{\bullet;0}_{[n]}$ and conversely, every $T \in
      \T^{\bullet;0}_{[n]}$ decomposes into a collection of labeled
      trees $\big(T_B \in
      \T_{B \cup \{0\}}\big)_{B \in\pi}$.
      This decomposition yields the following identity:
          \begin{align}
         &\sum_{T \in
            \T^{\bullet;0}_{[n]}}a(x)^{(\deg(v_\bullet(T))-1)}\prod_{v
            \in V_0(T)}b(y)^{(\deg(v)-2)}\prod_{v \in
           V_{[n]}(T)}c_{\lab(v)}(y) ^{(\deg(v)-1)} \nonumber\\
            &= \sum_{\pi \in \PPP([n])}a^{({|\pi|-1})}(x)\prod_{B
      \in \pi}\sum_{T_B \in \T_{B \cup
              \{0\}}}\bigg(b(y)^{(\deg(v_0(T_B))-1)}\prod_{b \in
              B}c_b^{(\deg(v_b(T_B))-1)} (y)\bigg).
              \label{eq:pomocnicze}
  \end{align}
  Similarly as before we can compute
     \[ \partial_y^{|B|-1}\big(b(y)\prod_{b \in B}c_b(y)\big) =
       f\Big((y+\sum_{b \in B}x_b)^{|B|-1}\Big),
     \]
     where $f$ is a transformation given by \eqref{eq:trans}.
          Using \eqref{eq:Cayley} and the definition of $f$ we can further transform it into
          \[ \partial_y^{|B|-1}\big(b(y)\prod_{b \in B}c_b(y)\big) = \sum_{T_B \in \T_{B \cup \{0\}}}\bigg(b(y)^{(\deg(v_0(T_B))-1)}\prod_{b \in B}c_b^{(\deg(v_b(T_B))-1)} (y)\big)\bigg).\]
    Plugging it into the right hand side of \eqref{eq:pomocnicze} and
    using  \eqref{CayleyDerivative} we end up precisely with
    \eqref{eq:identity1}, which finishes the proof.
  \end{proof}

\section{An identity used in   Section~\ref{sec:iota-symm-II}}
\markboth{\hfill\textsc{Alexander Hock and Raimar Wulkenhaar}}{\textsc{{Blobbed topological
      recursion of the quartic Kontsevich model II: Genus=0}\hfill}}

We recall two well-known identities \cite[Vol.~4,  eq.~(10.18) \&
Vol.~5, eq.~(1.18)]{Gould}:
\begin{align}\label{Go2}
  &\sum_{k=0}^n(-1)^k\binom{n}{k}\binom{x+k}{r+k}=(-1)^n\binom{x}{r+n}
  \\\label{Go1}
  &\sum_{i=0}^k(-1)^i\binom{k}{i}\binom{x-i}{k}\frac{1}{y+i}
  =\frac{\binom{x+y}{k}}{y\binom{y+k}{k}},
\end{align}
which hold for $r\in \mathbb{N}$ and $x,y\in \mathbb{C}$.

Let $\mathcal{D}_n$ be the set of tuples $(n_1,n_2,n_3,n_4)$
of non-negative integers with $n_1+n_2+n_3+2n_4=n$ and
$n_3+n_4\neq 0$, that is
\begin{align}
  \mathcal{D}_n:=\{(n_1,n_2,n_3,n_4)|n_i\in \mathbb{N},
  n_3+n_4\neq 0,n_1+n_2+n_3+2n_4=n\}.
\label{Dn}
\end{align} 
Then the following decomposition is holds:
\begin{lemma}\label{combi}
  Let $y,\bar{y},w,\bar{w}\in \mathbb{C}$ and
  $e_1:=w+\bar{w}$ and $e_2:=y\bar{w}+\bar{y}w+w \bar{w}$.
  Then, we have for any $n$
\begin{align}
  &\sum_{k=0}^{n-1}\binom{n}{k}(y^k w^{n-k} +\bar{y}^k \bar{w}^{n-k})
  \nonumber
  \\
  &=\!\!\!\!\!\!\!\!\sum_{(n_1,n_2,n_3,n_4)\in \mathcal{D}_n}\!\!\!\!\!\!
  (-1)^{n_4}n\frac{\prod_{k=1}^{n_3+n_4-1}(n_1+k)(n_2+k)}{n_3!n_4!(n_3+n_4-1)!}
  y^{n_1}\bar{y}^{n_2}e_1^{n_3}e_2^{n_4}.
\label{eq:combi}
\end{align}
\begin{proof}
We expand $y^{n_1}\bar{y}^{n_2}e_1^{n_3}e_2^{n_4}$ into a linear combination of
$y^k\bar{y}^{\bar{k}} w^t\bar{w}^{\bar{t}}$. For given 
$n_1,n_2,n_3,n_4,k,\bar{k},t,\bar{t}$ at most one term of the multinomial
expansion of $e_1,e_2$ contributes.
The coefficient of
$y^k\bar{y}^{\bar{k}} w^t\bar{w}^{\bar{t}}$
in such a contribution is
\begin{align*}
&  [y^k\bar{y}^{\bar{k}} w^t\bar{w}^{\bar{t}}]
  (  y^{n_1}\bar{y}^{n_2}e_1^{n_3}e_2^{n_4})
  \\
  &= \frac{n_4!}{(k{-}n_1)!(\bar{k}{-}n_2)!(n_4{+}n_1{+}n_2{-}k{-}\bar{k})!}
   \frac{n_3!}{(t{+}k{-}n_4{-}n_1)!(\bar{t}{+}\bar{k}{-}n_4{-}n_2)!}\;,
 \end{align*}
where $k+\bar{k}+t+\bar{t}=n=n_1+n_2+n_3+2n_4$. It is only non-zero if
$n_1\in [k-\bar{t}..k]$,
$n_2\in [\bar{k}-t..\bar{k}]$ and
$n_3\in [k+\bar{k}-n_1-n_2..\min(k-n_1+t,\bar{k}-n_3+\bar{t})]$.

We thus need to evaluate the sum
\begin{align}\label{sumT}
  [y^k\bar{y}^{\bar{k}} w^t\bar{w}^{\bar{t}}]\eqref{eq:combi}
  = \sum_{n_1=\mathrm{max}(0,k-\bar{t})}^k
  \sum_{n_2=\mathrm{max}(0,\bar{k}-t)}^{\bar{k}}
  \sum_{n_4=k+\bar{k}-a-b}^{\mathrm{min} (k-n_1-t,\bar{k}-n_2+\bar{t})}
  T_{n_1,n_2,n_4}
\end{align}
where 
\begin{align}\label{T}
  &  T_{n_1,n_2,n_4}
  \\
&=n(-1)^{n_4}\frac{(n_1+n_3+n_4-1)!(n_2+n_3+n_4-1)!}{
    n_1!n_2!n_3!n_4!(n_3+n_4-1)!}
[y^k\bar{y}^{\bar{k}} w^t\bar{w}^{\bar{t}}]
  (  y^{n_1}\bar{y}^{n_2}e_1^{n_3}e_2^{n_4})  
 \nonumber
\end{align}
with $n_3=n-n_1-n_2-2n_4$.
The aim is to prove that
\eqref{sumT}+\eqref{T} breaks down to
\begin{align*}
  \binom{n}{k}\delta_{t,n-k}\delta_{\bar{k},0}
  \delta_{\bar{t},0}+\binom{n}{\bar{k}}\delta_{k,0}
  \delta_{t,0}\delta_{\bar{t},n-\bar{k}}.
\end{align*}
Shifting summation indices to
$n_1=a{+}k{-}\bar{t}$, $n_2=b{+}\bar{k}{-}t$,
$n_4=c{+}k{+}\bar{k}{-}n_1{-}n_2= c{+}t{+}\bar{t}{-}a{-}b$ leads to
\begin{align*}
  [y^k\bar{y}^{\bar{k}} w^t\bar{w}^{\bar{t}}]\eqref{eq:combi}
  &=\sum_{a=0}^{\bar{t}}\sum_{b=0}^{t}
  \sum_{c=0}^{\mathrm{min}(a,b)}
    \frac{(t+k+a-c-1)!(\bar{t}+\bar{k}+b-c-1)!}{
    (a+k-\bar{t})!(b+\bar{k}-t)!(t+\bar{t}-c-1)!}\\
  &\qquad \qquad\qquad\qquad\qquad\qquad
  \times \frac{n(-1)^{c+a+\bar{t}+b+t}}{(\bar{t}-a)!(t-b)!c!(b-c)!(a-c)!}\;.
\end{align*}
Next, change the order of the sums by
\begin{align*}
  \sum_{a=0}^{\bar{t}}\sum_{b=0}^{t}\sum_{c=0}^{\mathrm{min}(a,b)}f_{a,b,c}
  =\sum_{c=0}^{\mathrm{min}(t,\bar{t})}\sum_{a=c}^{\bar{t}}\sum_{b=c}^{t}f_{a,b,c}
  =\sum_{c=0}^{\mathrm{min}(t,\bar{t})}\sum_{a=0}^{\bar{t}-c}\sum_{b=0}^{t-c}f_{a+c,b+c,c}
\end{align*}
to derive
\begin{align*}
&[y^k\bar{y}^{\bar{k}} w^t\bar{w}^{\bar{t}}]\eqref{eq:combi}
\\
&=\sum_{c=0}^{\mathrm{min}(t,\bar{t})}\sum_{a=0}^{\bar{t}-c}\sum_{b=0}^{t-c}
\frac{n(-1)^{c+a+\bar{t}+b+t}(t+k+a-1)!(\bar{t}+\bar{k}+b-1)!}{
  (a{+}c{+}k{-}\bar{t})!(b{+}c{+}\bar{k}{-}t)!
  (t{+}\bar{t}{-}c{-}1)!(\bar{t}{-}c{-}a)!(t{-}c{-}b)!c!b!a!}
\\
&=\sum_{c=0}^{\mathrm{min}(t,\bar{t})}
\frac{n(-1)^{c+\bar{t}+t}(t+\bar{t}-1-c)!}{(\bar{t}-c)!(t-c)!c!}
\\
&\times \sum_{a=0}^{\bar{t}-c}(-1)^a 
\binom{\bar{t}-c}{a}\binom{t+k-1+a}{c+k-\bar{t}+a}
\sum_{b=0}^{t-c}(-1)^b
\binom{t-c}{b}\binom{\bar{t}+\bar{k}-1+b}{c+\bar{k}-t+b}.
\end{align*}
The sums over $a,b$ can be evaluated separately with the identity \eqref{Go2}.
Consequently, one concludes with identity \eqref{Go1}
\begin{align*}
[y^k\bar{y}^{\bar{k}} w^t\bar{w}^{\bar{t}}]\eqref{eq:combi}
&=n\binom{t+k-1}{k} \binom{\bar{t}+\bar{k}-1}{\bar{k}}
\sum_{c=0}^{\mathrm{min}(t,\bar{t})}\frac{(-1)^{c}(t+\bar{t}-1-c)!}{
  (\bar{t}-c)!(t-c)!c!}
\\
&=n\binom{t+k-1}{k} \binom{\bar{t}+\bar{k}-1}{\bar{k}}
\Big(\frac{\delta_{t,0}}{\bar{t}}+\frac{\delta_{\bar{t},0}}{t}\Big)
\\
&=n\Big(\delta_{t,0}\underbrace{\binom{k-1}{k}}_{=\delta_{k,0}}
\binom{\bar{t}+\bar{k}-1}{\bar{k}}\frac{1}{\bar{t}}
+\delta_{\bar{t},0}\binom{t+k-1}{k}
\underbrace{\binom{\bar{k}-1}{\bar{k}}}_{=\delta_{\bar{k},0}}\frac{1}{t}\Big)
\\
&=n\Big(\delta_{t,0}\delta_{k,0} \binom{n-1}{\bar{k}}
\frac{1}{n-\bar{k}}+\delta_{\bar{t},0}\delta_{\bar{k},0}
\binom{n-1}{k} \frac{1}{n-k}\Big)
\\
&=\delta_{t,0}\delta_{k,0} \binom{n}{\bar{k}}
+\delta_{\bar{t},0}\delta_{\bar{k},0}\binom{n}{k}.
\end{align*} \vspace*{-4ex}

\end{proof}
\end{lemma}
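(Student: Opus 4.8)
The plan is to prove \eqref{eq:combi} by extracting and comparing the coefficient of an arbitrary monomial $y^k\bar{y}^{\bar{k}}w^t\bar{w}^{\bar{t}}$ (necessarily with $k+\bar{k}+t+\bar{t}=n$, since every term on both sides is jointly homogeneous of degree $n$). On the left-hand side this is immediate: the first family contributes $\binom{n}{k}$ exactly when $\bar{k}=\bar{t}=0$ (forcing $t=n-k$ with $t\ge1$), and the second contributes $\binom{n}{\bar{k}}$ exactly when $k=t=0$ (forcing $\bar{t}=n-\bar{k}$ with $\bar{t}\ge1$). Thus the target coefficient to reproduce is $\binom{n}{k}\delta_{\bar{k},0}\delta_{\bar{t},0}+\binom{n}{\bar{k}}\delta_{k,0}\delta_{t,0}$, and the whole content of the lemma is that the right-hand side collapses to exactly this.

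First I would expand $e_1^{n_3}=(w+\bar{w})^{n_3}$ and $e_2^{n_4}=(y\bar{w}+\bar{y}w+w\bar{w})^{n_4}$ by the multinomial theorem. The key simplification is that, for a fixed tuple $(n_1,n_2,n_3,n_4)\in\mathcal{D}_n$ and fixed target exponents, at most one monomial of the combined expansion matches $y^k\bar{y}^{\bar{k}}w^t\bar{w}^{\bar{t}}$: once one fixes how the $e_2$-factors split among $y\bar{w}$, $\bar{y}w$, $w\bar{w}$ and how $e_1$ splits between $w$ and $\bar{w}$, the multiplicities are rigidly determined. This yields a single product of two multinomial coefficients, $\frac{n_4!}{(k-n_1)!(\bar{k}-n_2)!(n_4+n_1+n_2-k-\bar{k})!}\cdot\frac{n_3!}{(t+k-n_4-n_1)!(\bar{t}+\bar{k}-n_4-n_2)!}$, carrying the prefactor $(-1)^{n_4}n\frac{\prod_{\ell=1}^{n_3+n_4-1}(n_1+\ell)(n_2+\ell)}{n_3!n_4!(n_3+n_4-1)!}$ of \eqref{eq:combi}.

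The central computation is then to evaluate the finite sum over $(n_1,n_2,n_4)$, with $n_3=n-n_1-n_2-2n_4$ determined. I would reparametrise by $n_1=a+k-\bar{t}$, $n_2=b+\bar{k}-t$ and $n_4=c+t+\bar{t}-a-b$, turning the ranges into the box $0\le a\le\bar{t}$, $0\le b\le t$, $0\le c\le\min(a,b)$. Reordering the sums as $\sum_c\sum_{a\ge c}\sum_{b\ge c}$ and shifting $a\mapsto a+c$, $b\mapsto b+c$ decouples the $a$- and $b$-dependence, so that the $a$-sum and the $b$-sum each become an alternating binomial sum of the shape handled by \eqref{Go2}; applying it twice reduces them to $\binom{t+k-1}{k}$ and $\binom{\bar{t}+\bar{k}-1}{\bar{k}}$. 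What then remains is a single sum over $c$, which is exactly the instance of \eqref{Go1} that telescopes to $\frac{\delta_{t,0}}{\bar{t}}+\frac{\delta_{\bar{t},0}}{t}$.

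I expect the reordering-and-decoupling step, together with the final application of \eqref{Go1}, to be the main obstacle: the index shifts must be tracked carefully so that the signs $(-1)^{n_4}$ survive in the right form, and the delicate point is recognising that the $c$-sum produces Kronecker deltas rather than a generic rational function. Once $\frac{\delta_{t,0}}{\bar{t}}+\frac{\delta_{\bar{t},0}}{t}$ is combined with the two binomial coefficients and the overall factor $n$, the vanishing factors $\binom{k-1}{k}=\delta_{k,0}$ and $\binom{\bar{k}-1}{\bar{k}}=\delta_{\bar{k},0}$ force the answer into precisely $\binom{n}{k}\delta_{\bar{k},0}\delta_{\bar{t},0}+\binom{n}{\bar{k}}\delta_{k,0}\delta_{t,0}$, matching the left-hand side and completing the proof.
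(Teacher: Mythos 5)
Your proposal is correct and follows essentially the same route as the paper's own proof: coefficient extraction of $y^k\bar{y}^{\bar{k}}w^t\bar{w}^{\bar{t}}$, the multinomial expansion with its unique contributing term, the reparametrisation $n_1=a+k-\bar{t}$, $n_2=b+\bar{k}-t$, $n_4=c+t+\bar{t}-a-b$, the reordering and shift that decouples the $a$- and $b$-sums, and the two applications of \eqref{Go2} followed by \eqref{Go1} to produce the Kronecker deltas. No substantive differences to report.
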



\end{document}